\documentclass[a4paper,11pt]{article} 
\usepackage[english]{babel}
\usepackage[a4paper]{geometry}
\usepackage{amsmath}
\usepackage{amsthm}
\usepackage{amssymb}
\usepackage{bbm}
\usepackage{xcolor}
\usepackage{enumerate}
\usepackage{dsfont}
\usepackage{cancel}
\usepackage{ulem} %\sout{} to cancel text

\usepackage{hyperref}         %%%% click to contents
\usepackage{tikz}
\usetikzlibrary{fit}
\usetikzlibrary{shapes.geometric}
\usetikzlibrary{decorations.pathmorphing}

\tikzset{
point/.style={circle,fill=black,inner sep=1pt},
vertex/.style={circle,fill=black,inner sep=1.5pt},   % VERTICI
bvertex/.style={circle,fill=black,inner sep=2.8pt},
Bvertex/.style={circle,fill=black,inner sep=4pt}, % VERTICE GRANDE
specialEP/.style={rectangle,fill=white,draw,inner sep=3pt},  % RETTANGOLO
whitevex/.style={circle,fill=white,draw, inner sep=2pt},
linelabel/.style={sloped,above,very near start, inner sep=1pt,execute at begin node=$\scriptstyle,execute at end node=$},
baseline=(current  bounding  box.center),doubled/.style={double distance= 1pt,line width=1.5pt},
th/.style={line width=0.5 pt, gray},  %linea thin GRIGIA
med/.style={line width=1 pt}  %linea con medio spessore
}

\xdefinecolor{myred}{rgb}{0.7,0,0.2} %per definire colori personalizzati
\xdefinecolor{mypurple}{rgb}{0.6,0.2,0.4} %{0.42,0.118,0.588} 
\xdefinecolor{myblue}{rgb}{0.259,0.2,0.6}   
\xdefinecolor{myorange}{rgb}{1,0.6,0.2}   
\definecolor{orange}{rgb}{1,0.5,0}
\xdefinecolor{purpleish}{cmyk}{0.75,0.75,0,0}

\def\bR{\mathbb{R}}

\def\bN{\mathbb{N}}

\def\bZ{\mathbb{Z}}

\def\cF{\mathcal{F}}

\def\cL{\mathcal{L}}

\def\cN{\mathcal{N}}

\def\cK{\mathcal{K}}
\def\cH{\mathcal{H}}

\def\eps{\varepsilon}
\def\ph{\varphi}

\def\wt{\widetilde}

\def\indic{\hbox{\raise-2pt \hbox{\indbf 1}}}

\let\==\equiv

\let\io=\infty
\let\0=\noindent

\def\*{{\hfill\break\null\hfill\break}}

\def\tende#1{\,\vtop{\ialign{##\crcr\rightarrowfill\crcr
             \noalign{\kern-1pt\nointerlineskip}
             \hskip3.pt${\scriptstyle #1}$\hskip3.pt\crcr}}\,}
\def\otto{\,{\kern-1.truept\leftarrow\kern-5.truept\to\kern-1.truept}\,}

\def \dx{{d}x}

\newtheorem{theorem}{Theorem}[section]  % use thm for %Theorems to keep numbering consistent

\newtheorem{proposition}[theorem]{Proposition}
\newtheorem{lemma}[theorem]{Lemma}

\numberwithin{equation}{section}

\def\tl#1{{\tilde{#1}}}
\def\be{\begin{equation}}
\def\ee{\end{equation}}

     \let\g=\gamma     \let\d=\delta     
       \let\th=\vartheta      
                          
\let\s=\sigma          \let\ph=\varphi   
   \let\o=\omega     
        \let\L=\Lambda

\definecolor{lightblue}{rgb}{0, 0.33, 0.71}

\def\aa{\mathfrak{a}}

\def \blue#1 {\textcolor{blue}{#1}}
\def \red#1 {\textcolor{red}{#1}}
\def \blue#1{\textcolor{blue}{#1}}
\DeclareFontFamily{U}{mathx}{\hyphenchar\font45}
\DeclareFontShape{U}{mathx}{m}{n}{
      <5> <6> <7> <8> <9> <10>
      <10.95> <12> <14.4> <17.28> <20.74> <24.88>
      mathx10
      }{}
\DeclareSymbolFont{mathx}{U}{mathx}{m}{n}
\DeclareFontSubstitution{U}{mathx}{m}{n}
\DeclareMathAccent{\widecheck}{0}{mathx}{"71}

\def\bskip{\\[-0.6cm]}

\title{The Lee-Huang-Yang energy for a dilute gas \\ of hard spheres: an upper bound} 

\date{\today}

\author{Giulia Basti\footnote{Department of Mathematics ``Guido Castelnuovo'', La Sapienza, Piazzale Aldo Moro, 5, 00185 Roma}, \, Morris Brooks\footnote{Institute of Mathematics, University of Zurich, Winterthurerstrasse 190, 8057 Zurich}, \, Serena Cenatiempo\footnote{Gran Sasso Science Institute, Viale Francesco Crispi 7, 67100 L'Aquila}, \, Alessandro Olgiati\footnote{Dipartimento di Matematica, Politecnico di Milano, Piazza Leonardo da Vinci 32, 20133 Milano}, \\ Benjamin Schlein$^\dagger$}

\begin{document}

\maketitle

\begin{abstract} 
We consider a quantum gas consisting of $N$ hard spheres with radius $\frak{a} > 0$, obeying bosonic statistics and moving in the box $\Lambda = [0;L]^3$ with periodic boundary conditions. We are interested in the ground state energy per unit volume in the thermodynamic limit, with $N, L \to \infty$ at fixed density $\rho = N / L^3$. We derive an upper bound for the ground state energy density, matching the famous Lee-Huang-Yang formula, up to lower order terms, in the dilute limit $\rho \frak{a}^3 \ll 1$. 
\end{abstract}

\section{Introduction} 

Since the early days of quantum mechanics, the study of dilute Bose gases has played a central role in condensed matter physics. Based on the work of Bogoliubov \cite{Bog}, Lee-Huang-Yang \cite{LHY} predicted in 1957 that the ground state energy per unit volume of a Bose gas at density $\rho > 0$ is given by 
\begin{equation} \label{eq:LHY0} e(\rho) = 4 \pi \frak{a} \rho^2  \Big[ 1 + \frac{128}{15\sqrt{\pi}} (\rho \frak{a}^3)^{1/2} + \dots \Big] \end{equation} 
up to smaller order corrections, in the dilute limit $\rho \frak{a}^3 \to 0$. Here $\frak{a} > 0$ is the scattering length of the interaction potential. 

At the level of rigorous mathematics, the validity of the leading order term in the expansion (\ref{eq:LHY0}) has been established by Dyson \cite{Dy}, as an upper bound, and by Lieb-Yngvason \cite{LY}, as a lower bound, for general repulsive interactions. At positive temperature, leading order estimates for the free energy of a dilute gas have been derived in \cite{Sei} (lower bound) and \cite{Y,BBCD} (upper bounds). 

In the last years, the focus has shifted towards the rigorous justification of the second order contribution to the ground state energy density, known as the Lee-Huang-Yang term. For particles interacting through weak, mean-field type, potentials, second order expansions for the ground state energy have been established in \cite{GS,Sei,LNSS,DN,P}. In the Gross-Pitaevskii regime, where $N$ particles move in a volume of order one and interact through a potential with scattering length of the order $1/N$, the ground state energy has been determined to the Lee-Huang-Yang order in \cite{BBCS1,BBCS2,BSS1,NT, BSS2,HST,B}. In this case, even the third order correction to the ground state energy, predicted in 1959 by Wu \cite{Wu} and by Hugenholtz-Pines \cite{HP}, has been recently established; see \cite{COSS}. 

In the thermodynamic limit, at low but fixed density $\rho > 0$, a lower bound matching (\ref{eq:LHY0}) up to the Lee-Huang-Yang order has been first shown in \cite{FS1}, for particles interacting through a repulsive, integrable potential, and then in \cite{FS2}, for more general interactions, including the hard-sphere potential. Very recently, these results have been also extended to low but positive temperatures; see \cite{HHNST,FJGMOT}. 

In the same thermodynamic setting, an upper bound capturing the Lee-Huang-Yang term has been first proven in \cite{YY}, assuming regular potential. A simpler second order upper bound for the ground state energy density has been later derived in \cite{BCS} and extended to positive temperatures in \cite{HHST}. Very recently, a third order upper bound, capturing the Lee-Huang-Yang term and also Wu's contribution, has been obtained in \cite{BOSS}. 

Compared with \cite{YY}, the upper bounds in \cite{BCS,HHST,BOSS} cover a larger class of interactions, but they still require integrability and thus they exclude the physically interesting case of a hard-sphere potential. For hard spheres, the best available upper bound resolves the ground state energy per unit volume up to the Lee-Huang-Yang order, but it does not achieve the correct constant; see \cite{BCGOPS}. 

%More precise lower bounds, capturing also %the second order correction in %(\ref{eq:LHY0}), have been recently shown %in \cite{FS1}, for particles interacting %through an integrable potential, and in 
%\cite{FS2}, for general repulsive %interactions (including hard spheres). An %upper bound matching  (\ref{eq:LHY0}) has %been first proven in \cite{YY}, assuming %regular potential. More recently, a %simpler second order upper bound for the %ground state energy per unit volume was %derived in \cite{BCS}. Compared with 
%\cite{YY}, this last result covers a %larger class of interactions, but it still %requires integrability and thus it %excludes the physically interesting case %of a hard-sphere potential. For hard %spheres, the best available upper bound %resolves the ground state energy up to the %Lee-Huang-Yang order, but it does not %resolve the correct constant; see 
%\cite{BCGOPS}. 

In this paper, we fill the gap in the literature, proving an upper bound for the ground state energy density of hard spheres, matching (\ref{eq:LHY0}), up to lower order corrections. Together with the lower bound from \cite{FS2}, this shows the validity of the Lee-Huang-Yang formula (\ref{eq:LHY0}) for the hard-sphere gas in the dilute limit. 

We consider a system of $N$ hard spheres, moving in the torus $\Lambda \simeq [-L/2;L/2]^3$. The ground state energy of the gas is given by 
\[ E_N^\text{hs} = \inf \frac{1}{\| \psi_N \|^2} \big\langle \psi_N, \sum_{j=1}^N -\Delta_{x_j} \Psi_N \big\rangle \]
where the infimum is taken over all $\psi_N \in L^2_s (\Lambda^N)$, the subspace of $L^2 (\Lambda^N)$ consisting of functions that are symmetric with respect to permutations, satisfying the hard-core condition 
\begin{equation}\label{eq:hc} \psi_N (x_1, \dots , x_N) = 0 , \quad \text{if there are $i,j \in \{ 1,\dots , N \}$ with $|x_i - x_j| \leq \frak{a}$\,.} \end{equation} 
We are interested in the ground state per unit volume at density $\rho > 0$, defined by the limit 
\be \label{def:e-rho}
e(\rho) = \lim_{\substack{N,L \to \infty \\ N/L^3 = \rho}} \frac{E_N^\text{hs}}{L^3} \, , \ee
which is well-known to exist, see  \cite[Thm 3.5.11]{R}.%\cite{R}. 

\begin{theorem} \label{thm:main} 
For $\rho > 0$, let $e(\rho)$ be defined as in (\ref{def:e-rho}). Then there exists $C > 0$ and a sufficiently small $\delta > 0$ such that  
\begin{equation}\label{eq:main} e(\rho) \leq 4 \pi \frak{a} \rho^2 \Big[ 1 + \frac{128}{15\sqrt{\pi}} (\rho \frak{a}^3)^{1/2} + C  (\rho\frak{a}^3)^{1/2+\delta} \Big] \end{equation} 
for all $\rho > 0$ small enough. 
\end{theorem}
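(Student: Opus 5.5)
We will construct an explicit trial state and bound its energy. The first step is a localization argument. Since $e(\rho)$ exists as a thermodynamic limit, it suffices to work in a box of side $\ell$ with $\ell$ much larger than the healing length $(\rho\frak{a})^{-1/2}$. A natural choice is $\ell = (\rho\frak{a})^{-1/2}(\rho\frak{a}^3)^{-\kappa}$ for small $\kappa>0$. We impose periodic (or Neumann-to-Dirichlet patched) boundary conditions on this box and glue copies together with thin Dirichlet corridors of width larger than $\frak{a}$, so that the hard-core condition (\ref{eq:hc}) holds across boxes. The cost of the gluing (kinetic energy of the boundary cutoff) is $O(\rho/\ell^{\,} \cdot \ldots)$ per unit volume and is negligible against $\rho^2\frak{a}(\rho\frak{a}^3)^{1/2+\delta}$ for this choice of $\ell$. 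This reduces (\ref{eq:main}) to a finite-box upper bound with $n=\rho\ell^3$ particles.

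In the small box we build the trial state in two layers. The first layer is a Jastrow factor $\prod_{i<j} f(x_i-x_j)$, where $f$ is the hard-sphere zero-energy scattering solution truncated at a radius $R$ with $\frak{a}\ll R\ll (\rho\frak{a})^{-1/2}$. This enforces (\ref{eq:hc}) exactly and captures the short-scale correlations that produce $4\pi\frak{a}$. The second layer handles the long-wavelength fluctuations responsible for the LHY term. We multiply by (or apply to the Jastrow function) a Bogoliubov-type correction, written in second quantization as $e^{B}$. Here $B$ is quadratic in excitations with momenta $|p|\lesssim (\rho\frak{a})^{1/2}(\rho\frak{a}^3)^{-\epsilon}$, and its kernel is determined by the Bogoliubov coefficients for an effective integrable interaction $V_{\rm eff} = 2\nabla f\cdot\nabla f$-type (equivalently, using $-\Delta f$ with $\hat{}$ at zero momentum equal to $8\pi\frak{a}$). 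A cubic term may also be needed to kill renormalization errors. The advantage of factoring the state this way is that the Laplacian acting on the product splits: $-\Delta$ on the Jastrow part produces the scattering equation, which gives $4\pi\frak{a}\rho^2$ plus two-body terms. The cross terms between the Jastrow gradient and the Bogoliubov part then behave like a soft, integrable interaction, to which the machinery for integrable potentials (as in the earlier works cited above) can be adapted.

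The key steps, in order, are as follows. (i) Expand the norm and the kinetic energy of the trial state using a cluster expansion in the Jastrow factor. Three-body and higher clusters are small because $\rho R^3\ll 1$. (ii) Identify the effective quadratic Hamiltonian in momentum space and show that $e^B$ diagonalizes it up to errors. This yields $\frac{1}{2}\sum_p[\sqrt{p^4+16\pi\frak{a}\rho p^2}-p^2-8\pi\frak{a}\rho+\frac{(8\pi\frak{a}\rho)^2}{2p^2}]$, whose Riemann sum gives exactly the $\frac{128}{15\sqrt\pi}$ term. (iii) Control all remaining errors by $C\rho^2\frak{a}(\rho\frak{a}^3)^{1/2+\delta}$.

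The main obstacle is step (i) interacting with (ii). For the Jastrow factor at density $\rho$, the cluster expansion must be carried out to a precision beating $(\rho\frak{a}^3)^{1/2}$ relative to the leading term. Moreover, the hard core prevents using $\hat V$, so all "potential" quantities must be expressed through $f$ and $\nabla f$. I would first attempt a Gaudin-type/ Dyson bound: use the positivity-based estimates on correlation functions of Jastrow states (three-body terms are controlled by $\rho R^3$ and four-body terms by $(\rho R^3)^2$). Then I would choose $R$ so that these error terms are $O((\rho\frak{a}^3)^{1/2+\delta})$ while the Bogoliubov momenta stay below $1/R$, and optimize $\epsilon,\kappa,R$ at the end to extract a positive $\delta$.
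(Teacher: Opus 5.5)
The two-layer picture (Jastrow factor at short distances, Bogoliubov correlations at the healing length) is the right one. As you set it up, however, it cannot reach LHY precision, for two concrete reasons.

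First, your scales leave a hole. You require $\rho R^3\ll 1$, so $R\ll\rho^{-1/3}=(\rho\mathfrak{a}^3)^{1/6}(\rho\mathfrak{a})^{-1/2}$. Meanwhile $B$ acts only on momenta $|p|\lesssim(\rho\mathfrak{a})^{1/2}(\rho\mathfrak{a}^3)^{-\epsilon}$, i.e.\ on lengths $\gtrsim(\rho\mathfrak{a})^{-1/2}(\rho\mathfrak{a}^3)^{\epsilon}$. On the intermediate scales the scattering tail $\mathfrak{a}/|x|$ is absent from your trial state, and this is costly. Any $f$ vanishing on $B_{\mathfrak{a}}(0)$ and equal to $1$ outside $B_R(0)$ has $\int|\nabla f|^2\geq 4\pi\mathfrak{a}R/(R-\mathfrak{a})$. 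So the Jastrow part already overshoots $4\pi\mathfrak{a}\rho^2$ by $\gtrsim\rho^2\mathfrak{a}^2/R\gg\rho^{5/2}\mathfrak{a}^{5/2}$. Low-momentum modes cannot give this back: in your final sum it is the counterterm $(8\pi\mathfrak{a}\rho)^2/(2p^2)$ left uncancelled on the window $(\rho\mathfrak{a})^{1/2}(\rho\mathfrak{a}^3)^{-\epsilon}\lesssim|p|\lesssim R^{-1}$. Pushing $R$ up to $(\rho\mathfrak{a})^{-1/2}(\rho\mathfrak{a}^3)^{\epsilon}$ instead makes your expansion parameter $\rho R^3$ huge. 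Even the true Jastrow parameter $\rho\mathfrak{a}R^2\sim(\rho\mathfrak{a}^3)^{2\epsilon}$ is then far larger than $(\rho\mathfrak{a}^3)^{1/2}$; this is exactly the obstruction to the pure Jastrow approach described in the introduction. The paper resolves the tension by giving the Bogoliubov kernel a short-distance part. The kernel $\tilde\sigma=\chi_{\ell_0}(2\cdot)(1-\chi_\ell(2\cdot))f_\ell^{-1}[\rho_0\omega_\ell+s]$ lives on $\ell\leq|x|\leq 2\ell_0$, with $\ell=\mathfrak{a}(\rho\mathfrak{a}^3)^{-\delta}$ barely above $\mathfrak{a}$. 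It therefore carries all momenta up to $1/\ell$ and effectively turns $f_\ell$ into $f_{\ell_0}$. This is why the energy collapses into the perfect square $\|\nabla\theta\|_2^2$ in the proof of Lemma~\ref{lm:E-rho}.

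Second, your step (i) is precisely where the difficulty lies, and you give no mechanism for it. Your boxes have side $\ell_{\mathrm{box}}$ (your $\ell$) and contain $n=\rho\ell_{\mathrm{box}}^3\gg1$ particles. The squared modulus of your trial state is then no longer a Gibbs weight $\prod_{i<j}f^2(x_i-x_j)$ with a pair interaction. So cluster expansions and correlation bounds for Jastrow states do not apply, and the Dyson-type cancellation of huge normalization factors between numerator and denominator is no longer visible. The paper says explicitly that it is unclear whether such a product state can be controlled in the thermodynamic limit. The missing idea is to never compute the normalization. The paper uses the grand-canonical state $\Psi=Z^{-1}JW(\rho_0)T\Omega$. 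Together with the explicit action of $W(\rho_0)$ and $T$, the relation $a_xJ=J(x)Ja_x$ gives the closed identity (\ref{eq:id}), $a_x\Psi=\sqrt{\rho_0}J(x)\Psi+J(x)\int dy\,\nu(x-y)a_y^*J(y)\Psi$. From it, all local moments of particles and excitations are bounded self-consistently in $\Psi$ itself (Lemma~\ref{lm:Nm}, by induction, using the contraction $\widehat\nu_k^2=1-\widehat\gamma_k^{-2}\leq1-C\rho^{\eps}$). The energy is then computed directly in the large box, and the canonical bound follows from equivalence of ensembles, with no small-box gluing.

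Your localization estimate is also wrong as stated. A Dirichlet layer of width $d$ costs $\gtrsim\rho/(\ell_{\mathrm{box}}d)$ per unit volume, and the lost volume shifts the leading term by $\sim\rho^2\mathfrak{a}\,d/\ell_{\mathrm{box}}$. Making both $\ll\rho^{5/2}\mathfrak{a}^{5/2}$ forces $\ell_{\mathrm{box}}\gg\rho^{-1}\mathfrak{a}^{-2}=(\rho\mathfrak{a})^{-1/2}(\rho\mathfrak{a}^3)^{-1/2}$, i.e.\ $\kappa>1/2$ rather than $\kappa$ small.
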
 

\textbf{Remark.} In contrast to \cite{YY,BCS,HHST,BOSS}, where the trial state on $\Lambda$ was constructed by pasting together a large number of identical states defined on smaller boxes, with size independent of $L$, here we work directly in the thermodynamic box. 

%\begin{remark} 
%In 1959, the prediction (\ref{eq:LHY0}) was extended to the next %order by Wu \cite{Wu} and Pines-Hugenholtz \cite{PH}. Recently, a %rigorous upper bound matching Wu's prediction has been shown in 
%\cite{BOSS}, for particles interacting through a square integrable %potential. For hard spheres, the derivation of an upper bound %resolving Wu's energy remains open. 
%\end{remark}
%\begin{remark}
%In the Gross-Pitaevskii regime, where $N$ particles are trapped in %the unit torus and the scattering length of the potential scales %as $1/N$, the analogous of (\ref{eq:LHY0}) has been known since 
%\cite{BBCS1,BBCS2}, assuming particles to interact through a non-%negative, spherical symmetric  and compactly supported $V \in L^3 %(\bR^3)$. In this setting, also Wu's corrections have been %recently resolved in \cite{COSS}. For hard spheres in the Gross-%Pitaevskii limit, an upper bound matching (\ref{eq:LHY0}) has been %shown in \cite{BCOPS}. 
%\end{remark} 

\medskip

The proof of Theorem \ref{thm:main} is based on the construction of an appropriate trial state and on the computation of its energy. As first proposed in \cite{Bijl,Dingle,Jastrow}, to approximate the ground state of a dilute gas it is natural to consider so-called Jastrow factors having the form 
\begin{equation}\label{eq:jastrow} \Psi_\text{J} (x_1, \dots , x_N) = \prod_{i<j}^N f_\ell (x_i -x_j) \, \end{equation}
 with $f_\ell$ describing two-particle correlations produced by the interaction, up to a length scale $\ell \gg \frak{a}$. For hard spheres, an obvious choice is given by 
 \begin{equation}\label{eq:fell0} f_\ell (x) = \left\{ 
 \begin{array}{ll}  
 1 - \frac{\frak{a}}{|x|} \chi_\ell (x) \quad &\text{if $|x| \geq \frak{a}$} , \\ 0 , \quad &\text{if $|x| \leq \frak{a}$} \end{array} \right.  \end{equation} 
 with a cutoff $\chi_\ell (x) = \chi (x/\ell)$, where $\chi \in C^\infty_0 (\bR^3)$, with $\chi (x) = 1$, if $|x| \leq 2$, and $\chi (x) = 0$, for $|x| \geq 4$. 
% $\chi (x) = 1$, if $|x| \leq 1$, and $\chi (x) = 0$, for $|x| \geq 2$. 
For (\ref{eq:jastrow}), we can compute 
\[ \nabla_{x_j} \Psi_\text{J} (x_1, \dots , x_N) = \sum_{i \not = j} \frac{\nabla f_\ell (x_j - x_i)}{f_\ell (x_j - x_i)} \Psi_\text{J} (x_1, \dots , x_N) \]
which leads us to the kinetic energy 
\begin{equation}\label{eq:kin0} \begin{split} \langle \Psi_\text{J} , &\sum_{j=1}^N -\Delta_{x_j} \Psi_\text{J} \rangle \\ = \; &2 \sum_{i<j}^N \int dx_1 \dots dx_N \,  \frac{|\nabla f_\ell (x_j - x_i)|^2}{f_\ell^2 (x_j - x_i)} |\Psi_\text{J} (x_1, \dots , x_N)|^2 \\ &+ \sum_{i,j,m}^N \int dx_1 \dots dx_N \, \frac{\nabla f_\ell (x_j - x_i)}{f_\ell (x_j-x_i)}  \cdot \frac{\nabla f_\ell (x_j -  x_m)}{f_\ell (x_j - x_m)} | \Psi_\text{J} (x_1 , \dots. ,x_N)|^2 \end{split} \end{equation} 
where the sum on the last line runs over $1\leq i,j, m \leq N$ all different.  With the choice (\ref{eq:fell0}), we find (see Lemma \ref{lm:fell} below) \begin{equation}\label{eq:nablaf2} \| \nabla f_\ell \|^2 \leq 4 \pi \frak{a} \Big(1 + C \frac{\frak{a}}{\ell} \Big)\,.\end{equation}  
Recalling that $0 \leq f_\ell (x) \leq 1$, we obtain  
\[ \begin{split} 
2 \sum_{i<j} \int dx_1 \dots dx_N \,  &\frac{|\nabla f_\ell (x_j - x_i)|^2}{f_\ell^2 (x_j - x_i)} |\Psi_\text{J} (x_1, \dots , x_N)|^2 
%\\ = \; &N (N-1) \int dx_1 \dots dx_N \, \frac{|\nabla f_\ell (x_1 - x_2)|^2}{f^2_\ell (x_1 - x_2)} \prod_{i<j}^N f_\ell^2 (x_i - x_j)  
\\ \leq \; &N^2 \int dx_1 \dots dx_N \,  |\nabla f_\ell (x_1 -x_2)|^2 \prod_{2 \leq i < j \leq N} f_\ell^2 (x_i -x_j)  \\ \leq \; &4\pi \frak{a} N^2 \Big( 1 + C \frac{\frak{a}}{\ell} \Big) \int dx_2 \dots dx_N \, \prod_{2 \leq i < j \leq N} f_\ell^2 (x_i -x_j) \end{split} \]
and, since $0 \leq 1 - f_\ell^2 (x) \leq C \frak{a} / |x|$, 
\begin{equation}\label{eq:norm0} \begin{split} \| \Psi_\text{J} \|^2 &=  \int dx_1 \dots dx_N \, \prod_{i<j}^N f_\ell^2 (x_i - x_j) \\ &\geq \int dx_1 \dots dx_N \, \Big( 1 - \sum_{m=2}^N (1-f_\ell^2 (x_1 - x_m)) \Big) \prod_{2 \leq i < j \leq N} f_\ell^2 (x_i -x_j) \\ &\geq \Big( L^3 - C N \frak{a} \ell^2 \Big) \int dx_2 \dots dx_N \, \prod_{2 \leq i < j \leq N} f_\ell^2 (x_i - x_j) \,.
%\\ &= L^3 \big( 1 - C \rho 
%\frak{a} \ell^2 \big) \int dx_2 
%\dots dx_N \, \prod_{2 \leq i < j %\leq N} f_\ell^2 %(x_i - x_j)  
\end{split} \end{equation} 
Proceeding similarly to bound the second term on the r.h.s. of (\ref{eq:kin0}) and noticing that $L^3 - C N \frak{a} \ell^2 = L^3 (1 - C \rho \frak{a} \ell^2)$, we arrive at
%Hence, we find 
%\[ \frac{2}{\| \Psi_\text{J} \|^2} \sum_{i<j} \int dx_1 \dots dx_N \,  \frac{|\nabla f_\ell (x_j - x_i)|^2}
%{f_\ell^2 (x_j - x_i)} |\Psi_\text{J} (x_1, \dots , x_N)|^2 \leq 4\pi \frak{a} \rho N \Big( 1+ C \frac{\frak{a}}%{\ell} + C \rho \frak{a} \ell^2 \Big) \]
% To estimate the contribution of the second term on the r.h.s. of (\ref{eq:kin0}), we proceed similarly, %noticing that 
%\[  \begin{split} 
%\Big| \sum_{i,j,m}^N \int dx_1 \dots dx_N \, \frac{\nabla f_\ell (x_j - x_i)}{f_\ell (x_j-x_i)}  \cdot %&\frac{\nabla f_\ell (x_j -  x_m)}{f_\ell (x_j - x_m)} | \Psi_\text{J} (x_1 , \dots. ,x_N)|^2 \Big|  \\ & \leq C %N^3 \int dx_3 \dots dx_N \prod_{3 \leq i< j \leq N} f_\ell^2 (x_i -x_j) \end{split} \]
%and that, iterating (\ref{eq:norm0}), 
%\[ \| \Psi_\text{J} \|^2 \geq L^6 \Big( 1 - C \rho \frak{a} \ell^2 \Big) \int dx_3 \dots dx_N \, \prod_{3\leq i %< j \leq N} f_\ell^2 (x_i - x_j)  \]
%This implies that 
\begin{equation}\label{eq:dyson0}   \liminf_{\substack{N,L \to \infty : \\ N/L^3 = \rho}} \; \frac{1}{L^3}  \frac{\langle \Psi_\text{J}, \sum_{j=1}^N -\Delta_{x_j} \Psi_\text{J} \rangle}{\| \Psi_\text{J} \|^2} \leq  4 \pi \frak{a} \rho^2 \Big( 1 + C \frac{\frak{a}}{\ell} + C \rho \frak{a} \ell^2 \Big)\,.  \end{equation} 
With the optimal choice $\ell = \rho^{-1/3}$, we  derive the bound \begin{equation}\label{eq:dyson1} e (\rho) \leq 4 \pi \frak{a} \rho^2 (1 + C (\rho \frak{a}^3)^{1/3})  \end{equation} 
first proven by Dyson in \cite{Dy} (in his work, Dyson used a non-symmetric version of (\ref{eq:jastrow})). Notice that (\ref{eq:dyson0}) crucially relies on cancellations between huge but identical factors appearing both in the expectation of the energy (in the numerator) and in the norm of $\Psi_\text{J}$ (in the denominator) on the l.h.s. of the inequality.

This simple computation captures the correct leading order contribution to the ground state energy per unit volume, but it misses the Lee-Huang-Yang corrections. Through a systematic cluster-type expansion of  (\ref{eq:jastrow}) to sufficiently high order, the cutoff in (\ref{eq:fell0}) can be increased  to $\ell = c (\rho \frak{a})^{-1/2}$, with $c > 0$ small enough, leading to an upper bound resolving the ground state energy per unit volume up to corrections of the Lee-Huang-Yang order; see  \cite{BCGOPS}. In order to capture the correct Lee-Huang-Yang term with the trial state (\ref{eq:jastrow}), though, one would need to include the correct correlation structure in $f_\ell$ up to a length scale $\ell$ large in comparison with  the healing length $(\rho \frak{a})^{-1/2}$; this seems quite a challenge, since the convergence of the expansion used in \cite{BCGOPS} requires $\rho \frak{a} \ell^2 \ll 1$. 

For ``soft'' interaction potentials, a different approach to derive upper bounds on the ground state energy has been proposed in \cite{GA} and, more recently, in \cite{ESY}. It is based on the use of Bogoliubov transformations to model correlations, up to and beyond the healing length scale $(\rho \frak{a})^{-1/2}$. While quasi-free states generated by Bogoliubov maps cannot reach the desired precision \cite{NRS}, they can resolve the Lee-Huang-Yang term after appropriate modifications by means of cubic transformations describing scattering processes involving a particle in the Bose-Einstein condensate and three of its orthogonal excitations; see 
\cite{YY,BCS}.  

Unfortunately, it is problematic to extend this approach to the hard-sphere potential, because it is difficult to enforce the condition (\ref{eq:hc}) on a Bogoliubov state. Still, taking inspiration from the soft-potential case, it makes sense to modify (\ref{eq:jastrow}), keeping the Jastrow factor on short length scales $\frak{a} \ll \ell \ll (\rho \frak{a})^{-1/2}$ (making sure in particular that the hard-core condition (\ref{eq:hc}) is satisfied), but multiplying it with a Bogoliubov state, describing correlations up to (and just beyond) the healing length $(\rho \frak{a})^{-1/2}$. This attempt works in the Gross-Pitaevskii scaling \cite{BCOPS}, where the effective density vanishes as $N \to \infty$. In the thermodynamic limit, on the other hand, it is not clear whether it can work, because the presence of the Bogoliubov state makes it difficult to identify  cancellations between numerator and denominator that played an important role in (\ref{eq:dyson0}). 

In this paper, we follow a different approach, working in the grand canonical ensemble on the bosonic Fock space $\cF = \bigoplus_{n \geq 0} L^2 (\Lambda)^{\otimes_s n}$. In this setting, we implement the Jastrow factor (\ref{eq:jastrow}) through the operator $J : \cF \to \cF$, defined by 
\[ (J \Phi)^{(n)}  (x_1,\dots, x_n) = \prod_{i<j}^n f_\ell (x_i - x_j) \Phi^{(n)} (x_1, \dots , x_n) \]
for all $n \in \bN$, $\Phi \in \cF$. For any $x \in \Lambda$, the action of the annihilation operator $a_x$ on $J$ is given by 
\begin{equation} \label{eq:axJ-0} a_x J = J(x) J a_x \end{equation}
where $J(x) : \cF \to \cF$ is defined through  
\[ (J(x) \Phi)^{(n)} (x_1, \dots , x_n) = \prod_{j=1}^n f_\ell (x - x_j) \Phi^{(n)} (x_1, \dots , x_n) \]
for all $x \in \Lambda$ (the identity (\ref{eq:axJ-0}) will be shown in Section \ref{sec:trial}, where we will also introduce the formalism of second quantization, with precise definitions of annihilation and creation operators). 

From the equivalence of ensembles in the thermodynamic limit, to prove the upper bound (\ref{eq:main}) for the ground state energy per unit volume, it is enough to construct a trial state $\Psi \in \cF$, with expected number of particles $\langle \Psi, \cN \Psi \rangle \geq \rho L^3$ and with kinetic energy per unit volume matching (\ref{eq:main}). As a first (too) simple attempt to reach this goal, we define the trial state  
\begin{equation} \label{eq:psi-simple} \widetilde\Psi = \frac{1}{Z} J W(\rho_0) \Omega \end{equation} 
where $Z = \| J W (\rho_0) \Omega \|$ is a normalization constant, $\Omega = \{ 1, 0, 0, \dots \}$ is the vacuum vector in $\cF$ and  
\[ W (\rho_0) = e^{\sqrt{\rho_0} \int dx (a_x^* - a_x)} \]
is a Weyl operator, generating the coherent state $W (\rho_0) \Omega$, modeling a Bose-Einstein condensate with, in average, $N_0 = \rho_0 L^3$ particles. Recalling that coherent states are eigenvectors of creation and annihilation operators, we find  
\begin{equation} \label{eq:a-simple} a_x \widetilde{\Psi} = \frac{1}{Z} a_x J W (\rho_0) \Omega= \frac{1}{Z} J(x) J a_x W (\rho_0) \Omega = \sqrt{\rho_0} J(x) \widetilde{\Psi}\,. \end{equation} 
This allows us to compute the expectation of the kinetic energy, defined on $\cF$ as the second quantization of the Laplacian, in the state (\ref{eq:psi-simple}):  
\begin{equation*} \begin{split} \langle \widetilde{\Psi}, & \, d\Gamma (-\Delta) \widetilde{\Psi} \rangle \\ = \; &\int dx \, \langle \nabla_x a_x \widetilde{\Psi} , \nabla_x a_x \widetilde{\Psi} \rangle = \rho_0 \int dx \, \langle \nabla_x J(x) \widetilde{\Psi}, \nabla_x J(x) \widetilde{\Psi} \rangle \\ = \; & \rho_0 \sum_{n \geq 1} \sum_{j=1}^n \int dx \int  dx_1 \dots dx_n \, \frac{|\nabla f_\ell (x - x_j)|^2}{f_\ell^2 (x-x_j)} \prod_{m=1}^n f_\ell^2 (x-x_m) |\widetilde{\Psi}^{(n)} (x_1, \dots ,x_n)|^2 \\ & + \rho_0 \sum_{n\geq 1} \sum_{i<j}^n \int dx \int dx_1\dots dx_n \,
  \frac{\nabla f_\ell (x - x_j)}{f_\ell (x-x_j)}  \cdot \frac{\nabla f_\ell (x - x_i)}{f_\ell (x-x_i)} \prod_{m=1}^n f_\ell^2 (x-x_m) \\ &\hspace{9cm} \times  | \widetilde{\Psi}^{(n)} (x_1, \dots ,x_n)|^2  \\ =: \, &\text{I} + \text{II}\,. \end{split}  \end{equation*}
Since $0 \leq f_\ell (x) \leq 1$ and recalling (\ref{eq:nablaf2}), the first term on the r.h.s. is bounded by 
\[ \begin{split}
% \rho_0 \sum_{n \geq 1} \sum_{j=1}^n &\int dx \int \frac{|\nabla f_\ell (x - x_j)|^2}{f_\ell^2 (x-x_j)} %\prod_{m=1}^n f_\ell^2 (x-x_m) |\widetilde{\Psi}^{(n)} (x_1, \dots ,x_n)|^2 \\ &\hspace{2cm}
\text{I} \leq 4\pi \frak{a} \rho_0 \Big( 1 + C \frac{\frak{a}}{\ell} \Big) \sum_n n \| \widetilde{\Psi}^{(n)} \|^2 \leq 4 \pi \frak{a} \rho_0 N_0 \Big( 1+ C \frac{\frak{a}}{\ell} \Big)\,. \end{split} \]
As for the second term, we apply (\ref{eq:a-simple}) and observe that $J(x), J(y) \leq 1$ (and that $a_x J(y) = f_\ell (x-y) J(y) a_x$, as shown below in (\ref{eq:pull})) to conclude that 
\[ \begin{split} 
\text{II} \leq \; &C \frac{\rho_0}{\ell}  \sum_{n\geq 1} \sum_{i<j}^n \int_{|x_i - x_j| \leq C\ell} dx_1 \dots dx_n \, |\widetilde{\Psi}^{(n)} (x_1, \dots , x_n)|^2 \\ = \; &C \frac{\rho_0}{\ell}  \int_{|x-y| \leq C \ell}  dx dy \, \langle \widetilde{\Psi}, a_x^* a_y^* a_y a_x \widetilde{\Psi} \rangle \leq C \rho_0^2 \ell^2 N_0\,. \end{split} \]
It follows that 
\[ \frac{1}{L^3} \langle \widetilde{\Psi}, d\Gamma (-\Delta) \widetilde{\Psi} \rangle \leq 4 \pi \frak{a} \rho_0^2 \Big( 1 + C \frac{\frak{a}}{\ell} +C \rho_0 \frak{a} \ell^2 \Big) \,. \]
Choosing $\rho_0 = \rho$ (in fact, we need to choose $\rho_0 = \rho (1 +C \rho)$ a bit larger than $\rho$, to make sure that $\langle \widetilde{\Psi}, \cN \widetilde{\Psi} \rangle = \rho L^3$; the difference in the energy is subleading, though) and $\ell = \rho^{-1/3}$, we obtain again Dyson's bound (\ref{eq:dyson1}). 

This computation is similar to the one leading to (\ref{eq:dyson0}) for the energy of the $N$-particle Jastrow factor (\ref{eq:jastrow}). It has however one crucial difference: it does not require cancellations of huge normalization factors between numerator and denominator. Thanks to this observation, we will be able to modify the simple trial state (\ref{eq:psi-simple}), without worrying about messing up the normalization. Instead of (\ref{eq:psi-simple}), we will consider the trial state $\Psi = Z^{-1} J W (\rho_0) T \Omega$, inserting the Bogoliubov transformation $T$, generating correlations at large scales, up to (and just beyond) the healing length $(\rho \frak{a})^{-1/2}$. Since the action of $T$ on creation and annihilation operators is explicit, we will again be able to derive an identity, more complicated than (\ref{eq:a-simple}) but still simple enough, for the action of $a_x$ on $\Psi$. We will use this identity to estimate the number of particles and the kinetic energy of $\Psi$; this will lead us to (\ref{eq:main}). 

The paper is organized as follows. In Section \ref{sec:trial}, we introduce the formalism of second quantization, we define precisely our trial state $\Psi$ and we state bounds on the expectation of number of particles and kinetic energy operators in $\Psi$. In Section \ref{sec:proof}, we show that these bounds imply our main result, Theorem \ref{thm:main}. In Section \ref{sec:apri}, we establish precise estimates on the distribution of the local number of particles and the local number of excitations in the state $\Psi$. These estimates immediately imply the desired bound on the global number of particles. Moreover, they give us the main tools to control error terms arising in the computation of the energy of $\Psi$, which is deferred to Section \ref{sec:energycomputation}.  

\medskip

{\it Acknowledgements.} M.B. and B.S. gratefully acknowledge financial support from the Swiss National Science Foundation through the Grant ``Dynamical and energetic properties of Bose-Einstein condensates''.  G.B., S.C.  and A.O. gratefully acknowledges financial support from the European Research Council through the ERC Starting Grant MaTCh, grant agreement n. 101117299. G.B., S.C. and A.O. also warmly acknowledge the GNFM (Gruppo Nazionale per la Fisica Matematica) - INDAM. A.O. acknowledges financial support from the MUR Grant “Dipartimento di Eccellenza 2023-2027” of Dipartimento di Matematica, Politecnico di Milano, and from the Politecnico di Milano ``Seed Fund Grant''.

\section{The Trial State} 
\label{sec:trial}

We consider the bosonic Fock space 
\begin{equation*}
	\mathcal{F}=\bigoplus_{n=0}^{+\infty} L^2(\Lambda )^{\otimes_\mathrm{sym}n}
\end{equation*}
constructed on the one-particle Hilbert space $L^2 (\Lambda)$, where $\Lambda= [ -L/2 ; L/2]^3$, with periodic boundary conditions. 

On $\cF$, we define the number of particle operator $\cN = \bigoplus_{n=0}^\infty n$ and the kinetic energy operator $\cK = \bigoplus_{n=0}^\infty \sum_{j=1}^n -\Delta_{x_j}$.

Moreover, for $f \in L^2 (\Lambda)$ we introduce the creation and annihilation operators $a^* (f), a(f)$. For $\Psi \in \cF$ and $n \in \bN$, their action is defined by    
\[ \begin{split} (a (f) \Psi)^{(n)}  (x_1, \dots , x_n) &= \sqrt{n+1} \int_{\Lambda}  dx \bar{f} (x) \Psi^{(n+1)} (x, x_1, \dots , x_n) \\ 
(a^* (f) \Psi)^{(n)} (x_1, \dots, x_n) &= \frac{1}{\sqrt{n}} \sum_{j=1}^n f (x_j) \Psi^{(n-1)} (x_1, \dots , x_{j-1}, x_{j+1}, \dots , x_n)\,. \end{split} \]
They satisfy the canonical commutation relations 
\begin{equation*}\label{eq:CCR} 
\big[ a (f) , a^* (g) \big] = \langle f , g \rangle , \qquad \big[ a (f) , a(g) \big] = \big[ a^* (f) , a^* (g) \big] = 0  
\end{equation*}  
for every $f,g \in L^2 (\Lambda)$. It is also convenient to define operator-valued distributions $a_x, a_x^*$, $x \in \Lambda$, so that 
\[  a(f) = \int \bar{f} (x) a_x \, dx , \qquad a^* (f) = \int f(x) a_x^* \, dx \]
and \[ [a_x, a_y^* ] = \delta (x-y), \qquad [a_x , a_y] = [a_x^* , a_y^*] = 0 \] for all $x, y \in \Lambda$. Explicitly, the action of $a_x$ is described by 
\begin{equation} \label{eq:ax} (a_x \Psi)^{(n)} (x_1, \dots , x_n) = \sqrt{n+1} \, \Psi^{(n+1)} (x, x_1, \dots , x_n) \end{equation} 
for arbitrary $\Psi \in \cF$ and $n \in \bN$. In terms of these operator-valued distributions, we can write 
\begin{equation}\label{eq:cNcK} \cN = \int dx \, a_x^* a_x \, ,  \qquad \cK = \int dx \nabla_x a_x^* \nabla_x a_x\,. \end{equation} 

In this section, we construct a normalized trial state $\Psi \in \cF$, satisfying the hard-core condition, with particle density $\langle \Psi, \cN \Psi \rangle / |\Lambda|$ close to $\rho > 0$ and energy density $\langle \Psi, \cK \Psi \rangle / |\Lambda|$ close to the Lee-Huang-Yang prediction, in the limit of small $\rho \frak{a}^3 \ll 1$. To reach this goal, we consider first a Bose-Einstein condensate with density $0 < \rho_0 < \rho$ to be fixed later on. The condensate is modeled by a coherent state $W (\rho_0) \Omega$, generated by the unitary Weyl operator 
\[ W (\rho_0) = e^{a^* (\sqrt{\rho_0}) - a (\sqrt{\rho_0})} = e^{\sqrt{\rho_0} \int dx \, (a_x^* - a_x)} \,  \]
acting on the vacuum vector $\Omega = \{ 1, 0, 0, \dots \}$. The action of $W(\rho_0)$ on creation and annihilation operators is determined by 
\begin{equation}\label{eq:shift} W(\rho_0)^* a_x W(\rho_0) = a_x + \sqrt{\rho_0}, \qquad W(\rho_0)^* a_x^* W(\rho_0) = a_x^* + \sqrt{\rho_0}\,. \end{equation} 
Hence, the coherent state $W (\rho_0) \Omega$ is an eigenvector of all annihilation operators, with 
\[ a_x W(\rho_0) \Omega = \sqrt{\rho_0} \, W(\rho_0) \Omega  \]
and its expected number of particles is given by 
\[ \langle W(\rho_0) \Omega, \cN W(\rho_0) \Omega \rangle =  \rho_0 L^3 =: N_0 \,.\]
More precisely, the number of particles in the coherent state 
\[ W (\rho_0) \Omega = e^{-N_0/2} \Big\{ 1, \sqrt{\rho_0}, \dots \frac{\rho_0^{n/2}}{\sqrt{n!}}, \dots \Big\} \]
is Poisson distributed, with average and variance given by $N_0$. 

The coherent state $W (\rho_0) \Omega$ describes a completely uncorrelated state. To make sure that our trial state satisfies the hard-core condition (\ref{eq:hc}) and that its energy is close to the true ground state energy, we need to add correlations.  To this end, we consider the solution
\begin{equation}\label{eq:0-en} f(x) = \left\{ \begin{array}{ll} 1 - \frac{\frak{a}}{|x|} , &\quad \text{if } |x| \geq \frak{a} \\ 0 , &\quad \text{if } |x| < \frak{a} \end{array} \right. \end{equation} 
of the zero energy scattering equation
\[ -\Delta f (x) = 0 , \quad \text{for } |x| \geq \frak{a} \]
with the hard-core condition $f(x) = 0$ for $|x| = \frak{a}$ and the boundary condition $f (x) \to 1$, as $|x| \to \infty$. We fix $\ell = \frak{a} (\rho \frak{a}^3)^{-\delta}$, for some $\delta > 0$ which will be chosen small enough, and we define 
\begin{equation}\label{def:f-ell} \begin{split}  f_\ell (x) &= \chi_\ell (x) f (x) + (1 - \chi_\ell (x)) \\ &= 1 - \chi_\ell (x) (1 - f(x)) = \left\{ \begin{array}{ll} 1 - \frac{\frak{a}}{|x|} \chi_\ell (x),  &\text{if } |x| \geq \frak{a} \\ 0 , &\text{if } |x| \leq \frak{a} \end{array} \right. \end{split}  \end{equation} 
where $\chi_\ell (x) = \chi (x/\ell)$ and $\chi \in C^\infty_c (\bR^3)$, with $\chi (x) = 1$ for $|x| \leq 2$, $\chi (x) = 0$ for $|x| > 4$. By definition, $f_\ell$ describes the two-particle correlations generated by the hard-core potential, up to the length scale $\ell = \frak{a} (\rho \frak{a}^3)^{-\delta}$, much larger than the scattering length $\frak{a}$ but much smaller than the healing length $(\rho \frak{a})^{-1/2}$. In the next lemma, we collect some important properties of $f_\ell$.
\begin{lemma}\label{lm:fell} Let $f_\ell$ be as defined in \eqref{def:f-ell} and 
\[
    \o_\ell (x) = 1 - f_\ell (x)= \left\{ \begin{array}{ll} 1 \; &\quad \text{if } |x| < \frak{a}  \\ \frac{\frak{a}}{|x|} \chi_\ell (x) &\quad \text{if } |x| \geq \frak{a} \,. \end{array} \right.  
\]
Then $0 \leq f_\ell(x) , \o_\ell (x) \leq 1$ and, for $|x|>\frak{a}$, 
\[ |\nabla f_\ell (x)| = |\nabla \o_\ell (x)| \leq C \frac{\frak{a}}{|x|^2} \chi_\ell (x) \,. \]
Moreover,  
\begin{equation} \label{eq:nabla-fell}
\Big|\int |\nabla f_\ell(x)|^2 \dx - 4 \pi \aa \Big| \leq C \frac{\aa^2}{\ell} \,.
\end{equation}
%Moreover, setting $\o_\ell=1-f_\ell$ we have
%\[
%0 \leq \o_\ell(x) \leq  \frac{\aa \chi_\ell(x)}{|x|}\,, \qquad |\nabla \o_\ell(x)| \leq \frac{C \aa \chi_\ell(x)}{|%x|^2}\,.
%\]
\end{lemma}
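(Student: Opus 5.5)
The bounds $0\le f_\ell,\o_\ell\le 1$ follow directly from the definition. For $|x|<\frak{a}$ we have $\o_\ell=1$ and $f_\ell=0$. For $|x|\ge\frak{a}$ we have $\o_\ell(x)=\frac{\frak{a}}{|x|}\chi_\ell(x)\in[0,1]$, because $0\le\chi\le 1$ (we take $\chi$ with values in $[0,1]$, as is implicit in its choice) and $\frak{a}/|x|\le 1$; hence $f_\ell=1-\o_\ell\in[0,1]$ as well.

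For the gradient bound with $|x|>\frak{a}$, we differentiate $\o_\ell(x)=\frak{a}|x|^{-1}\chi(x/\ell)$:
\[ \nabla\o_\ell(x) = -\frac{\frak{a}\,x}{|x|^3}\chi_\ell(x) + \frac{\frak{a}}{|x|\,\ell}(\nabla\chi)(x/\ell). \]
The first term is bounded by $\frak{a}|x|^{-2}\chi_\ell(x)$. The second term is supported where $2\ell\le|x|\le 4\ell$, so there $1/\ell\le 4/|x|$ and it is at most $C\frak{a}|x|^{-2}$. To state the bound with the factor $\chi_\ell(x)$ on the right, as in the lemma, we read it with a cutoff that dominates both $\chi_\ell$ and $\nabla \chi$ on its support. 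The cleanest way is to choose $\chi$ with $|\nabla\chi|\le C\chi$ away from the boundary of its support, or to interpret the statement as $C\frak{a}|x|^{-2}\mathbf{1}_{|x|\le 4\ell}$. I regard this as a harmless bookkeeping point: any smooth cutoff equal to $1$ on the support of $\chi$ works in all later uses. Since $f_\ell=1-\o_\ell$, we also have $|\nabla f_\ell|=|\nabla\o_\ell|$.

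For \eqref{eq:nabla-fell}, we split $\bR^3$ (for $\ell\ll L$ the function lives in $\Lambda$) into the regions $\frak{a}\le|x|\le 2\ell$, where $\chi_\ell\equiv1$, and $2\ell\le|x|\le4\ell$. Since $f_\ell$ is constant inside the ball of radius $\frak{a}$, that ball contributes nothing. On the first region $\nabla f_\ell=\frak{a}x/|x|^3$, so
\[ \int_{\frak{a}\le|x|\le2\ell}\frac{\frak{a}^2}{|x|^4}\,dx=4\pi\frak{a}^2\Big(\frac1{\frak{a}}-\frac1{2\ell}\Big)=4\pi\frak{a}-\frac{2\pi\frak{a}^2}{\ell}. \]
On the annulus we use the pointwise bound $|\nabla f_\ell|\le C\frak{a}|x|^{-2}$ from the previous step:
\[ \int_{2\ell\le|x|\le4\ell} C\frac{\frak{a}^2}{|x|^4}\,dx\le C\frac{\frak{a}^2}{\ell}. \]
Combining the two regions gives $\big|\int|\nabla f_\ell|^2-4\pi\frak{a}\big|\le C\frak{a}^2/\ell$.

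There is no real obstacle here. The only delicate point is the presence of the factor $\chi_\ell$ in the pointwise gradient bound, which I would handle by the interpretation above.
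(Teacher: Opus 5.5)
Your proof is correct and is essentially the paper's argument. The paper expands $|\nabla\o_\ell|^2$ globally and bounds the two terms involving $\nabla\chi$ by $C\aa^2/\ell$; you instead split at $|x|=2\ell$ and compute exactly on the inner region, which is the same elementary computation.

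You are right that the factor $\chi_\ell(x)$ in the pointwise gradient bound cannot hold literally. It should be read as $\mathbf{1}_{|x|\le 4\ell}$, which is all that is used later. Drop the alternative of choosing $\chi$ with $|\nabla\chi|\le C\chi$, though. By Gronwall along rays, no nonzero compactly supported $\chi$ satisfies this near the edge of its support, and that is exactly where it would be needed.
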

%By definition
%\begin{equation} \label{eq:nabla-fell}
%\Big|\int |\nabla f_\ell|^2 dx - 4 \pi \aa \Big| \leq C \aa^2 \ell^{-1}\,.
%\end{equation}
\begin{proof} The bounds on $f_\ell, \o_\ell$ follow directly from (\ref{def:f-ell}). To prove (\ref{eq:nabla-fell}), we observe that 
\[ \int |\nabla f_\ell (x)|^2 dx = \int |\nabla \o_\ell (x)|^2 dx \]
and we compute 
\[ \nabla \o_\ell (x) = - \frac{\frak{a} x}{|x|^3} \chi (x/\ell) + \frac{\frak{a}}{\ell |x|} \nabla \chi (x/\ell)\,. \]
Hence 
\[ \begin{split} \int |\nabla f_\ell (x)|^2 dx = \; &\int_{|x| \geq \frak{a}} \frac{\frak{a}^2}{|x|^4} \chi^2 (x/\ell) \, dx \\ &+ \int_{|x| \geq \frak{a}} \frac{\frak{a}^2}{|x|^2 \ell^2} |\nabla \chi (x/\ell)|^2 dx - 2 \frac{\frak{a}^2}{\ell} \int \frac{x}{|x|^4} \cdot \nabla \chi (x/\ell) \chi (x/\ell) dx\,. \end{split} \]
With 
\[ \int_{|x| \geq \frak{a}} \frac{\frak{a}^2}{|x|^4} dx = 4\pi \frak{a} \]
and 
\[ \begin{split} \int_{|x| \geq 2 \ell}  \frac{\frak{a}^2}{|x|^4} dx \, ,  \; \;  \int_{2 \ell \leq |x| \leq 4 \ell} \frac{\frak{a}^2}{\ell^2 |x|^2} dx \, ,  \; \int_{2\ell \leq |x| \leq 4 \ell} \frac{\frak{a}^2}{\ell |x|^3} dx &\leq C \frac{\frak{a}^2}{\ell}  \end{split} \]
we obtain (\ref{eq:nabla-fell}). 
\end{proof}

With $f_\ell$, we introduce the operator $J: \cF \to \cF$, setting 
\begin{equation}\label{eq:def-Ja} (J \Phi)^{(n)}(x_1, \ldots, x_n) = \prod_{i<j}^n f_\ell (x_i-x_j) \Phi^{(n)} (x_1, \ldots, x_n) \, .
\end{equation} 
For $x \in \Lambda$, it is also convenient to define $J(x) : \cF \to \cF$ by   
\[ (J(x) \Phi)^{(n)} (x_1, \dots , x_n) = \prod_{j=1}^n f_\ell (x-x_j) \Phi^{(n)} (x_1, \dots , x_n)\,. \]
With (\ref{eq:ax}), we have, for $\Phi \in \cF$,  
\[ \begin{split} 
(a_x J \Phi)^{(n)} (x_1, \dots, x_n) &= \sqrt{n+1} \, (J\Phi)^{(n+1)} (x,x_1, \dots , x_n) \\ &= \sqrt{n+1} \, \prod_{k=1}^n f_\ell (x - x_k) \prod_{1 \leq i<j \leq n} f_\ell (x_i - x_j) \Phi^{(n+1)} (x, x_1, \dots , x_n) 
\\ &= (J(x) J a_x \Phi)^{(n)} (x_1, \dots , x_n)\,. \end{split} \]
In other words, 
\begin{equation}\label{eq:aJJ} a_x J = J(x) J a_x , \qquad J a_x^* = a_x^* J(x) J  \end{equation} 
for all $x \in \Lambda$. Similarly, we obtain the pull-through formulas 
\begin{equation}\label{eq:pull} a_y J(x) = f_\ell(x-y) J(x) a_y, \qquad J(x) a_y^* = f_\ell (x-y) a_y^* J(x) \end{equation} 
for all $x,y \in \Lambda$. We find $0 \leq J (x) , J  \leq 1$ for all $x \in \Lambda$. Recalling that $\omega_\ell=1-f_\ell$ and noting  
\[ 1- \prod_{j=1}^n f_\ell (x-x_j) \leq \sum_{j=1}^n (1-f_\ell (x-x_j)) = \sum_{j=1}^n \omega_\ell (x-x_j) \]
we obtain the useful bound   
\begin{equation} \label{eq:1-J} 0 \leq 1 - J(x) \leq d\Gamma ((\omega_\ell)_x) = \int dy \, \o_\ell (x-y) a_y^* a_y\,. \end{equation} 

The Jastrow factor generated by $J$ models correlations among particles, up to the scale $\ell = \frak{a} (\rho \frak{a}^3)^{-\delta}$. To resolve the energy to Lee-Huang-Yang precision, we need to describe correlations up to larger distances, comparable with the healing length $(\rho \frak{a})^{-1/2}$. To achieve this goal, we introduce a second length scale $\ell_0 = (\rho \frak{a})^{-1/2} (\rho \frak{a}^3)^{-\varepsilon} \gg \ell$, for some $\varepsilon > 0$, which will also be chosen small enough. We are going to implement correlations on scales $\ell \lesssim |x| \lesssim \ell_0$ through a Bogoliubov transformation 
\begin{equation}\label{eq:BT-T} T  = \exp \left[ \frac{1}{2} \int dx dy \, \eta (x-y) (a_x^* a_y^* - a_x a_y) \right] = \exp \Big[ \frac{1}{2} \sum_{p \in \Lambda^*} \widehat{\eta}_p \big(\hat a_p^* \hat a_{-p}^* - \hat a_p \hat a_{-p} \big) \Big] \end{equation} 
with an appropriate kernel $\eta$, with Fourier coefficients $\widehat{\eta}_p$, for $p \in \Lambda^* = 2\pi \bZ^3 / L$. Here, we define the Fourier coefficients of $h \in L^1 (\Lambda)$ by
\[ \widehat{h}_p = \int dx \, h (x) \, e^{-i p \cdot x} \]
for all $p \in \Lambda^*$, so that  
\[ h (x) = \frac{1}{|\Lambda|} \sum_{p \in \Lambda^* } \widehat{h}_p \, e^{ip \cdot x} \]
and we have the identity 
\begin{equation}\label{eq:parse} \langle h , g \rangle = \int dx \, \bar h (x) g (x) = \frac{1}{|\Lambda|} \sum_{p \in \Lambda^*} \overline{\widehat{h}}_p  \widehat{g}_p\,.  \end{equation}
In (\ref{eq:BT-T}), we also introduced the momentum-space creation and annihilation operators $\hat a_p = a (\varphi_p)$, $\hat a^*_p = a^* (\varphi_p)$, with the normalized plane wave $\varphi_p (x) = e^{i p \cdot x} / |\Lambda|^{1/2}$. The action of the unitary operator $T$ on the operators $\hat a_p, \hat a_p^*$ is explicitly given by 
\begin{equation}\label{eq:action-T} \begin{split} T^* \hat a_p T &= \widehat{\gamma}_p \hat a_p + \widehat{\sigma}_p \hat a_{-p}^* , \qquad T^* \hat a_p^* T = \widehat{\gamma}_p \hat a_p^* + \widehat{\sigma}_p  \hat a_{-p} \end{split} \end{equation} 
with the notation $\widehat{\gamma}_p = \cosh \widehat{\eta}_p$, $\widehat{\sigma}_p = \sinh \widehat{\eta}_p$, for $p \in \Lambda^*$. 

The choice of the kernel $\eta$ should achieve two goals. On short length scales $\ell \lesssim |x| \ll \ell_0$, $\eta$ should correct the function $f_\ell$ used in the definition (\ref{eq:def-Ja}) of the Jastrow factor and transform it approximately into $f_{\ell_0}$, which describes correlations up to the scale $\ell_0$. To this end, we have to make sure that 
\[ 1 + \frac{1}{\rho} \eta (x) \simeq \frac{f_{\ell_0} (x)}{f_\ell (x)} \]
or, equivalently,    
\begin{equation}\label{eq:short-corr} \eta (x) \simeq \rho \frac{\o_\ell (x) - \o_{\ell_0} (x)}{f_\ell (x)} \end{equation}
for $\ell \lesssim |x| \ll \ell_0$.  On the other hand, on length scales $|x| \simeq (\rho \frak{a})^{-1/2}$  comparable with the healing length, the action of $T$ should diagonalize the emerging quadratic Hamiltonian 
\begin{equation}\label{eq:Heff} H_\text{eff} = \sum_{p \in \Lambda^*_{+}} (p^2 + \rho_0 \widehat{V}_\text{eff} (p)) \hat a_p^* \hat a_p + \frac{1}{2} \sum_{p \in \Lambda^*_+} \rho_0 \widehat{V}_\text{eff} (p) \big( \hat a_p^* \hat a_{-p}^* + \hat a_p \hat a_{-p} \big) \end{equation} 
with an effective potential $V_\text{eff}$, having the same scattering length $\frak{a}$ as the original interaction, but no hard-core. The fact that the second order corrections to the ground state energy density can be obtained by diagonalizing an Hamiltonian of the form \eqref{eq:Heff} originates in Bogoliubov’s heuristic argument~\cite{Bog}. The validity of this heuristics has been rigorously established in recent years in the Gross–Pitaevskii regime, see e.g. \cite{BBCS2,HST,B}. 
In fact, it is convenient to define $V_\text{eff}$ using the solution (\ref{eq:0-en}) of the zero-energy scattering equation. On $\bR^3$, we find $-2\Delta f (x) = 2 \delta (|x| - \frak{a}) / \frak{a}$, with the Fourier transform 
\[ \int \frac{2}{\frak{a}} \delta (|x| - \frak{a}) e^{-i p \cdot x} dx = 8\pi \frac{\sin |p| \frak{a}}{|p|}\,. \]
For $p \in \Lambda^*$, we define 
\begin{equation}\label{eq:Fou-Veff}  \widehat{V}_\text{eff} (p) =  8\pi \frac{\sin |p| \frak{a}}{|p|} \end{equation} 
and we denote by $V_\text{eff} (x)$ the periodic function on $\Lambda$, with Fourier coefficients (\ref{eq:Fou-Veff}). From Bogoliubov theory, to diagonalize (\ref{eq:Heff}) we must have 
\begin{equation}\label{eq:th2eta} \tanh (2 \widehat{\eta}_p) \simeq - \frac{\rho_0 \widehat{V}_\text{eff} (p)}{p^2 + \rho_0 \widehat{V}_\text{eff} (p)} \end{equation} 
for $|p| \simeq \rho^{1/2}$, which implies that $\sinh \widehat{\eta}_p \simeq \widehat{s}_p$, $\cosh \widehat\eta_p \simeq \sqrt{1 + \widehat{s}^2_p}$, with 
\begin{equation}\label{eq:hatsk} \begin{split} 
\widehat{s}_p &= - \frac{p^2 + \rho_0 \widehat{V}_\text{eff} (p) -  \sqrt{|p|^4 + 2p^2 \rho_0 \widehat{V}_\text{eff} (p)}}{\sqrt{\rho_0^2 \widehat{V}_\text{eff} (p)^2 - \Big( p^2 + \rho_0 \widehat{V}_\text{eff} (p) -  \sqrt{|p|^4 + 2p^2 \rho_0 \widehat{V}_\text{eff} (p)}\Big)^2}}\,.
% \\
%\widehat{g} (k) &= \frac{\rho_0 \widehat{V}_\text{eff} (k)}{\sqrt{\rho_0^2 \widehat{V}_\text{eff} (k)^2 - \big( k^2 + 
%\rho_0 \widehat{V}_\text{eff} (k) -  \sqrt{|k|^4 + 2k^2 \rho_0 \widehat{V}_\text{eff} (k)}\big)^2}} 
\end{split} 
\end{equation}
%leading to 
%\[ \begin{split} \widehat{s}^2 (k) &= \frac{|k|^2 + \rho_0 \widehat{V}_\text{eff} (k) - \sqrt{|k|^4 + 2k^2 \rho_0 
%\widehat{V}_\text{eff} (k)}}{2 \sqrt{|k|^4  + 2k^2 \rho_0 \widehat{V}_\text{eff} (k)}} \\
%\widehat{g}^2 (k) &= \frac{|k|^2 + \rho_0 \widehat{V}_\text{eff} (k) + \sqrt{|k|^4 + 2k^2 \rho_0 \widehat{V}_\text{eff} %(k)}}{2 \sqrt{|k|^4  + 2k^2 \rho_0 \widehat{V}_\text{eff} (k)}} 
%\end{split} \] 

Instead of choosing the kernel $\eta$, it is convenient to set  
\begin{equation}\label{def:tl-sigma} \wt{\sigma} (x) = \frac{\chi_{\ell_0} (2x) (1 - \chi_{\ell} (2x))}{f_\ell (x)} \big[ \rho_0 \o_\ell (x) + s (x) \big] \end{equation} 
with $s (x)$ defined through the Fourier coefficients (\ref{eq:hatsk}), for $p \not = 0$ (and $\widehat{s} (0) = 0$). By definition, $\tilde{\sigma}$ is supported on $\ell \leq |x| \leq 2\ell_0$. Moreover, for $|k| \gg (\rho \frak{a})^{1/2}$, (\ref{eq:hatsk}) implies that 
\[ \widehat{s} (k) \simeq -\rho_0 \widehat{V}_\text{eff} (k) / 2k^2 \simeq - \rho_0 \, \widehat{\omega}_{\ell_0} (k), \]
which is consistent with (\ref{eq:short-corr}). Furthermore, for $\ell \ll |x| \lesssim \ell_0$, we have $\omega_\ell (x) = 0$ and $\tilde{\sigma} (x) = s (x)$, which is consistent with (\ref{eq:th2eta}). To stay orthogonal to the condensate, it is useful to modify $\tilde{\sigma}$, defining instead 
\begin{equation}\label{def:sigma}
\sigma(x) = \tilde \sigma(x)-\ell_0^{-3}\Big(\int \tl \sigma(z)dz\Big)\ph(x/\ell_0)
\end{equation}
 with $\ph$ smooth and compactly supported, with $\int \ph(x)dx=1$ and such that $\ph(x)=1/2$ for $|x|\leq 1$ and $\ph(x)=0$ for $|x| >  2$. Indeed, (\ref{def:sigma}) guarantees that $\widehat{\sigma}_0 = 0$. 
 
 With (\ref{def:sigma}), we can now define 
 \begin{equation}\label{eq:hatetap} 
	\widehat{\eta}_p =\sinh^{-1}(\widehat{\sigma}_p)\,.
\end{equation}
We will make use of the Bogoliubov transformation (\ref{eq:BT-T}), with kernel $\eta$ defined through its Fourier coefficients (\ref{eq:hatetap}). The action of $T$ on creation and annihilation operators is determined by (\ref{eq:action-T}), with $\widehat{\sigma}_p$ being the Fourier coefficients of the kernel (\ref{def:sigma}) and with $\widehat{\gamma}_p = \cosh \widehat\eta_p$. Some important properties of these kernels (and some other kernels that will play an important role in our analysis) are collected in the following lemma, whose proof is deferred to Appendix \ref{sec:eta}. 

\begin{lemma} \label{lm:eta}
The coefficients $\widehat{s}_k$, defined in \eqref{eq:hatsk}, satisfy 
\begin{equation} \label{eq:sfourier} |\widehat{s}_k| \leq C \min \Big\{  \frac{\rho^{1/4}}{|k|^{1/2}} , \frac{\rho}{k^2} \Big\} \end{equation} 
for all $k \in \Lambda^*_+$. For the function $s : \Lambda \to \bR$ with Fourier coefficients $\widehat{s}_k$ we find the pointwise bounds 
\begin{equation}\label{eq:s-point} | s (x)| \leq C \frac{\rho}{|x|} \min \Big\{ 1 , \frac{1}{(\rho^{1/2} |x|)^{3/2}} \Big\} , \qquad |\nabla s (x)| \leq C \frac{\rho |\hspace{-.05cm} \log \rho |}{x^2} \end{equation} 
for all $|x| \geq 2\frak{a}$. Moreover 
\begin{equation}\label{eq:est-combi}
 \int_{|x|\geq \ell_0} |s(x)|^2 dx  \leq C\ell_0^{-2}\rho^{\frac{1}{2}}  , \;   \int_{2\frak{a} \leq |x|\leq  2\ell} 
 |s(x)|^2 dx  \leq C\ell\rho^{2}, \;  \int_{|x|\geq \ell_0} |\nabla s(x)|^2 dx  \leq  C\ell_0^{-4}\rho^{\frac{1}{2}}\, . \end{equation} 
 %
 %
%\begin{align}
 %\label{Th:est_1}
%        \int_{|x|\geq \ell_0} |s(x)|^2\mathrm{d}x & \leq C\ell_0^{-2}\rho^{\frac{1}{2}},\\
%         \label{Th:est_1_1}
%        \int_{|x|\leq  2\ell} |s(x)|^2\mathrm{d}x & \leq C\ell\rho^{2},\\
%            \label{Th:est_2}
%        \int_{|x|\geq \ell_0} |\nabla s(x)|^2\mathrm{d}x &  \leq  C\ell_0^{-4}\rho^{\frac{1}{2}}\,.
%\end{align}
%
Let now $\zeta\in \{\tilde\sigma, \sigma,\gamma-\mathbbm{1},\mathbbm{1} - \gamma^{-1} , \gamma^{-1}*\sigma\}$. Then we have 
\begin{equation} \label{eq:zeta-combi1} 
 \|\zeta\|^2_2 \leq C \rho^{\frac{3}{2}}, \qquad \| \zeta \|_\infty \leq C\, \frac{\rho}{\ell} , \qquad  \|\zeta\|_1  \leq C \big(\rho^{\frac{1}{2}}\ell_0\big)^3 , 
 \end{equation} 
 and also 
 \begin{equation}\label{eq:zeta-combi2}
   \|\nabla \zeta\|^2_2 \leq C \rho^{2} , \qquad  \|\nabla \zeta\|_\io  \leq C \, \frac{\rho}{\ell^2}\,,
   \end{equation} 
%
%  \begin{align}
%            \label{Th:L_2_est}
%        \|\zeta\|^2_2& \leq C \rho^{\frac{3}{2}},\\
%          \label{Th:est_3_alt}
%             \|\zeta\|_1 & \leq C \big(\rho^{\frac{1}{2}}\ell_0\big)^3,\\
%\label{Th:est_3b} 
%\| \zeta \|_\infty &\leq C \rho / \ell \\ 
%  \label{Th:H_1_est}
%        \|\nabla \zeta\|^2_2& \leq C \rho^{2}  \\
%          \label{Th:est_4_alt_2}
% \|\nabla \zeta\|_\io & \leq C \frac{ \rho}{\ell^2}\ .
%\end{align} 
and, for any $m \in \bN$, the pointwise decay estimate 
\begin{equation}
            \label{eq:decay}
 |\zeta(x)|  \leq C \, \frac{\rho}{\ell}\Big( \frac{\rho^{\frac{1}{4}}\ell_0^{\frac{3}{2}}}{|x|}\Big)^m \leq C \, \frac{\rho}{\ell} \, \frac{1}{\big( \rho^{1/2+3\eps/2} |x| \big)^{m}} \,.
\end{equation}
For $\zeta \in \{ \tilde{\sigma}, \sigma \}$, we get stronger estimates. In particular
\begin{equation}\label{eq:s2-restricted}
\int_{|x|\geq \ell_0} \big[ |\tilde{\sigma} (x)|^2 + |\sigma (x)|^2 \big] dx  \leq C\ell_0^{-2}\rho^{\frac{1}{2}}  , \qquad  \int_{|x|\leq  2\ell} 
 \big[ |\tilde{\sigma} (x)|^2 + | \sigma (x)|^2 \big] dx \leq C\ell\rho^{2}, 
 \end{equation} 
and  
\begin{equation} 
            \label{Th:est_3}
             \|\sigma\|_1  \leq 2\|\tilde \sigma\|_1\leq C \big( \rho^{\frac{1}{2}}\ell_0\big)^\frac{1}{2}\,.
             \end{equation}
Additionally, we get the pointwise bounds 
\begin{equation}\label{eq:snabla-point} |\nabla \tilde{\sigma} (x)| , |\nabla \sigma (x) | \leq C \frac{\rho | \hspace{-.03cm} \log \rho|}{x^2} \end{equation} 
for all $x \in \Lambda$, which imply  
\begin{equation}\label{eq:snabla-p}\| \nabla \sigma \|_p, \| \nabla \tilde{\sigma} \|_p  \leq C \rho |\hspace{-.03cm} \log \rho  | \left\{ \begin{array}{ll}  \ell^{3/p-2} \quad \text{if } p > 3/2 \\ \ell_0^{3/p-2} \quad \text{if } p < 3/2 \end{array} \right.\end{equation} 
and
\begin{equation}\label{eq:nu-nabla-p} \| \nabla (\gamma^{-1}*\sigma) \|_p  \leq C (\rho^{1/2} \ell_0 )^3 \rho |\hspace{-.03cm} \log \rho | \left\{ \begin{array}{ll}  \ell^{3/p-2} \quad &\text{if } p > 3/2 \\  \ell_0^{3/p-2} \quad &\text{if } p < 3/2 \end{array} \right.\end{equation} 
for all $1 \leq p \leq \infty$.
\end{lemma}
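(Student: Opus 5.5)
The plan is to prove the estimates in three blocks: first the momentum-space bounds on $\widehat s_k$, then the position-space properties of $s$, and finally those of the derived kernels $\tilde\sigma,\sigma,\gamma-1,1-\gamma^{-1},\gamma^{-1}*\sigma$. Throughout, $\frak{a}$ is treated as fixed, so $\rho\frak{a}^3\simeq\rho$ and $\ell,\ell_0$ are powers of $\rho$.

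\textbf{Step 1: bounds on $\widehat s_k$.} Write $\alpha_k=\rho_0\widehat V_\text{eff}(k)/k^2$. Rationalizing (\ref{eq:hatsk}) gives $\widehat s_k^2=\tfrac12\big[(1+\alpha_k)/\sqrt{1+2\alpha_k}-1\big]$.
\begin{itemize}
\item For $|k|\lesssim \rho^{1/2}$ we have $\alpha_k\gg1$, so $\widehat s_k^2\lesssim\alpha_k^{1/2}$. This gives $|\widehat s_k|\lesssim \rho^{1/4}|k|^{-1/2}$.
\item For $|k|\gtrsim\rho^{1/2}$ we have $|\alpha_k|\lesssim1$, and a Taylor expansion gives $\widehat s_k=-\alpha_k/2+O(\alpha_k^2)$, hence $|\widehat s_k|\lesssim\rho/k^2$.
\end{itemize}
The same expansion shows that $\widehat s_k+\rho_0\widehat V_\text{eff}(k)/(2k^2)$ is $O(\rho^2/k^4)$ at large $k$. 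This identifies the leading part of $s$ with $-\rho_0(1-f)$, up to periodization.

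\textbf{Step 2: pointwise bounds on $s$.} I split $s=-\rho_0\omega+r$, where $\omega$ has Fourier coefficients $\widehat V_\text{eff}/(2k^2)$. Then $\omega(x)=\frak{a}/|x|$ for $|x|\ge\frak{a}$, up to $O(L^{-1})$ corrections that vanish in the limit. The remainder $r$ has integrable Fourier coefficients that are smooth in $k$ away from $0$.

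The decay $|s(x)|\lesssim\rho|x|^{-1}(\rho^{1/2}|x|)^{-3/2}$ for $|x|\gg\rho^{-1/2}$ should come from the singularity of $\widehat s$ at $k=0$, which behaves like $|k|^{-1/2}$. I plan to write this as a dyadic decomposition in $|k|$ around the scale $\rho^{1/2}$, combined with integration by parts in $k$, in the style of stationary phase for radial functions. The gradient bound with the $|\log\rho|$ factor comes from the same decomposition applied to $k\widehat s_k$, with a logarithmic sum over dyadic shells between $\rho^{1/2}$ and $\frak{a}^{-1}$.

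The integral bounds (\ref{eq:est-combi}) then follow from these pointwise bounds:
\begin{itemize}
\item $\int_{|x|\ge\ell_0}\rho^{1/2}|x|^{-5}\,dx\sim\rho^{1/2}\ell_0^{-2}$;
\item $\int_{|x|\le2\ell}\rho^2|x|^{-2}\,dx\sim\rho^2\ell$;
\item the gradient integral is handled the same way. For it I would need the improved decay $|\nabla s|\lesssim \rho^{1/4}|x|^{-7/2}$ at large $|x|$, obtained by the same dyadic argument.
\end{itemize}

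\textbf{Step 3: $\tilde\sigma$ and $\sigma$.}
\begin{itemize}
\item For $|x|\ge2\ell$ we have $\omega_\ell(x)=0$, so $\tilde\sigma=\chi_{\ell_0}(2x)\,s$ there. For $|x|\le2\ell$, $\tilde\sigma$ is a combination of $\rho_0\omega_\ell$ and $s$, divided by $f_\ell\ge1/2$.
\item The bounds $\|\tilde\sigma\|_\infty\lesssim\rho/\ell$, (\ref{eq:s2-restricted}) and (\ref{eq:snabla-point}) follow directly from Step 2.
\item The $L^1$ bound needs cancellation. On $|x|\lesssim\ell_0$, $s+\rho_0\omega_{\ell_0}$ is small. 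Integrating $|s|\lesssim\rho|x|^{-1}(\rho^{1/2}|x|)^{-3/2}$ up to $\ell_0$ gives $\rho\cdot\rho^{-3/4}\ell_0^{1/2}=(\rho^{1/2}\ell_0)^{1/2}$. This is consistent with (\ref{Th:est_3}), so I expect the bound to hold without cancellation beyond the healing scale.
\item The correction term in (\ref{def:sigma}) is $\ell_0^{-3}\|\tilde\sigma\|_1\varphi(x/\ell_0)$. It contributes at most $\|\tilde\sigma\|_1$ in $L^1$ and negligible amounts in the other norms, since $\ell_0^{-3}(\rho^{1/2}\ell_0)^{1/2}\ll\rho/\ell$.
\item The $L^p$ bounds (\ref{eq:snabla-p}) follow from integrating $\rho|\log\rho|\,|x|^{-2}$ over $\ell\le|x|\le2\ell_0$.
\end{itemize}

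\textbf{Step 4: $\gamma-1$, $1-\gamma^{-1}$, $\gamma^{-1}*\sigma$.} We have $\widehat\gamma_p-1=\sqrt{1+\widehat\sigma_p^2}-1$, which is $O(\widehat\sigma_p^2)$ with an analytic dependence. Hence $\gamma-1$ is, up to a convergent series, $\sigma*\sigma$ plus higher convolutions, and the same holds for $1-\gamma^{-1}$ and $\gamma^{-1}*\sigma$. Using $\|\sigma\|_\infty\lesssim\rho/\ell$ and $\|\widehat\sigma\|_\infty\le\|\sigma\|_1\lesssim(\rho^{1/2}\ell_0)^{1/2}$, I plan to bound these kernels via Young's inequality for the norms in (\ref{eq:zeta-combi1}) and (\ref{eq:zeta-combi2}). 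However, $\|\widehat\sigma\|_\infty$ may exceed $1$, so the series need care. I would instead write $\widehat\gamma_p-1=\widehat\sigma_p^2/(1+\sqrt{1+\widehat\sigma_p^2})$ and bound the $L^1$ norm through Sobolev-type norms in $p$. This is where the weak $(\rho^{1/2}\ell_0)^3$ factor enters.

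\textbf{Step 5: decay estimate and main obstacle.} The decay estimate (\ref{eq:decay}) follows by integrating by parts $m$ times in Fourier space, using $|x|^{m}|\zeta(x)|\lesssim\|\nabla_p^m\widehat\zeta\|_{L^1}$. Since $\sigma$ is supported in $|x|\le2\ell_0$, we have $|\nabla_p^m\widehat\sigma|\le\ell_0^m\|\sigma\|_1$. The main obstacle is propagating these derivative bounds through the nonlinear functions $\sqrt{1+\widehat\sigma^2}$ and its inverse with the correct powers of $\rho$. I would first try a Faà di Bruno estimate in $p$, combined with the compact support of $\sigma$, to gain the factor $(\rho^{1/4}\ell_0^{3/2}/|x|)^m$.
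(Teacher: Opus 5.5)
The main gap is in your Steps 4--5. The decay estimate (\ref{eq:decay}) is left as an ``obstacle'', and so is everything that rests on it for $\gamma-\mathbbm{1}$, $\mathbbm{1}-\gamma^{-1}$, $\gamma^{-1}*\sigma$: the bound $\|\zeta\|_\infty\lesssim\rho/\ell$, the $L^1$ bound, and (\ref{eq:nu-nabla-p}).

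To use $|x|^m|\zeta(x)|\lesssim\|\nabla_p^m\widehat\zeta\|_{L^1_p}$ you need integrability in momentum. Your only input is the sup bound $|\nabla_p^m\widehat\sigma|\le\ell_0^m\|\sigma\|_1$. Fa\`a di Bruno then gives only sup-norm control of $\nabla_p^m\widehat\zeta$, which neither bounds $\sup_x|x|^m|\zeta(x)|$ nor produces the prefactor $\rho/\ell$.

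The paper supplies the missing estimate $\|\delta_j^m\widehat\sigma\|_1\lesssim(\rho/\ell)\,\ell_0^m$. It gets this from Cauchy--Schwarz against $(\ell|k|+\ell^2k^2)^{-1}$, whose $\ell^2$ norm is $\lesssim\ell^{-3/2}$, together with $\|\nabla^\alpha\tilde\sigma\|_2\lesssim\ell^{1/2-\alpha}\rho$ for $\alpha=1,2$. These in turn come from localized bounds on $\int(1-\chi_r)|\nabla s|^2$ and $\int(1-\chi_r)|\Delta s|^2$ with $r\sim\ell$. In the chain rule, exactly one factor $\delta_j^{\beta_1}\widehat\sigma$ goes in $\ell^1$ and the others in $\ell^\infty$, where your bound $\ell_0^\beta\|\sigma\|_1$ is the right one. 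This yields $(\rho/\ell)(\rho^{1/4}\ell_0^{3/2})^m$.

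You could also close the gap in your own framework without $\ell^1$ bounds on $\widehat\sigma$:
\begin{itemize}
\item For $\tilde\sigma$ and $\sigma$ the decay is trivial from the support in $|x|\le2\ell_0$ and $\|\sigma\|_\infty\lesssim\rho/\ell$.
\item For $\gamma-\mathbbm{1}$ and $\mathbbm{1}-\gamma^{-1}$, the functions $\sqrt{1+t^2}-1$ and $1-(1+t^2)^{-1/2}$ vanish quadratically. So every chain-rule term is bounded by a product with two factors in $\ell^2$, and $\|\delta_j^\beta\widehat\sigma\|_2\le(2\ell_0)^\beta\|\sigma\|_2$.
\item Write $\gamma^{-1}*\sigma=\sigma-(\mathbbm{1}-\gamma^{-1})*\sigma$, and treat the convolution using $|x|\le|x-y|+|y|$ and weighted $L^2$ bounds.
\end{itemize}
The $L^2$ and $\dot H^1$ bounds for these kernels need no series at all. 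They follow from $|\widehat\zeta_k|\le|\widehat\sigma_k|$ and Plancherel, which your rewriting of $\widehat\gamma_p-1$ already shows. Relatedly, (\ref{eq:snabla-point}) only gives $\|\nabla\zeta\|_\infty\lesssim\rho|\log\rho|/\ell^2$. The log-free bound in (\ref{eq:zeta-combi2}) comes in the paper from the same $\ell^1$ argument applied to $k\widehat\sigma$.

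A second, smaller gap is in Step 2. Step 1 only controls the size of $\widehat s_k$, but your position-space bounds need its $k$-derivatives. For $|k|\gtrsim1$ these do not improve: $\widehat V_{\mathrm{eff}}(k)=8\pi\sin(|k|\frak{a})/|k|$ oscillates, so $\nabla^m\widehat s(k)$ is of order $\rho|k|^{-3}$ for every $m$. This reflects the kink of $s$ at $|x|=\frak{a}$, and dyadic integration by parts in $k$ then gives a logarithmically divergent high-momentum sum.

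Your split $s=-\rho_0\omega+r$ is the right instinct, but it can only be used on the high-momentum part. Globally, the decay $|s(x)|\lesssim\rho^{1/4}|x|^{-5/2}$ for $|x|\gg\rho^{-1/2}$ is a cancellation between $\rho_0\omega\sim\rho\frak{a}/|x|$ and $r$. Also, ``$\omega=\frak{a}/|x|$ up to $O(L^{-1})$'' is not uniform on $\Lambda$: for $|x|\sim L$ the correction is as large as $\frak{a}/|x|$ itself, far above the target bound.

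The paper writes the high-momentum part of $s$ as $V_{\mathrm{eff}}*Q$ with $\widehat Q=(1-\chi)\widehat s/\widehat V_{\mathrm{eff}}$, whose derivatives do decay, and uses that $V_{\mathrm{eff}}$ is a uniform measure on $\partial B_{\frak{a}}(0)$. For $\nabla s$ it subtracts a compactly supported $D$, built with a cutoff $\chi_\mu$, rather than the periodic $\omega$.

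With this repaired, your dyadic treatment of $s$ and $\nabla s$ is a workable variant of the paper's cutoffs at $|k|\sim|x|^{-1}$ and $|k|\sim1$. That includes getting $\int_{|x|\ge\ell_0}|\nabla s|^2$ from pointwise decay instead of the paper's weighted Fourier $L^2$ bound. Your Step 3 agrees with the paper, in particular $\|\tilde\sigma\|_1$ via the split at $|x|=\rho^{-1/2}$. Contrary to your first remark there, that bound needs no cancellation.
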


We are now ready to define our trial state.  For $0<\rho_0<\rho$ to be fixed later 
%$\rho=\rho_0+\|\s\|_2^2$   
we set
\begin{equation}\label{eq:trial} 
\Psi = Z^{-1} J W(\rho_0) T \Omega \in \cF 
\end{equation} 
where $Z = \| J W (\rho_0) T \Omega \|_{\cF}$ is a normalization constant, ensuring that $\| \Psi \|_\cF = 1$. Our analysis is based on the observation that the action of annihilation operators on the trial state $\Psi$ can be expressed through a simple identity. 
\begin{lemma}\label{lm:axpsi} 
Let $\Psi$ be defined as in (\ref{eq:trial}). Then, for every $x \in \Lambda$, we have 
\begin{equation}\label{eq:id} a_x \Psi = \sqrt{\rho_0}J(x)\Psi+J(x)\int dy\,(\gamma^{-1} * \sigma)(x-y)a^*_y J(y)\Psi.
\end{equation}
\end{lemma}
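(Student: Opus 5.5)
Starting from the definition (\ref{eq:trial}), we write $a_x \Psi = Z^{-1} a_x J W(\rho_0) T\Omega$ and commute $a_x$ step by step to the right. First, (\ref{eq:aJJ}) gives $a_x J = J(x) J a_x$. Next, (\ref{eq:shift}) in the form $a_x W(\rho_0) = W(\rho_0)(a_x + \sqrt{\rho_0})$ gives
\[ a_x \Psi = \sqrt{\rho_0}\, J(x)\Psi + Z^{-1} J(x) J W(\rho_0)\, a_x T \Omega . \]
The first term is already the first term of (\ref{eq:id}). It remains to express $a_x T\Omega$ through creation operators acting on $T\Omega$.

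For this we use (\ref{eq:action-T}). The key observation is that $T^* \hat a_p T = \widehat\gamma_p \hat a_p + \widehat\sigma_p \hat a^*_{-p}$ implies the operator identity
\[ \hat a_p T = T(\widehat\gamma_p \hat a_p + \widehat\sigma_p \hat a^*_{-p}). \]
Combining this with $T^*\hat a^*_{-p}T = \widehat\gamma_p \hat a^*_{-p} + \widehat\sigma_p \hat a_p$ (using $\widehat\gamma_{-p}=\widehat\gamma_p$, $\widehat\sigma_{-p}=\widehat\sigma_p$ by symmetry of $\sigma$), we can eliminate the annihilation part. Specifically, $\widehat\gamma_p T^*\hat a_p T - \widehat\sigma_p T^*\hat a^*_{-p}T = (\widehat\gamma_p^2 - \widehat\sigma_p^2)\hat a_p = \hat a_p$. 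Conjugating back, this gives $\hat a_p = \widehat\gamma_p T\hat a_pT^* - \widehat\sigma_p T\hat a^*_{-p}T^*$. Equivalently,
\[ T\hat a_p T^* = \widehat\gamma_p \hat a_p - \widehat\sigma_p \hat a^*_{-p}, \]
since the inverse Bogoliubov map has $\eta\to-\eta$. Hence $\hat a_p T\Omega = T (T^*\hat a_p T)\Omega = T\widehat\sigma_p \hat a^*_{-p}\Omega$, and similarly $\hat a^*_{-p}T\Omega = T\widehat\gamma_p\hat a^*_{-p}\Omega$. Dividing, $\hat a_p T\Omega = (\widehat\sigma_p/\widehat\gamma_p)\,\hat a^*_{-p}T\Omega$. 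Fourier transforming back to position space (using the convention (\ref{eq:parse}) and the evenness of the kernels), this reads
\[ a_x T\Omega = \int dy\, (\gamma^{-1}*\sigma)(x-y)\, a_y^* T\Omega , \]
where $\gamma^{-1}*\sigma$ denotes the function with Fourier coefficients $\widehat\sigma_p/\widehat\gamma_p$. The zero mode causes no trouble because $\widehat\sigma_0=0$.

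Finally, we push $a_y^*$ back to the left through $W(\rho_0)$ and $J$. By (\ref{eq:shift}), $W(\rho_0)a_y^* = (a_y^* - \sqrt{\rho_0})W(\rho_0)$. The constant term contributes $-\sqrt{\rho_0}\int dy\,(\gamma^{-1}*\sigma)(x-y) = -\sqrt{\rho_0}\,\widehat\sigma_0/\widehat\gamma_0 = 0$, so it drops out. This is exactly where the orthogonality to the condensate enforced in (\ref{def:sigma}) is used. Then (\ref{eq:aJJ}) gives $J a_y^* = a_y^* J(y) J$, and therefore
\[ Z^{-1}J(x) J W(\rho_0) a_x T\Omega = J(x)\int dy\,(\gamma^{-1}*\sigma)(x-y)\, a_y^* J(y)\Psi , \]
which is (\ref{eq:id}).

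The main delicate point is the middle step: getting the correct sign and combination of $\widehat\gamma_p,\widehat\sigma_p$ in the relation $\hat a_pT\Omega = (\widehat\sigma_p/\widehat\gamma_p)\hat a^*_{-p}T\Omega$, and checking the Fourier normalization so that the kernel is exactly $\gamma^{-1}*\sigma$. A secondary point is justifying the manipulations with the operator-valued distributions $a_x, a_y^*$ on the relevant domains. This can be handled by smearing with test functions, or by working sector-wise using (\ref{eq:ax}).
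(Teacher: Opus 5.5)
Your proof is correct and is essentially the paper's argument. You commute $a_x$ through $J$ and $W(\rho_0)$, then use the action of $T$ to write $a_x T\Omega = a^*(\gamma^{-1}*\sigma_x)\,T\Omega$ (the paper does this implicitly in position space, you do it mode by mode via $\hat a_pT\Omega = (\widehat\sigma_p/\widehat\gamma_p)\,\hat a^*_{-p}T\Omega$), and finally use $\widehat\sigma_0=0$ to pass back through $W(\rho_0)$ and \eqref{eq:aJJ} to pass through $J$.

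One small slip: the intermediate line $\hat a_p = \widehat\gamma_p T\hat a_pT^* - \widehat\sigma_p T\hat a^*_{-p}T^*$ is wrong as written; conjugating your identity gives $T\hat a_pT^* = \widehat\gamma_p\hat a_p - \widehat\sigma_p\hat a^*_{-p}$ directly. That detour is never used, though, since your actual computation only needs $T^*\hat a_pT$ and $T^*\hat a^*_{-p}T$ from \eqref{eq:action-T}.
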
 
\begin{proof}
With (\ref{eq:aJJ}), we find, introducing the notation $\tau_x (y) = \tau (x-y)$, for $\tau = \gamma, \sigma$, 
\[ \begin{split} a_x \Psi  &= Z^{-1} a_x J W (\rho_0) T \Omega \\ &= Z^{-1} J(x) J a_x W(\rho_0) T  \Omega \\ &= Z^{-1} J(x) J W(\rho_0) T (a (\gamma_x) + a^* (\sigma_x) + \sqrt{\rho_0})  \Omega \\ &=  \sqrt{\rho_0} J(x) \Psi + Z^{-1} J(x) J a^* (\gamma^{-1} * \sigma_x) W(\rho_0)  T \Omega \end{split} \] 
using (\ref{eq:shift}) and the explicit action (\ref{eq:action-T}) of the Bogoliubov transform $T$, which translates to
\[ T^* a_x T = a (\gamma_x) + a^* (\sigma_x) , \qquad T^* a_x^* T = a^* (\gamma_x) + a (\sigma_x)  \]
in position space. Hence, applying again (\ref{eq:aJJ}), we arrive at 
\[ \begin{split}  a_x \Psi &= \sqrt{\rho_0} J(x) \Psi + Z^{-1} \int dy \, (\gamma^{-1} * \sigma) (x-y) J(x) J a_y^* W(\rho_0) T \Omega \\ &= \sqrt{\rho_0} J(x) \Psi +  \int dy \, (\gamma^{-1} * \sigma) (x-y) J(x) a_y^* J(y) \Psi \,. \end{split} \] 
\end{proof} 

In the next sections, we will use the identity (\ref{eq:id}) to derive precise estimates on the number of particles and on the energy in the trial state $\Psi$, as stated in the following two propositions. 
\begin{proposition} \label{prop:N_on_psi}
	We have 
	\begin{equation} \label{eq:cN_ub}
		\big| \langle \Psi,\mathcal{N}\Psi\rangle - (\rho_0+\|\sigma\|_2^2)L^3 \big| \leq C \rho^{7/4-11\eps-\delta} L^3
	\end{equation}    
    %\begin{equation} \label{eq:cN_ub}
		% \textcolor{purple}{\langle \Psi,
        %\mathcal{N}\Psi\rangle = (\rho_0+\|
        %\sigma\|_2^2)L^3 + R_\mathcal{N}}
        %%\big| \langle \Psi,
        %\mathcal{N}\Psi\rangle - (\rho_0+\|
        %\sigma\|_2^2)L^3 \big| \leq C 
        %\rho^{7/4-11\eps-\delta} L^3
	%\end{equation}
    %with
    %\[
    %   \textcolor{purple}{\big|%R_\mathcal{N}\big|\leq %C\rho^{7/4-11\eps-\d} L^3}
    %\]
	if $\eps , \delta > 0$ are small enough.  	
\end{proposition}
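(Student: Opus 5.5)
We write $\nu = \gamma^{-1}*\sigma$ and start from $\langle \Psi,\cN\Psi\rangle = \int dx\, \|a_x\Psi\|^2$, inserting the identity (\ref{eq:id}) of Lemma \ref{lm:axpsi}. This gives three contributions:
\[ \int dx\,\|a_x\Psi\|^2 = \rho_0\int dx\, \|J(x)\Psi\|^2 + 2\sqrt{\rho_0}\,\mathrm{Re}\int dx dy\, \nu(x-y) \langle J(x)\Psi, J(x) a_y^* J(y)\Psi\rangle + \Big\| J(x)\int dy\, \nu(x-y) a_y^* J(y)\Psi\Big\|^2 \text{(integrated in $x$)}. \]

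\textbf{Main term.} We write $J(x)^2 = 1-(1-J(x)^2)$ and use (\ref{eq:1-J}) to get
\[ 0\le \int dx\,\langle\Psi,(1-J(x)^2)\Psi\rangle \le 2\|\omega_\ell\|_1\langle\Psi,\cN\Psi\rangle \le C\frak{a}\ell^2\langle\Psi,\cN\Psi\rangle. \]
So the first term equals $\rho_0 L^3$ up to an error $C\rho^{1-2\delta}\langle\cN\rangle$, which is of the order $\rho^{2-2\delta}L^3$ once we know a priori that $\langle\cN\rangle\le C\rho L^3$. That a priori bound follows from the same decomposition by crude Cauchy--Schwarz, absorbing $\langle\cN\rangle$ on the left.

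\textbf{Quadratic term.} We move $a_y^*$ to the right using the pull-through formulas (\ref{eq:pull}): $J(x)a_y^* = f_\ell(x-y)a_y^*J(x)$. We then expand the square into a normal-ordered part plus a commutator part. The commutator $[a_z,a_y^*]=\delta(z-y)$ produces $\int dx dy\,|\nu(x-y)|^2 f_\ell^2(x-y)\,\|J(x)J(y)\Psi\|^2$. Replacing $f_\ell$ and the $J$'s by $1$ costs errors controlled by $\|\nu\|_\infty$, by the restricted bounds (\ref{eq:s2-restricted}) near $|x-y|\le 2\ell$, and by (\ref{eq:1-J}) together with $\|\nu\|_2^2\le C\rho^{3/2}$. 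This yields $\|\nu\|_2^2L^3$. Since $\nu$ and $\sigma$ differ only by $(\gamma^{-1}-1)*\sigma$, quadratic in $\sigma$, we get $\|\nu\|_2^2 = \|\sigma\|_2^2 + O(\|\sigma\|_1\|\sigma\|_2^2\cdot\text{small})$ via (\ref{eq:zeta-combi1}) and (\ref{Th:est_3}); the $\sigma^4$-type corrections are of order $\rho^{2}\ell_0^{\ldots}$, well within the error. The normal-ordered part involves $\int dy\,dz\,\nu(x-y)\bar\nu(x-z)\, a_z a_y^* \to a_y^*a_z$. To handle it, we apply (\ref{eq:id}) again to $a_z J(y)\Psi$ (after pulling through $J(y)$). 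This gives terms of the type $\sqrt{\rho_0}\,\langle \nu*\ldots\rangle$ and higher; we bound them by Cauchy--Schwarz using $\|\nu\|_1\le C(\rho^{1/2}\ell_0)^3$ and $\|\nu\|_2$, so that they are bounded by $\rho^{1/2}\|\nu\|_2\cdots$, i.e.\ of relative size $\rho^{1/4}$ times powers of $\rho^{-\eps}$.

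\textbf{Cross term.} We use $a_y J(x)\Psi = f_\ell(x-y)J(x)a_y\Psi$ and again insert (\ref{eq:id}) for $a_y\Psi$. The leading piece is $\rho_0\int \nu(x-y) f_\ell \langle\ldots\rangle$, which is $\rho_0\widehat\nu_0 L^3$ up to $J$-errors. Since $\widehat\sigma_0=0$ by (\ref{def:sigma}), $\widehat\nu_0 = \widehat\sigma_0/\widehat\gamma_0=0$, so the main part vanishes. What remains are the errors involving $1-f_\ell(x-y)$, supported in $|x-y|\le 4\ell$, and the $\nu$-iterated terms. The first are bounded by $\rho\|\nu\,\mathbbm 1_{|\cdot|\le 4\ell}\|_1 L^3 \le \rho\cdot\frac{\rho}{\ell}\ell^3L^3$. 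The second are bounded as in the quadratic term.

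\textbf{Main obstacle.} We expect the main obstacle to be controlling the iterated terms. Each application of (\ref{eq:id}) produces a new $a^*$ and $J$-factor, and Cauchy--Schwarz needs expectations like $\langle\Psi,\cN^2\Psi\rangle$ or local number of particles in balls of radius $\ell_0$, which are not yet available. Our first attempt would be to prove a priori bounds on local particle numbers $\int_{B_{\ell_0}(x)}a_y^*a_y$ and their powers, again by inserting (\ref{eq:id}) and $J\le1$, iterating until the number of $\nu$ factors makes the contribution small. The exponent $7/4-11\eps-\delta$ comes from collecting the factors $\ell_0^3\rho^{3/2}=\rho^{-3\eps}$ (several of them), $\ell^{\ldots}$, and $\rho^{3/2}\cdot\rho^{1/4}$.
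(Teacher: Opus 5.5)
There is a genuine gap, and it sits in your quadratic term. Your treatment of the main term is fine. The linear term can also be controlled; in the paper, Lemma \ref{lm:1b} does this directly from $\int\nu=0$, without re-inserting (\ref{eq:id}).

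\textbf{The normal-ordered part is leading order.} You assert that the normal-ordered part of the quadratic term is of relative size $\rho^{1/4}$, and that $\|\nu\|_2^2=\|\sigma\|_2^2$ up to negligible corrections. Both assertions are false, and the two errors compensate each other at leading order.

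On the shell $\ell_0^{-1}\ll|k|\lesssim\rho^{1/2}$, which carries an order-one fraction of $\|\sigma\|_2^2$, one has $\widehat\sigma_k\simeq\widehat s_k$ with $|\widehat s_k|\sim\rho^{1/4}|k|^{-1/2}\gtrsim1$. Hence $\widehat\gamma_k^{-1}$ is far from $1$ there, and $\widehat\nu_k^2=\widehat\sigma_k^2/(1+\widehat\sigma_k^2)$ is close to $1$. Convolution with $\gamma^{-1}-\mathbbm{1}$ is therefore not a small perturbation: $L^3(\|\sigma\|_2^2-\|\nu\|_2^2)=\sum_k\widehat\sigma_k^4/(1+\widehat\sigma_k^2)$ is of order $\rho^{3/2}L^3$, the same order as $\|\sigma\|_2^2L^3$.

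You can see the mechanism in the state $W(\rho_0)T\Omega$, for which (\ref{eq:id}) holds with $J\equiv1$. There the normal-ordered part is exactly $\int dx\,\|a(\nu_x)W(\rho_0)T\Omega\|^2=\sum_{k\neq0}\widehat\nu_k^2\widehat\sigma_k^2=(\|\sigma\|_2^2-\|\nu\|_2^2)L^3$. Only together with the commutator term $\|\nu\|_2^2L^3$ does it give $\|\sigma\|_2^2L^3$, through $\widehat\nu_k^2(1+\widehat\sigma_k^2)=\widehat\sigma_k^2$. So this term must be computed, not bounded. Cauchy--Schwarz with $\|\nu\|_1$ cannot make it small either, since $\|\nu\|_1\geq\sup_k|\widehat\nu_k|\simeq1$.

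\textbf{Iteration does not close.} Each application of (\ref{eq:id}) produces a new commutator $\sum_k\widehat\nu_k^{2n}$, again of order $\rho^{3/2}L^3$. You are thus summing $\sum_{n\geq1}\widehat\nu_k^{2n}=\widehat\sigma_k^2$, a series that converges slowly where $\widehat\nu_k^2\to1$. After $n$ steps the remainder is about $\sum_k\widehat\sigma_k^2\widehat\nu_k^{2n}\sim\rho^{3/2}L^3/n^2$, so no $\rho$-independent number of steps reaches $\rho^{7/4}$.

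\textbf{The a priori bound has the same problem.} The coefficient of $\langle\cN\rangle$ generated by the $\nu$-term is at best $\sup_k\widehat\nu_k^2=1-O(\rho^{\eps})$. After a crude Cauchy--Schwarz with factor $2$ it cannot be absorbed. Lemma \ref{lm:Nm} instead needs a weighted inequality with weight $1+C\rho^{3\eps/2}$, the bound $\widehat\nu_k^2=1-\widehat\gamma_k^{-2}\leq1-C\rho^{\eps}$, and localization to balls of radius $R$. Even then it only gives $\rho^{1-3\eps}L^3$.

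\textbf{The missing idea.} Let the Bogoliubov transformation do the resummation. The paper writes $a_x=\sqrt{\rho_0}+b_x$ and $b_x=c(\gamma_x)+c^*(\sigma_x)$, with $c_x=TW(\rho_0)a_xW(\rho_0)^*T^*$. Then $\|\sigma\|_2^2L^3=\int dx\,[c(\sigma_x),c^*(\sigma_x)]$ appears exactly. All other terms are quadratic in $c,c^*$ and are small in $\Psi$ by Lemma \ref{lm:c's}, in particular (\ref{eq:cc-gh}). The exponent $7/4-11\eps-\delta$ is just Cauchy--Schwarz on $\int dx\,\langle\Psi,c^*(\gamma_x)c^*(\sigma_x)\Psi\rangle$, i.e.\ $(\rho^{2-22\eps-2\delta}\cdot\rho^{3/2})^{1/2}$.

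The Jastrow factor is handled inside Lemma \ref{lm:c's}, via the fields $g_x=a_x-a^*((f_\ell\nu)_x)-\sqrt{\rho_0}$. For these, (\ref{eq:g-id}) shows that the leading terms cancel exactly and only $J(x)-1$ and $J(x)J(y)-1$ remain. Those are then controlled by (\ref{eq:1-J}) and the local moment bounds of Lemma \ref{lm:Nm}.
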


\begin{proposition} \label{prop:Psi-energy} 
%Let $\Psi$ be defined as in \eqref{eq:trial} with $\ell= \aa (\rho\aa^3)^{-\d}$ and $\ell_0=(\rho \aa)^{-1/2}(\rho\aa^3)^{-\eps}$. 
Let 
\begin{equation} \begin{split} \label{eq:E-rho}
E_{\rho} =\; &  \rho_0^2 \int |\nabla f_\ell(x)|^2 \dx  + 2\rho_0 \|\s\|^2  \int |\nabla f_\ell(x)|^2 \dx  + \int |\nabla f_\ell (x)|^2 |\sigma (x)|^2 dx \\ &+ 2\int f_\ell (x) \nabla f_\ell (x) \sigma (x) \nabla \sigma (x) dx  +  \int |f_\ell(x)|^2 |\nabla \s(x)|^2 \dx \\
		& + 2\rho_0 \int |\nabla f_\ell(x)|^2 \big[(\g \ast \s)(x) + (\s \ast \s)(x)\big] \dx \\
& + 2\rho_0 \int  f_\ell(x) \nabla f_\ell(x) \cdot \nabla \s (x)  \dx\,.
\end{split}
	\end{equation}
 Then, recalling (\ref{eq:cNcK}), there exists a constant $C>0$ such that
	\begin{equation} 
		L^{-3} \langle \Psi,\mathcal{K} \Psi\rangle \le E_{\rho}   + C \rho^{5/2+\delta} 
	\end{equation}
	for all $0 < \delta < \eps$ with $\eps$ small enough. 
\end{proposition}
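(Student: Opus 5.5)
We start from $\mathcal{K} = \int dx\, \nabla_x a_x^* \nabla_x a_x$ in (\ref{eq:cNcK}), so that
\[ \langle \Psi, \mathcal{K}\Psi\rangle = \int dx\, \| \nabla_x a_x \Psi\|^2 , \]
and we differentiate the identity (\ref{eq:id}) of Lemma \ref{lm:axpsi} in $x$. Writing $\nu = \gamma^{-1}*\sigma$, the gradient falls either on $J(x)$, producing $\sum_j \nabla f_\ell (x-x_j)/f_\ell(x-x_j)\, J(x)$, or on the kernel $\nu(x-y)$. This gives
\[ \nabla_x a_x \Psi = \sqrt{\rho_0}\, (\nabla J(x)) \Psi + (\nabla J(x)) \int dy\, \nu(x-y) a_y^* J(y) \Psi + J(x) \int dy\, \nabla\nu(x-y) a_y^* J(y)\Psi . \]
Expanding the square norm yields six types of terms: the condensate--condensate term, the mixed condensate--excitation terms, and the excitation--excitation terms. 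In each term with $\nabla J(x)$, we separate diagonal contributions (where the two factors $\nabla f_\ell/f_\ell$ act on the same particle) from off-diagonal three-body contributions (different particles). The off-diagonal ones are treated as in the term II of the introduction. They are bounded by $C\rho_0 \ell^{-1}\int_{|x-y|\le C\ell}\langle a_x^*a_y^*a_ya_x\rangle$, which is of order $\rho^3\ell^2 \ll \rho^{5/2+\delta}$ when $\ell = \frak{a}(\rho\frak{a}^3)^{-\delta}$ with $\delta$ small. Controlling this needs local two-body density bounds in $\Psi$; these should come from the a priori estimates on the local number of particles and excitations in Section \ref{sec:apri}.

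For the diagonal terms, we commute $a_y^*$ to the left through $J(x)$ using the pull-through formula (\ref{eq:pull}), $J(x)a_y^* = f_\ell(x-y)a_y^*J(x)$. We then use the CCR $a_y a_{y'}^* = \delta(y-y') + a_{y'}^*a_y$. The $\delta$-contraction produces the one-body main terms in (\ref{eq:E-rho}):
\begin{itemize}
\item $\int|\nabla f_\ell|^2|\sigma|^2$;
\item the cross term $2\int f_\ell\nabla f_\ell\,\sigma\nabla\sigma$;
\item $\int f_\ell^2|\nabla\sigma|^2$, coming from $\|\nabla\nu\|_2^2$.
\end{itemize}
Here we replace $\nu$ by $\sigma$ at the cost of $\gamma^{-1}-1$ errors, controlled by Lemma \ref{lm:eta}. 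The remaining normally ordered pieces contain $a_y\Psi$, and we apply (\ref{eq:id}) a second time. The $\sqrt{\rho_0}$ part of the second application, paired with the first condensate term, produces:
\begin{itemize}
\item $\rho_0^2\int|\nabla f_\ell|^2$;
\item the terms $2\rho_0\int|\nabla f_\ell|^2(\gamma*\sigma+\sigma*\sigma)$;
\item $2\rho_0\int f_\ell\nabla f_\ell\cdot\nabla\sigma$.
\end{itemize}
The term $2\rho_0\|\sigma\|^2\int|\nabla f_\ell|^2$ comes from the expectation $\langle a_y^* a_y\rangle \approx \|\sigma\|^2$ of excitations sitting at the point where the condensate pair is evaluated. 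This uses Proposition \ref{prop:N_on_psi}-type local density information: locally, the particle density is $\rho_0+\|\sigma\|_2^2$ up to errors of order $\rho^{7/4-C\eps}$. Throughout, factors $J(x)\le 1$ are discarded, and the replacements $J(y)\to 1$ cost $1-J(y)\le d\Gamma((\omega_\ell)_y)$ by (\ref{eq:1-J}). Each such replacement gains a factor of order $\rho\frak{a}\ell^2$.

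The remaining error terms are the cubic and quartic terms in creation and annihilation operators. They are bounded using Cauchy--Schwarz together with the kernel bounds in Lemma \ref{lm:eta}, in particular $\|\nu\|_1\lesssim(\rho^{1/2}\ell_0)^3$, $\|\nabla\nu\|_p$, and the decay estimate (\ref{eq:decay}). They also need moment bounds on the local number of excitations (in boxes of size $\ell_0$) from Section \ref{sec:apri}. The powers $\rho^{-C\eps}$ from $\ell_0$ are absorbed by taking $\delta<\eps$ small and using the gains $\rho^{1/2}$ available in these terms.

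The main obstacle is the term $\|J(x)\int\nabla\nu(x-y)a_y^*J(y)\Psi\|^2$ and its cross terms with the $\nabla J$ pieces. Here $\nabla\nu$ is only borderline integrable: $\|\nabla\sigma\|_2^2\sim\rho^2$, so this term is of LHY size itself. Its normally ordered remainder requires a second, and possibly third, application of (\ref{eq:id}) without losing the factor $\rho^{1/2}$. The first thing to try is to expand $a_y$ via (\ref{eq:id}) once more. We would then bound the resulting $\langle a^*a^*aa\rangle$-type terms by the a priori local excitation-number estimates, using $\|\nabla\nu\|_1$ for the long-range part and $\|\nabla\nu\|_\infty\lesssim\rho/\ell^2$ near the origin.
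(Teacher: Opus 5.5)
Your decomposition of $\nabla_x a_x\Psi$ into three pieces and your toolkit (pull-through, CCR, $1-J\le d\Gamma(\omega_\ell)$, decay of $\nu$, Cauchy--Schwarz) are the paper's. The gap is in how the LHY-size main terms are obtained.

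\textbf{Where the main terms come from.} Several terms of $E_\rho$ are not produced by $\delta$-contractions. They sit in the normally ordered remainders, through the one-body density matrix and the pairing function of the excitations: $\langle\Psi,b_z^*b_y\Psi\rangle\approx(\sigma*\sigma)(y-z)$ and $\langle\Psi,b_z^*b_y^*\Psi\rangle\approx(\gamma*\sigma)(y-z)$.
\begin{itemize}
\item Your claim that $\int f_\ell^2|\nabla\sigma|^2$ comes from $\|\nabla\nu\|_2^2$ up to $\gamma^{-1}-\mathbbm{1}$ errors is wrong. The difference $\|\nabla\sigma\|_2^2-\|\nabla\nu\|_2^2=(\nabla\nu*\nabla\nu*\sigma*\sigma)(0)=L^{-3}\sum_k k^2\widehat{\nu}_k^2\widehat{\sigma}_k^2$ is of order $\rho^{5/2}$. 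In the paper it is supplied by the normally ordered part of $\int\|\Phi_2(x)\|^2dx$, via $\langle a_z^*a_y\rangle\approx\rho_0+(\sigma*\sigma)(y-z)$.
\item In $\int\|\Phi_3(x)\|^2dx$, the contraction gives only $\rho_0\|\nu\|_2^2\|\nabla f_\ell\|_2^2$. The missing $\rho_0\|\nabla f_\ell\|_2^2(\|\sigma\|_2^2-\|\nu\|_2^2)$, again of order $\rho^{5/2}$, comes from $\langle b_z^*b_y\rangle$.
\item In $\langle\Phi_1,\Phi_3\rangle$, the contraction gives $\rho_0\int|\nabla f_\ell|^2\nu$. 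The remaining $\rho_0\int|\nabla f_\ell|^2(\sigma*\sigma+\nu*\sigma*\sigma)$ comes from $\langle b_z^*b_y^*\rangle$ and $\langle b_y^*b_z\rangle$.
\end{itemize}
All these pieces are positive and of exact LHY size. In an upper bound they can neither be dropped nor estimated crudely.

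\textbf{Why your remedy fails.} Applying (\ref{eq:id}) once or twice more and bounding the rest by the a priori local-number estimates cannot close this. Each further application plus contraction adds one factor $\widehat{\nu}^2$. You therefore generate partial sums of $\widehat{\sigma}_k^2=\widehat{\nu}_k^2/(1-\widehat{\nu}_k^2)=\sum_{n\ge1}\widehat{\nu}_k^{2n}$, since $\widehat{\nu}_k=\tanh\widehat{\eta}_k$. Because $\widehat{\nu}_k^2$ is close to $1$ for $|k|\ll\rho^{1/2}$, the tail after any fixed number of steps still contributes at order $\rho^{5/2}$. Meanwhile Lemmas \ref{lm:Nm} and \ref{lm:a-bds} only give $\int\langle\Psi,b_x^*b_x\Psi\rangle dx\lesssim\rho^{3/2-C\eps}L^3$. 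Multiplied by $\rho_0\|\nabla f_\ell\|_2^2$, that is an error of size $\rho^{5/2-C\eps}$, larger than the LHY term.

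\textbf{The missing idea.} What you need is the resummation in Lemma \ref{lm:c's}. The identity (\ref{eq:id}) can be recast as (\ref{eq:g-id}). This shows that $e_x=a_x-a^*(\nu_x)-\sqrt{\rho_0}=c(\gamma^{-1}_x)$ annihilates $\Psi$, up to a small kernel correction and terms containing factors $J-1$ that gain powers of $\rho$. Convolving with $\gamma$ then gives $\int\langle\Psi,c_x^*c_x\Psi\rangle dx\lesssim\rho^{2-C\eps}L^3$ for $c_x=TW(\rho_0)a_xW(\rho_0)^*T^*$, together with a localized quartic bound. Now write $b_y=c(\gamma_y)+c^*(\sigma_y)$. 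Only the commutators $[c(\sigma_z),c^*(\sigma_y)]$ and $[c(\sigma_z),c^*(\gamma_y)]$ survive, and the identities $\gamma*\nu=\sigma$ and $\gamma*\gamma=\mathbbm{1}+\sigma*\sigma$ turn them into exactly the terms of $E_\rho$.

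\textbf{Off-diagonal terms.} The same $c$-bounds also underlie Lemma \ref{lm:aaaaC}, which you need for your off-diagonal terms. At scale $\ell$, the local-number moment bounds only give $\int_{|x-y|\le C\ell}\langle a_x^*a_y^*a_ya_x\rangle\lesssim\rho^{3/2-C\eps}L^3$. After multiplying by $\rho_0$ this is $\rho^{5/2-C\eps}$, not the $\rho^{3-C\eps}$ you get from the required $\rho^{2-C\eps}L^3$.
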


\section{Proof of Theorem \ref{thm:main}} 
\label{sec:proof}

In the following we show how Theorem \ref{thm:main} follows from  Proposition \ref{prop:N_on_psi} and Proposition~\ref{prop:Psi-energy}. 

To this end, we fix the density $0 < \rho_0 < \rho$ of the condensate, requiring that \begin{equation}\label{eq:rho0} 
\rho + C_1 \rho^{7/4-11\eps-\delta} \leq \rho_0 + \| \sigma \|_2^2 \leq \rho + C_2 \rho^{7/4-11\eps-\delta}
\end{equation}
for some constants $0 < C_1 < C_2$. This is possible because $\| \sigma \|_2^2 \simeq \rho_0^{3/2} \gg \rho^{7/4-11\eps-\delta}$, if $\eps, \delta > 0$ are small enough. Choosing the constant $C_1 > 0$ sufficiently large, (\ref{eq:cN_ub}) implies that $\langle \Psi, \cN \Psi \rangle \geq \rho L^3$. 

Next, we focus on the energy density $\langle \Psi, \cK \Psi \rangle / L^3$ of the trial state $\Psi$. In the next lemma, we identify the major contributions to $E_\rho$.
\begin{lemma} \label{lm:E-rho}  
Let $E_\rho$ defined in  \eqref{eq:E-rho}. Recall the choices $\ell= \aa (\rho\aa^3)^{-\d}$ and $\ell_0=(\rho \aa)^{-1/2}(\rho\aa^3)^{-\eps}$. We set 
\begin{equation}\label{def:nu}
\begin{split} 
\th(x) = &\; - \rho_0 \omega_\ell (x) + f_\ell (x) \tilde{\sigma} (x) = - \rho_0 \chi_\ell (2x)  \omega_\ell (x) + \chi_{\ell_0} (2x) (1 - \chi_\ell (2x)) s (x) 
%\\ = & \; \chi_{\ell_0}(2x) \Big[ -\chi_\ell(2x) \rho_0 \o_{\ell_0}(x) + (1- \chi_\ell(2x))s(x)\Big]
\end{split} 
\end{equation} 
with $\tilde{\sigma}$ defined as in (\ref{def:tl-sigma}) and  we define 
\begin{equation} \label{def:tilde-E-rho}
\widetilde{E}_\rho = \| \nabla \th \|_2^2 + 16 \pi \aa \rho_0 \| s\|_2^2 + 8 \pi \aa \rho_0 \big((g-\mathbbm{1}) \ast s\big)(0) 
\end{equation}
with $g(x)$ defined through its Fourier coefficients $\widehat g(k)= \sqrt{1 + \widehat{s}^2(k)}$. Then,
\begin{equation}    \label{eq:calR}
|E_\rho - \widetilde{E}_\rho | \leq  C \rho^{5/2+\delta}   
\end{equation}
for all $0< \d <\eps/2$, with $\eps > 0$ small enough. 
\end{lemma}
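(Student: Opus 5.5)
The plan is to reorganize $E_\rho$ algebraically so that most of its terms combine into the single quantity $\| \nabla \th \|_2^2$. After that, only a few remainder terms need to be evaluated using the concentration of $|\nabla f_\ell|^2$ near $|x| \simeq \frak{a}$. Throughout, we replace $\sigma$ by $\tilde\sigma$, and then $\tilde\sigma$ by $s$, keeping track of the errors with Lemma \ref{lm:eta}.

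\emph{Step 1: algebraic regrouping.} The three terms $\int |\nabla f_\ell|^2 \sigma^2 + 2\int f_\ell \nabla f_\ell\, \sigma \nabla \sigma + \int f_\ell^2 |\nabla\sigma|^2$ sum exactly to $\|\nabla (f_\ell \sigma)\|_2^2$. Also $2\rho_0 \int f_\ell \nabla f_\ell \cdot \nabla \sigma = 2\rho_0 \int \nabla f_\ell \cdot \nabla(f_\ell \sigma) - 2\rho_0 \int |\nabla f_\ell|^2 \sigma$. Since $\omega_\ell = 1 - f_\ell$ gives $\nabla \th = \rho_0 \nabla f_\ell + \nabla (f_\ell \tilde\sigma)$, we have
\[ \| \nabla \th \|_2^2 = \rho_0^2 \| \nabla f_\ell\|_2^2 + 2 \rho_0 \int \nabla f_\ell \cdot \nabla (f_\ell \tilde\sigma) + \| \nabla (f_\ell \tilde\sigma) \|_2^2 . \]
Writing $\gamma * \sigma = \sigma + (\gamma - \mathbbm{1}) * \sigma$, the term $-2\rho_0 \int |\nabla f_\ell|^2 \sigma$ cancels. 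Therefore, up to replacing $\sigma$ by $\tilde\sigma$ inside $\nabla(f_\ell \,\cdot\,)$,
\[ E_\rho \simeq \| \nabla \th\|_2^2 + 2 \rho_0 \|\sigma\|_2^2 \|\nabla f_\ell\|_2^2 + 2 \rho_0 \int |\nabla f_\ell|^2 \big[ (\sigma * \sigma)(x) + ((\gamma - \mathbbm{1})*\sigma)(x) \big] dx . \]
Note that $\sigma - \tilde\sigma = -\ell_0^{-3} (\int \tilde\sigma)\, \ph(\cdot/\ell_0)$ is smooth and small. By \eqref{Th:est_3}, its $L^\infty$ norm is at most $C\ell_0^{-3} (\rho^{1/2}\ell_0)^{1/2}$ and its gradient carries an extra factor $\ell_0^{-1}$. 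With Cauchy--Schwarz against $\nabla f_\ell$ and $\nabla (f_\ell\tilde\sigma)$, this makes the replacement error $o(\rho^{5/2+\delta})$.

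\emph{Step 2: evaluating the remaining terms.} By \eqref{eq:nabla-fell}, $2\rho_0\|\sigma\|_2^2 \|\nabla f_\ell\|_2^2 = 8\pi\frak{a} \rho_0 \|\sigma\|^2_2 + O(\rho^{5/2} \frak{a}/\ell)$. For the convolution terms, set $h \in \{\sigma*\sigma, (\gamma - \mathbbm{1})*\sigma\}$ and bound $|h(x) - h(0)| \leq |x| \, \|\nabla h\|_\infty$. Young's inequality and Lemma \ref{lm:eta} give $\|\nabla (\sigma * \sigma)\|_\infty \leq \|\sigma\|_2 \|\nabla \sigma\|_2 \leq C\rho^{7/4}$, and similarly for $(\gamma-\mathbbm{1})*\sigma$. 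Since $|\nabla f_\ell(x)|^2 \leq C\frak{a}^2 |x|^{-4}\chi_\ell(x)$, we get $\int |x| |\nabla f_\ell|^2 \leq C \frak{a}^2 \log(\ell/\frak{a})$. The resulting error is $O(\rho^{11/4} |\log \rho|)$, which is much smaller than $\rho^{5/2+\delta}$. Hence, again using \eqref{eq:nabla-fell}, these terms equal $8\pi \frak{a}\rho_0[(\sigma*\sigma)(0) + ((\gamma-\mathbbm{1})*\sigma)(0)]$. Because $\sigma$ is even, $(\sigma * \sigma)(0) = \|\sigma\|_2^2$, and so far we have
\[ E_\rho \simeq \|\nabla \th\|_2^2 + 16 \pi \frak{a} \rho_0 \|\sigma\|_2^2 + 8\pi\frak{a}\rho_0 ((\gamma-\mathbbm{1})*\sigma)(0). \]

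\emph{Step 3: replacing $\sigma, \gamma$ by $s, g$.} This step is the main obstacle, since each replacement costs only a factor $\rho\frak{a}$ in front. We therefore need $\big|\|\sigma\|_2^2 - \|s\|_2^2\big|$ and $|((\gamma-\mathbbm{1})*\sigma)(0) - ((g-\mathbbm{1})*s)(0)|$ to be $O(\rho^{3/2+\delta})$. The difference $\sigma - s$ is supported on two regions. On $|x|\leq 2\ell$, where $\tilde\sigma$ contains the piece $\rho_0\omega_\ell/f_\ell$ and $s$ is cut off, the contribution is controlled by the bound $C\ell\rho^2$ in \eqref{eq:s2-restricted} and \eqref{eq:est-combi}. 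On $|x|\geq \ell_0$ it is controlled by the bound $C\ell_0^{-2}\rho^{1/2}$ from the same estimates. The $\ph$-correction is handled as in Step 1. Combined with $\|\sigma\|_2, \|s\|_2 \leq C\rho^{3/4}$, this gives $\|\sigma - s\|_2 \|\sigma + s\|_2 \leq C\rho^{3/4}(\ell^{1/2}\rho + \ell_0^{-1}\rho^{1/4})$, which is $\rho^{3/2}$ times a positive power of $\rho$ when $\delta < \eps$ is small. For the $\gamma$ term, we work in Fourier space: $((\gamma-\mathbbm{1})*\sigma)(0) = L^{-3}\sum_p (\widehat\gamma_p - 1)\widehat\sigma_p$ with $\widehat\gamma_p = \sqrt{1+\widehat\sigma_p^2}$. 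The map $t \mapsto (\sqrt{1+t^2}-1)t$ has derivative bounded by $C(t^2 + |t|)$, which is controlled by $C|t|$ on the relevant bounded range. Hence the difference is bounded by $C L^{-3}\sum_p (|\widehat\sigma_p| + |\widehat s_p|)|\widehat\sigma_p - \widehat s_p| \leq C \|\sigma + s\|_2\|\sigma - s\|_2$ by Parseval, and the same estimate as before applies. The delicate point is to check that the exponents from \eqref{eq:s2-restricted} indeed beat $\rho^{3/2+\delta}$ for $\delta < \eps/2$, which is the source of that restriction in the statement.
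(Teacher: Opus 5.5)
Your Step 1 regrouping into $\|\nabla(-\rho_0\omega_\ell+f_\ell\sigma)\|_2^2$, your evaluation of the convolution terms at the origin, and your comparison $\|\sigma-s\|_2\lesssim\rho^{3/4+\eps}$ follow the paper's route and work. The step that fails as written is the replacement of $\sigma$ by $\tilde\sigma$ inside the squared gradient at the end of Step 1.

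Set $d=\sigma-\tilde\sigma=-\ell_0^{-3}(\int\tilde\sigma)\,\ph(\cdot/\ell_0)$. The replacement error contains the cross term $2\langle f_\ell\nabla d, f_\ell\nabla\tilde\sigma\rangle$. Cauchy--Schwarz against $\nabla(f_\ell\tilde\sigma)$ only gives (with $\frak{a}=1$)
\[ \|\nabla d\|_2\,\|\nabla\tilde\sigma\|_2\lesssim \ell_0^{-5/2}\|\tilde\sigma\|_1\,\rho\lesssim \ell_0^{-2}\rho^{5/4}\simeq\rho^{9/4+2\eps}, \]
which is not $O(\rho^{5/2+\delta})$ for small $\eps$. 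A single $L^\infty$--$L^1$ bound on $\nabla(f_\ell d)$ as a whole fails too, because the piece $(\nabla f_\ell)d$ has $\|\nabla f_\ell\|_\infty\sim\frak{a}^{-1}$.

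What you are missing is the support information. Since $\ph\equiv 1/2$ on $|x|\le 1$, $\nabla d$ is supported in $\ell_0\le|x|\le 2\ell_0$. This has two consequences.
\begin{itemize}
\item It is disjoint from the support of $\nabla f_\ell$.
\item It only meets $\nabla\tilde\sigma$ where $\tilde\sigma=\chi_{\ell_0}(2\cdot)\,s$ is small. By the first and third bounds in \eqref{eq:est-combi}, $\|\nabla\tilde\sigma\|_{L^2(|x|\ge\ell_0)}\lesssim\ell_0^{-2}\rho^{1/4}$, so the cross term is $O(\ell_0^{-4}\rho^{1/2})=O(\rho^{5/2+4\eps})$.
\end{itemize}
Alternatively, use $\|\nabla d\|_\infty\lesssim\ell_0^{-4}\rho^{-\eps/2}$ together with \eqref{eq:snabla-point} integrated over the annulus, which gives $\rho^{5/2+5\eps/2}|\log\rho|$. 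The pieces containing $(\nabla f_\ell)d$ must be kept separate and paired in $L^2$ via $\|d\|_\infty\|\nabla f_\ell\|_2$. Finally, $\|\nabla(f_\ell d)\|_2^2\lesssim\rho^{5/2+4\eps}$. This is how the paper handles $\mathcal{R}_4$.

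Smaller, harmless slips:
\begin{itemize}
\item Near the origin $\tilde\sigma\neq s$ on $|x|\le 4\ell$, not $|x|\le 2\ell$, because $\omega_\ell$ is supported up to $4\ell$.
\item In the Lipschitz step the bound should read $(\|\sigma\|_2+\|s\|_2)\|\sigma-s\|_2$, not $\|\sigma+s\|_2\|\sigma-s\|_2$.
\item $\widehat\sigma_p$ is not uniformly bounded; you only have $|\widehat\sigma_p|\lesssim\rho^{-\eps/2}$. This does not matter, since $t\mapsto(\sqrt{1+t^2}-1)t$ has derivative at most $2|t|$ for every $t$.
\end{itemize}
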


\begin{proof} 
Decomposing $\gamma = (\gamma -\mathbbm{1}) + \mathbbm{1}$ we can write   
\begin{equation}\label{eq:E-rho-1} \begin{split} 
E_{\rho} =\; &  \rho_0^2 \int |\nabla f_\ell(x)|^2 \dx  +  \int |\nabla f_\ell(x)|^2 |\s(x)|^2 dx\\
		& +  2\int f_\ell(x) \s(x)   \nabla  f_\ell(x) \cdot \nabla \s(x) dx  +  \int |f_\ell(x)|^2 |\nabla \s(x)|^2 dx \\
		& + 2\rho_0 \int |\nabla f_\ell(x)|^2 \s(x) \dx + 2\rho_0 \int  f_\ell(x) \nabla f_\ell(x) \cdot \nabla \s(x)  \dx \\
&+  16 \pi \aa \rho_0 \| s\|_2^2 + 8 \pi \aa \rho_0   ((g-\mathbbm{1}) \ast \sigma)(0) + \sum_{i=1}^3 \mathcal{R}_i 
%+  16 \pi \aa \rho \| s\|_2^2 + 8 \pi \aa \rho   ((g-\d) 
%\ast \sigma)(0) + \sum_{i=1}^3 \mathcal{R}_i\, 
\end{split} \end{equation} 
where we introduced the notations 
\[\begin{split}
\mathcal{R}_1 =\; &  2\rho_0 \|\s\|_2^2  \int |\nabla f_\ell(x)|^2 \dx  - 8 \pi \aa \rho_0 \| s\|_2^2 \\
\mathcal{R}_2 = \; & 2\rho_0 \int  |\nabla f_\ell(x)|^2 ((\gamma-\mathbbm{1}) \ast \sigma)(x) \dx - 8 \pi \aa \rho_0   ((g-\mathbbm{1}) \ast \sigma)(0) \\
\mathcal{R}_3 = \; & 2\rho_0 \int |\nabla f_\ell(x)|^2 (\sigma \ast \sigma)(x) \dx - 8 \pi \aa \rho_0 \| s\|_2^2\,.
\end{split}\]
Recognizing that the first three lines on the r.h.s. of  \eqref{eq:E-rho-1} form a perfect square and recalling the definition (\ref{def:nu}), we conclude that 
\begin{equation} \begin{split} \label{eq:E-rho-2}
E_{\rho}  =\; & \| \nabla \{- \rho_0 \o_\ell +f_\ell \s\}\|_2^2 +  16 \pi \aa \rho_0 \| s\|_2^2 + 8 \pi \aa \rho_0   ((g-\mathbbm{1}) \ast \sigma)(0) + \sum_{i=1}^3 \mathcal{R}_i \, \\
=\; &\| \nabla \th\|_2^2+  16 \pi \aa \rho_0 \| s\|_2^2 + 8 \pi \aa \rho_0   ((g-\mathbbm{1}) \ast \sigma)(0) + \sum_{i=1}^4 \mathcal{R}_i 
\end{split}
\end{equation}
with
	\[
	\mathcal{R}_4 =  \| \nabla \{- \rho_0 \o_\ell +f_\ell \s\}\|_2^2 - \| \nabla \{- \rho_0 \o_\ell +f_\ell \tilde\s\}\|_2^2\,.
	\]
To complete the proof of the lemma, we show that $|\mathcal{R}_i| \leq C \rho^{5/2+\eps/2}$, for all $i=1,\ldots, 4$, if $0 < \delta < \eps$ and $\eps > 0$ is small enough. 
	
	We focus first on $\mathcal{R}_1$. We split $\mathcal{R}_1= \mathcal{R}_{11}+\mathcal{R}_{12}$ with
	\[\begin{split}
	\mathcal{R}_{11} =\; &  2\rho_0 \big(\|\s\|^2_2 - \|s\|^2_2\big)  \int |\nabla f_\ell(x)|^2 \dx    \\
	\mathcal{R}_{12} =\; &  2\rho_0 \|s\|^2_2  \Big(\int |\nabla f_\ell(x)|^2 \dx  - 4 \pi \aa \Big)   \,.
    %\\\mathcal{R}_{13}  =\; &  8 \pi \aa \|s\|^2_2 \big(\rho_0 -\rho\big)
	\end{split}\]
	With \eqref{eq:nabla-fell} 
    %and recalling that we have chosen  $\rho_0$ so that $|\rho-\rho_0| \leq C \r^{3/2}$ 
    we have
	\[
	       |\mathcal{R}_{12}| \leq C  \rho^{5/2+\d}\,.%, \qquad |\mathcal{R}_{13}| \leq C  \rho^{3}\,.
	\]
	Next, we claim that $|\mathcal{R}_{11}|\leq C \rho^{5/2 +\eps}$. To prove this bound, we show that 
	\begin{equation}\label{eq:sig-s} \| \sigma - s \|_2 \leq C \rho^{3/4+\eps} \end{equation} 
	which, together with Lemma \ref{lm:eta}, implies that 
	\begin{equation}\label{eq:compare-ssig} \big| \| \sigma \|_2^2 - \| s \|_2^2 \big| \leq C \big( \| \sigma \|_2 + \| s \|_2 \big) \| \sigma -s \|_2 \leq C \rho^{3/2 + \eps} \end{equation} 
	and therefore that $|\mathcal{R}_{11}| \leq C \rho^{5/2+\eps}$ (the bound (\ref{eq:sfourier}) implies, in particular, that $\| s \|_2 \leq C \rho^{3/4}$). To prove (\ref{eq:sig-s}), we first estimate $\| \tilde{\sigma} - s \|_2$, with $\tilde{\sigma}$ as defined in (\ref{def:tl-sigma}). To this end, we decompose $\tilde{\sigma} (x)= \chi_\ell (x/2) \tilde{\sigma} (x) + (1 - \chi_\ell (x/2)) \tilde{\sigma} (x)$. Observing that $f_\ell (x) = 1$ (and thus $\omega_\ell (x) = 0$) on the support of $1-\chi_\ell (x/2)$, we find 
\[ \begin{split} \tilde{\sigma} (x) - s(x) = \; &\frac{\chi_\ell (x/2) (1 - \chi_\ell (2x))}{f_\ell (x)} \rho_0 \omega_\ell (x) + \frac{\chi_\ell (x/2) (1 - \chi_\ell (2x))}{f_\ell (x)} s (x) \\ &- \chi_{\ell} (x/2) s (x) - (1 - \chi_{\ell_0} (2x)) s (x)\,.  \end{split} \]
Since $f_\ell (x) \geq C > 0$, on the support of $1- \chi_\ell (2x)$, we arrive at
\[ \| \tilde{\sigma} - s \|_2 \leq C \rho \| \omega_\ell \|_2 + C \| \chi_{\ell} (\cdot/2) s \|_2  + \| (1 - \chi_{\ell_0} (2\cdot)) s \|_2 \leq C \rho^{3/4+\eps} \] 
if $\eps, \delta > 0$ are small enough. Here, we estimated $\| \omega_\ell \|_2 \leq C \ell^{1/2}$ and we applied Lemma~\ref{lm:eta} to bound the other terms. With 
\[ \| \tilde{\sigma} - \sigma \|_2 \leq  \ell_0^{-3}  \| \tilde{\sigma} \|_1   \| \ph (\cdot/\ell_0) \|_2 \leq C \rho^{3/4 + \eps} \]
we conclude that $\| \sigma - s \|_2 \leq C \rho^{3/4+\eps}$ (using $\| \tilde{\sigma} \|_1 \lesssim \rho^{-\eps/2}$, from Lemma \ref{lm:eta}). 
 
We switch to $\mathcal{R}_2$. We first rewrite $\mathcal{R}_2 = \sum_{i=1}^3 \mathcal{R}_{2i}$ with 
\[ \begin{split}
\mathcal{R}_{21} = \; & 2\rho_0 \int  |\nabla f_\ell(x)|^2 \big(((\gamma-\mathbbm{1}) \ast \sigma)(x)-((\gamma-\mathbbm{1}) \ast \sigma)(0)\big) \dx  \\
\mathcal{R}_{22} = \; & 2\rho_0 ((\gamma-\mathbbm{1}) \ast \sigma)(0) \big( \|\nabla f_\ell\|^2  - 4 \pi \aa \big)  \\
%\mathcal{R}_{23} = \; & 8 \pi \aa (\rho_0-\rho)   ((\gamma-\d) \ast \sigma)(0)   \\
\mathcal{R}_{23} = \; & 8 \pi \aa \rho_0  \big(((\gamma-\mathbbm{1}) \ast \sigma)(0)  - ((g-\mathbbm{1}) \ast \sigma)(0)\big) \,.\\
%\mathcal{R}_{24} = \; &  8 \pi \aa \rho  \big(((\gamma-\d) \ast \sigma)(0)  - ((g-\d) \ast \sigma)(0)\big) \\
\end{split}\] 

To bound $\mathcal{R}_{21}$ we use (recalling (\ref{eq:parse}))
\[
\big|((\gamma-\mathbbm{1}) \ast \sigma)(x)-((\gamma-\mathbbm{1}) \ast \sigma)(0)\big| \leq C |x| \frac{1}{|\L|} \sum_{p\in \L^*} |p||\widehat{(\g-\mathbbm{1})}(p)| |\widehat{\s}_p| \leq C |x| \| \g-\mathbbm{1}\|_2  \| \nabla \s\|_2\,.
\]
With Lemma \ref{lm:fell} and Lemma \ref{lm:eta}, we find 
\[
|\mathcal{R}_{21}| \leq C \rho \| \g-\mathbbm{1}\|_2  \| \nabla \s\|_2  \int |\nabla f_\ell(x)|^2  |x|  \dx  \leq  C \rho^{11/4 -\d} \,.
\]
On the other hand, Lemma \ref{lm:fell} and the bound $((\gamma-\mathbbm{1}) \ast \sigma)(0) \leq \| \g-\mathbbm{1}\|_2  \| \s\|_2 \leq C \rho^{3/2}$ yield
\[
\mathcal{R}_{22} \leq C \rho^{5/2+\d} \,.
\]
Finally, to bound $\mathcal{R}_{23}$ we write
\[
((\g - g) \ast \s)(0) = \frac 1 {|\L|} \sum_{p \in \L^*} (\widehat \g_p - \widehat g_p ) \widehat \s_p = \frac 1 {|\L|} \sum_{p \in \L^*} \Big( \sqrt{1+\widehat \s_p^2} - \sqrt {1+\widehat s^2_p}  \, \Big) \widehat \s_p\,.
\]
Using the fact that $x\to \sqrt{1+x^2}$ is Lipshitz and (\ref{eq:sig-s}), we find
\[
\big|( (\g - g) \ast \s)(0)\big| \leq  \frac C {|\Lambda|} \sum_{p \in \L^*} |\widehat \s_p- \widehat s_p| | \widehat \s_p | \leq C \|  \s-  s\|_2 \|  \s\|_2 \leq C \rho^{3/2+\eps}  
\]
implying that $|\mathcal{R}_{23}| \leq C \rho^{5/2+\eps}$. 

The term $\mathcal{R}_3$ can be bounded similarly to $\mathcal{R}_2$, using \eqref{eq:sig-s}, $
(\s \ast \s)(0)=\|\s\|^2_2$, and the bound
\[
\big|(\s \ast \s)(x)-(\s \ast \s)(0)\big| \leq C |x| \frac{1}{|\L|} \sum_{p \in \L^*} |p| |\widehat{\s}_p|^2 \leq C |x| \| \s\|_2 \| \nabla \s\|_2\,.
\]

Finally, we discuss the term $\mathcal{R}_4$, which can be written as 
\begin{equation*} \mathcal{R}_4 =   \langle ( \nabla  f_\ell) (\sigma - \tilde{\sigma}) + f_\ell (\nabla \sigma - \nabla \tilde{\sigma}), -2 \rho_0 \nabla \omega_\ell + (\nabla f_\ell) \sigma + f_\ell (\nabla \sigma) + (\nabla f_\ell) \tilde{\sigma} + f_\ell (\nabla \tilde{\sigma})\rangle\,. \end{equation*} 
%We claim that 
%\begin{equation}\label{eq:R5-claim} |R_5| \leq C \rho^{5/2+\eps/2} \end{equation}
%if $0 < \delta < \eps$, and $\eps$ is small enough.
From Lemma \ref{lm:eta}, we have 
\begin{equation}\label{eq:R51} \int_{|x| \leq 4 \ell} |\tilde{\sigma} (x)|^2 dx , \int_{|x| \leq 4 \ell} |\sigma (x)|^2 dx \leq C \rho^2 \ell \,.
%\quad \int_{|x| \leq 4 \ell} |\nabla \sigma (x)|^2 dx , \int_{|x| \leq 4 \ell} |\nabla \tilde{\sigma} (x)|^2 dx 
%\leq C \rho^2 / \ell 
\end{equation} 
 Recalling that 
\[ \sigma (x) - \tilde{\sigma} (x) = - \ell_0^{-3} \ph (x/\ell_0) \Big( \int \tilde{\sigma} (z) dz \Big) \]
and (again from Lemma \ref{lm:eta}) that $\| \tilde{\sigma} \|_1 \leq C \rho^{-\eps/2}$, we find 
\begin{equation}\label{eq:R52}  \int_{|x| \leq 4 \ell} |\sigma (x) - \tilde{\sigma} (x)|^2 dx \leq C \rho^{-\eps} \ell^3 \ell_0^{-6} \leq C \rho^{3+5\eps -3\delta} \leq C \rho^{3 +2\eps} \end{equation} 
%and 
%\begin{equation}\label{eq:R53} \begin{split} \int_{|x| \leq 4 \ell} |\nabla \sigma (x) - \nabla \tilde \sigma %(x)|^2 dx &\leq C \rho^{-\eps} \ell^3 \ell_0^{-8} \leq C \rho^{4+7\eps-3\delta} \leq C \rho^{4+4\eps} \, ,  %\end{split} \end{equation} 
for all $\delta < \eps$. Combining (\ref{eq:R51}), (\ref{eq:R52}) and recalling $\| \nabla f_\ell \|_\infty, \| \nabla f_\ell \|_2 \leq C$ and also $\| \nabla \sigma \|_2, \| \nabla \tilde{\sigma} \|_2 \leq C \rho$, we obtain, by Cauchy-Schwarz, 
\[  \big| \langle ( \nabla  f_\ell) (\sigma - \tilde{\sigma}) , -2 \rho_0 \nabla \omega_\ell + (\nabla f_\ell) \sigma + f_\ell (\nabla \sigma) + (\nabla f_\ell) \tilde{\sigma} + f_\ell (\nabla \tilde{\sigma})\rangle \big| \leq C \rho^{5/2+\eps-\delta} \leq C \rho^{5/2+\delta} \]
if $0 < \delta < \eps/2$ and $\eps > 0$ is small enough. Moreover, 
\[ \langle f_\ell (\nabla \sigma - \nabla \tilde{\sigma}) ,  -2 \rho_0 \nabla \omega_\ell + (\nabla f_\ell) \sigma + (\nabla f_\ell) \tilde{\sigma} \rangle = 0 \]
because the support of $\nabla \sigma - \nabla \tilde{\sigma}$ is disjoint from the support of $\nabla f_\ell$. Finally, we estimate 
\[ \begin{split} | \langle f_\ell (\nabla \sigma - \nabla \tilde{\sigma}) , f_\ell \nabla \sigma \rangle | &\leq C \rho^{-\eps/2} \ell_0^{-4} \Big| \int f^2_\ell (x) \nabla \sigma (x) \cdot \nabla \ph (x/\ell_0) dx \Big| \\ &\leq C  \rho^{-\eps/2} \ell_0^{-4} \int_{|x| \geq \ell_0} |\nabla \sigma (x)| |\nabla \ph (x/\ell_0)| dx  \end{split} \]
because $\nabla \ph (x) = 0$, if $|x| \leq 1$. With Cauchy-Schwarz and Lemma \ref{lm:eta}, we conclude that 
\[  | \langle f_\ell (\nabla \sigma - \nabla \tilde{\sigma}) , f_\ell \nabla \sigma \rangle |  \leq C \ell_0^{-3} \rho^{1 - \eps/2} |\log \rho |\leq C \rho^{5/2+\eps/2}  \leq C \rho^{5/2+\delta} . \]
The contribution $\langle f_\ell (\nabla \sigma - \nabla \tilde{\sigma}) , f_\ell \nabla \tilde{\sigma} \rangle$ can be handled similarly. This shows that $|\mathcal{R}_4| \leq C \rho^{5/2+\delta}$ and concludes the proof of the lemma.
%
%can estimate, by Cauchy-Schwarz, all contributions to the inner product in (\ref{eq:R5pro}) that %contain at least one factor $\nabla f_\ell$ or $\nabla \omega_\ell = -\nabla f_\ell$ (all these terms are %given by integrals restricted to $|x| \leq 4 \ell$); they all satisfy (\ref{eq:R5-claim}). The remaining %contributions can be estimated as follows:  
%\[ \begin{split} | \langle f_\ell (\nabla \sigma - \nabla \tilde{\sigma}) , f_\ell \nabla \sigma \rangle | &\leq %C \rho^{-\eps/2} \ell_0^{-4} \Big| \int f^2_\ell (x) \nabla \sigma (x) \cdot \nabla \ph (x/\ell_0) dx \Big| \\ %&\leq C  \rho^{-\eps/2} \ell_0^{-4} \int_{|x| \geq \ell_0} |\nabla \sigma (x)| |\nabla \ph (x/\ell_0)| dx  
%\end{split} \]
%because $\nabla \ph (x) = 0$, if $|x| \leq 1$. With Cauchy-Schwarz and Lemma \ref{lm:eta}, we %conclude that 
%\[  | \langle f_\ell (\nabla \sigma - \nabla \tilde{\sigma}) , f_\ell \nabla \sigma \rangle |  \leq C 
%\ell_0^{-9/2} \rho^{1/4-\eps/2} \leq \rho^{5/2+4\eps} \]
%The contribution $\langle f_\ell (\nabla \sigma - \nabla \tilde{\sigma}) , f_\ell \nabla \tilde{\sigma} 
%\rangle$ can be handled similarly. This shows (\ref{eq:R5-claim}). 
\end{proof}

We focus now on the expression for $\widetilde{E}_\rho$ in \eqref{def:tilde-E-rho} and show it can be bounded above by the Lee-Huang-Yang energy. 
\begin{lemma} \label{lm:tilde-E-rho}
Let $\widetilde E_\rho$ be defined as in  \eqref{def:tilde-E-rho}. Then
\[
\widetilde E_\rho \leq  4 \pi  \aa \rho^2 \Big[ 1 + \frac{128}{15\sqrt{\pi}} (\rho \frak{a}^3)^{1/2}\Big] + C \rho^{5/2+\eps}
\]
for all $\d, \eps > 0$ small enough. 
\end{lemma}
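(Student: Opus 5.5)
We first split $\th$ into its short-range and long-range parts, then pass to Fourier space, where the last two terms of $\widetilde E_\rho$ combine with $\|\nabla\th\|_2^2$ into the classical Bogoliubov sum.

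\textbf{Step 1: the gradient of $\th$.} Write $\th = \th_1 + \th_2$ with $\th_1(x) = -\rho_0 \chi_\ell(2x)\,\omega_\ell(x)$ and $\th_2(x) = \chi_{\ell_0}(2x)(1-\chi_\ell(2x))\,s(x)$, and expand
\[ \|\nabla \th\|_2^2 = \|\nabla\th_1\|_2^2 + 2\langle \nabla\th_1,\nabla\th_2\rangle + \|\nabla\th_2\|_2^2 . \]

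\emph{Short-range term.} For $\|\nabla\th_1\|_2^2$ we use $\omega_\ell = \frak{a}/|x|$ on the support of $\chi_\ell(2\cdot)$ (where $|x|\le 2\ell$), together with the computation of Lemma~\ref{lm:fell}. This gives $\|\nabla \th_1\|_2^2 = 4\pi\frak{a}\rho_0^2 + O(\rho^2\frak{a}^2/\ell)$, and the error is $O(\rho^{2+\delta'})$, which is too large on its own.

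\emph{Cross term.} We integrate by parts, using that $-\Delta(\frak{a}/|x|)$ vanishes away from the sphere $|x|=\frak{a}$. What remains is a boundary/cutoff term: on the annulus $\ell/2\le|x|\le \ell$ it pairs $\rho_0\frak{a}/|x|$ with $s$. Since $\hat s_k\simeq -\rho_0\hat V_{\rm eff}(k)/2k^2 + \dots$, at distances $\ll \rho^{-1/2}$ we have $s(x)\approx -\rho_0\frak{a}/|x| + s_{\rm reg}$. The $O(\rho^2\frak{a}^2/\ell)$ cutoff errors from $\th_1$ and from the cross term therefore cancel against those of $\|\nabla\th_2\|^2$; this is the same mechanism that makes $\th$ approximately equal to $-\rho_0\omega_{\ell_0}+\dots$ at short distances.

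\textbf{Step 2: reduction to a single function.} Concretely, compare $\th$ with $\check\th := -\rho_0\omega_\infty\mathbf{1}_{\{|x|\le\ell\}} + s\,\mathbf{1}_{\{|x|\ge\ell\}}$, or better with $\zeta := -\rho_0 \frak{a}/|x| \,\chi_\ell(2x)+ (1-\chi_\ell(2x))s$, truncated at scale $\ell_0$. The tail truncation at $\ell_0$ costs $\int_{|x|\ge\ell_0}|\nabla s|^2 + \ell_0^{-2}|s|^2 \le C\ell_0^{-4}\rho^{1/2} = O(\rho^{5/2+4\eps})$ by \eqref{eq:est-combi}. The short-range matching costs $\int_{|x|\le 2\ell}|\nabla(s+\rho_0\omega)|^2$, which is small because $s+\rho_0\frak{a}/|x|$ is smooth on scale $\rho^{-1/2}$ (one uses $|\nabla s|\lesssim\rho|\log\rho|/x^2$ and the Fourier representation of $s+\rho_0\hat\omega$).

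\textbf{Step 3: the Fourier computation.} The outcome of Steps 1--2 is
\[ \|\nabla\th\|_2^2 = 4\pi\frak{a}\rho_0^2 + \frac{1}{|\Lambda|}\sum_{p\neq0}\Big( p^2 \hat s_p^2 - \rho_0^2\frac{\hat V_{\rm eff}(p)^2}{4p^2}\Big) + O(\rho^{5/2+\eps}), \]
that is, $\|\nabla s\|^2$ renormalized by subtracting the $p^{-2}$ divergence coming from $\nabla(\rho_0\frak{a}/|x|)$, whose self-energy is $4\pi\frak{a}\rho_0^2 = \rho_0^2\int |\nabla \frak{a}/|x||^2$ restricted to $|x|\ge \frak{a}$. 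In the same way, $16\pi\frak{a}\rho_0\|s\|^2 = \frac{2}{|\Lambda|}\sum_p \rho_0\hat V_{\rm eff}(0)\hat s_p^2$ and $8\pi\frak{a}\rho_0((g-1)*s)(0) = \frac{1}{|\Lambda|}\sum_p\rho_0\hat V_{\rm eff}(0)(\hat g_p-1)\hat s_p$. Replacing $\hat V_{\rm eff}(0)$ by $\hat V_{\rm eff}(p)$ costs $O(\rho^{5/2+\eps})$, since $|\hat V(p)-\hat V(0)|\lesssim \frak{a}^3p^2$ and the sums over $|p|\lesssim \ell^{-1}$ are controlled by \eqref{eq:sfourier}.

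\textbf{Step 4: Bogoliubov sum and conclusion.} Put $A_p = p^2+\rho_0\hat V(p)$, $B_p=\rho_0\hat V(p)$. The resulting summand is $p^2\hat s_p^2 + 2B_p\hat s_p^2 + B_p\hat s_p(\hat g_p-1) - B_p^2/4p^2$, which we rewrite as $A_p\hat s_p^2 + B_p\hat s_p\hat g_p + \tfrac12(A_p - \sqrt{A_p^2-B_p^2}) - \tfrac12\big(A_p-\sqrt{A_p^2-B_p^2} - B_p^2/2p^2\big) - \dots$ and check as an algebraic identity. With $\hat s,\hat g$ the exact Bogoliubov coefficients, $A\hat s^2+B\hat s\hat g = \tfrac12(\sqrt{A^2-B^2}-A)$, so the summand reduces to $-\tfrac12\big(A_p-\sqrt{A_p^2-B_p^2}-B_p^2/2p^2\big)$. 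In the limit $L\to\infty$ the sum becomes an integral, and at $\hat V\equiv 8\pi\frak{a}$ the classical evaluation gives $4\pi\frak{a}\rho_0^2\cdot\frac{128}{15\sqrt\pi}(\rho_0\frak{a}^3)^{1/2}$; the $p$-dependence of $\hat V$ contributes lower order for $|p|\ll\frak{a}^{-1}$, while large $p$ is summable. Finally $\rho_0\le\rho$ gives the bound stated in Lemma~\ref{lm:tilde-E-rho}.

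\textbf{Main obstacle.} The hardest step is Step 2: making rigorous the cancellation of the $\frak{a}^2\rho^2/\ell$ cutoff errors between the short-range Coulomb piece and $s$, with the pointwise control of $s+\rho_0\omega$ at scales $\lesssim\ell$.
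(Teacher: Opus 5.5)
Your overall plan (extract $4\pi\aa\rho_0^2$, pass to Fourier space, use $A_p\hat s_p^2+B_p\hat s_p\hat g_p=\tfrac12(\sqrt{A_p^2-B_p^2}-A_p)$, then a Riemann sum) is the paper's. However, the formula you state as the outcome of Steps 1--2 is wrong at the Lee--Huang--Yang order.

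Since $\frac{1}{|\Lambda|}\sum_p\rho_0^2\widehat{V}_\text{eff}(p)^2/4p^2\to\rho_0^2\|\nabla\omega\|_2^2=4\pi\aa\rho_0^2$ (with $\omega=1-f$), your formula amounts to $\|\nabla\th\|_2^2\simeq\|\nabla s\|_2^2$. Set $Z=s+\rho_0\omega_{\ell_0}$ as in the paper. Then $\|Z\|_\infty\lesssim\rho^{3/2}$ and $\|\nabla Z\|_\infty\lesssim\rho^2|\log\rho|$. Moreover $Z(0)\simeq\frac{1}{|\Lambda|}\sum_p(\hat s_p+B_p/2p^2)>0$ is of order $\rho^{3/2}$, because $|\hat s_p|<B_p/2p^2$ whenever $B_p=\rho_0\widehat{V}_\text{eff}(p)>0$. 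On $|x|\le\ell_0$ you have exactly $\th=s-\chi_\ell(2\cdot)Z$ and $s=-\rho_0\omega+Z$, hence
\[ \|\nabla\th\|_2^2-\|\nabla s\|_2^2\simeq 2\rho_0\langle\nabla\omega,\nabla(\chi_\ell(2\cdot)Z)\rangle=2\rho_0\langle-\Delta\omega,Z\rangle\simeq 8\pi\aa\rho_0 Z(0). \]
This is positive, of order $\rho^{5/2}$, and independent of $\ell$. It is produced by $Z\,\nabla\omega\cdot\nabla\chi_\ell(2\cdot)$ on the matching annulus, where $\int\nabla\omega\cdot\nabla\chi_\ell(2\cdot)=4\pi\aa$. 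So the matching at scale $\ell$ does not just generate $O(\rho^2\aa^2/\ell)$ errors that cancel.

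The correct statement (the paper proves the needed upper bound) is $\|\nabla\th\|_2^2=4\pi\aa\rho_0^2+\|\nabla Z\|_2^2+O(\rho^{5/2+\eps})$. After replacing $2p^2\widehat{\omega}_{\ell_0}$ by $\widehat{V}_\text{eff}$, its Fourier summand is $p^2\hat s_p^2+B_p\hat s_p+B_p^2/4p^2$. Your formula is smaller by $\frac{1}{|\Lambda|}\sum_pB_p(\hat s_p+B_p/2p^2)>0$, which is fatal for an upper bound.

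The paper avoids this by writing $\th=[\th+\rho_0\omega\chi_{\ell_0}(2\cdot)]-\rho_0\omega\chi_{\ell_0}(2\cdot)$. The bracket equals $\chi_{\ell_0}(2\cdot)(1-\chi_\ell(2\cdot))Z$ and vanishes where $\Delta\omega$ is supported, so the cross term is only a negligible $\ell_0$-cutoff effect. The bracket is then compared with $Z$ via the sup bounds above. So the short-distance matching you flag as the main obstacle is easy; the cross term with $Z$ is the real issue.

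Step 4 contains a second error. Your summand equals $\tfrac12(\sqrt{A_p^2-B_p^2}-A_p)+B_p\hat s_p^2-B_p\hat s_p-B_p^2/4p^2$, not $\tfrac12(\sqrt{A_p^2-B_p^2}-A_p)+B_p^2/4p^2$; the ``$\dots$'' hides a nonzero remainder. With the correct Step 3 summand you get $\tfrac12\big(\sqrt{A_p^2-B_p^2}-A_p+B_p^2/2p^2\big)+B_p\hat s_p^2$. The leftover $\frac{1}{|\Lambda|}\sum_pB_p\hat s_p^2\simeq 8\pi\aa\rho_0\|s\|_2^2$ is again of order $\rho^{5/2}$.

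It cannot be absorbed by $\rho_0\le\rho$: your final value $4\pi\aa\rho_0^2+(\text{LHY term})$ lies below the true $\widetilde E_\rho$ by about this amount. Only half of $16\pi\aa\rho_0\|s\|_2^2$ feeds the Bogoliubov sum. The other half must be combined with $4\pi\aa\rho_0^2$, using the specific choice \eqref{eq:rho0} of $\rho_0$ together with \eqref{eq:compare-ssig}. These give $\rho_0+\|s\|_2^2\le\rho+C\rho^{3/2+\eps}$, and hence $4\pi\aa\rho_0^2+8\pi\aa\rho_0\|s\|_2^2\le4\pi\aa\rho^2+C\rho^{5/2+\eps}$.
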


\begin{proof}  We start from the observation that 
\begin{equation}\label{eq:nabla-nu}
\| \nabla \th\|^2 \leq  \|\nabla (\th + {\rho_{0}}\omega \chi_{\ell_0} (2\cdot)) \|^2 + 4\pi \frak{a} \rho^2_0 + C \rho^{5/2+\eps} \,.
\end{equation}
To prove (\ref{eq:nabla-nu}), we first observe, from the definition (\ref{def:nu}) of $\th$, that 
\begin{equation}\label{eq:nuomega} \th (x) + \rho_{0} \omega (x) \chi_{\ell_0} (2x) =  \chi_{\ell_0}(2x) (1-\chi_\ell(2x))(\rho_0 \omega (x) + s(x))\,. \end{equation} 
%which in particular implies that 
%\[ \big| \nu (x) + \rho_{0} \omega (x) \chi_{\ell_0} (2x) \big| \lesssim \rho \frac{\chi_{\ell_0} (2x) (1 - 
%\chi_\ell (2x))}{|x|}  \]
Since moreover 
\[ \big|(1 - \chi_\ell (2x))  \Delta (\omega (x) \chi_{\ell_0} (2x)) \big| \lesssim \frac{1}{\ell_0^{2} |x|} \chi (\ell_0 \leq |x| \leq 2 \ell_0) \]
because $\Delta \omega (x) =0$ on the support of $1- \chi_\ell (2x)$, we conclude that  
\[ \begin{split} \big| \langle \nabla (\th &+ \rho_0 \omega \chi_{\ell_0} (2 \cdot)) ,  \rho_0 \nabla (\omega \chi_{\ell_0} (2 \cdot)) \rangle \big| \\ &=  \big| \langle \th + \rho_0 \omega \chi_{\ell_0} (2\cdot) , \rho_0 \Delta (\omega \chi_{\ell_0} (2\cdot)) \rangle \big| \\ &\leq C \rho \ell_0^{-2} \int_{\ell_0 \leq |x| \leq 2 \ell_0} \big(\rho_0 \omega (x) + s(x)) \frac{1}{|x|} dx  \\ &\leq C \rho^2 \ell_0^{-1} + C \rho \ell_0^{-2} \Big[ \int_{|x| \leq 2 \ell_0} \frac{1}{|x|^2} dx \Big]^{1/2} \Big[ \int_{|x| \geq \ell_0} |s(x)|^2 dx \Big]^{1/2}  \leq C \rho^{5/2+\eps}\,. \end{split} \]
Writing $\th = (\th + \rho_0 \omega \chi_{\ell_0} (2 \cdot)) - \rho_0 \omega \chi_{\ell_0} (2 \cdot)$, we can therefore estimate
\[ \| \nabla \th \|^2 \leq \| \nabla (\th + \rho_0 \omega \chi_{\ell_0} (2 \cdot)) \|^2 + \rho_0^2 \| \nabla (\omega \chi_{\ell_0} (2 \cdot)) \|^2 + C \rho^{5/2+\eps}\,. \]
From $|\nabla  (\chi_{\ell_0} (2x))| \leq C \ell_0^{-1} \chi (|x| \leq 2\ell_0) $, we also find 
\[ \begin{split}  \rho_0^2 \| \nabla (\omega \chi_{\ell_0} (2 \cdot)) \|^2 &\leq \rho_0^2 \int_{|x| \leq 2 \ell_0}  |\nabla \omega (x)|^2 dx + C \rho^{5/2+\eps} \\ & \leq 4\pi \frak{a} \rho_0^2 \big( 1 + C \frak{a} / \ell_0) + C \rho^{5/2+\eps} \leq 4\pi \frak{a} \rho_0^2 + C \rho^{5/2+\eps} \end{split}  \]
and thus (\ref{eq:nabla-nu}). Recalling the definition (\ref{eq:rho0}) of $\rho_0$ and the bound (\ref{eq:compare-ssig}), (\ref{eq:nabla-nu}) yields  
\[ \| \nabla \th \|^2 + 8\pi \frak{a} \rho \| s \|^2 \leq 4 \pi \frak{a} \rho^2 + \| \nabla (\th + \rho_0 \omega \chi_{\ell_0} (2 \cdot)) \|^2 + C \rho^{5/2 + \eps}\,. \]
From (\ref{def:tilde-E-rho}), we arrive at \begin{equation}\label{eq:Erho1} \begin{split} \widetilde{E}_{\rho} \leq \; &4 \pi \frak{a} \rho^2 + \left\|\nabla (s + {\rho_{0}}\omega_{\ell_0}) \right\|^2 \\
&+\frac{1}{|\L|}\sum_{k\in \L^*_+} \rho_0\widehat{V}_\mathrm{eff}(k) \Big[  \widehat s_k^2 +  \widehat{(g-\mathbbm{1})}_k \widehat s_k\Big] +\sum_{i=1}^3 \mathcal{E}_i + C \rho^{5/2+\eps} 
\end{split}\end{equation} 
with the error terms 
\begin{align*}
\mathcal{E}_1 =\; & \left\|\nabla (\th + {\rho_{0}}\omega \chi_{\ell_0} (2\cdot)) \right\|^2 - \left\|\nabla (s + {\rho_{0}}\omega_{\ell_0}) \right\|^2\\
  \mathcal{E}_2 =\; & 8\pi  \mathfrak{a} {\rho_0} \|s\|^2 -\frac{1}{|\L|}\sum_{k\in \L^*_+} \rho_0\widehat{V}_\mathrm{eff}(k)  \widehat s_k^2 \\
 \mathcal{E}_3 =\; & 8\pi  \mathfrak{a} \rho_0\,   ((g-\mathbbm{1}) \ast s)(0) - \frac{1}{|\L|}\sum_{k\in \L^*_+} \rho_0\widehat{V}_\mathrm{eff}(k)\,  \widehat{(g-\mathbbm{1})}(k) \widehat s_k \,.
\end{align*}

Expanding the norm $\| \nabla (s + \rho_0 \omega_{\ell_0}) \|^2$ in Fourier space, we further obtain
\begin{equation} \begin{split}
\widetilde{E}_\rho \leq \; & 4 \pi \aa \rho^2 
+\frac{1}{|\L|}\sum_{k\in \L^*_+} \Big[ |k|^2 \widehat s_k^2  + {\rho_{0}}\widehat{V}_\mathrm{eff}(k) \big(\widehat{s}_k^2+ \widehat{g}_k \widehat s_k\big) +\frac{\rho^2_0 \widehat{V}^2_\mathrm{eff}(k)}{4 |k|^2}\Big] \\ &+\sum_{i=1}^5 \mathcal{E}_i + C \rho^{5/2+\eps} 
\end{split}\end{equation}
with 
\[
\mathcal{E}_4 =   \frac{\rho_{0}}{|\L|}\sum_{k\in \L^*_+} \big( 2|k|^2\widehat \omega_{\ell_0}(k) -\widehat{V}_\mathrm{eff}(k) \big) \widehat s_k\,, \quad \mathcal{E}_5 \;=  \frac{\rho_0^2}{|\L|}\sum_{k\in \L^*_+}\Big(  |k|^2  \widehat \omega^2_{\ell_0}(k) -  \frac{ \widehat{V}^2_\mathrm{eff}(k)}{4 |k|^2} \Big)\,.
\]
With the definition (\ref{eq:hatsk}), we have, for every $k \neq 0$, 
\[
\widehat s_k^2=  \frac{|k|^2 +\rho_0\widehat V_\mathrm{eff}(k)- \sqrt{|k|^4 + 2  |k|^2 \rho_0\widehat V_\mathrm{eff}(k)} }{2\sqrt{|k|^4 + 2 |k|^2 \rho_0\widehat V_\mathrm{eff}(k)}}\,,\quad
\widehat g_k\widehat s_k=  \frac{-\rho_0\widehat V_\mathrm{eff}(k)}{2\sqrt{|k|^4 + 2  |k|^2 \rho_0\widehat V_\mathrm{eff}(k)}}\,.
\]
Therefore, we can rewrite 
\begin{equation}  \label{eq:E_rho-1}
\widetilde{E}_\rho \leq  4 \pi \aa \rho^2  +\frac{1}{2|\L|}\sum_{k\in \L^*_+} F(k)+\sum_{i=1}^5 \mathcal{E}_i + C \rho^{5/2+\eps} 
\end{equation}
with
\begin{equation}\label{def:Fk} 
F(k):=  \sqrt{|k|^4 + 2 |k|^2 \rho_0\widehat V_\mathrm{eff}(k)} - |k|^2 - \rho_0 \widehat V_\mathrm{eff}(k)+\frac{\rho^2_0 \widehat{V}^2_\mathrm{eff}(k)}{4 |k|^2}\,.
\end{equation}
Rewriting 
\begin{equation}\label{eq:Fk2} F(k) = - \frac{\rho_0^2 \widehat{V}_\text{eff}^2 (k)}{\sqrt{|k|^4 + 2 |k|^2 \rho_0\widehat V_\mathrm{eff}(k)} + |k|^2 + \rho_0 \widehat{V}_\text{eff} (k)} + \frac{\rho^2_0 \widehat{V}^2_\mathrm{eff}(k)}{4 |k|^2} \end{equation} 
we immediately find $|F(k)| \leq C \rho^2 / |k|^2$ (recall $|\widehat{V}_\text{eff} (k)| \leq C$). Taylor expanding the square root in (\ref{def:Fk}), we also obtain 
\[ |F(k)| \leq C \frac{\rho^3}{|k|^4} \]
for $|k| \geq C \rho_0^{1/2}$. With these two bounds, we conclude that the contributions to the sum of $F(k)$ arising from $|k| \leq \rho_0^{1/2+\eps}$ and from $|k| \geq \rho_0^{1/2-\eps}$ are negligible. Hence
\[ \widetilde{E}_\rho \leq  4 \pi \aa \rho^2  +\frac{1}{2|\L|}\sum_{\substack{k\in \L^*_+: \\ \rho_0^{1/2+\eps} \leq |k| \leq \rho_0^{1/2-\eps}}} F(k)+\sum_{i=1}^5 \mathcal{E}_i + C \rho^{5/2+\eps}\,. \]
Replacing $\widehat{V}_\text{eff} (k) = 8\pi \sin (|k| \frak{a}) / |k|$ with $8\pi \frak{a}$ in (\ref{def:Fk}), we define 
\[ G(k) = \sqrt{|k|^4 + 16\pi  \frak{a} \rho_0  |k|^2} - |k|^2 - 8\pi \frak{a} \rho_0 + \frac{(8\pi \frak{a} \rho_0)^2}{4 |k|^2}\,. \]
Expressing also $G$ in the form (\ref{eq:Fk2}) and recalling that $|\widehat{V}_\text{eff} (k) - 8\pi \frak{a}| \leq C |k|^2$, we find 
\[ |F(k) - G(k)| \lesssim \frac{\rho^2 |\widehat{V}_\text{eff} (k) - 8\pi \frak{a}|}{|k|^2} \leq C \rho^2 \]
and therefore  
\[  \frac{1}{|\Lambda|} \sum_{|k| \leq \rho_0^{1/2-\eps}} |F(k) - G(k)| \leq C  \rho^{7/2-3\eps} \leq C \rho^{5/2+\eps} \]
if $\eps > 0$ is small enough. Hence
\[ \widetilde{E}_\rho \leq  4 \pi \aa \rho^2  +\frac{1}{2|\L|}\sum_{\substack{k\in \L^*_+: \\ \rho_0^{1/2+\eps} \leq |k| \leq \rho_0^{1/2-\eps}}} G (k)+\sum_{i=1}^5 \mathcal{E}_i + C \rho^{5/2+\eps}\,. \]
Observing that \[ |\nabla G (k)| \leq C \frac{\rho^2}{|k|^3} \Big(1 + C \frac{\rho}{k^2} \Big) \leq C \rho^{1/2-4\eps} \] on the annulus $\rho_0^{1/2+\eps} \leq |k| \leq \rho_0^{1/2-\eps}$, we conclude that, in every relevant cell of the lattice $\Lambda^* = 2\pi \bZ^3/ L$, the variation of $G$ is bounded by $C \rho^{1/2-4\eps} / L$. This implies that, for sufficiently large $L$, 
\[ \begin{split}  &\frac{1}{|\Lambda|} \sum_{k\in \L^*_+: \rho_0^{1/2+\eps} \leq |k| \leq \rho_0^{1/2-\eps}} G (k) \\ &\leq \frac{1}{(2\pi)^3} \int_{\rho_0^{1/2+\eps} \leq |k| \leq \rho_0^{1/2-\eps}}  \Big[ \sqrt{|k|^4 + 16\pi  \frak{a} \rho_0  |k|^2} - |k|^2 - 8\pi \frak{a} \rho_0 + \frac{(8\pi \frak{a} \rho_0)^2}{4 |k|^2} \Big] dk + C \rho^{5/2+\eps}  \\ &= \frac{\rho_0^{5/2}}{(2\pi)^3} \int_{\rho_0^{\eps} \leq |k| \leq \rho_0^{-\eps}}  \Big[ \sqrt{|k|^4 + 16\pi  \frak{a}  |k|^2} - |k|^2 - 8\pi \frak{a} + \frac{(8\pi \frak{a})^2}{4 |k|^2} \Big] dk +C \rho^{5/2+\eps}  \\  &\leq \frac{\rho_0^{5/2}}{(2\pi)^3} \int  \Big[ \sqrt{|k|^4 + 16\pi  \frak{a}  |k|^2} - |k|^2 - 8\pi \frak{a} + \frac{(8\pi \frak{a})^2}{4 |k|^2} \Big] dk +C \rho^{5/2+\eps} \\ &= 4\pi \frak{a} \rho^2 \cdot \frac{128}{15 \sqrt{\pi}} (\rho \frak{a}^3)^{1/2} + C \rho^{5/2+\eps} \,. \end{split} \]
In the last step, we computed the integral and we used the bound $|\rho - \rho_0| \leq C \rho^{3/2}$ to replace $\rho_0^{5/2}$ by $\rho^{5/2}$. We find  
\[ \wt{E}_\rho \leq 4 \pi \frak{a} \rho^2 \left( 1 + \frac{128}{15 \sqrt{\pi}} (\rho \frak{a}^3)^{1/2} \right) + \sum_{i=1}^5 \mathcal{E}_i + C \rho^{5/2+\eps}\,. \]
 To conclude the proof of the lemma, we estimate the error terms $\mathcal{E}_i, i=1,\dots ,5$. 

%da qui prendiamo potenziale 8 pi sin ka / k, ma qualsiasi potentiziale che e' piu' o meno costante su |%k| < \rho^{1/2-\eps} andrebbe bene. ... poi si mete a posto il resto

% giulia

% arnaud

To bound $\mathcal{E}_1$, we define $Z(x)= \rho_0 \omega_{\ell_0} (x) +s(x)$. Since $\omega (x) = \omega_{\ell_0} (x)$ on the support of $\chi_{\ell_0} (2x)$, from (\ref{eq:nuomega}) we can write  
\begin{equation}\label{eq:E1-Z}
\th(x) + \rho_0 \omega (x) \chi_{\ell_0} (2x) = (1 -\chi_\ell (2x)) Z(x) - (1- \chi_{\ell_0} (2x)) Z(x)
\end{equation} 
which leads us to 
   \begin{equation}  \label{Eq:Explicit_E_2} 
   \begin{split} 
       | \mathcal{E}_1| 
        %& =\left\|\nabla Z\right\|^2-\left\|\nabla \left\{(1-\chi_\ell)Z-(1-\chi_{\ell_0})\widehat{q}\right\}\right\|
         \leq \; &\left\langle \nabla Z, \left[(1  -  \chi_\ell (2\cdot) )^2 -1 \right] \nabla Z \right\rangle  \\ &+ 2 \big|  \langle (1-\chi_\ell (2\cdot)) \nabla Z, -\frac{2}{\ell} \nabla \chi (2\cdot / \ell) Z(x) - \nabla \{ (1 - \chi_{\ell_0} (2\cdot)) Z \} \rangle\big|  \\ &+ C\ell^{-2}  \|  \nabla \chi (2\cdot / \ell) Z \|^2 + C \| \nabla \{ (1 - \chi_{\ell_0} (2\cdot)) Z \} \|^2 \end{split}  \end{equation} 
 and implies that 
 \begin{equation}\label{eq:E1ZZ} |\mathcal{E}_1| \leq C \ell^3 \| \nabla Z \|_\infty^2 + C \ell^2 \| \nabla Z \|_\infty \| Z \|_\infty + C \ell \| Z \|_\infty^2 + C \| \nabla Z |_{|x| \geq \ell_0} \|^2 + C\ell_0^{-2} \| Z |_{|x| \geq \ell_0} \|^2.  \end{equation} 
 From Lemma \ref{lm:eta} and noticing that 
\[ \int_{\ell_0 \leq |x| \leq 4 \ell_0} |\omega (x)|^2 dx  \leq C \ell_0  , \qquad \int_{\ell_0 \leq |x| \leq 4 \ell_0} |\nabla \omega (x)|^2 \leq C \ell_0^{-1} \]
we arrive at
 \[ C \ell_0^{-2} \| Z |_{|x| \geq \ell_0} \|_2^2  \leq C\rho^{5/2+\eps} , \qquad C  \| \nabla Z |_{|x| \geq \ell_0} \|^2_2 \leq C \rho^{5/2+\eps}\,. \]
To bound the first three term on the r.h.s. of (\ref{eq:E1ZZ}), we first estimate the Fourier coefficients $\widehat{Z} (k)$. From Lemma \ref{lm:eta}, we have $|\widehat{Z} (k)| \leq C \rho / |k|^2$. Moreover, observing that 
   \[
   \Big|\widehat{s}_k + \frac{\rho_0 \widehat V_{\rm{eff}}(k)}{2|k|^2}\Big| \leq \frac{C 
\rho_0^2( \widehat{V}_{\rm{eff}}(k))^2}{|k|^4}
   \]
   for $|k| \gg \rho^{1/2}$, we can split 
 $\widehat Z(k)= \widehat Z_1(k)+ \widehat Z_2(k)$  with
	\be \label{def:Z1}
	\widehat Z_1(k) =  \rho_0 \widehat{\o}_{\ell_0}(k)- \frac{\rho_0 \widehat{V}_{\rm{eff}}(k) }{2|k|^2}
    \ee
    and  
    \be
    \qquad \big| \widehat  Z_2(k)\big| \leq  C \rho^2 \min \{ |k|^{-4}, |k|^{-6}\}
    \ee
for $|k| \gg \rho^{1/2}$, since $|\widehat{V}_\text{eff} (k)| \leq C \min \{ 1 , |k|^{-1}\}$. Writing 
\[ 2k^2 \widehat{\omega}_{\ell_0} (k) = 2 \int \min \Big\{ 1 , \frac{\frak{a}}{|x|} \Big\} \chi (x/\ell_0) \Delta e^{-ik\cdot x} dx \]
and integrating by parts (with Gauss theorem), we find 
\begin{equation}\label{eq:ome-V} 2k^2 \widehat{\omega}_{\ell_0} (k) - \widehat{V}_\text{eff} (k) = \widehat{z}_1 (k \ell_0) \end{equation} 
with $z_1 (x) = 4 \frak{a} \nabla \chi (x) \cdot x/|x|^3 - 2\frak{a} \Delta \chi (x) / |x|$. This implies that $\widehat{Z}_1 (k) = \rho_0 \widehat{z}_1 (k \ell_0) / 2k^2$ and therefore, introducing the notation 
\begin{equation}\label{eq:p-fourier} \| \widehat{h} \|_p^p = \frac{1}{L^3} \sum_{k \in \Lambda^*} |\widehat{h}_k|^p \end{equation} 
for $1 \leq p < \infty$ and $\| \widehat{h} \|_\infty = \sup_{k \in \Lambda^*} |\widehat{h}_k|$ for sequences $\widehat{h} = \{ \widehat{h}_k \}_{k \in\Lambda^*}$, that 
\[ \| \widehat{Z}_1 \|_1 \leq \frac{C \rho}{|\Lambda|} \sum_{\substack{k \in \Lambda^* \\ |k| \leq 
\ell_0^{-1}}} \frac{\| \widehat{z}_1 \|_\infty}{k^2} + C \rho \ell_0^{-3/2} \Big( \frac{1}{|\Lambda|} 
\sum_{\substack{k \in \Lambda^* \\ |k| \geq \ell_0^{-1}}} \frac{1}{|k|^4} \Big)^{1/2} \leq C \rho 
\ell_0^{-1} \]
because $\| \widehat{z}_1 (\ell_0 \cdot) \|_2 = \| z_1 (x/\ell_0) / \ell_0^3 \|_2 \leq C \ell_0^{-3/2}$. Since moreover 
\[ \| \widehat{Z}_2 \|_1 \leq \frac{C\rho}{|\Lambda|} \sum_{\substack{k \in \Lambda^* \\ |k| \leq K 
\rho^{1/2}}} \frac{1}{k^2} + \frac{C \rho^2}{|\Lambda|}  \sum_{\substack{k \in \Lambda^* \\ |k| > K \rho^{1/2}}} 
\frac{1}{|k|^4} \leq C \rho^{3/2} \]
we conclude that 
\[ \| Z \|_\infty \leq \| \widehat{Z} \|_1 \leq \| \widehat{Z}_1 \|_1 + \| \widehat{Z}_2 \|_1 \leq C \rho^{3/2}\,. \]
Similarly, we can bound 
\be \begin{split} \label{eq:nablaZ-io} 
\| \nabla Z\|_\infty &\leq \| |\cdot| \widehat Z \|_1 \leq \frac{C}{|\L|} \sum_{\substack{k \in \L^*\\ |k| \leq K\rho^{1/2}}} \frac{\rho}{|k|} +   \frac{C}{|\L|} \sum_{\substack{k \in \L^*\\ |k| \geq K\rho^{1/2}}}  \bigg[ \frac{\rho |\widehat z_1( k\ell_0 )|}{|k|}  + |k| |\widehat{Z}_2(k)|   \bigg] \\ &\leq C \rho^2 + C \rho \left[ \frac{1}{|\L|} \sum_{k\in \L^*} |\widehat{z}_1 (k \ell_0)|^2 |k|^2 \right]^{1/2} \left[ \frac{1}{|\L|} \sum_{|k| \geq K \rho^{1/2}}  \frac{1}{|k|^4} \right]^{1/2}  \\ &\hspace{.4cm} + \frac{C \rho^2}{|\Lambda|} \sum_{|k| \geq K \rho^{1/2}} \min \Big\{ \frac{1}{|k|^3} , \frac{1}{|k|^5} \Big\}  \\ &\leq C \rho^2 + C \rho^{3/4}  \ell_0^{-5/2}  + C \rho^2 |\log \rho | \leq C \rho^2 |\log \rho |\,.      
\end{split}\ee
From (\ref{eq:E1ZZ}), we conclude that $|\mathcal{E}_1| \leq C \rho^{5/2+\eps}$. 

To bound $\mathcal{E}_2$ we combine $|\widehat{V}_\mathrm{eff}(k)  - 8 \pi \aa| \leq C \min \{ k^2 ,1 \}$ with the estimate $|\widehat{s}_k| \leq C \rho / k^2$ for all $k \in \Lambda_+^*$, from Lemma \ref{lm:eta}. We obtain 
\[ \begin{split}
 | \mathcal{E}_2 | &\leq  \frac{C \rho}{|\L|}\sum_{\substack{k\in \L^*_+ \\ |k|\leq \rho^{1/2 -\eps}}} |\widehat{V}_\mathrm{eff}(k)  - 8 \pi \aa|  |\widehat s_k|^2 + \frac{C \rho}{|\L |}\sum_{\substack{k\in \L^*_+ \\ |k|\geq \rho^{1/2 -\eps}}}  |\widehat s_k|^2 \leq C \rho^{5/2+\eps} 
 %   \\
 % & \leq  \frac{C \rho_0}{|\L|}\sum_{\substack{k\in \L^*_+ \\ |k|\leq \rho^{1/2 -\eps}}} |k|^2 
 %   + \frac{C \rho}{|\L|}\sum_{\substack{k\in \L^*_+ \\ |k|\geq \rho^{1/2 -\eps}}}  \frac{(\rho \aa)^2}{|k|^4} \%leq C \rho^{5/2+\eps}
  \end{split}
  \]
  for all $\eps >0$ small enough. Similarly, with $|\widehat{(g-\mathbbm{1})} (k)| \leq |\widehat{s}_k| \leq \rho / k^2$ we also find $|\mathcal{E}_3| \leq C \rho^{5/2+\eps}$.

Finally, we consider $\mathcal{E}_4$ and $\mathcal{E}_5$. Recalling (\ref{eq:ome-V}) and Lemma \ref{lm:eta}, we find 
\[ |\mathcal{E}_4| = \Big| \frac{\rho_0}{|\L|} \sum_{k \in \Lambda_+^*} \widehat{z}_1 (k \ell_0) \widehat{s}_k \Big| \leq C \rho \| \widehat{z}_1 (\ell_0\cdot) \|_2 \| \widehat{s} \|_2\leq C \rho^{7/4} \ell_0^{-3/2} \leq C \rho^{5/2+3\eps/2} \]
since $\| \widehat{z}_1 (\ell_0 \cdot) \|_2 \leq C \ell_0^{-3/2}$, and 
\[ \begin{split}  |\mathcal{E}_5| &= \Big| \frac{\rho_0^2}{|\L|} \sum_{k \in \L_+^*} \frac{1}{4k^2} \widehat{z}_1 (k\ell_0) \big(2k^2 \widehat{\omega}_{\ell_0} (k) + \widehat{V}_\text{eff} (k) \big) \Big| \\ &\leq C \rho^2 \| \widehat{z}_1 \|_\infty \frac{1}{|\L|} \sum_{|k| \leq \rho^{1/2+\eps}} \frac{1}{k^2} + C \rho^2 \| \widehat{z}_1 ( \ell_0\cdot) \| \Big[ \frac{1}{|\L|} \sum_{|k| \geq \rho^{1/2+\eps}} \frac{1}{|k|^4} \Big]^{1/2} \\ &\leq C \rho^{5/2+\eps} + C \rho^{7/4-\eps/2}  \ell_0^{-3/2}  \leq C \rho^{5/2+\eps}\,. \end{split}  \]    
\end{proof}

%Theorem \ref{thm:main} follows from Proposition \ref{prop:Psi-energy}, together with Lemmas \ref{lm:E-rho}, \ref{lm:tilde-E-rho}, and Proposition \ref{prop:N_on_psi} and equivalence of ensembles. 

 Proposition \ref{prop:Psi-energy} together with Lemmas \ref{lm:E-rho} and \ref{lm:tilde-E-rho} yield the upper bound
 \be \label{eq:UB-fin}
 L^{-3} \langle \Psi, \cK \Psi \rangle \leq 4 \pi  \aa \rho^2 \Big[ 1 + \frac{128}{15\sqrt{\pi}} (\rho \frak{a}^3)^{1/2}\Big] + C \rho^{5/2+\delta} 
 \ee
if $0 < \delta < \eps/2$ and $\eps > 0$ is small enough. At the same time, we recall from (\ref{eq:rho0}) that $\langle \Psi, \cN \Psi \rangle \geq \rho L^3$. To conclude the proof of Theorem \ref{thm:main} we use equivalence of ensembles, arguing similarly as in \cite[Appendix A]{BCS}. Starting with the periodic trial state $\Psi$ we construct, following \cite[Lemma A.1]{BCS}, a new normalized Fock space vector $\Psi_D$ on the box with size $L+2\cL$ for a fixed $\cL > 0$,  satisfying Dirichlet boundary conditions and such that 
\[ \langle \Psi_D, \cN \Psi_D \rangle = \langle \Psi , \cN \Psi \rangle \geq \rho L^3 \]
and 
\begin{equation}\label{eq:LHY-D} \langle \Psi_D, \cK \Psi_D \rangle \leq \langle \Psi, \cK \Psi \rangle + \frac{C}{L \cL} \langle \Psi, \cN \Psi \rangle  \,.\end{equation}
This implies 
\[ \frac{1}{(L+2\cL)^3} \langle \Psi_D, \cK \Psi_D \rangle \leq  4 \pi  \aa \rho^2 \Big[ 1 + \frac{128}{15\sqrt{\pi}} (\rho \frak{a}^3)^{1/2}\Big] + C \rho^{5/2+\delta}  \]
if $L > 0$ is large enough (the new error term can be absorbed in the contribution proportional to $\rho^{5/2+\delta}$). On the other hand, setting  \[ e^D_L (\rho) = \frac{E^D (\rho L^3 , L)}{L^3} \]
with $E^D (N,L)$ the ground state energy of $N$ particles in a box of size $L$, with Dirichlet boundary conditions, we find 
\[ \begin{split} \frac{1}{(L+2\cL)^3} \langle \Psi_D, \cK \Psi_D \rangle &= \frac{\mu}{(L+2\cL)^3} \langle \Psi_D, \cN \Psi_D \rangle + \frac{1}{(L+2\cL)^3} \langle \Psi_D, (\cH - \mu \cN) \Psi_D \rangle \\ &\geq \frac{\mu \rho}{(1+2\cL/L)^3} + \sum_{m \geq 1} \left[ e^D_{L+2\cL} \Big( \frac{m}{(L+2\cL)^3} \Big) - \mu \frac{m}{(L+2\cL)^3} \right] \| \Psi_D^{(m)} \|^2   \,. \end{split} \]
As explained in \cite[Eq. (A.16)]{BCS}, we can bound 
\[ e^D_{L+2\cL} (\rho) \geq (1 + \frak{a}/L)^3 \, e \big(\rho \, (1+ \frak{a}/\cL)^{-3}\big) \]
where we recall that $e (\rho)$ is the ground state energy per unit volume, in the thermodynamics limit $N, L \to \infty$, at density $\rho$ (the limiting energy density is independent of the boundary conditions). Thus, we obtain 
\[ \begin{split} 
&\frac{1}{(L+2\cL)^3} \langle \Psi_D, \cK \Psi_D \rangle \\ &\geq \frac{\mu \rho}{(1+2\cL/L)^3} \\ &\hspace{.3cm} + (1+ \frak{a}/L)^3 \sum_{m = 1} ^{CL^3} \left[ e \big(\frac{m}{(L+ 2\cL)^3 (1+ \frak{a}/L)^3} \big) - \mu \frac{m}{(L+2\cL)^3 (1+ \frak{a}/L)^3} \right] \| \Psi_D^{(m)} \|^2 \\ &\geq \mu \rho (1+ 2\cL/L)^{-3} - (1+ \frak{a}/L)^3 e^* (\mu) \end{split} \]
with $e^*$ the Legendre transform of $e$. From (\ref{eq:LHY-D}) and letting $L \to \infty$, we arrive at 
\[ 4 \pi  \aa \rho^2 \Big[ 1 + \frac{128}{15\sqrt{\pi}} (\rho \frak{a}^3)^{1/2}\Big] + C \rho^{5/2+\delta}  \geq \mu \rho - e^* (\mu) \,. \]
Since this is true for all $\mu > 0$, and since the limiting energy density is known to be convex, we obtain
\[ e (\rho) \leq 4 \pi  \aa \rho^2 \Big[ 1 + \frac{128}{15\sqrt{\pi}} (\rho \frak{a}^3)^{1/2}\Big] + C \rho^{5/2+\delta} \]
which concludes the proof of Theorem \ref{thm:main}.

\section{A-priori bounds on trial state} 
\label{sec:apri}

In this section, we prove estimates on the local number of particles and of excitations in the trial state $\Psi$ defined in (\ref{eq:trial}). This will, first of all, allows us to show Prop. \ref{prop:N_on_psi}. Moreover, it will give us a-priori control on $\Psi$, which will be crucial in the next section, to estimate the expectation of the kinetic energy and thus to prove Prop. \ref{prop:Psi-energy}. 

We recall the hierarchy of length scales: $\ell = C\rho^{-\delta}$, $\ell_0 =C \rho^{-1/2-\eps}$. For $w \in \Lambda$ and $r > 0$, we denote by   
\begin{equation}\label{eq:cNRw} \cN_{r} (w) = \int_{|y-w| \leq r} dy \, a_y^* a_y \end{equation} 
the operator measuring the number of particles in a ball of radius $r$ around $w$.  
%(notice that $\cN_{\ell_0} (w) = d\Gamma (\chi_{w,\ell_0})$ is just the second quantization of the %characteristic function $\chi_{w,\ell_0}$ of the ball of radius $\ell_0$, centered at $w$).  
Moreover, introducing the distributions 
%It is also convenient to introduce the operator valued distributions 
\begin{equation}\label{eq:bb} \begin{split} b^*_x &= a^*_x - \sqrt{\rho_0} = W (\rho_0) a^*_x W(\rho_0)^* \\  b_x &= a_x - \sqrt{\rho_0} = W(\rho_0) a_x W(\rho_0)^* \end{split} \end{equation} 
creating and annihilating excitations of the Bose-Einstein condensate at a point $x \in \Lambda$, we define the operator 
% Similarly to (\ref{eq:cNRw}), for $w \in \Lambda$, $R > 0$, we introduce the operator
\begin{equation}\label{eq:tcNRw} \wt{\cN}_r (w) = \int_{|x-w| \leq r} dx \, b_x^* b_x  \, .\end{equation} 

In the next lemma, we estimate moments of the operators (\ref{eq:cNRw}), (\ref{eq:tcNRw}), in the state $\Psi$, for $r$ slightly larger than the length scale $\ell_0$ (this restriction is needed, because we will use the fast decay (\ref{eq:decay}) of the kernels $\sigma, \gamma^{-1} * \sigma$, at distances larger than $\ell_0$).

\begin{lemma} \label{lm:Nm} 
Let $R = C \rho^{-1/2-3\eps}$. For every $m \in \bN$, there exists $C = C (m) > 0$ such that 
\begin{equation}\label{eq:Nm} \langle \Psi, \cN^m_{R} (w) \Psi \rangle \leq C  \rho^{-m/2-12 m\eps}   \end{equation} 
and 
\begin{equation}\label{eq:Nmb} \langle \Psi, \wt{\cN}^m_R (w) \Psi \rangle \leq C \rho^{-10 m \eps} \end{equation}
for every $w \in \Lambda$, if $\rho > 0$ and $\eps, \delta > 0$ in the definition of the trial state $\Psi$ are small enough (how small $\rho > 0$ needs to be depends on $m$). 
\end{lemma}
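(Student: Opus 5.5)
The plan is to prove (\ref{eq:Nmb}) and (\ref{eq:Nm}) together by induction on $m$, using only the identity (\ref{eq:id}) of Lemma \ref{lm:axpsi} together with the operator bounds $0\le J(x)\le 1$ and the pull-through formulas (\ref{eq:pull}). No use is made of the normalization constant $Z$.

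The first step is to translate (\ref{eq:id}) into an identity for excitations. Since $b_x = a_x-\sqrt{\rho_0}$, we get
\[ b_x\Psi = -\sqrt{\rho_0}\,(1-J(x))\Psi + J(x)\int dy\,\nu(x-y)\,a_y^*J(y)\Psi , \qquad \nu:=\gamma^{-1}*\sigma . \]
Two estimates then control the two terms.

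For the first term we use (\ref{eq:1-J}), which gives $(1-J(x))^2\le 1-J(x)\le \int dy\,\omega_\ell(x-y)a_y^*a_y$. Integrating over $|x-w|\le R$ yields the bound $\rho\|\omega_\ell\|_1\,\langle \cN_{R+4\ell}(w)\rangle\lesssim \rho\,\ell^2\langle\cN_{2R}(w)\rangle$.

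For the second term we write $a_y^*=b_y^*+\sqrt{\rho_0}$. This splits the contribution into a part $\sqrt{\rho_0}\int\nu(x-y)J(y)\Psi$, of size at most $\sqrt\rho\,\|\nu\|_1\lesssim \rho^{1/2}(\rho^{1/2}\ell_0)^3$ pointwise in $x$, and a part containing $b_y^*$. The $b_y^*$ part I will handle by normal ordering, which produces $\|\nu\|_2^2\lesssim\rho^{3/2}$ plus $\|\nu\|_1^2$ times local excitation numbers. The decay (\ref{eq:decay}) is essential here: it lets me restrict to $|x-y|\lesssim \rho^{-1/2-3\eps/2}\ll R$, so that a ball of radius $R$ is only enlarged to radius $R'=R+\rho^{-1/2-2\eps}$. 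The tails then contribute $O(\rho^{M})$ for any $M$ times global quantities; these global quantities are bounded by $\langle\cN\rangle$, which is at most $CL^3$ (crudely), and the tails win because their decay is polynomial of arbitrary order. One caveat: a factor $L^3$ must not appear. To avoid it, I will sum over a covering by balls and use translation invariance of $\Psi$, so that $\langle\cN_r(w)\rangle$ is independent of $w$.

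For $m=1$ this gives a closed system:
\[ \langle\wt\cN_R\rangle\le C\rho\ell^2\langle\cN_{R'}\rangle + C R^3\rho(\rho^{1/2}\ell_0)^6 + \ldots,\qquad \langle\cN_R\rangle\le 2\rho_0|B_R|+2\langle\wt\cN_R\rangle . \]
Translation invariance and covering give $\langle\cN_{R'}\rangle\le C\langle\cN_R\rangle$. The coefficient $\rho\ell^2\ll1$ then lets me absorb the $\cN$-term and close the system, giving $\langle \cN_R\rangle\lesssim \rho R^3\sim\rho^{-1/2-9\eps}$, within the claimed $\rho^{-1/2-12\eps}$. The excitation bound comes out as $R^3\rho\,(\rho^{1/2}\ell_0)^6\sim\rho^{-9\eps-6\eps}$. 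This is off from the claimed $\rho^{-10\eps}$, so the $\|\nu\|_1^2$ bookkeeping must be sharpened: I would use $\|\nu\|_2^2$ instead wherever possible, via Cauchy--Schwarz against $\int dy\,|\nu(x-y)|$. Only the $\eps$-exponents are at stake here, and I would fix them at the end.

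For higher $m$, I compute $\langle\Psi,\wt\cN_R^{m}\Psi\rangle=\int_{B_R}dx\,\langle b_x\Psi,\wt\cN_R^{m-1}b_x\Psi\rangle$ up to commutator corrections $[\wt\cN_R,b_x]=-b_x$, which give lower moments. I then insert the identity for $b_x\Psi$, and move $J(x)$ and $J(y)$ through powers of $\cN_R$; these commute, being multiplication operators. The main obstacle is $a_y^*$, which does not commute with $\cN_R^{m-1}$. Pulling it through changes $\cN_R^{m-1}$ into $(\cN_R+1)^{m-1}$ and creates off-diagonal terms. I would bound these by Cauchy--Schwarz with weights $(\cN_{R'}+1)^{m-1}$ and the inductive hypothesis, obtaining a recursion
\[ \langle\wt\cN_R^m\rangle\le C\rho\ell^2\langle\cN_{R'}^m\rangle+C(\rho^{-c\eps})\langle(\cN_{R'}+1)^{m-1}\rangle+\ldots \]
Closing this recursion again requires comparing moments on $R'$ with those on $R$ via the covering argument and Hölder. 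Because the radius grows with each step, I will perform the induction with a finite sequence of radii $R=R_m<R_{m-1}<\dots$, all of order $\rho^{-1/2-3\eps}$. This is why $\rho$ must be small depending on $m$.
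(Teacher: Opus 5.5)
There is a genuine gap: the scheme cannot close, because the term you must absorb has a coefficient that is not small. When you normal-order the $b_y^*$ part of $b_x\Psi$, the contraction gives $\|\nu\|_2^2|B_R|\lesssim\rho^{-9\eps}$, which is fine. The off-diagonal part, bounded as you propose by $\|\nu\|_1^2$ times local excitation numbers, gives $\langle\Psi,\wt{\cN}_R\Psi\rangle\le A+c\,\langle\Psi,\wt{\cN}_{R'}\Psi\rangle$ with $c=\|\nu\|_1^2$ and $R'>R$.

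Here necessarily $c\ge 1-o(1)$. Indeed $\|\nu\|_1\ge\sup_k|\widehat\nu_k|$, and $\widehat\nu_k=\widehat\sigma_k/(1+\widehat\sigma_k^2)^{1/2}$ is $1-o(1)$ in modulus wherever $|\widehat\sigma_k|\approx|\widehat s_k|\sim\rho^{1/4}|k|^{-1/2}\gg 1$, i.e. for $\ell_0^{-1}\ll|k|\ll\rho^{1/2}$. Moreover, Lemma \ref{lm:eta} only gives $\|\nu\|_1\lesssim\rho^{-3\eps}$. So the inequality carries no information. Your $m=1$ ``closed system'' only looks closed because this term is hidden in the dots. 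No $\eps$-bookkeeping can rescue it, and neither can trading $\|\nu\|_1^2$ for $\|\nu\|_2^2$, which is not a bound for this convolution-type quadratic form at all. The same top-order self-reference appears at every $m$. A sequence of radii $R_m<R_{m-1}<\dots$ handles lower moments, but not $\langle\wt{\cN}^m_{R'}\rangle$ at the same order $m$.

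The missing idea is that this step is a contraction with margin only $\rho^{\eps}$, so every constant must be kept at $1+o(\rho^{\eps})$. The paper does this in three steps.
\begin{enumerate}
\item It separates the $\sqrt{\rho_0}$-term from the $\nu$-term by an unbalanced Cauchy--Schwarz, giving the $\nu$-term weight $1+C\rho^{3\eps/2}$. A factor $2$ there would already be fatal.
\item It cuts $\nu$ to $|x-y|\le\ell_1$ using (\ref{eq:decay}). For each $x$ the form $\int dy\,dy'\,\nu(x-y)\nu(x-y')\langle\Xi_y,\cN_R^m(w)\Xi_{y'}\rangle$ is non-negative, so the $x$-integral can be extended to $\Lambda$. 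In Fourier space the multiplier is then $\widehat\nu_k^2=1-\widehat\gamma_k^{-2}\le 1-C\rho^{\eps}$, because $\|\widehat\gamma\|_\infty\lesssim 1+\|\sigma\|_1\lesssim\rho^{-\eps/2}$ by (\ref{Th:est_3}).
\item It compares $\langle\cN_R^m\cN_{R+\ell_1}\rangle$ with $\langle\cN_R^{m+1}\rangle$ via Lemma \ref{lm:R'R} and translation invariance. That lemma's constant is $1+C\ell_1/R$, and $\ell_1/R\sim\rho^{5\eps/4}\ll\rho^{\eps}$. Your covering argument, with a constant $C>1$, would destroy the $\rho^{\eps}$ gain.
\end{enumerate}
The result is $\langle\cN_R^{m+1}\rangle\le(\text{lower order})+(1-C\rho^{\eps})\langle\cN_R^{m+1}\rangle$. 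The paper does this for $\cN_R$ directly from (\ref{eq:id}), then repeats it for $\wt{\cN}_R$.

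There is also a second problem. You bound $\sqrt{\rho_0}\int dy\,\nu(x-y)J(x)J(y)\Psi$ by $\sqrt{\rho}\,\|\nu\|_1$. This gives $\rho|B_R|\,\|\nu\|_1^2\sim\rho^{-1/2-15\eps}$, not $\rho^{-15\eps}$: you dropped $\rho R^3\sim\rho^{-1/2-9\eps}$. So this misses (\ref{eq:Nmb}) by a full factor $\rho^{-1/2}$, not just in the $\eps$-exponents. The fix is $\widehat\nu_0=0$: since $\int\nu=0$, you may replace $J(y)$ by $J(y)-1$, as in (\ref{eq:id-b}), after which (\ref{eq:1-J}) makes the term small.
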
 

{\bf Remark.} From the definition (\ref{eq:trial}), it is clear that $\Psi$ is translation invariant. Hence, $\langle \Psi, \cN_R^m (w) \Psi \rangle$ and $\langle \Psi, \widetilde{\cN}_R^m (w) \Psi \rangle$ are actually independent of $w \in \Lambda$. 

\medskip

{\bf Remark.} The lemma implies that the expectation of the total number of particles $\cN$ and the total number of excitations $\wt{\cN}$ are bounded by 
\begin{equation}\label{eq:rmkN} \langle \Psi, \cN \Psi \rangle = \int dx \langle \Psi, a_x^* a_x \Psi \rangle \leq C R^{-3} \int dz \langle \Psi, \cN_R (z) \Psi \rangle \leq C\rho^{1-3\eps} L^3 \end{equation} 
and, respectively, 
\begin{equation}\label{eq:rmkwtN} \langle \Psi, \wt{\cN} \Psi \rangle = \int dx \langle \Psi, b_x^* b_x \Psi \rangle \leq C  \rho^{3/2-\eps} L^3\,. \end{equation} 
Later, we will prove a more precise version of (\ref{eq:rmkN}).

\medskip

In order to show Lemma \ref{lm:Nm}, we use the following result, which allows us to estimate moments of the number of particles in balls of radius $R_2$ by the corresponding moments in slightly smaller balls of radius $R_1$.  
\begin{lemma} \label{lm:R'R}
Let $m \in \bN$. Then there exists $C = C (m) > 0$ such that 
\begin{equation}\label{eq:R'R} \langle \Phi, \cN_{R_2}^m (w) \Phi \rangle \leq \left( 1 + C (R_2 - R_1)/R_1 \right) \langle \Phi, \cN_{R_1}^m (w) \Phi \rangle \end{equation}  
for every translation invariant $\Phi \in \cF$, all $0< R_1 \leq R_2 \leq 2 R_1$ and all $w \in \Lambda$. 
\end{lemma}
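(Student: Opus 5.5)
The plan is to reduce the statement to a purely geometric inequality by exploiting two facts: the operators $\cN_r(w)$ all commute, and $\Phi$ is translation invariant.

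\emph{Step 1: reduction to a point process.} Each $\cN_r(w)=d\Gamma(\mathbbm{1}_{B_r(w)})$ is the second quantization of a multiplication operator. On every $n$-particle sector it acts as multiplication by $\sum_{j=1}^n \mathbbm{1}_{B_r(w)}(x_j)$. Hence
\[ \langle \Phi, \cN_r^m(w)\Phi\rangle=\sum_{n} \int dx_1\dots dx_n\, |\Phi^{(n)}(x_1,\dots,x_n)|^2 \Big(\sum_{j=1}^n \mathbbm{1}_{B_r(w)}(x_j)\Big)^m . \]
This is $\mathbb{E}[X(B_r(w))^m]$ for a translation invariant point process $X$ on $\Lambda$. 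Expanding the $m$-th power gives
\[ \Big(\sum_{j} \mathbbm{1}_{B_r(w)}(x_j)\Big)^m=\sum_{j_1,\dots,j_m} \prod_{i=1}^m \mathbbm{1}_{B_r(x_{j_i})}(w). \]
Because the expectation does not depend on $w$, I can average over $w\in\Lambda$ and obtain
\[ \langle \Phi, \cN_r^m(w)\Phi\rangle = \frac{1}{L^3}\sum_n\int |\Phi^{(n)}|^2 \sum_{j_1,\dots,j_m}\Big|\bigcap_{i=1}^m B_r(x_{j_i})\Big| . \]
So it suffices to compare, for every $m$-tuple of points $y_1,\dots,y_m$ (repetitions allowed), the volumes $|K_{R}|$ with $K_R=\bigcap_i B_R(y_i)$ at $R=R_2$ and $R=R_1$. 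Every term is nonnegative, so a pointwise comparison transfers directly to the expectations.

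\emph{Step 2: geometric comparison by a homothety.} If $K_{R_1}\neq\emptyset$, pick $p\in K_{R_1}$ and consider the map $x\mapsto p+\frac{R_1}{R_2}(x-p)$. For $x\in K_{R_2}$ the triangle inequality gives
\[ \Big|p+\tfrac{R_1}{R_2}(x-p)-y_i\Big|\le \tfrac{R_1}{R_2}|x-y_i|+\big(1-\tfrac{R_1}{R_2}\big)|p-y_i|\le R_1 . \]
Thus the map sends $K_{R_2}$ into $K_{R_1}$, and therefore $|K_{R_2}|\le (R_2/R_1)^3|K_{R_1}|$. Since $R_2\le 2R_1$, we have $(R_2/R_1)^3=1+O((R_2-R_1)/R_1)$, which is exactly the claimed constant. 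For all tuples with $K_{R_1}\neq\emptyset$ this gives the bound without any loss in $m$.

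\emph{Step 3: the main obstacle, tuples with $K_{R_1}=\emptyset\neq K_{R_2}$.} These are tuples whose Chebyshev radius $r_*$ lies in $(R_1,R_2]$. For them the homothety argument fails, and their contribution has to be dominated by a small multiple $C(R_2-R_1)/R_1$ of the $R_1$-moment.

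For such a tuple, $K_{R_2}$ is contained in a lens around the Chebyshev center, so its volume is small (of order $R_2^{3/2}(R_2-R_1)^{3/2}$). What I would try first is a change of parametrization that charges this small volume to nearby tuples with nonempty $K_{R_1}$. Concretely, I would replace the sharp homothety by the Brunn--Minkowski concavity of $R\mapsto|K_R|^{1/3}$, which follows from $K_{\lambda R+(1-\lambda)R'}\supset\lambda K_R+(1-\lambda)K_{R'}$. Combining this concavity with a rescaling of the whole configuration (dilating all points $y_i$ about a common center by the factor $R_1/R_2$) would let me compare with the $R_1$-quantity.

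If that cannot be closed pointwise, I would fall back on an operator-level argument that uses the commutativity of all the $\cN_r$. The plan there is to write $\cN_{R_2}=\cN_{R_1}+\cN_{\rm ann}$, expand the $m$-th power, and bound the cross terms $\langle\cN_{\rm ann}\cN_{R_2}^{m-1}\rangle$. To do so I would cover the annulus by $O(1)$ thin caps of thickness $R_2-R_1$; for each cap, about $R_1/(R_2-R_1)$ disjoint translates fit inside a ball of radius $R_2$. Translation invariance together with superadditivity of $(\cdot)^m$ then yields the factor $(R_2-R_1)/R_1$. Finally, $\cN_{R_2}^{m-1}$ is absorbed by an induction in $m$ using the bound already proved for lower powers.

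I expect this last bookkeeping, which must keep the loss linear in $(R_2-R_1)/R_1$ rather than $((R_2-R_1)/R_1)^{1/m}$, to be the delicate point of the proof.
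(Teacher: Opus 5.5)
Your Step 1 is the paper's reduction: its $F_\mu(R)=\int d\mu(x_1)\cdots d\mu(x_m)\,K_R(x_1,\dots,x_m)$, with $K_R$ the volume of $\bigcap_i B_R(x_i)$. However, Step 2 is false, and with it the tuple-by-tuple strategy.

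\textbf{Where Step 2 fails.} In your triangle inequality the second term is only bounded by $(1-\frac{R_1}{R_2})|p-y_i|\le(1-\frac{R_1}{R_2})R_1$. So the homothety maps $K_{R_2}$ into $\bigcap_i B_{R_1(2-R_1/R_2)}(y_i)$, not into $K_{R_1}$. Centring it at the Chebyshev centre, or using the Brunn--Minkowski concavity of $R\mapsto|K_R|^{1/3}$ from Step 3, gives only $|K_{R_2}|\le\big(\frac{R_2-r_*}{R_1-r_*}\big)^3|K_{R_1}|$, which degenerates as $r_*\uparrow R_1$. In fact no per-tuple multiplicative bound exists. For $m=2$ and $|y_1-y_2|=2R_1$, $K_{R_1}$ is a single point (nonempty, volume zero), while $|K_{R_2}|\simeq R_1(R_2-R_1)^2>0$. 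So the problematic tuples are not only those with $K_{R_1}=\emptyset$, and their excess cannot be charged to their own $K_{R_1}$. Dilating the points $y_i$ is not available either, since configurations are weighted by $|\Phi^{(n)}|^2$, which is translation- but not dilation-invariant.

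\textbf{The missing idea: charge the excess to the whole configuration.} The paper uses only the additive bound
$K_{R_2}\le K_{R_1}+CR_1^2(R_2-R_1)\prod_{j\ge2}\chi(|x_1-x_j|\le 2R_2)$.
This follows from $\bigcap_iB_{R_2}(x_i)\setminus\bigcap_iB_{R_1}(x_i)\subset\bigcup_i(B_{R_2}(x_i)\setminus B_{R_1}(x_i))$. It then shows $\int d\mu(x_1)\,\mu(B_{4R_1}(x_1))^{m-1}\le CR_1^{-3}F_\mu(R_1)$ as follows:
cover $B_{4R_1}(x_1)$ by $O(1)$ balls $B_{R_1/2}(x_1+y_j)$;
insert $1=|B_{R_1/2}|^{-1}\int dw\,\chi(|x_1-w|\le R_1/2)$;
enlarge to balls of radius $R_1$ centred at $w$ and $w+y_j$;
apply $a_1\cdots a_m\le C(a_1^m+\dots+a_m^m)$ together with translation invariance in $w$.

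\textbf{Your operator-level fallback.} It can also be closed, but only with an idea the proposal lacks at exactly the point you flag. Separating $\cN_{\mathrm{ann}}$ from $\cN_{R_2}^{m-1}$ (H\"older, then induction) does lose the $m$-th root. Instead, use $\cN_{R_2}^m-\cN_{R_1}^m\le m\,\cN_{\mathrm{ann}}\cN_{R_2}^{m-1}$, valid for commuting nonnegative multiplication operators. For each cap $A$, with $\cN(A)=\int_A dy\,a_y^*a_y$, apply translation invariance to the whole product: for $|t|\le R_1$,
$\langle\Phi,\cN(A)\cN_{R_2}(w)^{m-1}\Phi\rangle=\langle\Phi,\cN(A+t)\cN_{R_2}(w+t)^{m-1}\Phi\rangle\le\langle\Phi,\cN(A+t)\cN_{3R_1}(w)^{m-1}\Phi\rangle$.
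Sum over your $\gtrsim R_1/(R_2-R_1)$ pairwise disjoint translates $A+t_k\subset B_{3R_1}(w)$. This gives
$\frac{R_1}{R_2-R_1}\langle\Phi,\cN(A)\cN_{R_2}(w)^{m-1}\Phi\rangle\lesssim\langle\Phi,\cN_{3R_1}(w)^m\Phi\rangle\le C\langle\Phi,\cN_{R_1}(w)^m\Phi\rangle$.
The last step follows by covering $B_{3R_1}(w)$ with $k$ balls of radius $R_1$, using $(a_1+\dots+a_k)^m\le k^{m-1}(a_1^m+\dots+a_k^m)$ and translation invariance. No induction is needed, and the case $R_2-R_1\gtrsim R_1$ follows from the last step alone. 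This would be a legitimate alternative to the paper's volume-based argument, but it is not what you wrote, and your main line (Steps 2--3) does not work.
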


{\bf Remark.} Since \[ W (\rho_0) \cN_R (w) W(\rho_0)^* = \wt{\cN}_R (w) \, ,  \] \eqref{eq:R'R}  immediately implies the analogous bound with the local number of particles $\cN_R (w)$ replaced by the local number of excitations $\wt{\cN}_R (w)$. In other words, for every $m \in \bN$, there exists a constant $C > 0$ such that 
\begin{equation}\label{eq:R'R-b} \langle \Phi, \wt{\cN}^m_{R_2} (w) \Phi \rangle \leq \left( 1 + C (R_2 - R_1)/R_1 \right) \langle \Phi, \wt{\cN}^m_{R_1} (w) \Phi \rangle \end{equation} 
for every $0 < R_1 \leq R_2 \leq 2 R_1$ and for every translation invariant $\Phi \in \cF (\Lambda_L)$.

\begin{proof} 
For $m \in \bN$ and $\mu$ a non-negative $\sigma$-finite Borel measure on $\Lambda$, we set 
\[ F_\mu (R) := \int_\Lambda dy \; \left( \int_{|x-y| \leq R} d\mu (x) \right)^m \, . \]
We claim that 
\begin{equation} \label{eq:F2-F1} F_\mu (R_2) \leq \big(1 + C (R_2 - R_1)/R_1\big) F_\mu (R_1) \end{equation}
for all $R_1 \leq R_2 \leq 2 R_1$ and for a constant $C > 0$ depending only on $m$. 

From (\ref{eq:F2-F1}), the bound  (\ref{eq:R'R}) follows by the observing that, on the $N$-particle sector of Fock space $\cF$, $\cN_R (w)$ acts as multiplication with 
\[ \sum_{j=1}^N \chi (|x_j - w| \leq R) = \int_{|x-w| \leq R} d\mu (x)  \]
where we defined the measure $\mu (x) = \sum_{j=1}^N \delta (x_j - x)$. Therefore, working on sectors with fixed number of particles, (\ref{eq:F2-F1}) implies that 
\[ \int_\Lambda dw \, \langle \Phi, \cN^m_{R_2} (w) \Phi \rangle \leq \big(1 + C (R_2 - R_1)/R_1 \big) \int_\Lambda dw \, \langle \Phi, \cN^m_{R_1} (w) \Phi \rangle \]
which immediately yields (\ref{eq:R'R}), if we assume $\Phi$ to be translation invariant. 

To show (\ref{eq:F2-F1}), we use Fubini to write 
\[ \begin{split} F_\mu (R) &= \int d\mu (x_1) \dots \int d\mu (x_m) \int_\Lambda dy \,  \prod_{j=1}^m \chi (|x_j - y| \leq R) \\ &= \int d\mu (x_1) \dots \int d\mu (x_m) \, K_R (x_1, \dots , x_m) \end{split} \]
where
\[ K_R (x_1, \dots , x_m) = \text{Vol } \Big( B_R (x_1) \cap B_R (x_2) \cap \dots \cap B_R (x_m) \Big) \]
with $\text{Vol} (A)$ the Lebesgue measure of $A \subset \Lambda$. We observe that $K_R (x_1, \dots , x_m) =0$, unless $|x_i - x_j| < 2R$, for all $i,j \in \{ 1, \dots , m\}$. 
Moreover, for $R_1 \leq R_2 \leq 2R_2$, we find $K_{R_2} (x_1, \dots , x_m) \geq K_{R_1} (x_1, \dots , x_m)$ by monotonicity and  
\begin{equation}\label{eq:contiK} 
K_{R_2} (x_1, \dots , x_m) \leq K_{R_1} (x_1, \dots , x_m) + C R_1^2 (R_2 - R_1) \prod_{j=2}^m \chi (|x_1 - x_j| \leq 2R_2) \end{equation} 
for a constant $C > 0$, depending only on $m$. To prove (\ref{eq:contiK}), it is enough to remark that, when we increase the radius of the $m$ balls from $R_1$ to $R_2$, the Lebesgue measure of their intersection can grow at most by 
\begin{equation}\label{eq:R2-R1} \frac{4}{3} \pi m (R_2^3 - R_1^3) \leq C (R_2 - R_1) R_1^2 \end{equation} 
if $R_1 \leq R_2 \leq 2 R_1$. We conclude that 
\[ \begin{split} F_\mu (R_2) &= \int d\mu (x_1) \dots \int d\mu (x_m) \, K_{R_2} (x_1, \dots , x_m) \\ &\leq  F_\mu (R_1) + C (R_2 - R_1) R_1^2 \int d\mu (x_1) \int_{B_{4R_1} (x_1)} d\mu (x_2) \dots \int_{B_{4R_1} (x_1)} d\mu (x_m) \end{split} \]
for all $R_1 \leq R_2 \leq 2 R_1$. To bound the remaining integral, we find $k \in \bN$ points $\tilde{y}_1, \dots , \tilde{y}_k \in B_4 (0)$, with 
\[ B_4 (0) \subset \bigcup_{j=1}^k B_{1/2} (\tilde{y}_j) \,. \]
Setting $y_j = R_1 \tilde{y}_j$, for $j=1,\dots , k$, this implies that 
\[ B_{4R_1} (x_1) \subset \bigcup_{j=1}^k B_{R_1 /2} (x_1 + y_j) \]
and therefore that 
\[ \begin{split} &\int d\mu (x_1)  \int_{B_{4R_1} (x_1)} d\mu (x_2) \dots \int_{B_{4R_1} (x_1)} d\mu (x_m) \\ &\leq \sum_{j_2, \dots j_m = 1}^k \int d\mu (x_1) \int_{B_{\frac{R_1}{2}} (x_1 + y_{j_2})} d\mu (x_2) \dots \int_{B_{\frac{R_1}{2}} (x_1 + y_{j_m})} d\mu (x_m)\\
&\leq \sum_{j_2, \dots j_m = 1}^k \frac{1}{|B_{\frac{R_1}{2}} (0)|} \int dw \int_{B_{\frac{R_1}{2}} (w)} d\mu (x_1)  \int_{B_{\frac{R_1}{2}} (x_1 + y_{j_2})} d\mu (x_2) \dots \int_{B_{\frac{R_1}{2}} (x_1 + y_{j_m})} d\mu (x_m) \, ,
\end{split} \]
where in the last step we inserted a factor $\int dw \, \chi (|x_1 - w| \leq R_1/2) / |B_{R_1/2}(0)| = 1$ and then we exchanged the $w$ and the $x_1$ integrals. We obtain 
\[\begin{split} &\int d\mu (x_1)  \int_{B_{4R_1} (x_1)} d\mu (x_2) \dots \int_{B_{4R_1} (x_1)} d\mu (x_m) 
\\ &\leq \sum_{j_2, \dots j_m = 1}^k \frac{1}{|B_{\frac{R_1}{2}} (0)|} \int dw \int_{B_{R_1} (w)} d\mu (x_1)  \int_{B_{R_1} (w + y_{j_2})} d\mu (x_2) \dots \int_{B_{R_1} (w + y_{j_m})} d\mu (x_m)  \,.
\end{split} \]
With the elementary inequality $a_1 \dots a_m \lesssim C_m (a_1^m + \dots + a_m^m)$, we conclude that 
\[ \int d\mu (x_1)  \int_{B_{4R_1} (x_1)} d\mu (x_2) \dots \int_{B_{4R_1} (x_1)} d\mu (x_m) \leq C R_1^{-3} F_\mu (R_1) \]
for $C > 0$ depending only on $m$. Together with (\ref{eq:R2-R1}), this proves (\ref{eq:F2-F1}). 
\end{proof}

With Lemma \ref{lm:R'R}, we are now ready to show Lemma \ref{lm:Nm}. 

\begin{proof}[Proof of Lemma \ref{lm:Nm}]
We first show (\ref{eq:Nm}), proceeding by induction. The case $m = 0$ is trivial. We assume that (\ref{eq:Nm}) holds true for an exponent $m \in \bN$ and we estimate 
\begin{equation}\label{eq:Nm1} \begin{split} \langle \Psi, \cN_{R}^{m+1} (w) \Psi \rangle &= \int_{|x-w| \leq R} dx \, \langle \Psi, \cN_{R}^m (w) a_x^* a_x \Psi \rangle \\ &=   \int_{|x-w| \leq R} dx \, \langle a_x \Psi, (\cN_{R} (w) + 1)^m   a_x \Psi \rangle \\ &\leq C  \rho^{-m/2-12 m\eps} + \int_{|x-w| \leq R} dx \, \langle a_x \Psi, \cN_{R}^m (w)  a_x \Psi \rangle\,.   \end{split} \end{equation} 
With (\ref{eq:id}), applying Cauchy-Schwarz, recalling that $0 \leq J(x) \leq 1$ and noticing that $J(x)$ and $\cN_{R} (w)$ commute (they are both multiplication operators, on each sector with fixed number of particles), we find 
\begin{equation}\label{eq:Nm2} \begin{split} 
\langle &\Psi, \cN_{R}^{m+1} (w) \Psi \rangle\\ &\leq C  \rho^{-m/2-12 m\eps} + C (\rho R^3) \rho^{-3\eps/2}  \langle \Psi, \cN_{R}^m (w) \Psi \rangle \\ &\hspace{.3cm} + (1 + C \rho^{3\eps/2}) \int_{|x-w| \leq R} dx \int dy dy' \nu (x-y) \nu (x-y') \langle \Psi, J(y) a_y \cN^m_{R} (w) a_{y'}^* J(y') \Psi \rangle \,, \end{split} \end{equation} 
where we introduced the notation $\nu  = \gamma^{-1} * \sigma$. Applying the induction assumption in the second term and commuting $a_y, a_{y'}^*$ in the last term, we obtain 
\begin{equation}\label{eq:Nm3}  \begin{split} 
\langle &\Psi, \cN_{R}^{m+1} (w) \Psi \rangle \\&\leq C \rho^{-(m+1)/2 - 12 (m+21/24) \eps}  \\ &\hspace{.4cm} + (1 + C \rho^{3\eps/2})  \int_{|x-w| \leq R} dx dy \, \nu (x-y)^2 \langle J(y) \Psi, (\cN_{R}  (w)+ 1)^m J(y) \Psi \rangle  \\ &\hspace{.4cm} + (1 + C \rho^{3\eps/2})  \int_{|x-w| \leq R} dx \int dy dy' f_\ell (y-y')^2 \nu (x-y) \nu (x-y') \\ &\hspace{.8cm} \times  \langle J (y') a_{y'} \Psi, (\cN_{R} (w) + \chi (|y-w| \leq R) + \chi (|y'-w| \leq R))^m  J(y) a_y \Psi \rangle \,.\end{split} \end{equation} 
In the second term on the r.h.s. we use Lemma \ref{lm:eta} to estimate $\| \nu \|_2^2 \lesssim \rho^{3/2}$. In the third term, we use the fast decay of the kernels $\nu (z)$ for $|z| \gg \ell_0 \rho^{-\eps/2} = C\rho^{-1/2-3\eps/2}$, as established in Lemma \ref{lm:eta}, to  restrict the integrals to $|y-x| \leq \ell_1$, $|y'-x| \leq \ell_1$, where $\ell_1 = \ell_0 \rho^{-3\eps/4} = C\rho^{-1/2-7\eps/4}$ (with respect to the scale $\rho^{-1/2-3\eps/2}$ from (\ref{eq:decay}) in Lemma \ref{lm:eta}, we add an additional factor $\rho^{-\eps/4}$ to gain some smallness). Error terms can be bounded using the Cauchy-Schwarz inequality. For example, to estimate the contribution 
\begin{equation}\label{eq:ex-decay} \begin{split} 
\text{A} := &\, \int_{|x-w| \leq R} dx \int_{\substack{|y-x| \leq \ell_1 \\ |y' - x| \geq \ell_1}} dy dy' \,  f_\ell^2 (y-y') \nu (x-y) \nu (x-y') \\ &\hspace{.4cm} \times  \langle J (y') a_{y'} \Psi,  (\cN_{R} (w) + \chi (|y-w| \leq R) + \chi (|y'-w| \leq R))^m   J(y) a_y \Psi \rangle \end{split} \end{equation}
we can bound the $y'$-integral by a sum of integrals on balls of radius $\ell_1/2$, centered around lattice points $(\ell_1 /2) n$, with $n = (n_1, n_2, n_3) \in \bZ^3 \cap \Lambda$ and $|n| = |n_1| + |n_2| + |n_3| \geq 2$. With (\ref{eq:decay}) we obtain, for arbitrary  $k \in \bN$, 
\[ \begin{split} 
| \text{A} | \leq &\, 
C \rho^{1+ \eps k /4 + \delta}  \sum_{n \in \bZ^3 :  \ell_1 n /2 \in \Lambda, |n| \geq 2}  \int_{|x-w| \leq R} dx \int_{|y-x| \leq \ell_1}  dy dy' \, |\nu (x-y)|  \\ &\hspace{.5cm}  \times \frac{\chi (|x-y'- \ell_1 n/2| \leq \ell_1/2)}{|n|^k} \| a_{y'} (\cN_{R} (w) + 1)^{m/2} \Psi \|  \| a_{y} (\cN_{R}  (w) +1)^{m/2}\Psi \|
\end{split} \]
with a constant $C > 0$ depending on $k$. By Cauchy-Schwarz, we have 
 \[ \begin{split} 
| \text{A} |  &\leq C \rho^{-1/2+ \eps (k/4-12)+\delta}   \langle \Psi, (\cN_{R} (w) + 1)^m  \cN_{R + \ell_1} (w) \Psi \rangle^{1/2}  \\ &\hspace{.5cm} \times \sum_{n \in \bZ^3 :  |n| \geq 2}  |n|^{-k}  \langle \Psi, (\cN_{R} (w) + 1)^m  \cN_{R + \ell_1/2} (w - \ell_1 n/2) \Psi \rangle^{1/2}\,. \end{split} \]
From the translation invariance of $\Psi$ and from Lemma \ref{lm:R'R} (since $R+ \ell_1 \leq 2 R$), we find (fixing $k > 3$ to make sure that $\sum_{n \in \bZ^3} |n|^{-k} < \infty$) a constant $C > 0$ depending on $k, m$ such that  
\[ | \text{A} |  \leq C \rho^{-1/2+ \eps (k/4-12) + \delta} \langle \Psi,  ( \cN^{m+1}_{R} (w) + 1)  \Psi \rangle\,. \]
Choosing $k \in \bN$ sufficiently large, we conclude that, for any $\kappa > 0$, 
\begin{equation}\label{eq:tails} \begin{split} 
| \text{A} |   &
\leq \kappa \rho^{\eps} \langle \Psi, (\cN_{R}^{m+1} (w) + 1) \Psi \rangle 
\end{split} \end{equation} 
if $\rho > 0$ is small enough (depending on $m$).  Hence, we obtain 
\begin{equation}\label{eq:Nm4} \begin{split} 
\langle &\Psi, \cN_{R}^{m+1} (w) \Psi \rangle \\ &\leq C \rho^{-(m+1)/2-12 (m+21/24) \eps} + \kappa \rho^\eps \langle \Psi, \cN_{R}^{m+1} (w) \Psi \rangle \\ &\hspace{.4cm} + (1+C \rho^{3\eps/2}) \int_{|x-w| \leq R} dx \int_{|y-x| \leq \ell_1, |y'-x| \leq \ell_1} dy dy' f_\ell (y-y')^2 \nu (x-y) \nu (x-y') \\ &\hspace{2cm} \times  \langle J (y') a_{y'} \Psi, (\cN_{R} (w) + \chi (|y-w| \leq R) + \chi (|y'-w| \leq R))^m  J(y) a_y \Psi \rangle \,.\end{split} \end{equation} 
In the last term, we expand $(\cN_{R} (w) + \chi (|y-w| \leq R) + \chi (|y'-w| \leq R))^m$; except for $\cN^m_{R} (w)$, all other contributions can be bounded using the induction assumption (after applying the Cauchy-Schwarz inequality). For example,
\[ \begin{split} &\int_{|x-w| \leq R} dx \int_{\substack{|y-x| \leq \ell_1 \\ |y'-x| \leq \ell_1}} dy dy' |\nu (x-y)| |\nu (x-y')|  \| \cN_{R}^{(m-1)/2} (w) a_{y'} \Psi \| \| \cN_{R}^{(m-1)/2} (w) a_y \Psi \| \\ &\leq \int_{|x-w| \leq R} dx \int_{\substack{|y-x| \leq \ell_1 \\ |y'-x| \leq \ell_1}} dy dy' |\nu (x-y)|^2  \| \cN_{R}^{(m-1)/2} (w) a_{y'} \Psi \|^2 \\ 
&\leq C \rho^{-9\eps}  \langle \Psi, \cN^{m-1}_{R} (w) \cN_{R+ \ell_1} (w) \Psi \rangle \leq C \rho^{-9\eps} \langle \Psi, \cN^m_{R+ \ell_1} (w) \Psi \rangle \\ &\leq C \rho^{-9\eps} \langle \Psi, \cN^m_{R} (w) \Psi \rangle \leq C  \rho^{- m/2 - 12 (m+3/4) \eps} \, ,  \end{split} \]
where in the last line we applied Lemma \ref{lm:R'R} and the induction assumption. 

We arrive at 
\begin{equation} \label{eq:defT} 
\langle \Psi, \cN_{R}^{m+1} (w) \Psi \rangle  \leq C \rho^{-(m+1)/2 - 12 (m+21/24)\eps} + \kappa \rho^\eps \langle \Psi, \cN_{R}^{m+1} (w) \Psi \rangle  +  \text{T} \end{equation} 
with 
\[ \begin{split}  \text{T}  &=  (1+ C \rho^{3\eps/2} ) \int_{|x-w| \leq R} dx \int_{|x-y| \leq \ell_1, |x-y'| \leq \ell_1}  dy dy' f^2_\ell (y-y') \nu (x-y) \nu (x-y')  \\ &\hspace{8cm} \times \langle J (y') a_{y'} \Psi, \cN^m_{R} (w)  J(y) a_y \Psi \rangle \,.
\end{split} \]
We decompose $\text{T} = \text{S}_1 + \text{S}_2$, with 
\[ \begin{split} \text{S}_1 &=  (1+ C \rho^{3\eps/2})  \int_{|x-w| \leq R}dx \int_{|x-y||, |x-y'|  \leq \ell_1} dy dy'   (f_\ell^2 (y-y')-1)  \nu (x-y ) \nu (x-y') \\ &\hspace{6cm} \times \langle J(y') a_{y'} \Psi, \cN^m_{R} (w) J(y) a_y \Psi \rangle \, ,   \\
\text{S}_2 &= (1+ C \rho^{3\eps/2}) \int_{|x-w| \leq R} dx \int_{\substack{|x-y|  \leq \ell_1 \\ |x-y'| \leq \ell_1}} dy dy'   \nu (x-y ) \nu (x-y') \\ &\hspace{6cm} \times  \langle J(y') a_{y'} \Psi, \cN_{R}^m (w) J(y) a_y \Psi \rangle\,. \end{split} \]
To control $\text{S}_1$, we observe that, with Lemma \ref{lm:fell}, 
\be  \label{eq:uell}
1 - f_\ell^2 (y-y') \leq 2 \o_\ell (y-y') \leq \frac{C \chi (|y-y'| \leq \ell)}{|y-y'|} \,.
\ee
Therefore, recalling that $\ell = C\rho^{-\delta}$,  
\[ \begin{split} 
|\text{S}_1| \leq \; &C \int_{|x-w| \leq R}dx  \int_{|w-y|,|w-y'|  \leq R + \ell_1} dydy' \, \frac{\chi (|y-y'| \leq \ell)}{|y-y'|} |\nu (x-y')|^2 \| a_y \cN^{m/2}_{R} (w) \Psi \|^2 \\  \leq \; &C \rho^{3/2-2\delta} \langle \Psi, \cN^m_{R} (w)  \cN_{R + \ell_1} (w) \Psi \rangle \,.\end{split} \]
With Lemma \ref{lm:R'R} and using the translation invariance of $\Psi$, we obtain  
\begin{equation}\label{eq:S1bd}  \begin{split} 
|\text{S}_1| \leq \; & C \rho^{3/2-2\delta} \langle \Psi, \cN^{m+1}_{R} (w) \Psi \rangle \leq \kappa \rho^{\eps}  \langle \Psi, \cN^{m+1}_{R} (w) \Psi \rangle  \end{split} \end{equation} 
for any $\kappa > 0$, if $\rho > 0$ is small enough (and $\delta, \eps > 0$ are sufficiently small). 
 
 As for $\text{S}_2$, we write 
 \[  \begin{split} S_2 = &(1+ C \rho^{3\eps/2})  \int_{|x-w| \leq R}dx  \int_{\substack{|x-y|  \leq \ell_1 \\ |x-y'| \leq \ell_1}} dy dy'  \chi (|w-y| \leq R + \ell_1) \chi (|w-y'| \leq R + \ell_1)  \\ &\hspace{5cm} \times  \nu (x-y ) \nu (x-y')  \langle J(y') a_{y'} \Psi, \cN_{R}^m (w) J(y) a_y \Psi \rangle \,. \end{split} \]
We can now remove the constraints $|x-y| , |x-y'| \leq \ell_1$. The contributions arising from $|x-y| > \ell_1$ or $|x-y'| > \ell_1$ can be handled as we did above with the term $\text{A}$. We find 
 \begin{equation} \label{eq:SwtS} 0 \leq \text{S}_2 \leq \kappa \rho^\eps \langle \Psi, \cN_{R}^{m+1} (w) \Psi \rangle + \wt{\text{S}}_2 \end{equation} 
for any $\kappa > 0$, if $\rho$ is small enough, with 
  \begin{equation}\label{eq:wtS2} \wt{\text{S}}_2 =(1+ C \rho^{3\eps/2})  \int_{|x-w| \leq R} dx \int dy dy' \nu (x-y) \nu (x-y') \langle \Xi_y, \cN_{R}^{m} (w) \Xi_{y'} \rangle \end{equation} 
 where we defined $\Xi_y = \chi (|w-y| \leq R+ \ell_1) J(y) a_y \Psi$. Since 
\[  \int dy dy'   \nu (x-y ) \nu (x-y') \langle \Xi_y , \cN^m_{R} (w) \Xi_{y'} \rangle \geq 0 \] 
for every $x \in \Lambda$, we can estimate 
 \[ \begin{split}  0 \leq \wt{\text{S}}_2 &\leq (1+ C \rho^{3\eps/2})  \int dy dy' (\nu * \nu) (y-y')  \, \langle \Xi_y, \cN_{R}^{m} (w) \Xi_{y'} \rangle\,. \end{split} \]
 Switching to Fourier space, we obtain 
\[ \begin{split} 
0 \leq \wt{\text{S}}_2 \leq  \frac{(1+ C \rho^{3\eps/2})}{|\Lambda|} \sum_{k \in \Lambda^*} \widehat{\nu}_k^2  \, \langle \widehat{\Xi}_k , \cN^m_{R} (w)  \widehat{\Xi}_k \rangle \,.\end{split} \]
From (\ref{Th:est_3}), we find \[ \| \widehat{\gamma} \|_\infty \lesssim 1 + \| \widehat{\sigma} \|_\infty \lesssim 1+ \| \sigma \|_1 \lesssim \rho^{-\eps/2} \]
which implies that $\widehat{\nu}_k^2 =  \widehat{\sigma}_k^2  / \widehat{\gamma}_k^2 = 1 - 1/\widehat{\gamma}_k^{2} \leq 1 - C \rho^{\eps}$, for every momentum $k \in \Lambda^*$. Hence, adjusting slightly the constant $C > 0$ to absorb the error proportional to $\rho^{3\eps/2}$,  
\[ 
0 \leq \wt{\text{S}}_2 \leq \frac{\big( 1 - C \rho^{\eps} \big)}{|\Lambda|} \sum_{k \in \Lambda^*} \langle \widehat{\Xi}_k , \cN^m_{R} (w)  \widehat{\Xi}_k \rangle\,. \]
Switching back to position space, we obtain 
\[ \begin{split} 0 \leq \wt{\text{S}}_2 &\leq  \big( 1 - C \rho^{\eps} \big) \int_{|w-y| \leq R + \ell_1} dy  \langle a_y \Psi, \cN^m_{R} (w) a_y \Psi \rangle \end{split} \]
and therefore, with Lemma \ref{lm:R'R}, 
\[ \begin{split} 
0 \leq \wt{\text{S}}_2 &\leq  (1 - C \rho^{\eps})  \langle  \Psi, \cN^m_{R} (w) \cN_{R + \ell_1} (w) \Psi \rangle \\ &\leq (1 - C \rho^{\eps}) (1 + C \ell_1/R) \langle \Psi, \cN^{m+1}_{R} (w) \Psi \rangle \leq 
(1 - C \rho^{\eps}) \langle \Psi, \cN^{m+1}_{R} (w)  \Psi \rangle \,.\end{split} \]
Here we used the observation that $\ell_1/R = C \rho^{5\eps /4} \ll \rho^\eps$. 
From (\ref{eq:SwtS}), choosing $\kappa > 0$ small enough, we find  
\[ \text{S}_2 \leq (1-C \rho^\eps )  \langle \Psi, \cN^{m+1}_{R} (w) \Psi \rangle \] for an appropriate constant $C > 0$. Combining this bound with (\ref{eq:S1bd}) and (\ref{eq:defT}), we conclude that 
\[ \langle \Psi, \cN_{R}^{m+1} (w) \Psi \rangle \leq C \rho^{-(m+1)/2 -12 (m+21/24)\eps}  + (1 - C \rho^\eps )  \langle \Psi, \cN_{R}^{m+1} (w) \Psi \rangle \]
which implies that 
\[  \langle \Psi, \cN_{R}^{m+1} (w) \Psi \rangle \leq C  \rho^{-(m+1)/2-12 (m+1) \eps}  \]
for a constant $C > 0$ depending only on $m \in \bN$. 

Next, we prove (\ref{eq:Nmb}). Since the operators $b_x, b_x^*$ satisfy the same canonical commutation relations as the operators $a_x, a_x^*$ and since we have the same bound (\ref{eq:R'R-b}) for $\wt{\cN}_R (w)$, as we did for $\cN_R (w)$, we can proceed similarly as in the proof of (\ref{eq:Nm}). In the following, we only highlight the differences. 

As above, we proceed by induction over $m \in \bN$. Since the claim is trivial for $m=0$, we can assume (\ref{eq:Nmb}) to hold true for some $m \in \bN$. Similarly to (\ref{eq:Nm1}), we compute 
\[ \langle \Psi, \wt{\cN}^{m+1}_R (w) \Psi \rangle  \leq C \rho^{-10 m \eps} + \int_{|x-w| \leq R} dx \, \langle b_x \Psi, \wt{\cN}_R^m (w) b_x \Psi \rangle\,. \]
Instead of (\ref{eq:id}), we apply now the identity 
\begin{equation}\label{eq:id-b} \begin{split} 
b_x \Psi = \; &\sqrt{\rho_0} (J(x) -1) \Psi + \int dy  \, \nu (x-y) J(x) a_y^* J(y) \Psi \\ 
=\; &\sqrt{\rho_0} (J(x) -1) \Psi + \sqrt{\rho_0} \int dy  \, \nu (x-y) J(x) (J(y) - 1) \Psi \\ &+ \int dy  \, \nu (x-y) J(x) b_y^* J(y) \Psi  \end{split} \end{equation} 
which follows from (\ref{eq:id}), the definition (\ref{eq:bb}) and from the fact that $\int dy \, \nu (y) = 0$ (recall the notation $\nu = \gamma^{-1} \ast\sigma$). By Cauchy-Schwarz, we arrive at
\begin{equation}\label{eq:Nmb2} \begin{split}  
\langle \Psi, & \wt{\cN}^{m+1}_R (w) \Psi \rangle  \\ \leq \; &C \rho^{-10m\eps} + C \rho^{1-3\eps/2} \int_{|x-w| \leq R} dx \langle \wt{\cN}^{m/2}_R (w) \Psi , (J(x) - 1)^2  \wt{\cN}^{m/2}_R (w) \Psi \rangle \\ &+ C \rho^{1-3\eps/2} \int_{|x-w| \leq R} dx dy dy' \, \nu (x-y) \nu (x-y') \langle \Psi, (J(y) - 1)  \wt{\cN}_R^{m} (w) (J(y') -1) \Psi \rangle  \\ &+ (1 + C \rho^{3\eps/2}) \int_{|x-w| \leq R} dx dy dy' \, \nu (x-y) \nu (x-y') \, \langle \Psi, J(y) b_y \wt{\cN}_R^m (w) b_{y'}^* J(y') \Psi \rangle \,.\end{split} \end{equation} 
With (\ref{eq:1-J}), we can bound the second term by 
\[ \begin{split} C &\rho^{1-3\eps/2} \int_{|x-w| \leq R} dx dy \, \o_\ell (x-y) \langle \wt{\cN}^{m/2}_R (w) \Psi , a_y^* a_y \wt{\cN}^{m/2}_R (w) \Psi  \rangle \\ \leq \; &C  \rho^{1-3\eps/2} \int_{|y-w| \leq R+\ell} dx dy \, \o_\ell (x-y) \langle \wt{\cN}^{m/2}_R (w) \Psi , a_y^* a_y \wt{\cN}^{m/2}_R (w) \Psi  \rangle \\ \leq \; &C \rho^{1-3\eps/2} \ell^2 \langle \Psi,  \wt{\cN}^{m+1}_{R+\ell} (w) \Psi \rangle \leq C \rho^{1-3\eps/2-2\delta} \, \langle \Psi, \wt{\cN}_R^{m+1} (w) \Psi \rangle \leq \kappa \rho^\eps \langle \Psi, \wt{\cN}_R^{m+1} (w) \Psi \rangle \end{split} \]
for any $\kappa > 0$, if $\eps, \delta > 0$ and $\rho > 0$ are small enough. Here, we used \eqref{eq:R'R-b}. 

Up to errors arising from the restriction of the integrals to $|x-y| , |x-y'| \leq \ell_1$ (which can be bounded similarly to (\ref{eq:tails})), with Cauchy-Schwarz we can estimate the third term on the r.h.s. of (\ref{eq:Nmb2}) by 
\[ \begin{split} C &\rho^{1-3\eps/2} \int_{|x-w| \leq R, |x-y'| \leq \ell_1} dx dy dy' dz \, |\nu (x-y)|^2 \o_\ell (y'-z) \langle \wt{\cN}_R^{m/2} (w) \Psi, a_z^* a_z  \wt{\cN}_R^{m/2} (w) \Psi \rangle \\ \leq \; &C \rho^{5/2-3\eps/2} \ell^2 R^3  \int_{|z-w| \leq R+\ell_1+\ell} dz   \langle \wt{\cN}_R^{m/2} (w) \Psi, a_z^* a_z  \wt{\cN}_R^{m/2} (w) \Psi \rangle \\ \leq \; &C \rho^{7/2-3\eps/2-2\delta} R^6 \langle \Psi, \wt{\cN}_R^m (w) \Psi \rangle + C \rho^{5/2-3\eps/2-2\delta} R^3 \langle \Psi, \wt{\cN}_{R+\ell_1+\ell}^{m+1} (w) \Psi \rangle \\ \leq \; &C \rho^{-10 m \eps} + \kappa \rho^\eps \langle \Psi, \wt{\cN}_R^{m+1} (w) \Psi \rangle \end{split} \] 
if $\eps , \delta ,\rho > 0$ are small enough. Here, we wrote $a_z = b_z + \sqrt{\rho_0}, a_z^* = b_z^* + \sqrt{\rho_0}$ and we applied the induction assumption.

The last term on the r.h.s. of (\ref{eq:Nmb2}) can be bounded exactly as we did with the last term on the r.h.s. of (\ref{eq:Nm2}). Taking into account that the contribution arising from the commutator $[b_y, b_{y'}^*] = \delta (y-y')$ (analogous to the second term on the r.h.s. of (\ref{eq:Nm3})) can be bounded, with the induction assumption, by 
\[ \begin{split} \int_{|x-w| \leq R} dx dy \, |\nu (x-y)|^2 \langle J(y) \Psi, &(\wt{\cN}_R (w) + 1)^m J(y) \Psi \rangle \\ &\leq C \rho^{3/2} R^3 \langle \Psi, (\wt{\cN}_R (w) + 1)^m \Psi \rangle \leq \rho^{-9\eps-10m\eps} \end{split} \] 
we arrive at 
\[ \langle \Psi, \wt{\cN}_R^{m+1} (w) \Psi \rangle \leq C \rho^{-10 m \eps-9\eps} + (1 - C \rho^\eps) \langle \Psi,  \wt{\cN}_R^{m+1} (w) \Psi \rangle  \, , \]
which leads to 
\[ \langle \Psi, \wt{\cN}_R^{m+1} (w) \Psi \rangle \leq C \rho^{-10 (m+1) \eps} \, . \]
\end{proof}

Lemma \ref{lm:Nm} gives us control on localized monomials of arbitrary degree, with matching numbers of creation and annihilation operators.  
\begin{lemma} \label{lm:a-bds} 
Let $R = C \rho^{-1/2-3\eps}$, $k \in \bN \backslash \{ 0 \}$. Then we have 
\begin{equation}\label{eq:kas}  \int dx_1 \dots dx_k \, \prod_{j=2}^k \chi (|x_1 - x_j| \leq R)  \langle \Psi, a_{x_1}^* \dots a_{x_k}^* a_{x_k} \dots a_{x_1} \Psi \rangle \leq C \rho^{-(k-3)/2 - (12 k -9) \eps} L^3 \end{equation} 
and 
\begin{equation}\label{eq:kbs}  \int dx_1 \dots dx_k \, \prod_{j=2}^k \chi (|x_1 - x_j| \leq R)  \langle \Psi, b_{x_1}^* \dots b_{x_k}^* b_{x_k} \dots b_{x_1} \Psi \rangle \leq C \rho^{3/2 - (10k -9) \eps} L^3 \end{equation} 
if $\eps, \delta > 0$ and $\rho > 0$ are small enough (if $k=1$, we remove $\prod_{j=2}^k \chi (|x_1 - x_j| \leq R)$).
\end{lemma}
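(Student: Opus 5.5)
The plan is to reduce both bounds directly to the localized moment estimates of Lemma \ref{lm:Nm}, using normal ordering to express the $k$-point densities through powers of local number operators. For fixed $x_1$, the constraints $|x_1 - x_j| \leq R$ for $j \geq 2$ place every other point in the ball $B_R(x_1)$. On the $n$-particle sector, $\int dx_1\dots dx_k \prod_{j\ge 2}\chi(|x_1-x_j|\le R)\, a_{x_1}^*\cdots a_{x_k}^* a_{x_k}\cdots a_{x_1}$ acts as multiplication by
\[ \sum_{i_1} \;\sideset{}{'}\sum_{i_2,\dots,i_k} \prod_{j=2}^k \chi(|x_{i_1}-x_{i_j}|\le R), \]
where the primed sum runs over pairwise distinct indices. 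This is bounded by $\sum_{i_1} \big(\cN_R(x_{i_1})-1\big)^{k-1}_+$, i.e.\ by $\sum_{i_1}\cN_R(x_{i_1})^{k-1}$. In second-quantized form this gives
\[ \text{l.h.s. of (\ref{eq:kas})} \leq \int dx_1 \,\langle a_{x_1}\Psi, \cN_R(x_1)^{k-1} a_{x_1}\Psi\rangle \leq \int dx_1 \,\langle \Psi, a_{x_1}^*a_{x_1}\cN_R(x_1)^{k-1}\Psi\rangle , \]
using that $a_{x_1}$ lowers the count in $B_R(x_1)$ by one.

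Next I will remove the point-dependent center. I cover $\Lambda$ by balls $B_{R/2}(z)$ with $z$ in a lattice of spacing $\sim R$, with bounded overlap. For $x_1\in B_{R/2}(z)$ we have $\cN_R(x_1)\le \cN_{2R}(z)$, and $\int_{B_{R/2}(z)}a_{x_1}^*a_{x_1}\le \cN_{2R}(z)$. Hence the expression is bounded by $\sum_z \langle\Psi,\cN_{2R}(z)^k\Psi\rangle$. The number of lattice points is $\sim L^3/R^3$, and by translation invariance, Lemma \ref{lm:R'R} (applied once to pass from $2R$ to $R$, with a constant depending only on $k$) and \eqref{eq:Nm}, this is at most
\[ C L^3 R^{-3}\rho^{-k/2-12k\eps} = C L^3\rho^{3/2+9\eps}\rho^{-k/2-12k\eps} = C\rho^{-(k-3)/2-(12k-9)\eps}L^3 , \]
which is the claimed bound. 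For $k=1$ there is no constraint; the same covering applied to $\int a_x^*a_x$ directly gives (\ref{eq:rmkN}).

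For (\ref{eq:kbs}) I run the identical argument with $b$ in place of $a$ and $\wt\cN$ in place of $\cN$. The $b_x$ satisfy the CCR, and $W(\rho_0)$ conjugation turns the $b$-monomial into an $a$-monomial in the state $W(\rho_0)^*\Psi$. So the same combinatorial bound holds with $\wt\cN_R$, and \eqref{eq:R'R-b} replaces Lemma \ref{lm:R'R}. Then \eqref{eq:Nmb} gives $C L^3\rho^{3/2+9\eps}\rho^{-10k\eps} = C\rho^{3/2-(10k-9)\eps}L^3$.

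The main obstacle is the doubling step, which requires comparing balls of radius $2R$ with $R$. Lemma \ref{lm:R'R} only covers $R_2\le 2R_1$, which is exactly this case. Alternatively, I would use balls $B_{\eta R}(z)$ with $R(1+\eta)$ comparable to $R$, which keeps constants harmless. I must also check that \eqref{eq:Nm} and \eqref{eq:Nmb} remain valid for radius $2R$: the constant in the definition $R=C\rho^{-1/2-3\eps}$ is arbitrary, so this is covered, or it follows via Lemma \ref{lm:R'R}.
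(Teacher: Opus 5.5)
Your proposal is correct and essentially follows the paper's argument. You reduce the $k$-point density to $\int dx_1\langle \Psi, a_{x_1}^*\cN_R(x_1)^{k-1}a_{x_1}\Psi\rangle$ by a direct sector-wise count, whereas the paper iterates $\int dx_k\,\chi(|x_1-x_k|\le R)\,a^*_{x_k}a_{x_k}=\cN_R(x_1)$; you then fix the center via $\cN_R(x_1)\le\cN_{2R}(z)$ and apply (\ref{eq:Nm}), resp. (\ref{eq:Nmb}), using a lattice covering where the paper uses the continuous average $CR^{-3}\int dz\,\chi(|x_1-z|\le R)$, and both routes give the same factor $R^{-3}L^3$.
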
 
\begin{proof} 
To show (\ref{eq:kas}), we write 
\[ \begin{split}  
\int dx_1 &\dots dx_k \, \prod_{j=2}^k \chi (|x_1 - x_j| \leq R)  \langle \Psi, a_{x_1}^* \dots a_{x_k}^* a_{x_k} \dots a_{x_1} \Psi \rangle \\ &= \int dx_1 \dots dx_{k-1} \, \prod_{j=2}^{k-1}  \chi (|x_1 - x_j| \leq R)
 \langle \Psi, a_{x_1}^* \dots a_{x_{k-1}}^* \cN_R (x_1) a_{x_{k-1}} \dots a_{x_1} \Psi \rangle\,. \end{split} \]
 Iterating $k$ times, we find 
 \[ \begin{split} \int dx_1 &\dots dx_k \, \prod_{j=2}^k \chi (|x_1 - x_j| \leq R)  \langle \Psi, a_{x_1}^* \dots a_{x_k}^* a_{x_k} \dots a_{x_1} \Psi \rangle 
 \\ &\leq \int dx_1 \langle \Psi, a_{x_1}^* \cN^{k-1}_R (x_1) a_{x_1} \Psi \rangle \\ &\leq CR^{-3} \int dz dx_1 \chi (|x_1 - z| \leq R) \langle \Psi, a_{x_1}^* \cN^{k-1}_R (x_1) a_{x_1} \Psi \rangle \\ &\leq CR^{-3} \int dz dx_1 \chi (|x_1 - z| \leq R) \langle \Psi, a_{x_1}^* \cN^{k-1}_{2R} (z) a_{x_1} \Psi \rangle \\ &\leq C R^{-3} \int dz \langle \Psi,   \cN^{k}_{2R} (z) \Psi \rangle \leq C \rho^{-(k-3)/2-(12k -9) \eps} L^3  \, ,\end{split} \]  
 where we used (\ref{eq:Nm}). Since the operators $b^*_x, b_x$ satisfy the same commutation relations as the operators $a_x^*, a_x$, (\ref{eq:kbs}) can be proven analogously, using (\ref{eq:Nmb}) instead of (\ref{eq:Nm}) in the last step.  
\end{proof}  

It will also be useful to bound the expectation of a single creation or annihilation operator. To this end, we state the following lemma.
\begin{lemma}\label{lm:1b}
We assume $\eps, \delta > 0$ and $\rho >0$ to be small enough. Then, we have 
\begin{equation}\label{eq:1bb} \Big| \sqrt{\rho_0} \int dx\langle \Psi, a_x \Psi \rangle - \rho_0 L^3 \Big| = \Big| \sqrt{\rho_0} \int dx \langle \Psi, b_x \Psi \rangle \Big| \lesssim \rho^{2-9\eps-2\delta} L^3 \end{equation} 
if $\eps, \delta > 0$ and $\rho >0$ are small enough. 
\end{lemma}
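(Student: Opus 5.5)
The equality in (\ref{eq:1bb}) is immediate from $a_x = b_x + \sqrt{\rho_0}$ and $\|\Psi\|=1$, so the task is to bound $\sqrt{\rho_0}\int dx\,\langle \Psi, b_x\Psi\rangle$. The plan is to start from the identity (\ref{eq:id-b}) for $b_x\Psi$. It splits $\sqrt{\rho_0}\int dx\,\langle\Psi,b_x\Psi\rangle$ into three pieces.
\begin{itemize}
\item[(i)] $\rho_0\int dx\,\langle\Psi,(J(x)-1)\Psi\rangle$.
\item[(ii)] $\rho_0\int dx\,dy\,\nu(x-y)\,\langle\Psi,J(x)(J(y)-1)\Psi\rangle$.
\item[(iii)] $\sqrt{\rho_0}\int dx\,dy\,\nu(x-y)\,\langle\Psi,J(x)b_y^*J(y)\Psi\rangle$.
\end{itemize}
Throughout, $\nu=\gamma^{-1}*\sigma$.

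For (i), I use (\ref{eq:1-J}): $0\le 1-J(x)\le\int dy\,\omega_\ell(x-y)a_y^*a_y$. This gives $|(\mathrm{i})|\le \rho_0\|\omega_\ell\|_1\langle\Psi,\cN\Psi\rangle\le C\rho\,\ell^2\rho^{1-3\eps}L^3$, using (\ref{eq:rmkN}). Since $\ell^2=C\rho^{-2\delta}$, this is $C\rho^{2-3\eps-2\delta}L^3$, which is acceptable.

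For (ii), I write $J(x)=1-(1-J(x))$.
\begin{itemize}
\item The term with $1$ vanishes after the $x$-integration, because $\int\nu=0$ ($\widehat\sigma_0=0$).
\item The remaining term is $\rho_0\int dx\,dy\,\nu(x-y)\langle\Psi,(1-J(x))(1-J(y))\Psi\rangle$. Both factors are nonnegative commuting multiplication operators, and each is bounded by the corresponding $d\Gamma(\omega_\ell)$.
\item I therefore bound it by $\rho\int dx\,dy\,|\nu(x-y)|\int dz\,dz'\,\omega_\ell(x-z)\omega_\ell(y-z')\langle\Psi,a_z^*a_{z'}^*a_{z'}a_z\Psi\rangle$, plus a diagonal term $z=z'$ coming from normal ordering.
\item For the diagonal term, I bound one factor $1-J\le 1$ instead, which gives $\rho\|\nu\|_1\|\omega_\ell\|_1\langle\cN\rangle$. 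This is too big with $\|\nu\|_1\lesssim(\rho^{1/2}\ell_0)^3$, so more care is needed there. I would instead keep the product form and estimate $(1-J(x))(1-J(y))\le d\Gamma(\omega_{\ell,x})\,d\Gamma(\omega_{\ell,y})$, localizing $|x-y|\le\ell_1$ via the decay (\ref{eq:decay}).
\item The two-body integral is then controlled by (\ref{eq:kas}) with $k=2$, after restricting $|z-z'|\le R$ (the tails are handled as for the term A using (\ref{eq:decay})). This gives roughly $\rho\,\|\nu\|_\infty\,\ell^4\,\ell_1^3\,\rho^{1/2-15\eps}L^3$. Here $\|\nu\|_\infty\le C\rho/\ell$, and the estimate is of order $\rho^{2-O(\eps)-O(\delta)}L^3$.
\item If the $\ell^4$ factor loses too much, I use instead $\int dy\,|\nu(x-y)|\,\omega_\ell(y-z')\le\|\nu\|_\infty\ell^2$.
\end{itemize}

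For (iii), I commute $b_y^*$ to the left. Using $J(x)b_y^* = f_\ell(x-y)\,b_y^*J(x)+\sqrt{\rho_0}\,(f_\ell(x-y)-1)J(x)$ (from (\ref{eq:pull}) and $b^*=a^*-\sqrt{\rho_0}$), this produces two contributions.
\begin{itemize}
\item The term $\sqrt{\rho_0}\int\nu(x-y)\langle b_y\Psi,f_\ell(x-y)J(x)J(y)\Psi\rangle$. I split $f_\ell=1-\omega_\ell$. The part with $1$ is $\sqrt{\rho_0}\int\nu(x-y)\langle b_y\Psi,J(x)J(y)\Psi\rangle$. Replacing $J(x)$ by $1-(1-J(x))$, the $1$ again drops by $\int\nu=0$. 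What is left has an extra $1-J$, which I estimate by Cauchy--Schwarz: $\|b_y\Psi\|$ is controlled by (\ref{eq:rmkwtN}), and the $1-J$ factor by (\ref{eq:1-J}) together with Lemma~\ref{lm:a-bds}. The $\omega_\ell$ part is small because $\int|\nu(z)|\omega_\ell(z)\,dz\le\|\nu\|_\infty\ell^2\le C\rho\,\ell$.
\item The commutator term is $\rho_0\int\nu(x-y)\omega_\ell(x-y)\langle\Psi,J(x)J(y)\Psi\rangle$, which is bounded by $\rho\,\|\nu\|_\infty\|\omega_\ell\|_1L^3\le C\rho^2\ell L^3$.
\end{itemize}

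In every step, Cauchy--Schwarz with weights $\rho^{\pm\alpha}$ balances $\langle\wt\cN\rangle\sim\rho^{3/2-\eps}L^3$ against particle moments $\sim\rho L^3$, and should yield errors of order $\rho^{2-9\eps-2\delta}L^3$.

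The main obstacle is exploiting the cancellation $\int\nu=0$. Without it, terms like $\rho\|\nu\|_1$ are far too large. Every place where $\nu$ is integrated against a slowly varying quantity must be reduced, via $J=1-(1-J)$, to an expression containing a factor $1-J$. That factor costs $\rho\ell^2$ through (\ref{eq:1-J}), and the resulting multi-particle expectations must then be localized within distance $\ell_1$ using (\ref{eq:decay}), so that Lemma~\ref{lm:a-bds} applies. Tracking the exponents so that they close at $2-9\eps-2\delta$ is where I expect the bookkeeping to be delicate.
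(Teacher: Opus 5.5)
Your decomposition is sound. Once you commute $b_y^*$ through $J(x)$, your pieces are the paper's three terms reorganized, with $b_y$ in place of $a_y$ in the main one. Your bounds for (i), for the commutator term and for the $\omega_\ell$-part of (iii) are fine.

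The gap is in the one estimate that actually carries the lemma: the remainder $\sqrt{\rho_0}\int dx\,dy\,\nu(x-y)\langle b_y\Psi,(1-J(x))J(y)\Psi\rangle$ left in (iii) after the cancellation $\int\nu=0$. You describe it as pairing $\|b_y\Psi\|$, controlled by (\ref{eq:rmkwtN}), against a $1-J$ factor controlled by particle moments, and that does not close. From $|\langle b_y\Psi,(1-J(x))J(y)\Psi\rangle|\le\|b_y\Psi\|\,\|(1-J(x))\Psi\|$ and $\int dx\,\|(1-J(x))\Psi\|^2\le\|\omega_\ell\|_1\langle\Psi,\cN\Psi\rangle$ you only get $\sqrt{\rho}\,\|\nu\|_1(\rho^{3/2-\eps})^{1/2}(\rho^{1-3\eps-2\delta})^{1/2}L^3\sim\rho^{7/4-5\eps-\delta}L^3$. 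Weights $\rho^{\pm\alpha}$ in Cauchy--Schwarz cannot improve on the product of two fixed norms.

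What is needed is to keep $1-J(x)$ on \emph{both} sides:
$|\langle b_y\Psi,(1-J(x))J(y)\Psi\rangle|\le\langle b_y\Psi,(1-J(x))b_y\Psi\rangle^{1/2}\langle\Psi,(1-J(x))\Psi\rangle^{1/2}$.
Then cut to $|x-y|\le R$ using (\ref{eq:decay}), and put $\chi(|x-y|\le R)$ on the first factor and $|\nu(x-y)|^2$ on the second. The first factor becomes a localized two-body quantity, $\int_{|x-y|\le R}dx\,dy\,\langle b_y\Psi,(1-J(x))b_y\Psi\rangle\lesssim\ell^2\int_{|y-z|\le 2R}dy\,dz\,\|a_zb_y\Psi\|^2\lesssim\ell^2\rho^{1-10\eps}L^3$. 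For this, write $a_z=b_z+\sqrt{\rho_0}$ and use (\ref{eq:rmkwtN}) and (\ref{eq:kbs}) with $k=2$. The second factor is $\lesssim\|\nu\|_2^2\,\ell^2\rho^{1-3\eps}L^3$. Together they give $\rho^{9/4-13\eps/2-2\delta}L^3$. The gain comes exactly from $b_y$ and $1-J(x)$ sitting in the same expectation. This is the paper's argument, done there with $a_y$ and (\ref{eq:kas}), which is where the exponent $2-9\eps-2\delta$ comes from.

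There are also two smaller errors in (ii).
First, the crude bound you discard is in fact sufficient. Since $\|\nu\|_1\lesssim(\rho^{1/2}\ell_0)^3\lesssim\rho^{-3\eps}$, you get $\rho\,\|\nu\|_1\|\omega_\ell\|_1\langle\Psi,\cN\Psi\rangle\lesssim\rho^{2-6\eps-2\delta}L^3$. Because $|\langle\Psi,J(x)(J(y)-1)\Psi\rangle|\le\langle\Psi,(1-J(y))\Psi\rangle$, this disposes of all of (ii) without using $\int\nu=0$.
Second, your replacement estimate carries a spurious factor $\ell_1^3$. As written, $\rho\,\|\nu\|_\infty\ell^4\ell_1^3\rho^{1/2-15\eps}$ is of order $\rho^{1-O(\eps+\delta)}$, not $\rho^{2-O(\eps)}$. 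Integrating $x,y$ against $|\nu(x-y)|\omega_\ell(x-z)\omega_\ell(y-z')$ only produces $\|\nu\|_\infty\ell^4$, which gives $\rho^{5/2-15\eps-3\delta}L^3$.
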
 
\begin{proof} 
From (\ref{eq:id-b}), we recall  
\[ b_x \Psi = \sqrt{\rho_0} (J(x) -1) \Psi + \int dy \, \nu (x-y) f_\ell (x-y)  a_y^* J(x) J(y) \]
with the notation $\nu = \gamma^{-1}\ast \sigma$. Since $\int \nu (x) dx = 0$, we obtain 
\begin{equation}\label{eq:1b} \begin{split} \Big| \sqrt{\rho_0} &\int dx \langle \Psi, b_x \Psi \rangle \Big| \\ \lesssim \; &\rho_0 \int dx \langle \Psi, (1-J(x)) \Psi \rangle + \Big|  \sqrt{\rho_0} \int dx dy \, \nu (x-y) f_\ell (x-y) \langle \Psi, a_y^* (J(x)-1) J(y) \Psi \rangle \Big|  \\ &+ \Big|  \sqrt{\rho_0} \int dx dy \, \nu (x-y) (f_\ell (x-y) - 1) \langle \Psi, a_y^* J(y) \Psi \rangle \Big|\,.  \end{split} \end{equation} 
With (\ref{eq:1-J}), $\| \o_\ell \|_1 \lesssim \ell^2$ and (\ref{eq:rmkN}), we can bound the first term by 
\begin{equation}\label{eq:1b-1} \rho \int dx \langle \Psi, (1-J(x) ) \Psi \rangle \leq \rho \int dx dy \, \o_\ell (x-y) \langle \Psi, a_y^* a_y \Psi \rangle \lesssim  \rho^{2-3\eps-2\delta} L^3\,. \end{equation} 
To estimate the third term on the r.h.s. of (\ref{eq:1b}), we use $\| \nu \|_\infty \lesssim \rho$ and again $\| \o_\ell \|_1 \lesssim \ell^2$ and (\ref{eq:rmkN}). We find 
\[ \begin{split}  \Big|  \sqrt{\rho_0} \int dx dy \, &\nu (x-y) (f_\ell (x-y) - 1) \langle \Psi, a_y^* J(y) \Psi \rangle \Big| \\ &\lesssim \rho^{3/2} \ell^2 \int dy \, \| a_y \Psi \| \lesssim \rho^{3/2-2\delta} L^{3/2} \Big[ \int dy \, \| a_y \Psi \|^2 \Big]^{1/2} \leq \rho^{2-3\eps/2-2\delta} L^3\,. \end{split} \]
As for the second term on the r.h.s. of (\ref{eq:1b}) we find, using (\ref{eq:decay}) to restrict to $|x-y| \leq R$ and applying Cauchy-Schwarz, 
\[ \begin{split} 
 \Big| &\sqrt{\rho_0} \int dx dy \, \nu (x-y) f_\ell (x-y) \langle \Psi, a_y^* (1- J(x)) J(y) \Psi \rangle \Big| \\ &\lesssim \rho^2 L^3 + \sqrt{\rho} \Big[ \int_{|x-y| \leq R} dx dy \, \langle a_y \Psi, (1 - J(x)) a_y \Psi \rangle \Big]^{1/2} \\ &\hspace{4cm} \times \Big[ \int  dx dy \, |\nu (x-y)|^2 \langle \Psi, (1 - J(x)) \Psi \rangle \Big]^{1/2} \\
 &\lesssim \rho^2 L^3 + \rho^{5/4}  \Big[ \int_{|x-y| \leq R} dx dy dz \, \o_\ell (x-z) \langle \Psi, a_y^* a_z^* a_z a_y \Psi \rangle \Big]^{1/2} \\ &\hspace{4cm} \times \Big[ \int  dx dz \, \o_\ell (x-z) \langle \Psi, a_z^* a_z \Psi \rangle \Big]^{1/2} \\
 &\lesssim \rho^2 L^3 +  \rho^{5/4-2\delta}  \Big[ \int_{|z-y| \leq 2R} dy dz \, \langle \Psi, a_y^* a_z^* a_z a_y \Psi \rangle \Big]^{1/2} \Big[ \int  dz \, \langle \Psi, a_z^* a_z \Psi \rangle \Big]^{1/2} \\ &\lesssim \rho^{2-9\eps-2\delta} L^3
 \end{split} \]
 where we used (\ref{eq:rmkN}) and (\ref{eq:kas}) with $k=2$. This completes the proof of (\ref{eq:1bb}). 
\end{proof} 

Comparing (\ref{eq:Nm}) with (\ref{eq:Nmb}) or (\ref{eq:kas}) with (\ref{eq:kbs}), we observe that the estimates improve, when we factor out the Bose-Einstein condensate generated by the Weyl operator $W(\rho_0)$, switching from the original creation and annihilation operators $a^*, a_x$ to the operators $b_x^*, b_x$. In fact, we can derive even stronger bounds, if we also factor out the excitations of the condensate generated in (\ref{eq:trial}) by the Bogoliubov transformation $T$. For $x \in \Lambda$, we define the new creation and annihilation operators 
\begin{equation}\label{eq:def-cc}
\begin{split} 
c^*_x &=b^*(\g_x)-b(\s_x)= a^* (\gamma_x) - a (\sigma_x) - \sqrt{\rho_0} = T W (\rho_0) a^*_x W^* (\rho_0) T^* \\
c_x &=b(\g_x)-b^*(\s_x)= a (\gamma_x) - a^* (\sigma_x) - \sqrt{\rho_0} =  T W (\rho_0) a_x W^* (\rho_0) T^* \,.
\end{split} \end{equation} 
\begin{lemma} \label{lm:c's}
We have 
\begin{equation}\label{eq:cc-claim} \int dx \langle \Psi, c_x^* c_x \Psi \rangle \leq C \rho^{2-16\eps-2\delta} L^3\,. \end{equation} 
Moreover, let $R = C \rho^{-1/2-3\eps}$. Then, we have 
\begin{equation}\label{eq:cccc-claim} \int_{|x-y| \leq R} dx dy  \, \langle \Psi,  c_x^* c_y^* c_y c_x  \Psi \rangle \leq C \rho^{2-30\eps-\delta} L^3 \,.
\end{equation} 
More generally, for $\ph, \tau \in \{ \sigma, \gamma^{-1}\ast \sigma, \gamma -\mathbbm{1}, \mathbbm{1} \}$, we find (if $\tau = \mathbbm{1}$, we set $c (\tau_x) = c_x$) 
\begin{equation}\label{eq:cc-gh}
\int dx \, \langle \Psi, c^* (\tau_x) c (\tau_x) \Psi \rangle \leq C \rho^{2-22\eps-2\delta} L^3 \end{equation}
and  
\begin{equation}\label{eq:cccc-gh} 
\int_{|x-y| \leq R} dx dy  \, \langle \Psi,  c^* (\tau_x) c^* (\ph_y) c (\ph_y) c (\tau_x)  \Psi \rangle \leq C \rho^{2-42\eps-\delta} L^3 \,.
\end{equation} 
\end{lemma}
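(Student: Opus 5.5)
The approach is to derive an identity for $c_x\Psi$ analogous to (\ref{eq:id}) and (\ref{eq:id-b}), in which every term is small because it contains either a factor $1-J$ or a commutator error coming from $J$. Start from the proof of Lemma \ref{lm:axpsi}. There we have $a_x J W(\rho_0)T\Omega = J(x)J a_x W(\rho_0)T\Omega$, and $T^*W^*(\rho_0)\, c_x\, W(\rho_0)T = a_x$ annihilates $\Omega$. Hence $c_x$ acting on $W(\rho_0)T\Omega$ gives zero. Since $c_x = a(\gamma_x) - a^*(\sigma_x) - \sqrt{\rho_0}$, we write
\[ Z\, c_x\Psi = \int dy\, \gamma(x-y)\, a_y J W T\Omega - \int dy\, \sigma(x-y)\, a_y^* J W T\Omega - \sqrt{\rho_0}\, J W T\Omega. \]
Using $a_y J = J(y) J a_y$ and $J a_y^* = a_y^* J(y) J$, we commute $J$ to the left and subtract $J c_x W T\Omega = 0$. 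This gives
\[ c_x\Psi = \int dy\, \gamma(x-y)\,(J(y)-1)\, J a_y\, W T\Omega/Z \;-\; \int dy\, \sigma(x-y)\,[a_y^*, J]\, W T\Omega/Z . \]
To make this expressible in terms of $\Psi$, we insert (\ref{eq:id}) for $a_y\Psi$. The second term involves $a_y^*(1-J(y))\Psi$, more precisely the difference $J a_y^* - a_y^* J$ applied to the undressed state. In that term we rewrite $J a_y^* W T\Omega$ via $a^*_y J(y) J$, so that the factor is $a_y^*(J(y)-1)\Psi$ up to the inverse problem of $J$. It may be cleaner to first express $a(\gamma_x)\Psi$ through (\ref{eq:id}) and then apply the definition of $c_x$ directly, collecting terms of the form $\gamma(x-y)(J(y)-1)\sqrt{\rho_0}\Psi$, $(\gamma*\nu)(x-y) J(y)\cdots$, and $\sigma(x-y)a_y^*\Psi$. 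The identity $\gamma*\nu=\sigma$ makes the leading parts cancel exactly, leaving only terms with a factor $(J(\cdot)-1)$ or $(f_\ell-1)$.

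Once the identity is in hand, (\ref{eq:cc-claim}) follows by Cauchy--Schwarz. Each remaining term is bounded via (\ref{eq:1-J}), i.e.\ $0\le 1-J(y)\le d\Gamma((\omega_\ell)_y)$, together with $\|\omega_\ell\|_1\lesssim \ell^2=\rho^{-2\delta}$. For the kernels we use $\|\zeta\|_\infty,\|\zeta\|_2,\|\zeta\|_1$ from Lemma \ref{lm:eta}, and for the particle moments we use Lemma \ref{lm:a-bds} with $k=1,2,3$ and the decay (\ref{eq:decay}) to localize to $|x-y|\le R$. A typical term such as $\rho_0\int dx\,\|\int dy\,\gamma(x-y)(1-J(y))\Psi\|^2$ is then estimated by $\rho\,\ell^4\int\langle\Psi, a^*a^*aa\Psi\rangle$ restricted to local pairs, which with (\ref{eq:kas}) gives $\rho^{2-O(\eps)-2\delta}L^3$. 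The $\|\widehat\gamma\|_\infty\lesssim\rho^{-\eps/2}$ losses account for the $\eps$ exponents. The estimate (\ref{eq:cc-gh}) follows in the same way, with $c(\tau_x)=\int\tau(x-z)c_z\,dz$ and Young's inequality using $\|\tau\|_1\lesssim\rho^{-3\eps}$ (or $\|\sigma\|_1$).

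For the quartic bounds (\ref{eq:cccc-claim}) and (\ref{eq:cccc-gh}), we write $\langle\Psi,c_x^*c_y^*c_yc_x\Psi\rangle=\|c_yc_x\Psi\|^2$ and apply the identity twice: first to $c_x\Psi$, then commute $c_y$ through the resulting operators. The commutators $[c_y,J(z)]$ and $[c_y,a_z^*]$ are explicit; they are multiplications by $f_\ell$ or produce the kernels $\gamma,\sigma$. Applying $c_y$ to $\Psi$ again produces a second small factor. Each surviving term then carries either two $(1-J)$ factors or one $(1-J)$ factor together with one excitation $b$, and these are controlled by (\ref{eq:kas}) and (\ref{eq:kbs}) with up to $k=4$, which is where the worse exponent $\rho^{-30\eps}$ arises. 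The main obstacle is the bookkeeping in this double application. Specifically, when $c_y$ hits $J(x)$ or $J$ we obtain non-local factors $\prod f_\ell$, and we must ensure that only a single power of $\rho^{-\delta}$ is lost, matching the $\rho^{-\delta}$ in the claim. I would attempt this by using $\omega_\ell\le 1$ pointwise instead of $\|\omega_\ell\|_1$ wherever the second factor already supplies a $\rho$ from $\|\nu\|_\infty$.
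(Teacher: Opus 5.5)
Your strategy is the paper's. The identity your second (``cleaner'') derivation produces is exactly $c_x\Psi=\int dy\,\gamma(x-y)\big(g_y\Psi+a^*(\xi_y)\Psi\big)$, with $g_y\Psi=\sqrt{\rho_0}(J(y)-1)\Psi+\int dz\,k(y-z)\,a^*_z(J(y)J(z)-1)\Psi$, $k=f_\ell\nu$ and $\xi=(f_\ell-1)\nu$. Applying it twice is what the paper does in (\ref{eq:fxfy}). (In your first derivation, compute $J a_y W(\rho_0)T\Omega/Z$ from $a_yW(\rho_0)T\Omega=\sqrt{\rho_0}\,W(\rho_0)T\Omega+a^*(\nu_y)W(\rho_0)T\Omega$, not from (\ref{eq:id}); the latter carries an extra non-invertible $J(y)$.)

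However, the one quadratic estimate you spell out fails. Bounding $\rho_0\int dx\,\|\int dy\,\gamma(x-y)(1-J(y))\Psi\|^2$ by a quartic moment $\rho\,\ell^4\int\langle\Psi,a^*a^*aa\Psi\rangle$ only gives $\rho^{3/2-15\eps-4\delta}L^3$. This is because (\ref{eq:kas}) with $k=2$ is $\rho^{1/2-15\eps}L^3$, and the $\ell$-local improvement (\ref{eq:aaaaC}) is unavailable, since it is derived from the present lemma. The correct step is one-body. Using $\gamma*\gamma=\mathbbm{1}+\sigma*\sigma$ and $(1-J(y))^2\le 1-J(y)\le d\Gamma((\omega_\ell)_y)$, the term is $\lesssim\rho^{1-\eps}\|\omega_\ell\|_1\langle\Psi,\cN\Psi\rangle\lesssim\rho^{2-4\eps-2\delta}L^3$. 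Quartic moments enter (\ref{eq:cc-claim}) only through the $a^*_z(J(y)J(z)-1)$ term, after commuting $a_z$ past $a^*_{z'}$. There the prefactor is $\|k\|_2^2\lesssim\rho^{3/2}$ rather than $\rho$, which is what makes (\ref{eq:kas}) sufficient.

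For the quartic bound, your claim that every surviving term carries two factors $1-J$, or one $1-J$ and one excitation, is not right, and the terms it misses are the important ones. Commuting $c_y$ through $J(x)$ and then hitting the condensate part of $a_y\Psi$ produces $\rho_0\,(f_\ell(x-y)-1)J(x)J(y)\Psi$, which has no operator smallness at all. Likewise, $\int dz\,dw\,k(x-z)k(y-w)\big(f_\ell(x-w)f_\ell(z-w)f_\ell(x-y)f_\ell(z-y)-1\big)a^*_za^*_wJ(x)J(y)J(z)J(w)\Psi$ contains no $1-J$. Neither term is controlled by (\ref{eq:1-J}).

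The first is bounded purely through its kernel: $\rho^2\int_{|x-y|\le R}\omega_\ell(x-y)^2\,dx\,dy\lesssim\rho^2\ell L^3=\rho^{2-\delta}L^3$. This is already the full size claimed in (\ref{eq:cccc-claim}) and is where the single $\rho^{-\delta}$ comes from. So the loss is not avoided by using $\omega_\ell\le 1$ pointwise; you must identify this term and use $\|\omega_\ell\|_2^2\lesssim\ell$.

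The second term must be normal ordered. Its fully contracted part is $O(\rho^3\ell L^3)$, and its normal-ordered part is $O(\rho^{2-24\eps-\delta}L^3)$ via (\ref{eq:kas}) with $k=2$. Together with the $\rho^{-6\eps}$ from passing between $c$- and $e$-fields, this is the origin of the $30\eps$, not the $k=4$ moments.

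Finally, working directly with $c_yc_x$, you still need the localisation step the paper performs first: $\int_{|x-y|\le R}\|c_xc_y\Psi\|^2\lesssim\rho^2L^3+\rho^{-6\eps}\int_{|z-w|\le R}\|e_ze_w\Psi\|^2$, using the decay of $\gamma-\mathbbm{1}$ and $[c_x,e_z]=0$. Without it, the $\gamma$-convolutions destroy the $R$-localisation required to apply (\ref{eq:kas}).
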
 

\noindent {\bf Remark.} the r.h.s. of (\ref{eq:cccc-claim}) should be compared with the r.h.s. of (\ref{eq:kas}), (\ref{eq:kbs}), with $k=2$.

\begin{proof}
%Recall that 
%\[ c_x = a (\gamma_x) - a^* (\sigma_x) -\sqrt{\rho_0} \]
To prove this lemma, it will be convenient to express the operators $c, c^*$ in terms of the new fields 
\begin{equation}\label{eq:def-e} 
\begin{split} 
e_x &= c (\gamma^{-1}_x)  = a_x - a^* (\gamma^{-1} \ast\sigma_x) - \sqrt{\rho_0}  \\
e^*_x &= c^* (\gamma^{-1}_x)  = a^*_x - a (\gamma^{-1} \ast\sigma_x) - \sqrt{\rho_0} 
\end{split} 
\end{equation} 
(where we used the observation $\widehat{\gamma^{-1}} (0) = 1/ \widehat{\gamma} (0) = 1/ \sqrt{1-\widehat{\sigma}^2 (0)} = 1$) and to treat $e_x, e_x^*$ as a small perturbation of   
\begin{equation}\label{eq:def-g} \begin{split} 
g_x &= a_x - a^* (k_x) -\sqrt{\rho_0} , \\ g_x^* &= a_x^* - a (k_x) - \sqrt{\rho_0} \end{split} \end{equation}
with the kernel $k_x (y) = f_\ell (x-y) (\gamma^{-1} \ast\sigma ) (x-y)$. To control the difference between (\ref{eq:def-e}) and (\ref{eq:def-g}), we notice that  
\begin{equation}\label{eq:exfx} e_x = g_x + a^* (\xi_x), \end{equation} with $\xi_x (y) = \xi (x-y)$ and $\xi (y) = k (y) - (\gamma^{-1} \ast\sigma) (y)  = (f_\ell (y) - 1) (\gamma^{-1}\ast \sigma) (y)$.
From Lemma \ref{lm:fell}  and Lemma \ref{lm:eta}, we find 
\begin{equation}\label{eq:kxi-bd}  
\| k \|_2^2 \lesssim \rho^{3/2}, \quad \| \xi \|_2^{2} \lesssim \rho^{2+\delta} , \quad \| \xi \|_1 \lesssim \rho^{1-\delta}  \,.
\end{equation} 
The advantage of working with the operators (\ref{eq:def-g}) is clear from the identity 
\begin{equation}\label{eq:g-id}  g_x \Psi = \sqrt{\rho_0} (J(x) - 1) \Psi + \int dy \, k (x-y) (J(x) J(y) -1) \Psi \end{equation} 
which follows from (\ref{eq:id}) and from the intertwining relations (\ref{eq:pull}). The appearance of the differences $J(x) - 1$ in the first and $J(x) J(y) -1$ in the second term are crucial to get improved bounds, compared with the estimates in Lemma \ref{lm:Nm}. 

We show first how to bound (\ref{eq:cc-claim}), (\ref{eq:cccc-claim}) in terms of quantities defined through the fields (\ref{eq:def-g}). With $c_x = e (\gamma_x)$, using $\gamma * \gamma = 1 + \sigma * \sigma$ and the bound $\| \sigma * \sigma \|_1 \lesssim \| \sigma \|_1^2 \lesssim \rho^{-\eps}$ from Lemma \ref{lm:eta}, we obtain 
\[ \begin{split} \int  dx \langle \Psi , c_x^* c_x \Psi \rangle   &= \int dy dy' \, (\gamma * \gamma) (y-y') \langle \Psi, e_y^* e_{y'} \Psi \rangle \lesssim \rho^{-\eps} \int dx \langle \Psi, e_x^* e_x \Psi \rangle \,.
\end{split} \]
From (\ref{eq:exfx}), we have  
\[ \int dx \, \langle \Psi, e_x^* e_x \Psi \rangle \lesssim \int dx \langle \Psi, g_x^* g_x \Psi \rangle +  \int dx \langle \Psi, a (\xi_x) a^* (\xi_x) \Psi \rangle \,.\]
With (\ref{eq:kxi-bd}) and (\ref{eq:rmkN}), we can estimate
\[ \begin{split} \int dx \langle \Psi, a (\xi_x) a^* (\xi_x) \Psi \rangle &= \int dx \, \| \xi_x \|_2^2 + \int dy dy' \, (\xi * \xi) (y-y') \langle \Psi, a_y^* a_{y'} \Psi \rangle \\ &\lesssim \rho^{2+\delta} L^3 +  \rho^{2-2\delta} \int dy \, \langle \Psi , a_y^* a_y \Psi \rangle  \lesssim \rho^2 L^3 \end{split}  \]
if $\eps, \delta > 0$ are small enough. We conclude that 
\begin{equation} \label{eq:cg} \int dx \langle \Psi , c_x^* c_x \Psi \rangle \lesssim \rho^{-\eps} \int dx \langle \Psi, g_x^* g_x \Psi \rangle + \rho^{2-\eps} L^3\,. \end{equation} 
To express (\ref{eq:cccc-claim}) in terms of the fields $g_x, g_x^*$ we proceed similarly. First, we write 
\[ \begin{split}  \int_{|x-y| \leq R} &dx dy \, \| c_x c_y \Psi \|^2 \\ &= \int_{|x-y| \leq R} dx dy \, \langle \Psi, c_x^* e^* (\gamma_y) e(\gamma_y) c_x \Psi \rangle \\ &\lesssim \int_{|x-y| \leq R} dx dy \, \langle \Psi, c_x^* e_y^* e_y c_x \Psi \rangle \\ &\hspace{.4cm} + \int_{|x-y| \leq R} dx dy dz dz'  (\gamma -\mathbbm{1})(y-z) (\gamma-\mathbbm{1}) (y-z')  \langle \Psi, c_x^* e^*_z e_{z'}  c_x \Psi \rangle \,. \end{split}  \]
Using the decay of the kernel $(\gamma-\mathbbm{1})$, as established in (\ref{eq:decay}), we can restrict the integrals to $|x-z|, |x-z'| \leq 2R$. Contributions from $|x-z| > 2R$ or $|x-z'| > 2R$ (which imply that $|y-z| > R$ or $|y-z'| > R$) can be controlled similarly as the term (\ref{eq:ex-decay}) (here, we additionally need to express $c_x^*, e_z^*, e_{z'} , c_x$ in terms of the fields $a^*, a$, recalling the definitions (\ref{eq:def-cc}), (\ref{eq:def-e})), to be able to apply Lemma \ref{lm:Nm} to estimate the resulting expectations). We find  
\[ \begin{split} &\int_{|x-y| \leq R} dx dy \, \| c_x c_y \Psi \|^2 \\ \lesssim \; &\rho^2 L^3 + \int_{|x-y| \leq R} dx dy \langle \Psi, c_x^* e_y^* e_y c_x \Psi \rangle \\ &+ \int_{|x-y| \leq R, |x-z|, |x-z'| \leq 2R} dx dy dz dz'  (\gamma - \mathbbm{1}) (y-z) (\gamma - \mathbbm{1}) (y-z')  \langle \Psi, c_x^* e^*_z e_{z'}  c_x \Psi \rangle \\ 
%= \; &\rho^2 L^3 + \int_{|x-y| \leq R} dx %dy \,  \langle \Psi, c_x^* e^* (\chi_x 
%\gamma_y)  e (\chi_x \gamma_y)  c_x \Psi 
%\rangle\\ 
\lesssim  \; &\rho^2 L^3+ \rho^{-3\eps}  \int_{|x-z| \leq R}  dx dz \langle \Psi, c_x^* e^*_z e_z c_x \Psi \rangle  \end{split} \]
where we used Cauchy-Schwarz and the estimate $\| (\gamma - \mathbbm{1}) * (\gamma -\mathbbm{1} )\|_1 \lesssim \| \gamma -\mathbbm{1} \|_1^2 \lesssim \rho^{-3\eps}$. Noticing that $[c_x, e_z] = 0$, we can move $c_x^*, c_x$ in the middle of the product and repeat the same argument. We conclude that 
\begin{equation}\label{eq:ctoe} \int_{|x-y| \leq R} dx dy \, \| c_x c_y \Psi \|^2 \lesssim \rho^2 L^3 + \rho^{-6\eps}  \int_{|z-w| < R} dzdw \, \| e_z e_w \Psi \|^2\,. \end{equation} 
To switch to the $g, g^*$ operators, we recall (\ref{eq:exfx}). We find 
\[\begin{split} &\int_{|x-y| \leq R} dx dy \, \| e_x e_y \Psi \|^2 \\ & \lesssim \int_{|x-y| \leq R} dx dy \langle \Psi, e_x^*  g_y^* g_y e_x \Psi \rangle +  \int_{|x-y| \leq R} dx dy \langle \Psi, e_x^*  a (\xi_y) a^* ( \xi_y)  e_x \Psi \rangle \,.\end{split}  \]
With $[e_x^* , g_y^*] = - \langle \xi_x, k_y \rangle$, $[e_x^* , a^* (\xi_y)] = - \langle \gamma^{-1}\ast \sigma_x, \xi_y \rangle$, $[ a(\xi_y) , a^* (\xi_y) ] = \| \xi_y \|^2$, with (\ref{eq:kxi-bd}) and since, additionally,   
\begin{equation}\label{eq:kxi-bd2}  |\langle \xi_x, k_y \rangle | ,  |\langle \gamma^{-1}\ast \sigma_x, \xi_y \rangle | \lesssim \rho^{7/4+\delta/2} \end{equation} 
we conclude that 
\begin{equation}\label{eq:eeg}   \begin{split} &\int_{|x-y| \leq R} dx dy \, \| e_x e_y \Psi \|^2 \\ &\lesssim \int_{|x-y| \leq R} dx dy \langle \Psi,  g_y^* e_x^* e_x g_y \Psi \rangle + \int_{|x-y| < R} dx dy \langle \Psi , a^* (\xi_y) e_x^* e_x a(\xi_y) \Psi \rangle \\ &\hspace{.4cm} + \rho^{1/2-9\eps+\delta} \int dx \langle \Psi, e_x^* e_x \Psi \rangle + \rho^{2-9\eps+\delta}  L^3 \,.\end{split} \end{equation}
To bound the first term, we apply (\ref{eq:exfx}) again. We find
\[ \begin{split} \int_{|x-y| \leq R} dx dy \langle &\Psi,  g_y^* e_x^* e_x g_y \Psi \rangle \\ &\lesssim \int_{|x-y| \leq R} dx dy \langle \Psi,  g_y^* g_x^* g_x g_y \Psi \rangle + \int_{|x-y| \leq R} dx dy \langle \Psi,  g_y^* a (\xi_x) a^* (\xi_x)  g_y \Psi \rangle \,. \end{split} \]
Proceeding as we did above, using again the bounds (\ref{eq:kxi-bd}), (\ref{eq:kxi-bd2}), and combining with (\ref{eq:eeg}), we obtain 
\begin{equation}\label{eq:eegg}  \begin{split} 
&\int_{|x-y| \leq R} dx dy \, \| e_x e_y \Psi \|^2 \\ &\lesssim  \int_{|x-y| \leq R} dx dy \langle \Psi,  g_y^* g_x^* g_x g_y \Psi \rangle + \int_{|x-y| \leq R} dx dy \, \langle \Psi, a^* (\xi_x) \big( e_y^* e_y + g_y^* g_y \big) a (\xi_x) \Psi \rangle \\ &\hspace{.4cm} + \rho^{1/2-9\eps+\delta} \int dy \langle \Psi, g_y^* g_y \Psi \rangle + \rho^{2-9\eps+\delta} L^3\,. \end{split} \end{equation} 
In the second term, we estimate 
\begin{equation}\label{eq:eyey} e_y^* e_y   \lesssim a_y^* a_y + a^* (\gamma^{-1}\ast \sigma_y ) a (\gamma^{-1} \ast\sigma_y) +  \rho \end{equation} 
and similarly for $g_y^* g_y$ (with $\gamma^{-1} \ast\sigma$ replaced by $k$). To bound the resulting contribution
\[ \begin{split}  \int_{|x-y| \leq R} dx dy \, &\langle \Psi, a^* (\xi_x) a_y^* a_y a (\xi_x) \Psi \rangle \\ &= \int_{|x-y| \leq R} dx dy dz dz' \, \xi (x-z) \xi (x-z') \langle \Psi, a_z^* a_y^* a_y a_{z'} \Psi \rangle \end{split} \]
we use the fast decay of $\xi$, to restrict the integrals over $z,z'$ to $|x-z|, |x-z'| \leq R$, with only small errors. For example, consider the contribution 
\[ \begin{split} \text{A} &= \int_{|x-y| \leq R, |x-z'| \leq R, |x-z| \geq R} dx dy dz dz' \, |\xi (x-z)| |\xi (x-z')| \, \langle \Psi, a^*_z a^*_y a_y a_z \Psi \rangle \\ &\lesssim \rho^{1+\delta} \sum_{n \in \bZ^3 : |n| \geq 2}  \int_{\substack{|x-y| \leq R, |x-z'| \leq R, \\ |x-z - n R/2| \leq R/2}} dx dy dz dz' \frac{\rho^{3k\eps/2}}{|n|^k} \| a_z a_y \Psi \| \| a_{z'} a_y \Psi \| \\ &\lesssim \rho^{1+3\eps k/2+\delta} \sum_{|n| \geq 2}  |n|^{-k} \int dx \langle \Psi, \cN_R (x) \cN_{R/2} (x- n R/2) \Psi \rangle^{1/2} \langle \Psi,  \cN^2_R (x) \Psi \rangle^{1/2} \,.\end{split} \]
By the translation invariance of $\Psi$ and Lemma \ref{lm:Nm}, we find that 
\[ \begin{split} 
|\text{A}| &\lesssim \rho^{1+ 3\eps k/2+\delta} L^3  \langle \Psi, \cN_R (0)^2 \Psi \rangle \lesssim \rho^{1+3\eps k/2+\delta} \rho^{-1-24 \eps} L^3 \lesssim \rho^2 L^3 \end{split} \]
if $k$ is chosen large enough. We conclude, with Lemma \ref{lm:a-bds} and recalling (\ref{eq:kxi-bd}), that 
\[ \begin{split} &\int_{|x-y| \leq R} dx dy \, \langle \Psi, a^* (\xi_x) a_y^* a_y a (\xi_x) \Psi \rangle \\ &\lesssim \rho^2 L^3 + \int_{\substack{|x-y| \leq R \\ |x-z| \leq R \\ |x-z'| \leq R}} dx dy dz dz' \, \xi (x-z) \xi (x-z') \, \langle \Psi, a^* _x a_y^* a_y a_x \Psi \rangle \\ &\lesssim \rho^2 L^3 + \int dydzdz' \, \chi( |y-z'| \leq 2R) (|\xi| * |\xi|) (z-z') \chi (|y-z| \leq 2R) \langle \Psi, a_y^* a_z^* a_{z'} a_y \Psi \rangle \\ &\lesssim \rho^2 L^3 + \||\xi | * |\xi | \|_1 \int_{|y-z| \leq 2R} dy dz \langle \Psi, a_z^* a_y^* a_y a_z \Psi \rangle \lesssim \rho^2 L^3 + \rho^{2-2\delta} \rho^{1/2-15\eps} L^3 \lesssim \rho^2 L^3 \end{split} \]  
if $\eps , \delta > 0$ are small enough. The other contributions from (\ref{eq:eyey}) can be bounded similarly. Taking into account (\ref{eq:ctoe}), we arrive at 
\[  \begin{split} 
\int_{|x-y| \leq R} &dx dy \, \| c_x c_y \Psi \|^2 \\ &\lesssim  \rho^{-6\eps} \int_{|x-y| \leq R} dx dy \langle \psi,  g_y^* g_x^* g_x g_y \psi \rangle  + \rho^{1/2-15\eps+\delta} \int dy \langle \Psi, g_y^* g_y \Psi \rangle + \rho^{2-15\eps+\delta} L^3 \,.\end{split} \]

With (\ref{eq:cg}) and the last equation, to complete the proof of the lemma it is enough to show 
\begin{equation}\label{eq:goalsc}\begin{split} 
\int dx \langle \Psi, g_x^* g_x \Psi \rangle &\lesssim \rho^{2-15\eps-2\delta} L^3 \\ 
\int_{|x-y| \leq R} dx dy \langle \Psi, g_x^* g_y^* g_y g_x \Psi \rangle &\lesssim \rho^{2-24\eps-\delta} L^3\,. \end{split} \end{equation} 

To this end, we use (\ref{eq:g-id}) to estimate 
\[ \begin{split}  &\int  dx \, \langle \Psi, g_x^* g_x \Psi \rangle \\ &\lesssim \rho \int dx\, \| (J(x)-1) \Psi \|^2 \\ &\hspace{.4cm} + \int dx dzdz' k (x-z) k (x-z') \langle a_z^* (J(x) J(z) -1) \Psi , a_{z'}^* (J(x) J(z') -1) \Psi \rangle = \text{T}_1 + \text{T}_2 \,.
\end{split}  \]
To bound the first term, we observe that, from (\ref{eq:1-J}),  
\begin{equation}\label{eq:Jx-} (1- J(x))^2 \leq 1-J(x) \lesssim \int dy \, \o_\ell (x-y) \,  a_y^* a_y \,. \end{equation} 
With Lemma \ref{lm:fell} and \eqref{eq:rmkN}, we obtain 
\[ \text{T}_1 \lesssim \rho^{2-3\eps-2\delta} L^3\,. \]
To control $\text{T}_2$, we commute $a_z$ and $a^*_{z'}$. We find 
\[ \begin{split} \text{T}_2 \lesssim \; &\int dx dz dz' \, k (x-z) k (x-z') \langle a_{z'} (J(x) J(z) -1) \Psi, a_z (J(x) J(z') -1) \Psi \rangle  \\  &+\int dx dz \, |k (x-z)|^2 \| (J(x) J(z) -1) \Psi \|^2 = \text{T}_{21} + \text{T}_{22}  \,.\end{split} \]
Writing $J(x) J(z) - 1 = (J(x) -1) J(z) + (J(z) -1)$, we find, since $\| k \|_2^2 \lesssim \rho^{3/2}$, 
\[ \text{T}_{22} \leq \int dx dz |k(x-z)|^2 \, \| (J(x) -1) \Psi \|^2 \lesssim \rho^{5/2-3\eps-2\delta} L^3 \lesssim \rho^2 L^3 \] 
if $\eps, \delta > 0$ are small enough. Furthermore, using the fast decay (\ref{eq:decay}) of $k (x)$ for $|x| > R$ and applying Cauchy-Schwarz, we obtain 
\[ \begin{split} \text{T}_{21} &\lesssim \rho^2 L^3 + \int_{|x-z| \leq R}  dx dz dz' |k(x-z')|^2 \| a_z (J(x) J(z') - 1) \Psi \|^2 \\ &\lesssim \rho^2 L^3 +  \int_{|x-z| \leq R}  dx dz dz' |k(x-z')|^2 \| (J(x) J(z') f_\ell(x-z) f_\ell(z-z') - 1) a_z \Psi \|^2\,. \end{split} \]
Decomposing 
\begin{equation}\label{eq:JJ-deco} \begin{split} J(x) J(z') &f_\ell(x-z) f_\ell(z-z') - 1 \\ &= J(x) J(z') (f_\ell(x-z) f_\ell(x-z') -1) + (J(z') -1) J(x) + (J(x) -1)\end{split} \end{equation} we can bound the integral on the r.h.s. of $\text{T}_{21}$ by the sum of several terms. Let us focus on the term $\text{T}_{211}$, proportional to $J(x) -1$. With (\ref{eq:Jx-}), we find 
\[ \begin{split} \text{T}_{211} \lesssim \int_{|x-z| \leq R}  dx dz & dz' |k(x-z')|^2 \| (J(x) -1) a_z \Psi \|^2\\  &\lesssim \rho^{3/2} \int_{|x-z| \leq R}  dx dz dw \, \o_\ell (x-w) \langle \Psi, a_w^* a_z^* a_z a_w \Psi \rangle \\ &\lesssim \rho^{3/2} \ell^2 \int_{|z-w| \leq 2R} dz dw \, \langle \Psi, a_w^* a_z^* a_z a_w \Psi \rangle  \leq \rho^{2 -15 \eps -2\delta} L^3 \end{split} \]
where we applied Lemma \ref{lm:a-bds}. The other contributions from (\ref{eq:JJ-deco}) can be handled similarly. This leads to the first bound in (\ref{eq:goalsc}). 

To prove the second bound, we go back to (\ref{eq:g-id}) and we apply $a_y$. With (\ref{eq:pull}) and $[a_y, a_z^*] = \delta (y-z)$, we obtain 
\[ \begin{split} a_y g_x \Psi = \; & \sqrt{\rho_0} \, (J(x) - 1) a_y \Psi + \sqrt{\rho_0} \, (f_\ell(x-y) - 1) J(x) a_y \Psi  + k(x-y) (J(x) J (y) -1) \Psi \\ &+ \int dz \, k(x-z) a_z^* (J(x) J(z) -1) a_y \Psi\\ &
+ \int dz \,   k (x-z) (f_\ell(x-y) f_\ell(z-y) - 1) a_z^* J(x) J(z) a_y \Psi \,. \end{split} \]
On the r.h.s., we replace $a_y = g_y+ a^* (k_y) + \sqrt{\rho_0}$. Afterwards, we use (\ref{eq:pull}) to commute $a^* (k_y)$ to the left of the operators $J(x), J(z)$ (after expanding $a^* (k_y) = \int dz \, k (y-z) a_z^*$). At this point, we subtract factors $a^* (k_y) g_x \Psi$ and $\sqrt{\rho_0} g_x \Psi$ (using (\ref{eq:g-id}), we get four terms, canceling contributions in $a_y g_x \Psi$). We find 
\[ \begin{split} 
g_y g_x \Psi = \; &\sqrt{\rho_0} \, (J(x) - 1) g_y \Psi + \sqrt{\rho_0}  \, \int dz \, k (y-z) (f_\ell(x-z) -1) a_z^* J(x) \Psi \\ &+ \sqrt{\rho_0}  \, (f_\ell(x-y) - 1) J(x) g_y \Psi \\ &+ \sqrt{\rho_0}  \, (f_\ell(x-y) - 1) \int dz  \, k (y-z) f_\ell(z-x) a_z^* J(x) \Psi \\ &+ \rho_0 \, (f_\ell(x-y) - 1) J(x) \Psi + k(x-y) (J(x) J(y) - 1) \Psi \\ &+ \int dz  \, k(x-z) a_z^* (J(x) J(z) - 1) g_y \Psi \\ &+ \int dz dw  \, k(x-z) k(y-w) (f_\ell(w-x) f_\ell(w-z) - 1) a_z^* a_w^* J(x) J(z) \Psi \\ &+ \int dz  \, k (x-z) (f_\ell(x-y) f_\ell(z-y) - 1) a_z^* J(x) J(z) g_y \Psi 
\\ &+ \int dz dw  \, k (x-z) k(y-w) f_\ell(x-w) f_\ell(z-w) (f_\ell(x-y) f_\ell(z-y) - 1) \\ &\hspace{8cm} \times  a_z^* a_w^* J(x) J(z) \Psi \\ &+ \sqrt{\rho_0} \int dz  \, k(x-z) (f_\ell(x-y) f_\ell(z-y) - 1) a_z^* J(x) J(z) \Psi \,.\end{split} \]
We apply again (\ref{eq:g-id}) to compute $g_y \Psi$ and, again, we commute creation operators to the left of all $J$-factors. After a long but straightforward computation, we arrive at 
\begin{equation} \label{eq:fxfy}  \begin{split} 
g_y g_x \Psi = \; &\rho_0 \, (J(x) - 1) (J(y) -1) \Psi + \rho_0 \, (f_\ell(x-y) - 1) J(x) J(y) \Psi \\ &+ k (x-y) (J(x) J(y) - 1) \Psi \\ &+ \sqrt{\rho_0} \int dz \, k(y-z) (f_\ell(x-y) f_\ell(x-z) -1) a_z^* J(x) J(y) J(z) \Psi \\ &+ \sqrt{\rho_0} \int dz \, k(x-z) (f_\ell(x-y) f_\ell(y-z) - 1) a_z^* J(x) J(y) J(z) \Psi \\ &+ \sqrt{\rho_0} \int dz \, k (y-z) a_z^* (J(x) - 1) (J(y) J(z) -1) \Psi \\ &+ \sqrt{\rho_0} \int dz \, k (x-z) a_z^* (J(x) J(z) -1) (J(y) -1)  \Psi  \\ &+ \int dz dw \, k (x-z) k(y-w) (f_\ell(x-w) f_\ell(z-w) f_\ell(x-y) f_\ell(z-y) - 1) \\ &\hspace{7cm} \times a_z^* a_w^* J(x) J(z) J(y) J(w) \Psi \\ &+ \int dz dw \, k(x-z) k(y-w) a_z^* a_w^* (J(x) J(z) - 1) (J(y) J(w) - 1) \Psi \\ =: \, &\sum_{j=1}^9 A^{(j)}_{x,y} \Psi\,. \end{split} \end{equation} 
To bound the first contribution, we observe that, similarly to (\ref{eq:Jx-}), 
\begin{equation} \label{eq:JxJy-}  \begin{split} 
(J(x) &- 1)^2 (J(y) -1)^2 \\ &\leq \int  dz dw \, \o_\ell (x-z) \o_\ell (y-w) \, a_z^* a_w^* a_w a_z + \int dz \, \o_\ell (x-z) \o_\ell (y-z)  \, a_z^* a_z\,.  \end{split} \end{equation} 
Thus,
\[ \begin{split}  
\int_{|x-y| \leq R} dx dy \,  \| A_{x,y}^{(1)} \Psi \|^2 \leq \; &\rho^2 \int_{|x-y| \leq R} dx dy \int dz dw \, \o_\ell (x-z)  \o_\ell (y-w) \, \langle \Psi, a_z^* a_w^* a_w a_z \Psi \rangle \\ &+ \rho^2 \int_{|x-y| \leq R} dx dy  dz \, \o_\ell (x-z)  \o_\ell (y-z)  \, \langle \Psi, a_z^* a_z \Psi \rangle\,. \end{split} \]
Exchanging the order of the integrals, we obtain, with Lemma \ref{lm:a-bds}, 
\[ \begin{split}  \int_{ |x-y| \leq R} &dx dy \, \| A_{x,y}^{(1)} \Psi \|^2  \\ &\lesssim  \rho^2 \ell^4 \int_{|z-w| \leq 2R} dz dw \, \langle \Psi , a_z^* a_w^* a_w a_z \Psi \rangle + \rho^2 \ell^4  \int dz \langle \Psi , a_z^* a_z \Psi \rangle \lesssim \rho^{2} L^3\end{split} \]
recalling that $\ell = C\rho^{-\delta}$ and choosing $\delta > 0$ small enough.

Since $0\leq 1- f_\ell(x-y) \lesssim \chi (|x-y| \leq \ell)/|x-y|$, we immediately obtain 
\[ \int_{|x-y| \leq R} dx dy \, \| A_{x,y}^{(2)} \Psi \|^2 \lesssim \rho^{2-\delta} L^3\,. \]
With (\ref{eq:kxi-bd}), $(J(x) J(y) - 1)^2 \lesssim (J(x)-1)^2 + (J(y)-1)^2$ and exchanging $x$ and $y$, we can bound (using also Lemma \ref{lm:a-bds})   
\[ \begin{split}  \int_{|x-y| \leq R}dx dy \| A_{x,y}^{(3)} \Psi \|^2 &\lesssim \int_{|x-y| \leq R} dx dy \, |k (x-y)|^2 \| (J(x) - 1) \Psi \|^2 \\ &\lesssim  \int dx dy dz \, |k (x-y)|^2 \o_\ell (x-z) \langle \Psi, a_z^* a_z \Psi \rangle \\ &\lesssim \rho^{3/2} \ell^2 \int dz \langle \Psi , a_z^* a_z \Psi \rangle \lesssim \rho^2 L^3 \end{split} \]
if $\eps , \delta > 0$ are small enough. 

To estimate $A^{(4)}_{x,y}$, we write $f_\ell(x-y) f_\ell(y-z) - 1 = (f_\ell (z-y) - 1) f_\ell(x-y) + (f_\ell(x-y) - 1)$ and we focus on the contribution proportional to $(f_\ell(x-y) - 1)$, which we denote by $A^{(4,1)}_{x,y}$ (the other contribution can be controlled similarly). We have    
\[ \begin{split}  \int_{|x-y| \leq R} dx dy \, \| A_{x,y}^{(4,1)} \Psi \|^2 &\lesssim \rho \int dx dy dz dz' \, k(y-z) k(y-z') \o_\ell^2 (x-y) \\ &\hspace{2cm} \times \langle J(x) J (y) J (z') \Psi, a_{z'} a_{z}^* J(x) J (y) J (z) \Psi \rangle \\
&= \rho \int dx dy dz \, |k(y-z)|^2 \o_\ell^2 (x-y) \| J(x) J (y) J (z) \Psi \|^2 \\ &\hspace{.4cm} + 
 \rho \int dx dy dz dz' \, k(y-z) k(y-z') \o_\ell^2 (x-y) \\ &\hspace{2cm} \times \langle a_z J(x) J (y) J (z') \Psi, a_{z'} J(x) J (y) J (z) \Psi \rangle \,.
  \end{split} \]
 Estimating $0 \leq J(x) \leq 1$, using the decay (\ref{eq:decay}) of $k$ to restrict the integrals to $|y-z|, |y-z'| \leq R$, applying Cauchy-Schwarz and Lemma \ref{lm:Nm}, we find 
\[ \begin{split} 
& \int_{|x-y| \leq R} dx dy \, \| A_{x,y}^{(4,1)} \Psi \|^2 \\ &\lesssim \rho^{5/2} \ell L^3 + \rho \int_{|y-z| \leq R}  dx dy dz dz' \, |k(y-z')|^2 \o_\ell^2 (x-y)  \| a_{z} \Psi \|^2  \lesssim \rho^{2-12\eps-\delta} L^3 \,.\end{split} \]
 The term $A_{x,y}^{(5)}$ can be handled like $A^{(4)}_{x,y}$ (exchanging $x$ and $y$). As for $A_{x,y}^{(6)}$, we write $J(y) J(z) - 1 = (J(y) - 1) + J(y) (J(z) - 1)$ and we focus on the contribution $A^{(6,1)}_{x,y}$ proportional to $(J(y) -1)$ (the others can be bounded similarly). We compute 
 \[\begin{split}  \int_{|x-y| \leq R} &dx dy \, \| A^{(6,1)}_{x,y} \Psi \|^2 \\ = \; &\rho_0 \int_{|x-y| \leq R}  dx dy dz dz' \, k (y-z) k(y-z')  \\ &\hspace{2cm} \times \langle (J(x) - 1) (J(y) -1) \Psi , a_z a_{z'}^* (J(x) -1) (J(y) -1) \Psi \rangle \\ = \;&\rho_0 \int_{|x-y| \leq  R}  dx dy dz \, |k (y-z)|^2 \| (J(x) - 1) (J(y) -1) \Psi \|^2 \\ &+ \rho_0 \int_{|x-y| \leq R} dx dy dz dz' \, k(y-z) k(y-z') \\ &\hspace{2cm} \times \langle a_{z'} (J(x) - 1) (J(y) -1) \Psi , a_z (J(x) - 1) (J(y) -1) \Psi \rangle\,. \end{split} \]
To bound the first term, we use (\ref{eq:JxJy-}). As for the second term, we apply Cauchy-Schwarz. We obtain
\[ \begin{split}  &\int_{|x-y| \leq R} dx dy \, \| A^{(6,1)}_{x,y} \Psi \|^2 \\ &\lesssim \rho \int_{|x-y| \leq R} dx dy dz dv dw \, |k (y-z)|^2 \o_\ell (x-v) \o_\ell (y-w)  \langle \Psi, a_v^* a_w^* a_w a_v\Psi \rangle \\ &\hspace{.4cm} + \rho \int_{|x-y| \leq R} dx dy dz dw \, |k (y-z)|^2  \o_\ell (x-w) \o_\ell (y-w) \langle \Psi, a_w^* a_w \Psi \rangle \\ &\hspace{.4cm} + \rho \int_{|x-y| \leq R, |y-z| \leq R}  dx dy dz dz' \, |k (y-z')|^2 \| a_{z} (J(x) - 1) (J(y) -1) \Psi \|^2 + \rho^2 L^3  \end{split} \]
where we used (\ref{eq:decay}) to restrict the last integral to $|y-z| \leq R$, producing the small error $\rho^2 L^3$. The first two contributions can be easily estimated, using Lemma \ref{lm:a-bds}. In the last term, we integrate over $z'$ and we commute $a_{z}$ through the operator $(J(x) - 1) (J(y) - 1)$. We find, for $\eps , \delta > 0$ small enough,   
\[ \begin{split} 
 \int_{|x-y| \leq R} &dx dy \, \| A^{(6,1)}_{x,y} \Psi \|^2 \\  \lesssim \; &\rho^2  L^3 + \rho^{5/2} \int dx dy dz \,  \o_\ell^2 (x-z) \o_\ell^2 (y-z) \| J(x) J(y) a_{z} \Psi \|^2 \\ &+ \rho^{5/2} \int dx dy dz \, \chi (|y-z| < R) \o_\ell^2 (x-z) \| J(x) (J(y) -1) a_{z} \Psi \|^2 \\ &+ \rho^{5/2} \int dx dy dz \, \chi (|x-z|, |y-z| < 2R)  \| (J(x) - 1) ( J(y) -1) a_{z} \Psi \|^2\,.  \end{split} \] 
Using $0\leq \o_\ell (x) \lesssim \chi (|x| \leq \ell )/ |x|$, $J(x) \leq 1$ and applying (\ref{eq:Jx-}), (\ref{eq:JxJy-}), we arrive at 
 \[ \begin{split} 
 \int_{|x-y| \leq R} &dx dy \, \| A^{(6,1)}_{x,y} \Psi \|^2 \\  \lesssim \; &\rho^2  L^3  + \rho^{5/2} \ell^3 \int_{|w-z| \leq 2R}  dw dz \,  \langle \Psi, a_w^* a_{z}^* a_{z} a_w \Psi \rangle \\ &+ \rho^{5/2} \ell^4 \int_{|w-z|, |v-z| \leq 3 R}  dw dv dz \,  \langle \Psi, a_w^* a_v^* a_{z}^* a_{z} a_v a_w \Psi \rangle \lesssim \rho^2 L^3 
 \end{split} \] 
 if $\eps , \delta > 0$ are chosen small enough. In the last inequality, we used Lemma \ref{lm:a-bds}. The term $A^{(7)}_{x,y}$ can be handled analogously.  
  
 To bound the term $A^{(8)}_{x,y}$, we write 
 \[ \begin{split} f_\ell(x-w) &f_\ell(z-w) f_\ell(x-y) f_\ell(z-y) - 1 \\ = \; &(f_\ell(x-w) - 1) + f_\ell(x-w) (f_\ell(z-w) -1) + f_\ell(x-w) f_\ell(z-w) (f_\ell(x-y) - 1) \\ &+ f_\ell(x-w) f_\ell(z-w) f_\ell (x-y) (f_\ell (z-y) -1) \,.\end{split} \]
We focus on the contribution $A^{(8,1)}_{x,y}$ proportional to $f_\ell(x-w) -1 = \o_\ell (x-w)$. We estimate it by
\[ \begin{split} &\int_{|x-y| \leq R} dx dy \, \| A^{(8,1)}_{x,y} \Psi \|^2  \\ &\lesssim  \int_{|x-y| \leq R}  dx dy dz dw dz' dw' \, k(x-z) k(x-z') k(y-w) k(y-w')   \o_\ell (x-w) \o_\ell (x-w')   \\ & \hspace{4cm} \times  \langle J(x) J(y) J(z) J(w) \Psi, a_z a_w a_{w'}^* a_{z'}^* J(x) J(y) J(z') J(w') \Psi \rangle \,.\end{split} \]
We write 
\begin{equation} \label{eq:aaa*a*}  \begin{split} a_z a_w a^*_{w'} a^*_{z'} = \; &\delta (z-z') \delta (w-w') + \delta (z-w') \delta (w-z') + \delta (z-z') a_{w'}^* a_w \\ &+ \delta (w-w') a_{z'}^* a_z + \delta (z-w') a_{z'}^* a_w + \delta (w-z') a_{w'}^* a_z + a_{z'}^* a_{w'}^* a_w a_z\,. \end{split} \end{equation} 
Consider first the fully contracted contribution $A^{(8,1,1)}_{x,y}$ associated with $\delta (z-z') \delta (w-w')$. With $\| J(x) \| \leq 1$, we can bound it by 
\[ \begin{split} \int_{|x-y| \leq R} &dx dy \| A^{(8,1,1)}_{x,y} \Psi \|^2 \\ &\leq  \int dx dy dz dw \, |k(x-z)|^2  |k(y-w)|^2 \o^2_\ell (x-w)   \| \Psi \|^2  \lesssim \rho^3 \ell L^3 \lesssim \rho^2 L^3 
%\\ &\lesssim \rho^4 \int dx dy dz dw \frac{\chi (|x-z| \leq R)}{|x-z|^2} \frac{\chi (|y-w| \leq R)}{|y-w|^2} 
%\chi (|x-w| \leq C) \leq  \rho^3 L^3 
\end{split} \]
if $\delta > 0$ is small enough. The second fully contracted contribution can be handled similarly (using Cauchy-Schwarz). Let us now consider the contribution $A^{(8,1,7)}_{x,y}$ arising from the normally ordered term $a_{z'}^* a_{w'}^* a_w a_z$. Also here, we apply Cauchy-Schwarz. We find, using (\ref{eq:decay}) to cut the integrals to $|y-w|, |y-w'|, |x-z|, |x-z'| \leq R$ and applying Lemma \ref{lm:a-bds},  
\[ \begin{split} 
\int_{|x-y| \leq R} &\| A^{(8,1,7)}_{x,y} \Psi \|^2 \\ \lesssim \; &\rho^2 L^3 + \int dx dy dz dw dz'dw' \, |k(x-z)|^2  |k(y-w)|^2 \o^2_\ell(x-w) \\ & \hspace{.3cm} \times \chi (|x-y| \leq R) \chi (|y-w'| \leq R) \chi (|x-z'| <R)  \| a_{z'} a_{w'} J(x) J(y) J(z) J(w) \Psi \|^2  \\ \lesssim \; &\rho^2 L^3 + \rho^3 \ell R^3 \int_{|z'-w'| \leq 3R} dz' dw' \langle \Psi, a^*_{z'} a^*_{w'} a_{w'} a_{z'} \Psi \rangle \leq \rho^{2-24\eps-\delta}  L^3 \,.\end{split} \]
The contribution of the other, partially contracted, terms in (\ref{eq:aaa*a*}) can be handled similarly. 

Finally, we consider the term $A^{(9)}_{x,y}$, on the r.h.s. of (\ref{eq:fxfy}). 
As usual, we write $(J(x) J(z) - 1) = (J(x) -1) + J(x) (J(z) -1)$, $(J(y) J(w) -1) = (J(y) -1) + J(y) (J(w) -1)$ and we focus, for example, on the contribution $A^{(9,1)}_{x,y}$ arising from $(J(x) -1) (J(y) -1)$. We find 
\[ \begin{split} \int_{|x-y| \leq R} dxdy \, \| A_{x,y}^{(9,1)} \Psi \|^2 \leq \; &\int dx dy dz dz' dw dw' \, k(x-z) k(x-z') k(y-w) k(y-w') \\ &\times  \langle (J(x) -1) (J(y) -1) \Psi , a_z a_w a_{w'}^* a_{z'}^* (J(x) -1) (J(y) -1) \Psi \rangle \,.\end{split} \]
We use again (\ref{eq:aaa*a*}). The fully contracted contribution $A^{(9,1,1)}_{x,y}$ associated with $\delta (z-z') \delta (w-w')$ is given by 
\[ \begin{split} \int_{|x-y| \leq R} dxdy \, &\| A_{x,y}^{(9,1,1)} \Psi \|^2 \\ \leq \; &\int_{|x-y| < R} dx dy dz dw \, |k(x-z)|^2  |k(y-w)|^2 \| (J(x) -1) (J(y) -1) \Psi \|^2 \\ \lesssim \; &\rho^3 \int_{|x-y| \leq R} dx dy dv du \, \o_\ell (x-v) \o_\ell (y-u) \langle \Psi, a_v^* a_u^* a_v a_u \Psi \rangle \\ \lesssim \; &\rho^3\ell^4 \int_{|u-v| \leq 2R}  dv du  \langle \Psi, a_v^* a_u^* a_v a_u \Psi \rangle \lesssim \rho^{2} L^3 \,. \end{split} \]
The other fully contracted contribution can be bounded analogously. To bound the contribution $A^{(9,1,7)}_{x,y}$ proportional to $a_{z'}^* a_{w'}^* a_w a_z$, we apply Cauchy-Schwarz. We find, using (\ref{eq:decay}) to restrict to $|x-z|, |x-z'|, |y-w|, |y-w'| \leq R$, 
\begin{equation}\label{eq:A917} \begin{split}   &\int_{|x-y| \leq R} dx dy \, \| A_{x,y}^{(9,1,7)} \Psi \|^2  \\  &\lesssim \rho^2 L^3 + \int_{|x-y| < R} dx dy dz dz' dw dw' \, |k(x-z)|^2  |k(y-w)|^2 \\ &\hspace{.8cm} \times  \chi (|x-z'| < R) \chi (|y-w'| <R) \| a_{z'} a_{w'} (J(x) -1) (J(y) -1) \Psi \|^2 \\  &\lesssim \rho^2 L^3 +  \rho^3 \int_{|x-y| < R}  dx dy dz' dw'  \, \chi (|x-z'| < R) \chi (|y-w'| <R) \\ &\hspace{6cm} \times  \| a_{z'} a_{w'} (J(x) -1) (J(y) -1) \Psi \|^2 \,. \end{split} \end{equation} 
We write 
\[ \begin{split} a_{z'} a_{w'} (J(x) - 1) &= (f_\ell(x-z') f_\ell(x-w') J(x) - 1) a_{z'} a_{w'} \\ &= (f_\ell (x-z') f_\ell (x-w') -1) J(x) a_{z'} a_{w'} + (J(x) -1)  a_{z'} a_{w'} \end{split} \]
and we proceed similarly to pass $a_{z'} a_{w'}$ through $(J(y) -1)$. Let us analyze the contribution $A^{(9,1,7,1)}_{x,y}$ to the integral on the r.h.s. of (\ref{eq:A917}) arising from the term proportional to $(J(x) -1) (J(y) -1)$; factors like $(f_\ell (x-z') f_\ell (x-w') -1)$ force $|x-z'| \leq \ell$ or $|x-w'| \leq \ell$ and are easier to control.  We have,  by (\ref{eq:JxJy-}) and Lemma \ref{lm:a-bds},  
\[ \begin{split}  &\int_{|x-y| \leq R} dx dy \, \| A_{x,y}^{(9,1,7,1)} \Psi \|^2  \\  &\lesssim \rho^3 \int_{|x-y| < R}  dx dy dz' dw'  \, \chi (|x-z'| < R) \chi (|y-w'| <R) \| (J(x) -1) (J(y) -1) a_{z'} a_{w'}  \Psi \|^2 \\ &\lesssim \rho^3 \ell^4 \int_{|z'-w'|, |z'- u|, |z'-v| < 4R} dz'dw' du dv \langle \Psi, a_u^* a_v^* a_{z'}^* a_{w'}^* a_{w'} a_{z'} a_v a_u \Psi \rangle \\ &\hspace{.4cm} +\rho^3 \ell^4 \int_{|z'-u|, |w'- u| < 2R} dz'dw' du \langle \Psi, a_u^* a_{z'}^* a_{w'}^* a_{w'} a_{z'} a_u \Psi \rangle  \\ &\lesssim  \rho^{2} L^3 
\end{split} \]
if $\eps , \delta > 0$ are small enough. This concludes the proof of the second bound in (\ref{eq:goalsc}) and thus of (\ref{eq:cccc-claim}). 

To show (\ref{eq:cc-gh}), we expand  
\[ \begin{split} \int dx \, &\langle \Psi, c^* (\tau_x) c(\tau_x) \Psi \rangle \\ &= 
\int dx dz dz' \, \tau (x-z) \tau (x-z') \, \langle \Psi, c^*_z c_{z'} \Psi \rangle = \int dz dz' (\tau * \tau) (z-z') \langle \Psi, c^*_z c_{z'} \Psi \rangle \,. \end{split} \]
From Lemma \ref{lm:eta}, we have $\| \tau *\tau \|_1 \leq \| \tau \|_1^2 \lesssim \rho^{-6\eps}$ and therefore, with (\ref{eq:cc-claim}), 
\[  \int dx \langle \Psi, c^* (\tau_x) c(\tau_x) \Psi \rangle \lesssim \rho^{-6\eps} \int dz \, \langle \Psi, c_z^* c_z \Psi \rangle \lesssim \rho^{2-22\eps-2\delta} L^3\,. \]
As for (\ref{eq:cccc-gh}), we proceed similarly, writing 
\[ \begin{split} \int_{|x-y| \leq R}  dx dy  \, &\langle \Psi, c^* (\tau_x) c^* (\ph_y) c (\ph_y) c(\tau_x) \Psi \rangle \\ &= \int_{|x-y| \leq R} dx dy dz dz' \, \ph (y-z) \ph (y-z') \, \langle \Psi, c^* (\tau_x) c^*_z c_{z'} c (\tau_x)  \Psi \rangle\,.  \end{split} \]
Up to small errors, we can restrict the $z,z'$ integrals to $|y-z| , |y- z'| \leq R$ using (\ref{eq:decay}). Hence, we obtain 
\[  \begin{split}  \int_{|x-y| \leq R} &dx dy \, \langle \Psi, c^* (\tau_x) c^* (\ph_y) c (\ph_y) c(\tau_x) \Psi \rangle \\ &\lesssim \rho^2 L^3 +  \int_{|x-z| , |x-z'| \leq 2R} dx dz dz' \, (\ph * \ph) (z-z')  \, \langle \Psi, c^* (\tau_x) c^*_z c_{z'} c(\tau_x) \Psi \rangle \\ &\lesssim  \rho^2 L^3 +  \| \ph \|_1^2 \int_{|x-z| \leq 2R}  dx dz \, \langle \Psi, c^* (\tau_x) c^*_z c_z c(\tau_x) \Psi \rangle\\ &\lesssim  \rho^2 L^3 +  \rho^{-6\eps} \int_{|x-z| \leq 2R} dx dz \, \langle \Psi, c^* (\tau_x) c^*_z c_z c(\tau_x) \Psi \rangle \,.\end{split} \]
Repeating the same argument for $c^* (\tau_x) c (\tau_x)$, we arrive at 
\[ \begin{split}  \int_{|x-y| \leq R} &dx dy \, \langle \Psi, c^* (\tau_x) c^* (\ph_y) c (\ph_y) c(\tau_x) \Psi \rangle \\ &\lesssim  \rho^2 L^3 + \rho^{-12\eps} \int_{|w-z| \leq 3R} dz dw \, \langle \Psi, c_z^*c_w^* c_w c_z \Psi \rangle \lesssim \rho^{2-42\eps -\delta} L^3 \end{split} \]
by (\ref{eq:cccc-claim}). 
\end{proof}

Lemma \ref{lm:c's} can be used to derive improved bounds on the number of particles and on the number of excitations localized in balls with size of the order $\ell \simeq \rho^{-\delta}$ (which are much smaller than the balls of radius $R \simeq \rho^{-1/2-3\eps}$ considered in Lemma \ref{lm:a-bds}), which will be very useful in the next section, to estimate the energy of $\Psi$. 
\begin{lemma} \label{lm:aaaaC} 
For $0 < \delta < \eps$, $\eps > 0$ small enough and $\rho > 0$ small enough, we have 
\begin{equation}\label{eq:aaaaC} \int_{|x-y| < C \ell} dx dy \, \langle \psi, b_x^* b_y^* b_y b_x \psi \rangle , \; \int_{|x-y| < C \ell} dx dy \, \langle \psi, a_x^* a_y^* a_y a_x \psi \rangle  \lesssim \rho^{2-36\eps-\delta} L^3\,. \end{equation} 
Moreover, recalling $R = C \rho^{-1/2-3\eps}$, we find  
\begin{equation}\label{eq:aaaCR} \int_{|x-y| \leq C \ell, |x-z| \leq R} dx dy dz \langle \Psi, a_x^* a_y^* a_z^* a_z a_y a_z \Psi \rangle \lesssim \rho^{3/2-45\eps-\delta} L^3\,. \end{equation}
\end{lemma}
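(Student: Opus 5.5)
The plan is to express $b_x$ through the fields $c$ of (\ref{eq:def-cc}), for which Lemma \ref{lm:c's} gives strong bounds, and to use the short range $C\ell$ to gain smallness. Inverting (\ref{eq:def-cc}), namely $c_x = b(\gamma_x) - b^*(\sigma_x)$, and using the action of the Bogoliubov map, we get
\[ b_x = c(\gamma_x) + c^*(\sigma_x). \]
Here we use that $W(\rho_0)$ and $T$ conjugate $a_x$ into $b_x$ and $c_x$ respectively, together with $\widehat\gamma^2-\widehat\sigma^2=1$.

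We then expand $b_y b_x$ into four products: $c(\gamma_y)c(\gamma_x)$, $c^*(\sigma_y)c(\gamma_x)$, $c(\gamma_y)c^*(\sigma_x)$ and $c^*(\sigma_y)c^*(\sigma_x)$. We normal order each product in the $c$'s. The commutator $[c(\gamma_y),c^*(\sigma_x)] = (\gamma*\sigma)(x-y)$ is a c-number bounded by $\|\gamma\|_2\|\sigma\|_2 \lesssim \rho^{3/2}$; more sharply, $\|\gamma*\sigma\|_\infty \lesssim \rho/\ell$ since the $\mathbbm{1}$ part gives $\sigma$. Its contribution after integrating over $|x-y|<C\ell$ is at most $\rho^{2}\ell^{3}\ell^{-2}L^3 \lesssim \rho^{2-\delta}L^3$. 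The fully contracted term coming from $c^*c^*$ has the form $\sigma*\sigma$ and is treated the same way.

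For the remaining normal-ordered terms we write $\gamma = \mathbbm{1} + (\gamma-\mathbbm{1})$ and apply Cauchy–Schwarz. This reduces everything to quantities of the form
\[ \int_{|x-y|<C\ell} \|c(\tau_x)c(\varphi_y)\Psi\|^2 \le \int_{|x-y|\le R} \|c(\tau_x)c(\varphi_y)\Psi\|^2 , \]
together with mixed terms such as $\|c(\tau_x)\Psi\|\,\|c^*(\sigma_y)\cdots\|$. The four-point quantity above is bounded by (\ref{eq:cccc-gh}) as $\rho^{2-42\eps-\delta}L^3$, which loses slightly against the target $36\eps$. To repair the exponent, I would use (\ref{eq:cccc-claim}) directly for the $\mathbbm{1}\mathbbm{1}$ piece, giving $30\eps$. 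For pieces involving $\gamma-\mathbbm{1}$ or $\sigma$ I would not use the $L^1$ norm $\rho^{-6\eps}$ but the $L^\infty$ and $L^2$ bounds $\|\zeta\|_2^2\lesssim\rho^{3/2}$. Indeed, $c^*(\sigma_y)$ smeared over a region of size $\ell$ costs only $\|\sigma\|_2^2$ times a two-point $c$ bound from (\ref{eq:cc-gh}). Terms with a creation operator $c^*$ acting to the right are normal ordered, or handled by $c c^* = c^* c + [\cdot,\cdot]$, and bounded by products of (\ref{eq:cc-gh}) with $\|\sigma\|_2^2\,\ell^3$. The exponent bookkeeping is routine, and I expect $\rho^{2-36\eps-\delta}$ to come out with room.

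For $a_x = b_x + \sqrt{\rho_0}$ we expand $a_x^*a_y^*a_ya_x$. This produces terms with four $b$'s (done above), terms with three $b$'s times $\sqrt\rho$, terms with two $b$'s times $\rho$, and so on. The terms with three or fewer $b$'s are controlled by Cauchy–Schwarz combined with the $b$-bound just proved and with $\int dx\,\|b_x\Psi\|^2 \lesssim \rho^{3/2-\eps}L^3$ from (\ref{eq:rmkwtN}). The problematic piece is $\rho_0^2 |B_{C\ell}| L^3 \sim \rho^{2-3\delta}L^3$, which is not small enough. Here one must instead use the identity (\ref{eq:id}): the factor $J(x)$ forces $a_ya_x\Psi$ to carry $f_\ell(x-y)$. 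I therefore plan an alternative route for the $a$-bound. Apply (\ref{eq:id}) twice to write
\[ a_ya_x\Psi = \rho_0\, f_\ell(x-y)J(x)J(y)\Psi + \text{($b$- and $c$-type terms)}, \]
and observe that $\int_{|x-y|<C\ell}\rho_0^2 f_\ell^2 \sim \rho^2\ell^3$. This is still $\rho^{2-3\delta}$.

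The main obstacle is thus the pure condensate contribution. Given the stated bound $\rho^{2-36\eps-\delta}$, which is much weaker than $\rho^{2-3\delta}$ as long as $\delta<\eps$ (so that $3\delta < 36\eps+\delta$), it is in fact acceptable. I would simply verify this exponent comparison, and the same for the $b$ cross terms.

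For (\ref{eq:aaaCR}), integrate out $z$ first. Using $\int_{|x-z|\le R} a_z^*a_z = \cN_R(x)$, which commutes with the other operators up to lower order, the left side is bounded by
\[ \int_{|x-y|<C\ell}\langle a_ya_x\Psi,\cN_R(x)\,a_ya_x\Psi\rangle. \]
We then repeat the argument above with the weight $\cN_R(x)$ inserted. Cauchy–Schwarz with Lemma \ref{lm:Nm} costs a factor $\rho^{-1/2-12\eps}$, and the remaining loss is absorbed by adjusting the exponent to $45\eps$.
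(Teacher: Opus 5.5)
Your argument for (\ref{eq:aaaaC}) is sound and follows the paper's route. You invert (\ref{eq:def-cc}) to $b_x=c(\gamma_x)+c^*(\sigma_x)$, reduce everything to Lemma~\ref{lm:c's}, and pass to the $a$'s via $a=b+\sqrt{\rho_0}$. The pure condensate term $\rho^{2-3\delta}L^3$ is harmless because $\delta<\eps$. The detour through (\ref{eq:id}) is unnecessary: the paper just uses the operator inequality $a_z^*a_z\lesssim b_z^*b_z+\rho$ from (\ref{eq:atob}), which also disposes of the cross terms. The one difference is how the smeared kernels are handled. The paper uses $\|\sigma*\sigma\|_1$ and $\|(\gamma-\mathbbm{1})*(\gamma-\mathbbm{1})\|_1$ and loses $\rho^{-6\eps}$ against (\ref{eq:cccc-claim}), whence $36\eps$. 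You instead use $\|\zeta\|_2^2\lesssim\rho^{3/2}$ together with the volume $\ell^3$ of the constraint $|x-y|<C\ell$, after cutting the kernels at distance $R$ with (\ref{eq:decay}). This is legitimate and in fact gives $30\eps$ for the $b$-term.

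The gap is in (\ref{eq:aaaCR}). The step ``Cauchy--Schwarz with Lemma~\ref{lm:Nm} costs a factor $\rho^{-1/2-12\eps}$'' is not justified. $\cN_R(x)$ is unbounded and is evaluated in the state $a_ya_x\Psi$, whereas Lemma~\ref{lm:Nm} only controls moments of $\cN_R$ in $\Psi$. If you separate the weight by Cauchy--Schwarz, the factor $\int_{|x-y|<C\ell}\|\cN_R(x)a_ya_x\Psi\|^2$ is again a short-range correlated quantity. Estimating it with Lemmas~\ref{lm:Nm} and \ref{lm:a-bds} discards the $\ell^3$-versus-$R^3$ gain and leaves only a bound of order $\rho^{3/4-c\eps}L^3$. ``Repeating the argument with the weight inserted'' would instead need weighted analogues of Lemma~\ref{lm:c's}, which you do not have. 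Even granting the factor, $\rho^{-1/2-12\eps}\rho^{2-36\eps-\delta}$ gives $48\eps$, so there is nothing to absorb into $45\eps$.

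The missing idea is to split the weight itself into condensate and excitations: $\cN_R(x)\le 2\wt{\cN}_R(x)+2\rho_0|B_R|$. The deterministic part contributes $\rho R^3=C\rho^{-1/2-9\eps}$ times (\ref{eq:aaaaC}), which is exactly $\rho^{3/2-45\eps-\delta}L^3$. The fluctuation part is $\int_{|x-y|\le C\ell,\,|x-z|\le R}\|a_ya_xb_z\Psi\|^2$. Expand $a_x,a_y$ in the same way and apply (\ref{eq:kbs}) with $k=1,2,3$. This gives contributions of order $\rho^{2}\ell^3R^3\rho^{3/2-\eps}$, $\rho\,\ell^3\rho^{3/2-11\eps}$ and $\rho^{3/2-21\eps}$ (times $L^3$), all smaller. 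These bounds are good precisely because the excitation density is only $O(\rho^{3/2})$. This is the paper's proof, which iterates (\ref{eq:atob}) on all three pairs.
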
 
\begin{proof} 
We start with (\ref{eq:aaaaC}). Recalling the definition \eqref{eq:bb}, we can write 
\begin{equation}\label{eq:atob} a_z^* a_z = (b_z^* + \sqrt{\rho_0}) ( b_z + \sqrt{\rho_0}) \lesssim b_z^* b_z + \rho\,. \end{equation} 
With $\ell = C \rho^{-\delta}$, we find 
\begin{equation}\label{eq:4a} \int_{|x-y| < C\ell} dx dy \, a_x^* a_y^* a_y a_x  \lesssim \rho^{2-3\delta} L^3 + \rho^{1-3\delta} \int dx \, b_x^* b_x + \int_{|x-y| < C\ell} dx dy \, b_x^* b_y^* b_y b_x \,.\end{equation} 
From Lemma \ref{lm:a-bds}, we obtain
\[ \int_{|x-y| < C\ell} dx dy \, \langle \Psi, a_x^* a_y^* a_y a_x \Psi \rangle \lesssim \rho^{2-3\delta} L^3 + \int_{|x-y| < C\ell} dx dy \, \langle \Psi, b_x^* b_y^* b_y b_x\Psi \rangle  \, \]
if $\delta, \eps > 0$ are small enough.
%Inverting the relations  
%\eqref{eq:def-cc}, we write 
%\[ b_x = c (\gamma_x) + c^* %(\sigma_x)\,. \]
%Therefore, we obtain 
%\[ \int dx \, b_x^* b_x = \int dx 
%\, (c^* (\gamma_x) + c (\sigma_x)) %(c(\gamma_x) + c^* (\sigma_x))\,. \]
%With 
%\begin{equation}\label{eq:CCR-c} 
%\big[ \, c (\sigma_x) , c^* %(\sigma_x) \big] = \| \sigma_x \|%_2^2 \end{equation} 
%and using $\gamma* \gamma = 
%\mathbbm{1} + \sigma * \sigma$, we %find 
%\[ \int dx \, b_x^* b_x \lesssim 
%\int dx \, \| \sigma_x \|^2 + \int %dz \, c_z^* c_z  + \int dz dz' \, %(\sigma * \sigma) (z-z') c_z^* %c_{z'}\,. \]
%With $\| \sigma_x \|_2^2 \lesssim 
%\rho^{3/2}$ and $\| \sigma * \sigma %\|_1 \leq \| \sigma \|_1^2 \lesssim %\rho^{-\eps}$, we conclude by %Cauchy-Schwarz that 
%\begin{equation}\label{eq:2b}  \int %dx \, \langle \Psi, b_x^* b_x \Psi 
%\rangle \lesssim \rho^{3/2} L^3  + 
%\rho^{-\eps} \int dz \, \langle 
%\Psi, c_z^* c_z \Psi \rangle 
%\lesssim \rho^{3/2} L^3 
%\end{equation} 
%where, in the last step, we applied %Lemma \ref{lm:c's}. 
To control the last term, we invert (\ref{eq:def-cc}), writing  
\[ b_y = c (\gamma_y) + c^*(\sigma_y)\, , \quad b_y^* = c^*  (\gamma_y) + c (\sigma_y) \,. \]
With Cauchy-Schwarz, we find 
\[ \begin{split}  \int_{|x-y| < C\ell} dx dy \, b_x^* &b_y^* b_y b_x \\ &= \int_{|x-y| < C\ell} dx dy \, b_x^* (c^* (\gamma_y) + c (\sigma_y)) (c (\gamma_y) + c^* (\sigma_y)) b_x \\ &\lesssim \int_{|x-y|< C\ell} dx dy \, b_x^* c^* (\gamma_y) c(\gamma_y) b_x +  \int_{|x-y|< C\ell} dx dy \, b_x^* c (\sigma_y) c^* (\sigma_y) b_x\,. \end{split} \]
Recalling $\| \sigma_y \|^2_2 \lesssim \rho^{3/2}$, we obtain, applying Lemma \ref{lm:a-bds} (with $k=1$), 
\begin{equation} \label{eq:bbbb}  \begin{split}  \int_{|x-y| < C\ell} dx dy \, &\langle \Psi, b_x^* b_y^* b_y b_x \Psi \rangle \\ &\lesssim  \sum_{\tau = \gamma,\sigma} \int_{|x-y|< C\ell} dx dy \, b_x^* c^* (\tau_y) c(\tau_y) b_x + \rho^{3/2-3\delta} \int dx  \, \langle \Psi , b_x^* b_x \Psi \rangle \\ &\lesssim \rho^{3-\eps-3\delta} L^3 + \sum_{\tau = \gamma,\sigma} \int_{|x-y|< C\ell} dx dy \, b_x^* c^* (\tau_y) c(\tau_y) b_x \,.\end{split} \end{equation} 
We have  
\[ \begin{split} &\int_{|x-y| < C\ell} dx dy \, b_x^* c^* (\tau_y ) c (\tau_y) b_x \\
%= \int_{|x-y| < C\ell} dx dy \, (c^* (\gamma_x) + c (\sigma_x))  c^* (\gamma_y ) c (\gamma_y) (c 
%(\gamma_x) + c^* (\sigma_x)) 
&\lesssim 
 \int_{|x-y| < C \ell} dx dy \, c^* (\gamma_x)   c^* (\tau_y ) c (\tau_y) c (\gamma_x) + \int_{|x-y| < C\ell} dx dy \, c (\sigma_x) c^* (\tau_y ) c (\tau_y) c^* (\sigma_x) \\ 
 &\lesssim  \int_{|x-y| < C\ell} dx dy \, c^* (\gamma_x)   c^* (\tau_y ) c (\tau_y) c (\gamma_x) + \int_{|x-y| < C\ell} dx dy \, c^* (\tau_y ) c (\sigma_x)  c^* (\sigma_x) c (\tau_y) \\ &\hspace{.5cm} + \int_{|x-y| < C\ell} dx dy \, |\langle \sigma_x , \tau_y \rangle |^2 \end{split} \]
 and therefore
 \begin{equation} \label{eq:bccb} \begin{split} &\int_{|x-y| < C\ell} dx dy \, b_x^* c^* (\tau_y ) c (\tau_y) b_x
 \\ &\hspace{.3cm} \lesssim  \int_{|x-y| < C\ell} dx dy \, c^* (\gamma_x)   c^* (\tau_y ) c (\tau_y) c (\gamma_x) + \int_{|x-y| < C\ell} dx dy \, c^* (\tau_y )  c^* (\sigma_x)  c (\sigma_x) c (\tau_y) \\ &\hspace{.7cm} + \int_{|x-y| < C\ell} dx dy \, \| \sigma_x \|_2^2 \, c^* (\tau_y) c (\tau_y) +  \int_{|x-y| < C\ell} dx dy \, |\langle \sigma_x , \tau_y \rangle |^2\,. \end{split} \end{equation} 
 From Lemma \ref{lm:eta} (the $L^2$-norm of $\sigma$, restricted to $|x| \leq C \ell$, can be estimated through (\ref{eq:est-combi})), we find \[ \int_{|x-y| < C\ell} dx dy \, |\langle \sigma_x , \tau_y \rangle|^2 \lesssim \rho^{2-\delta} L^3 \]
and $\| \sigma * \sigma \|_1 \leq \| \sigma \|^2_1 \lesssim \rho^{-\eps}$. Hence 
 \[ \int_{|x-y| < C\ell} dx dy\,  \| \sigma_x \|_2^2 \, c^* (\tau_y) c (\tau_y) \lesssim \rho^{3/2-3\delta} \int dy \, c^* (\tau_y) c (\tau_y) \lesssim \rho^{3/2-\eps-3\delta} \int dz \, c^*_z c_z\,. \]
 In the quartic terms on the r.h.s. of (\ref{eq:bccb}), we can use (\ref{eq:decay}) to cutoff the kernels $\tau (z), \sigma (z), \mathbbm{1}-\gamma (z)$ to $|z| \leq R$, producing only small errors. With the estimates $\| \sigma * \sigma \|_1 \lesssim \rho^{-\eps}$ and $\| (\gamma-\mathbbm{1}) * (\gamma-\mathbbm{1}) \|_1 \lesssim \rho^{-3\eps}$, we conclude that 
 \[ \begin{split} \int_{|x-y| < C\ell} dx dy \, b_x^* c^* &(\tau_y ) c (\tau_y) b_x \\ &\lesssim \rho^{-6\eps} \int_{|z-w| < 3R} dz dw \, c_z^* c_w^* c_w c_z + \rho^{3/2-\eps-3\delta} \int dz \, c^*_z c_z + \rho^{2-\delta} L^3 \,.\end{split}  \]
Inserting in (\ref{eq:bbbb}) and using Lemma \ref{lm:c's}, we obtain 
\begin{equation}\label{eq:bbbbC} \begin{split}  \int_{|x-y| < C\ell} &dx dy \, \langle \Psi , b_x^* b_y^* b_y b_x \Psi \rangle   \\ &\lesssim \rho^{-6\eps} \int_{|z-w| \leq CR} dz dw \, \langle \Psi , c_z^* c_w^* c_w c_z \Psi \rangle  + \rho^{3/2-\eps-3\delta} \int dz \, \langle \Psi, c_z^* c_z \Psi \rangle + \rho^{2-\delta} L^3 \\ &\lesssim \rho^{2-36\eps-\delta} L^3  \,.\end{split} \end{equation}
This proves the inequality (\ref{eq:aaaaC}) for the $b$-operators and, from (\ref{eq:4a}), it also implies that 
\[ \int_{|x-y| < C\ell} dx dy \, \langle \Psi, a_x^* a_y^* a_y a_x \Psi \rangle  \lesssim \rho^{2-36\eps-\delta}  L^3\, .  \]

To show (\ref{eq:aaaCR}), we iterate (\ref{eq:atob}) to prove that 
\[ \begin{split} 
&\int_{|x-y| \leq C \ell , |x-z| \leq R} dx dy dz \, \langle \Psi, a_x^* a_y^* a_z^* a_z a_y a_x \Psi \rangle \\ &\lesssim \rho^3 \ell^3 R^3 L^3 + \rho^2 \ell^3 R^3 \int dx \langle \Psi, b_x^* b_x \Psi \rangle + \rho R^3  \int_{|x-y| \leq C\ell} dx dy \langle \Psi, b_x^* b_y^* b_y b_x \Psi \rangle \\ &\hspace{.4cm} + \rho \ell^3 \int_{|x-z| \leq R} dx dz \langle \Psi, b_x^* b_z^* b_z b_x \Psi \rangle + \int_{|x-y| \leq C \ell , |x-z| \leq R} dx dy dz \langle \Psi, b_x^* b_y^* b_z^* b_z b_y b_x \Psi \rangle\,. \end{split} \]
Using (\ref{eq:kbs}) (with $k=1,2,3$) to bound the second, fourth and fifth contributions and (\ref{eq:bbbbC}) to control the third term, we obtain 
\[ \int_{|x-y| \leq C \ell , |x-z| \leq R} dx dy dz \langle \Psi, a_x^* a_y^* a_z^* a_z a_y a_x \Psi \rangle \lesssim \rho^{3/2 - 45\eps -\delta} L^3\,. \]
\end{proof} 

We can also apply Lemma \ref{lm:c's} to improve the rough estimate (\ref{eq:rmkN}) on the expectation of the total number of particles $\cN$ and to show Prop. \ref{prop:N_on_psi}. 
\begin{proof}[Proof of Prop. \ref{prop:N_on_psi}] 
Recalling (\ref{eq:bb}), we write 
\begin{equation} \label{eq:a_to_b}
a_x = \sqrt{\rho_0} + b_x\,, \qquad a_x^* = \sqrt{\rho_0} + b_x^*\,.
\end{equation} We find 
\[ \langle\Psi,\mathcal{N} \Psi\rangle = \rho_0 L^3 +  2\sqrt{\rho_0} \, \text{Re } \int dx \langle \Psi, b_x \Psi \rangle + \int dx \langle \Psi, b_x^* b_x \Psi \rangle\,. \]
In the last term, we apply (\ref{eq:def-cc}). We obtain 
%\[ b_x = c (\gamma_x) - c^* (\sigma_x), 
%\quad b_x^* = c^* (\gamma_x) - c %(\sigma_x)\,. \]
%We obtain [\textbf{Since this comes after %the long prior proof, we can maybe quote %the definition of $b$'s and $c's$ and get %directly to this formula}]
\[ \begin{split}  \langle\Psi,\mathcal{N} \Psi\rangle =\;&(\rho_0 + \| \sigma \|_2^2) L^3 +  2 \sqrt{\rho_0} \, \text{Re } \int dx \langle \Psi, b_x \Psi \rangle \\ &+ \int dx\, \langle\Psi, \Big[ c^*(\gamma_x) c(\gamma_x)+ c^*(\sigma_x) c(\sigma_x)+c^*(\gamma_x) c^*(\sigma_x)+c(\gamma_x) c(\sigma_x)\Big]\Psi\rangle.
\end{split}\]
The error term on the first line can be bounded with Lemma \ref{lm:1b}. Contributions in the second line can be estimated with Lemma \ref{lm:c's}, in particular (\ref{eq:cc-gh}). For example, by Cauchy-Schwarz, we find 
\[ \begin{split} \int dx \langle &\Psi, c^* (\gamma_x) c^* (\sigma_x) \Psi \rangle \\ &\leq \Big[\int dx \langle \Psi, c^* (\gamma_x) c (\gamma_x) \Psi \rangle \Big]^{1/2} \Big[ \| \sigma \|_2^2 L^3 + \int  dx \langle \Psi, c^* (\sigma_x) c (\sigma_x) \Psi \rangle \Big]^{1/2} \\ &\lesssim \rho^{7/4-11\eps-\delta} L^3  \end{split}\]
if $\eps, \delta > 0$ are small enough. 
\end{proof}

\section{Computation of the energy} 
 \label{sec:energycomputation}

The goal of this Section is to show Proposition \ref{prop:Psi-energy}.  Using \eqref{eq:id} we have
\begin{equation*}
	\nabla_x a_x \Psi= \Phi_1(x)+\Phi_2(x)+\Phi_3(x)
\end{equation*}
with
\begin{equation} \label{eq:Phi_123}
	\begin{split}
		\Phi_1(x)=\;&\sqrt{\rho_0}\, J(x)\, {d}\Gamma\bigg(\frac{\nabla f_\ell(x-\cdot)}{f_\ell(x-\cdot)}\bigg) \Psi\\
		\Phi_2(x)=\;&J(x)\int dy\,\nabla \nu(x-y) \, a^*_y J(y)\Psi\\
		\Phi_3(x)=\;& {d}\Gamma\bigg(\frac{\nabla f_\ell(x-\cdot)}{f_\ell(x-\cdot)}\bigg) J(x)\int dy\,\nu(x-y) a^*_y J(y) \Psi,
	\end{split}
\end{equation}
where we recall the notation $\nu = \gamma^{-1} * \sigma$, and $d\Gamma ((\nabla f_\ell / f_\ell)_x)$ is defined as in (\ref{eq:1-J}). Hence, noticing that $\Phi_1, \Phi_2, \Phi_3$ are real valued,  
\be \label{eq:cK-split}
\begin{split}
 & \langle \Psi,\mathcal{K} \Psi\rangle \\
 &=  \sum_{j=1}^3\int \| \Phi_j(x) \|^2_2 \,dx + 2 \int \langle \Phi_1(x), \Phi_2 (x) + \Phi_3 (x) \rangle \, dx + 2 \int \langle \Phi_2 (x),  \Phi_3(x) \rangle  \,dx  \,.
 \end{split}
\ee

In the next propositions, whose proof is deferred to Subsections \ref{subsec:1-1} - \ref{subsec:2-3}, we evaluate the six terms on the r.h.s. of (\ref{eq:cK-split}). Prop. \ref{prop:Psi-energy} follows directly  combining these estimates. 
\begin{proposition} \label{prop:1-1}
    We have
    \begin{equation}
       L^{-3} \int dx \,\|\Phi_1(x)\|_2^2 \le\rho_0(\rho_0+\|\sigma\|_2^2)\|\nabla f_\ell \big\|_2^2+ C \rho^{5/2+\delta} 
    \end{equation}
    if $\eps,  \delta > 0$ are small enough. 
\end{proposition}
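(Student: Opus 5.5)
The plan is to expand $\|\Phi_1(x)\|^2$ sector by sector, since both $J(x)$ and $d\Gamma\big((\nabla f_\ell/f_\ell)_x\big)$ act as multiplication operators on each $n$-particle sector. Write $u(z) = \nabla f_\ell(z)/f_\ell(z)$. On the $n$-particle sector, $J(x)\, d\Gamma(u_x)$ multiplies by
\[ \prod_{m=1}^n f_\ell(x-x_m) \sum_{j=1}^n u(x-x_j). \]
Squaring splits $\int dx\, \|\Phi_1(x)\|^2$ into a diagonal part ($i=j$) and an off-diagonal part ($i\neq j$).

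\emph{Diagonal part.} Here the factor $f_\ell^2(x-x_j)$ cancels the denominator of $u^2$. All other factors $f_\ell^2(x-x_m)$ are bounded by $1$, since $0\le f_\ell\le 1$ by Lemma \ref{lm:fell}. This gives
\[ \rho_0 \int dx \int dy\, |\nabla f_\ell(x-y)|^2 \langle \Psi, a_y^* a_y \Psi\rangle = \rho_0 \|\nabla f_\ell\|_2^2 \langle \Psi, \cN \Psi\rangle . \]
Proposition \ref{prop:N_on_psi} replaces $\langle \Psi,\cN\Psi\rangle$ by $(\rho_0+\|\sigma\|_2^2)L^3$, up to an error $C\rho^{7/4-11\eps-\delta}L^3$. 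Since $\|\nabla f_\ell\|_2^2 \le C$ by \eqref{eq:nabla-fell}, the error contributes at most $C\rho^{11/4-11\eps-\delta}L^3 \le C\rho^{5/2+\delta}L^3$ for small $\eps,\delta$. This already produces the main term $\rho_0(\rho_0+\|\sigma\|_2^2)\|\nabla f_\ell\|_2^2$.

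\emph{Off-diagonal part.} Now $f_\ell^2(x-x_i)f_\ell^2(x-x_j)$ cancels both denominators, and the remaining product of $f_\ell^2$ is again at most $1$. The off-diagonal part is therefore bounded in absolute value by
\[ \rho_0 \int dy\, dy' \Big[ \int dx\, |\nabla f_\ell(x-y)|\,|\nabla f_\ell(x-y')| \Big] \langle \Psi, a_y^* a_{y'}^* a_{y'} a_y \Psi\rangle . \]
By Lemma \ref{lm:fell}, $|\nabla f_\ell(z)| \le C\frak{a}|z|^{-2}\chi_\ell(z)$, with support in $|z|\le 4\ell$. The $x$-integral is thus at most $C\frak{a}^2|y-y'|^{-1}\chi(|y-y'|\le 8\ell)$, by the standard convolution $|\cdot|^{-2}*|\cdot|^{-2}\sim |\cdot|^{-1}$. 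Moreover, the hard-core condition built into $J$ forces $\Psi$ to vanish when $|y-y'|\le \frak{a}$, so this kernel is bounded by $C\frak{a}$ on the relevant region. The off-diagonal part is then controlled by
\[ C\rho \int_{|y-y'|\le C\ell} dy\,dy'\, \langle \Psi, a_y^*a_{y'}^*a_{y'}a_y\Psi\rangle \le C\rho^{3-36\eps-\delta}L^3 \]
by Lemma \ref{lm:aaaaC}. This is at most $C\rho^{5/2+\delta}L^3$ once $\eps,\delta$ are small.

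The main obstacle is the off-diagonal term. The crude bounds of Lemma \ref{lm:a-bds} would only give $\rho\cdot\rho^{1/2}\cdot\ell^{3}$ per unit volume, which is too large. What makes the estimate work is the improved short-distance two-particle bound \eqref{eq:aaaaC}. The delicate point is justifying the restriction to $|y-y'|>\frak{a}$, so that the singular kernel $|y-y'|^{-1}$ is harmless. If that is inconvenient, one can instead keep the factor $f_\ell^2(y-y')$ that $\Psi$ carries through $J$, or simply note that $\int_{|y-y'|\le C\ell}|y-y'|^{-1}$ is integrable. The latter would need a pointwise-in-distance version of \eqref{eq:aaaaC}, so the hard-core argument is the one I will try first.
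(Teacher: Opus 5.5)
Your proposal is correct and is essentially the paper's proof. The paper reaches the same diagonal term $\rho_0\|\nabla f_\ell\|_2^2\langle\Psi,\cN\Psi\rangle$ through the pull-through identity rather than sector by sector, and bounds the off-diagonal term by $C\rho_0\int_{|y-z|\le C\ell}\langle \Psi, a_y^*a_z^*a_za_y\Psi\rangle$, closing with Proposition \ref{prop:N_on_psi} and \eqref{eq:aaaaC}. The ``delicate point'' you flag is not an issue: Cauchy--Schwarz in $x$ gives $\int dx\,|\nabla f_\ell(x-y)|\,|\nabla f_\ell(x-y')|\le\|\nabla f_\ell\|_2^2\le C$ uniformly in $y,y'$, so you need neither the $|y-y'|^{-1}$ convolution bound nor the hard-core vanishing, although your hard-core argument is also valid.
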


\begin{proposition} \label{prop:2-2}
We have
    \begin{equation}
        L^{-3}\int dx \,\|\Phi_2(x)\|_2^2 \le \big\| f_\ell \nabla\sigma\big\|_2^2 +  C \rho^{5/2+\delta}     \end{equation}
   if $\eps,  \delta > 0$ are small enough.      
\end{proposition}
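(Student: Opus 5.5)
We need to bound $\int dx\,\|\Phi_2(x)\|^2$ with $\Phi_2(x)=J(x)\int dy\,\nabla\nu(x-y)\,a_y^*J(y)\Psi$. The plan is to expand the square, normal order the creation and annihilation operators, and show that the fully contracted term produces $\|f_\ell\nabla\sigma\|_2^2$ up to errors, while every other term is small by the a-priori bounds of Section \ref{sec:apri}.

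\textbf{Step 1: expansion and normal ordering.} Using $J(x)^2\le 1$ together with the pull-through formula (\ref{eq:pull}), we move $J(x)$ through $a_y^*$, which gives $J(x)a_y^*=f_\ell(x-y)a_y^*J(x)$. Keeping this factor, we write
\[ \int dx\,\|\Phi_2(x)\|^2 = \int dx\,dy\,dy'\,\nabla\nu(x-y)\nabla\nu(x-y')\,\langle J(y')a_{y'}J(x)\Psi,\,J(x)a_y\ldots\rangle. \]
More concretely, we compute $\langle a_{y'}^*J(x)^2J(y')\Psi,\ldots\rangle$ by commuting $a_{y'}$ past $a_y^*$. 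This produces two pieces:
\begin{itemize}
\item a contracted piece, $\int dx\,dy\,|\nabla\nu(x-y)|^2 f_\ell^2(x-y)\,\|J(x)J(y)\Psi\|^2$;
\item a normally ordered piece, $\int dx\,dy\,dy'\,\nabla\nu(x-y)\nabla\nu(x-y')f_\ell(x-y)f_\ell(x-y')f_\ell^2(y-y')\langle J a_{y'}\Psi, J a_y\Psi\rangle$, with the appropriate $J$ factors.
\end{itemize}

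\textbf{Step 2: the contracted piece.} This piece is bounded by $\|f_\ell\nabla\nu\|_2^2L^3$. We then compare $\nu=\gamma^{-1}*\sigma$ with $\sigma$, writing $\nu-\sigma=(\gamma^{-1}-\mathbbm 1)*\sigma$. The target is
\[ \|f_\ell\nabla\nu\|_2^2\le\|f_\ell\nabla\sigma\|_2^2+C\rho^{5/2+\delta}. \]
In Fourier space, $\widehat\nu-\widehat\sigma=\widehat\sigma(\widehat\gamma^{-1}-1)=O(\widehat\sigma^3)$. Since $|\widehat\sigma_p|\lesssim\rho/p^2$ for large momenta, this should give
\[ \|\nabla(\nu-\sigma)\|_2^2\lesssim\sum_p p^2\widehat\sigma_p^6\lesssim\rho^{5/2}\cdot o(1). \]
The bound $\|\nabla(\nu-\sigma)\|_2^2\lesssim\rho^{5/2}\rho^{c}$ requires care at $|p|\sim\rho^{1/2}$, where $\widehat\sigma\sim1$. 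There I would use $\widehat\sigma\approx\widehat s$ to leading order and accept that the error is at least $\rho^{5/2}$ times a small power only after cutting momenta.

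I suspect this step in fact hides the cancellation with the normally ordered piece. The comparison should therefore be done jointly with Step 3 rather than separately.

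\textbf{Step 3: the normally ordered piece.} Write $a_y=\sqrt{\rho_0}+b_y$ and split the term as follows.
\begin{itemize}
\item The $\rho_0$ contribution carries $\int dy\,\nabla\nu(x-y)f_\ell(x-y)$ and similar factors. Since $\int\nabla\nu=0$, only the region $|x-y|\lesssim\ell$, where $f_\ell\ne1$, survives. This region is small by Lemma \ref{lm:eta}: (\ref{eq:s2-restricted}) and (\ref{eq:snabla-point}) give $\rho\cdot\rho^2\ell\ll\rho^{5/2}$.
\item The $b^*b$ contribution is reexpressed through the operators $c$ of (\ref{eq:def-cc}), via $b=c(\gamma)+c^*(\sigma)$. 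This yields a further contraction $\langle\sigma_{\cdot},\ldots\rangle$. Combined with the first contraction, it reconstructs $\widehat\nu^2+\widehat\nu^2\widehat\sigma^2=\widehat\sigma^2$, since $\widehat\nu=\widehat\sigma/\widehat\gamma$ and $\widehat\gamma^2=1+\widehat\sigma^2$. This is how $\nabla\nu$ turns into $\nabla\sigma$ exactly.
\end{itemize}

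The remaining terms are quadratic in $c,c^*$ with kernels $\nabla\nu$. Their size is $\|\nabla\nu\|_\infty\|\nabla\nu\|_1\cdot\rho^{2-22\eps}$, which is of order $\rho^{1+\ldots}\rho^{2}\ll\rho^{5/2+\delta}$ by (\ref{eq:cc-gh}) and (\ref{eq:snabla-p}). The factors $f_\ell-1$ and $J-1$ are handled with (\ref{eq:1-J}) and Lemma \ref{lm:aaaaC}. Decay (\ref{eq:decay}) localizes all integrals to $|x-y|\le R$.

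\textbf{Main obstacle.} The main obstacle is the bookkeeping in Step 3: showing that the $b^*b$ term, after conversion to $c$ operators, cancels the $\nu\to\sigma$ discrepancy up to $o(\rho^{5/2})$, while the many $J$ factors sitting between the operators are removed at cost $\rho^{5/2+\delta}$ using Lemma \ref{lm:aaaaC}.
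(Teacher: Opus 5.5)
Your plan, once Step 2 is folded into Step 3 as you yourself suggest, is essentially the paper's proof. You normal order, bound the contracted term by $\|f_\ell\nabla\nu\|_2^2L^3$, and expand $a=\sqrt{\rho_0}+c(\gamma)+c^*(\sigma)$ in the normally ordered term: the $\rho_0$ part is killed by $\int\nabla\nu=0$ up to $\omega_\ell$-errors, the commutator $[c(\sigma_z),c^*(\sigma_y)]$ yields $(\nabla\nu*\nabla\nu*\sigma*\sigma)(0)$, and $\widehat\nu_k^2(1+\widehat\sigma_k^2)=\widehat\sigma_k^2$ turns the two contractions into $\|\nabla\sigma\|_2^2$, while the $J$-factors are removed with (\ref{eq:1-J}) and Lemma \ref{lm:aaaaC}.

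Do discard the standalone claim $\|\nabla(\nu-\sigma)\|_2^2=o(\rho^{5/2})$. It is false, because momenta $|p|\sim\rho^{1/2}$, where $\widehat\sigma_p$ is of order one, contribute order $\rho^{5/2}$. Besides the exact Fourier identity, the only remaining comparison is the local one between $\int(f_\ell^2-1)|\nabla\nu|^2$ and $\int(f_\ell^2-1)|\nabla\sigma|^2$ on $|x|\lesssim\ell$. That comparison is cheap since $\|(\gamma^{-1}-\mathbbm{1})*\nabla\sigma\|_\infty\lesssim\rho^{7/4}$.
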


\begin{proposition} \label{prop:3-3}
 We have
    \begin{equation*}
        \begin{split}
            L^{-3}\int dx\,\|\Phi_3(x)\|_2^2\le\;& \int dx\, |\sigma(x)|^2 |\nabla f_\ell(x)|^2 + \rho_0  \|\nabla f_\ell\|_2^2 \|\sigma\|_2^2 + C \rho^{5/2+\delta}        \end{split}
    \end{equation*}
     if $\eps,  \delta > 0$ are small enough.      
\end{proposition}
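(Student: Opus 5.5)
The plan is to expand $\|\Phi_3(x)\|^2$ in terms of the multiplication operators on each $n$-particle sector and then reduce to the a-priori bounds of Section~\ref{sec:apri}. Write $u(z) = \nabla f_\ell(z)/f_\ell(z)$. Using the pull-through relations (\ref{eq:pull}), I move $a_y^*$ to the left of $d\Gamma(u_x)J(x)$:
\[ d\Gamma(u_x) J(x) a_y^* = f_\ell(x-y)\, a_y^* \big( d\Gamma(u_x) + u(x-y)\big) J(x). \]
This splits $\Phi_3(x) = \Phi_{31}(x)+\Phi_{32}(x)$. Here $\Phi_{31}(x) = \int dy\, \nu(x-y) \nabla f_\ell(x-y)\, a_y^* J(x) J(y)\Psi$ is the term where the derivative hits the particle just created at $y$. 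The second term is $\Phi_{32}(x) = \int dy\, \nu(x-y) f_\ell(x-y)\, a_y^* \,d\Gamma(u_x) J(x)J(y)\Psi$. The two main contributions of Proposition~\ref{prop:3-3} should come, respectively, from the commutator term in $\|\Phi_{31}\|^2$ and from the condensate part of $\|\Phi_{32}\|^2$. All other terms, including the cross term, should be errors.

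\textbf{The term $\Phi_{31}$.} I compute $\|\Phi_{31}(x)\|^2$ by commuting $a_{y'}$ past $a_y^*$. The $\delta(y-y')$ contraction gives $\int dy\, |\nu(x-y)|^2|\nabla f_\ell(x-y)|^2 \|J(x)J(y)\Psi\|^2 \le \int |\nu|^2|\nabla f_\ell|^2$. Since $\nabla f_\ell$ lives on $|z|\le 4\ell$, I replace $\nu$ by $\sigma$ on that region. The difference $\nu-\sigma = (\gamma^{-1}-\mathbbm 1)*\sigma$ is controlled in $L^\infty$ by Lemma~\ref{lm:eta}, and this costs only $C\rho^{5/2+\delta}$. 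The normally ordered remainder has kernel $\nu\nabla f_\ell$ supported in $|z|\lesssim\ell$ with $\|\nu\|_\infty\lesssim \rho/\ell$. By Cauchy--Schwarz it is bounded by
\[ \rho^2\ell^{-2}\|\nabla f_\ell\|_1^2 \int dy\,\|a_y\Psi\|^2 \lesssim \rho^{3}\ell^{-2}\cdot\ell^{2}\rho^{-3\eps}\cdots, \]
which is better than $\rho^{5/2+\delta}$; where needed I use Lemma~\ref{lm:aaaaC} in place of the crude bound.

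\textbf{The term $\Phi_{32}$.} For $\Phi_{32}$ I write $a_y^* = \sqrt{\rho_0}+b_y^*$. The $\sqrt{\rho_0}$ part reproduces the structure of $\Phi_1$ with the extra factor $\int dy\, \nu(x-y)f_\ell(x-y)J(y)$. After contracting the $d\Gamma(u_x)$ factors as in Proposition~\ref{prop:1-1}, the diagonal term of $d\Gamma(u_x)^2$ gives
\[ \rho_0 \int dz\, |u(x-z)|^2 \big\langle a_z \Psi, (\cdots) a_z\Psi\big\rangle \approx \rho_0\|\nabla f_\ell\|_2^2\,\|\sigma\|_2^2 . \]
Here I use $\|\nu f_\ell\|_2^2 \approx \|\sigma\|_2^2$ via (\ref{eq:sig-s}), replace $a_z\Psi$ by $\sqrt{\rho_0}J(z)\Psi$ via Lemma~\ref{lm:axpsi}, and bound $J\le1$. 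The off-diagonal part of $d\Gamma(u_x)^2$ involves two particles within distance $\ell$ of $x$. It is handled by Lemma~\ref{lm:aaaaC}, in particular (\ref{eq:aaaCR}) when the $y$-integral spreads over scale $R$, giving $\rho^{3/2}\cdot\ell^{-2}\cdot\rho^{3/2-45\eps-\delta}\ll\rho^{5/2+\delta}$.

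\textbf{The $b^*$ part and cross terms.} The $b_y^*$ part of $\Phi_{32}$ is expressed via $c$-operators as in (\ref{eq:def-cc}). Terms that annihilate, after commuting, are controlled by Lemma~\ref{lm:c's}: (\ref{eq:cc-gh}) and (\ref{eq:cccc-gh}) give factors $\rho^{2-42\eps}$ compared with $\rho^{3/2}$, which suffices. The cross term $2\langle \Phi_{31},\Phi_{32}\rangle$ is bounded by Cauchy--Schwarz, splitting off the small normally ordered pieces. Its fully contracted piece vanishes at leading order because it contains $\int \nu\nabla f_\ell\,\nu f_\ell\, u$ weighted by a local particle density, which is $O(\rho^{2}\ell^{-1}\cdot\rho)$.

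\textbf{Main obstacle.} The hard part is the $d\Gamma(u_x)$ factor, since $u$ is singular near $|z|=\frak a$: it is not bounded and $f_\ell^{-1}$ blows up. As in the $\Phi_1$ analysis, I would always keep $u$ paired with a $J(x)$ factor so that $|u|^2 f_\ell^2 = |\nabla f_\ell|^2$ is integrable. Ensuring this pairing survives each commutation of $b^*$/$c$ operators is where the bookkeeping is delicate.
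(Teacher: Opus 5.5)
There is a genuine gap at the step that is supposed to produce $\rho_0\|\nabla f_\ell\|_2^2\|\sigma\|_2^2$; your splitting $\Phi_3=\Phi_{31}+\Phi_{32}$ and your treatment of $\Phi_{31}$ are fine. The $\sqrt{\rho_0}$ part of $a_y^*$ in $\Phi_{32}$ is negligible, not leading. Since $\sqrt{\rho_0}$ is a $c$-number, there is no commutator in this piece that could produce $\int|\nu|^2$. Because $\int\nu=0$, the factor $\int dy\,\nu(x-y)f_\ell(x-y)J(y)$ equals $\int dy\,\nu(x-y)\big(f_\ell(x-y)J(y)-1\big)$. This only sees $y$ within distance $4\ell$ of $x$ or of another particle, and it is heuristically of size $\rho\|\nu\|_1\|\omega_\ell\|_1\lesssim\rho^{1-3\eps-2\delta}$. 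So this piece contributes far below $\rho^{5/2}$. Your recipe (replace $a_z\Psi$ by $\sqrt{\rho_0}J(z)\Psi$, bound $J\le 1$) would put a second factor $\rho_0$ in front and leave the square of $\int dy\,\nu(x-y)(\cdots)J(y)$, not $\|\nu f_\ell\|_2^2$. Moreover, $\|\nu\|_2^2\approx\|\sigma\|_2^2$ is false at this precision. The estimate \eqref{eq:sig-s} compares $\sigma$ with $s$, not $\nu=\gamma^{-1}*\sigma$ with $\sigma$. In fact $\|\sigma\|_2^2-\|\nu\|_2^2=L^{-3}\sum_k\widehat\sigma_k^4/(1+\widehat\sigma_k^2)$ is of order $\rho^{3/2}$, because $|\widehat\sigma_k|\gtrsim 1$ for $|k|\lesssim\rho^{1/2}$. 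Multiplied by $\rho_0\|\nabla f_\ell\|_2^2$, this is an error of LHY order $\rho^{5/2}$.

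The term actually sits in the $b_y^*$ part of $\Phi_{32}$, which you declare to be an error. There, the commutator $[b_y,b_{y'}^*]=\delta(y-y')$, combined with the diagonal part of $d\Gamma(u_x)^2$ (density $\approx\rho_0$), gives $\rho_0\|\nabla f_\ell\|_2^2\|\nu\|_2^2$; this is the paper's $G_{11}$. In the normally ordered remainder, write $b=c(\gamma)+c^*(\sigma)$. The $c$-number $[c(\sigma_{y'}),c^*(\sigma_y)]=(\sigma*\sigma)(y-y')$ then gives $\rho_0\|\nabla f_\ell\|_2^2\,(\nu*\nu*\sigma*\sigma)(0)$; this is the paper's $G_{13}$. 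Both contributions are of order $\rho^{5/2}$. Neither is covered by Lemma~\ref{lm:c's}, which controls only the normally ordered $c^*c$ and $c^*c^*cc$ expectations. They combine through $\|\nu\|_2^2+(\nu*\nu*\sigma*\sigma)(0)=\|\sigma\|_2^2$, which follows from $\widehat\gamma_k^2=1+\widehat\sigma_k^2$. To repair the argument, keep these two pieces as main terms, and show separately that the $\sqrt{\rho_0}$ part and the cross terms are small. That is essentially the paper's analysis of $G_{11}$ and $G_{13}$.
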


\begin{proposition} \label{prop:1-2}
    We have 
    \begin{equation}
    \begin{split}
        2 L^{-3}  \int dx\, \langle \Phi_1(x) , \Phi_2(x) \rangle  \le\;&  2\rho_0 \int  f_\ell(x) \nabla f_\ell (x) \cdot \nabla \s (x)  dx + C \rho^{5/2+\delta}
    \end{split}
    \end{equation}
     if $\eps,  \delta > 0$ are small enough.      
\end{proposition}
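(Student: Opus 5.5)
We need to bound $2L^{-3}\int dx\,\langle \Phi_1(x),\Phi_2(x)\rangle$. The plan is to write it as an explicit integral and turn the creation operator $a_y^*$ in $\Phi_2$ into an annihilation operator acting on $\Psi$, so that Lemma \ref{lm:axpsi} can be applied once more.

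\textbf{Step 1: expansion.} Writing $u_\ell=\nabla f_\ell/f_\ell$ and using that $J(x)$ and $d\Gamma(u_\ell(x-\cdot))$ are commuting multiplication operators,
\[ \int dx\,\langle\Phi_1(x),\Phi_2(x)\rangle=\sqrt{\rho_0}\int dx\,dy\,\nabla\nu(x-y)\,\langle d\Gamma(u_\ell(x-\cdot))J(x)^2\Psi,\;a_y^*J(y)\Psi\rangle. \]
Move $a_y$ to the left, using the pull-through formulas (\ref{eq:pull}):
\[ a_y J(x)^2\,d\Gamma(u_\ell(x-\cdot)) = f_\ell(x-y)^2 J(x)^2\big(d\Gamma(u_\ell(x-\cdot))+u_\ell(x-y)\big)a_y . \]

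\textbf{Step 2: main term.} The piece $u_\ell(x-y)f_\ell^2(x-y)=f_\ell\nabla f_\ell(x-y)$ gives
\[ \sqrt{\rho_0}\int dx\,dy\,\nabla\nu(x-y)\,f_\ell\nabla f_\ell(x-y)\,\langle J(x)^2 a_y\Psi, J(y)\Psi\rangle . \]
In this term I substitute $a_y\Psi=\sqrt{\rho_0}J(y)\Psi+\dots$ from (\ref{eq:id}). The leading contribution is
\[ \rho_0\int dx\,dy\, f_\ell\nabla f_\ell(x-y)\cdot\nabla\nu(x-y)\,\langle\Psi,J(x)^2J(y)^2\Psi\rangle . \]
I then replace $J(x)^2J(y)^2$ by $1$. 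The error is controlled by (\ref{eq:1-J}) and $\|\omega_\ell\|_1\lesssim\ell^2$, together with $\|\nabla\nu\|_\infty\lesssim\rho/\ell^2$ and the fact that $\nabla f_\ell$ is supported in $|x|\le 4\ell$. This costs roughly $\rho\cdot\rho\,\ell\cdot\rho\,\ell^2$ per unit volume, which is $\ll\rho^{5/2}$.

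Next I replace $\nu=\gamma^{-1}*\sigma$ by $\sigma$. Since $\nu-\sigma=(\gamma^{-1}-\mathbbm{1})*\sigma$ and $\|\gamma-\mathbbm{1}\|_2^2\lesssim\rho^{3/2}$, the same Fourier argument as for $\mathcal{R}_{21}$ in Lemma \ref{lm:E-rho} gives $\|\nabla(\nu-\sigma)\|_\infty$ of order $\rho^{3/2}|\log\rho|$. Integrated against $|\nabla f_\ell|\lesssim \aa/|x|^2$ on $|x|\le 4\ell$, this is $\rho^{5/2}\ell$. That is not quite small enough, so I would refine by using $\nabla(\nu-\sigma)(0)=0$ (by symmetry), which gains a factor $|x|$. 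This produces the stated main term $\rho_0\int f_\ell\nabla f_\ell\cdot\nabla\sigma$.

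\textbf{Step 3: remaining terms.} The other terms are:
\begin{itemize}
\item[(a)] the second summand of (\ref{eq:id}) inside the main term, namely $\int dz\,\nu(y-z)J(y)a_z^*J(z)\Psi$;
\item[(b)] the $d\Gamma(u_\ell(x-\cdot))f_\ell^2J^2a_y$ piece.
\end{itemize}

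For (a), I pull $a_z^*$ to the left again, where it becomes $a_z$, and bound the result by Cauchy--Schwarz using $\|\nu\|_2\lesssim\rho^{3/4}$ and the number bound (\ref{eq:rmkN}). For extra smallness I would use $\int\nu=0$ and rewrite the operators in terms of $b$ and $c$.

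For (b), I expand $d\Gamma(u_\ell(x-\cdot))=\int dw\,u_\ell(x-w)a_w^*a_w$. This produces
\[ \sqrt{\rho_0}\int \nabla\nu(x-y)\,u_\ell(x-w)\,\langle a_w J\cdots\Psi,\;a_w a_y J\cdots\Psi\rangle , \]
where $|x-w|\le 4\ell$ and $|u_\ell|\lesssim 1/|x-w|^2$. Inserting (\ref{eq:id}) for $a_y$ gives $\sqrt{\rho_0}$ times a two-particle expectation localized at scale $\ell$. This expectation is bounded by Lemma \ref{lm:aaaaC}: $\int_{|x-w|<C\ell}\langle a^*a^*aa\rangle\lesssim\rho^{2-36\eps-\delta}$. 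The prefactors are $\rho\cdot\|\nabla\nu\|$ (with $\|\nabla\sigma\|_1\lesssim\rho\ell_0|\log\rho|$ from (\ref{eq:snabla-p})), and $\int|u_\ell|\lesssim\ell$.

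\textbf{Main obstacle.} The hardest step is (b), together with the $\nu$-tail pieces. Naive estimates give only $\rho^{5/2}\rho^{-C\eps}$, so an extra factor of smallness is needed. I would try to obtain it by rewriting $a_y\Psi=\sqrt{\rho_0}\Psi+b_y\Psi$ and exploiting $\int\nabla\nu=0$. After this, only the excitation parts $b$ remain, and these can be estimated through the improved $c$-bounds of Lemma \ref{lm:c's}, which gain a factor $\rho^{1/2}$ over the $a$-bounds. For the tail of $\nu$ beyond $R$ I would use the decay estimate (\ref{eq:decay}).
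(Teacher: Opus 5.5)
Your Step 1 commutation gives exactly the paper's decomposition $2\int dx\,\langle \Phi_1(x),\Phi_2(x)\rangle = P_1+P_2$ into a contraction term carrying the main term and a remainder. Re-inserting (\ref{eq:id}) in the contraction term, as you do, just re-proves Lemma~\ref{lm:1b}, which the paper invokes after writing $a_y=\sqrt{\rho_0}+b_y$. The remedy in your last paragraph --- splitting $a=\sqrt{\rho_0}+b$ and using $\int\nabla\nu=\int\nabla f_\ell^2=0$ so that only terms with at least two $b$'s survive, then bounding these with Lemma~\ref{lm:a-bds}, Lemma~\ref{lm:aaaaC} and (\ref{eq:decay}) --- is precisely the paper's treatment of $P_2$. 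So this is essentially the paper's proof.

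A few caveats. First, this cancellation is indispensable, not an extra gain: your preliminary Cauchy--Schwarz bounds for (a) and (b) do not close, and the crude bound on (a) is as large as the main term $\rho_0\int f_\ell\nabla f_\ell\cdot\nabla\sigma\sim\rho^{2+\delta}$ itself. Second, the cancellation only becomes available after the $J$- and $f_\ell$-correction factors have been peeled off; this is the paper's $P_{12}$ and $P_{21}$--$P_{23}$, which require (\ref{eq:aaaCR}). Third, after pulling $a_w$ through $J(x)^2$ the weight is $f_\ell\nabla f_\ell$, not $\nabla f_\ell/f_\ell$, which is not integrable near $|x|=\mathfrak{a}$. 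Finally, the step $\nu\to\sigma$ needs no refinement, since $\|(\gamma^{-1}-\mathbbm{1})*\nabla\sigma\|_\infty\le\|\gamma^{-1}-\mathbbm{1}\|_2\|\nabla\sigma\|_2\lesssim\rho^{7/4}$.
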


\begin{proposition} \label{prop:1-3}
We have 
    \begin{equation}
    \begin{split}
        2 L^{-3}  \int dx\,\langle \Phi_1(x) , \Phi_3(x) \rangle  \le\;&  2\rho_0 \int |\nabla f(x)|^2 \big[(\g \ast \s)(x) + (\s \ast \s)(x)\big] + C \rho^{5/2+\delta}
    \end{split}
    \end{equation}
       if $\eps,  \delta > 0$ are small enough.      
\end{proposition}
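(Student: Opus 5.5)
We have to bound $2L^{-3}\int dx\,\langle \Phi_1(x),\Phi_3(x)\rangle$. The plan is to reduce this term to a two-body expectation and then to replace the fields by their condensate and Bogoliubov parts. Write $h=\nabla f_\ell/f_\ell$ and note that $\nabla f_\ell=-\nabla\omega_\ell$ is supported in $|x|\le 4\ell$. Inserting the definitions (\ref{eq:Phi_123}) gives
\[ \int dx\,\langle\Phi_1(x),\Phi_3(x)\rangle=\sqrt{\rho_0}\int dx\,dy\,\nu(x-y)\,\big\langle d\Gamma(h_x)J(x)\Psi,\;d\Gamma(h_x)J(x)a_y^*J(y)\Psi\big\rangle . \]

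The first step is to commute $a_y^*$ to the left through $d\Gamma(h_x)J(x)$. By (\ref{eq:pull}) we have $J(x)a_y^*=f_\ell(x-y)a_y^*J(x)$ and $[d\Gamma(h_x),a_y^*]=h(x-y)a_y^*$. The resulting $f_\ell(x-y)$ combines with $h(x-y)$ to give $\nabla f_\ell(x-y)$, so the hard-core singularity $1/f_\ell$ is cancelled. Moving $a_y^*$ onto the left vector $d\Gamma(h_x)J(x)\Psi$ produces $a_y$ acting on $J(x)\Psi$. We then use (\ref{eq:id}) again, in the form $J(x)a_y\Psi=\ldots$. Equivalently, we pull $a_y$ through $J(x)$ and substitute $a_y\Psi=\sqrt{\rho_0}J(y)\Psi+J(y)\int dz\,\nu(y-z)a_z^*J(z)\Psi$.

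This yields two main contributions:
\begin{itemize}
\item[(a)] The $\sqrt{\rho_0}$ part of $a_y\Psi$. It gives $\rho_0\int dx\,dy\,\nu(x-y)\,\langle d\Gamma(h_x)J(x)\Psi,\;(\ldots)\,d\Gamma(h_x)J(x)J(y)\Psi\rangle$. Here the diagonal part of $d\Gamma(h_x)^2$, namely $\int dz\,|h(x-z)|^2a_z^*a_z$, together with the two commutator terms, produces after one more application of (\ref{eq:id}) the term $\rho_0\int|\nabla f_\ell|^2\,\nu*\cdots$.
\item[(b)] The contraction $[a_y,a_z^*]=\delta$ from the $\nu$ part. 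It gives $\int dy\,\nu(x-y)\nu(x-y)$-type kernels convolved with $|\nabla f_\ell|^2$.
\end{itemize}

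The second step is to identify the leading terms. I expect the structure to be
\[ 2\rho_0\int dx\,|\nabla f_\ell(x)|^2\big[\nu(x)+(\nu*\sigma)(x)+\ldots\big]. \]
The Bogoliubov relations $\widehat\nu=\widehat\sigma/\widehat\gamma$, $\widehat\gamma^2-\widehat\sigma^2=1$ are then used to recombine these pieces. Concretely, expectations such as $\langle\Psi,a^*(\tau_x)a(\tau'_x)\Psi\rangle$ are evaluated by writing $b=c(\gamma)+c^*(\sigma)$. This replaces them by their quasi-free values $\rho_0+(\sigma*\sigma)$ or $(\gamma*\sigma)$, up to errors controlled by Lemma~\ref{lm:c's} via (\ref{eq:cc-gh}) and Lemma~\ref{lm:1b}. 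The resulting kernel, evaluated against $|\nabla f_\ell|^2$, becomes $2\rho_0\int|\nabla f_\ell|^2[\gamma*\sigma+\sigma*\sigma]$. This is the form appearing in $E_\rho$. Where $f$ appears in the statement, it is understood as $f_\ell$ on the support of the integrand.

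The third step is to control everything else by $C\rho^{5/2+\delta}L^3$:
\begin{itemize}
\item Factors $J(x)-1$ and $J(x)J(y)-1$ are handled by (\ref{eq:1-J}). Each such factor costs a power of $\ell^2\rho$ together with a two-body expectation on scale $\ell$.
\item The off-diagonal part of $d\Gamma(h_x)^2$, i.e.\ two distinct particles both within $4\ell$ of $x$, is handled in the same way.
\item These two-body expectations are bounded by Lemma~\ref{lm:aaaaC}, (\ref{eq:aaaaC}) and (\ref{eq:aaaCR}), using $\|\nabla f_\ell\|_\infty\lesssim1$ and $\|\nabla f_\ell\|_2^2\lesssim1$.
\item Tails $|x-y|>R$ in $\nu$ are cut off with (\ref{eq:decay}), as in the treatment of the term $A$ in Lemma~\ref{lm:Nm}.
\end{itemize}

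The main obstacle is the counting of the three-particle terms coming from $d\Gamma(h_x)^2$ combined with one further $a^*$. These carry $|h|^2\sim|x|^{-4}$ near the hard core, multiplied by $\sqrt{\rho}\,\|\nu\|_\infty\lesssim\rho^{3/2}\ell^{-1}$. I would first try to bound them using the $b$/$c$ decomposition together with Lemma~\ref{lm:aaaaC}. The aim is to show that the only non-negligible part is the one where the extra particle sits in the condensate, which yields the factor $\rho_0$. The difficulty is ensuring that these terms gain at least $\rho^{\delta}$ over $\rho^{5/2}$, which requires $\delta\ll\eps$ to be balanced against the $\rho^{-36\eps}$ losses in Lemma~\ref{lm:aaaaC}.
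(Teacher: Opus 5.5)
There is a genuine gap. Your error-control sketch matches the paper: pulling $a_y^*$ to the left with \eqref{eq:pull}, splitting $d\Gamma(h_x)^2$ into diagonal and off-diagonal parts, removing $J-1$ and $f_\ell-1$ factors via \eqref{eq:1-J} and Lemma \ref{lm:aaaaC}, and cutting tails with \eqref{eq:decay}. What is missing is the identification of the $O(\rho^{5/2})$ terms, which is the content of the proposition, and where you state it, it is wrong. The structure $\nu+\nu*\sigma+\dots$ does not occur. Your aim of showing the three-particle term negligible except for a condensate piece is also backwards. After removing the $J$- and $f_\ell$-corrections, the diagonal part of $d\Gamma(h_x)^2$ together with $a_y^*$ gives
\[ \sqrt{\rho_0}\int dx\,dy\,dz\,|\nabla f_\ell(x-z)|^2\,\nu(x-y)\,\langle\Psi,a_z^*a_y^*a_z\Psi\rangle . \]
Replacing $a_y^*$ by $\sqrt{\rho_0}$ here gives exactly zero, by translation invariance and $\int\nu=0$. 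Yet the term is of order $\rho^{5/2}L^3$. Writing $a=\sqrt{\rho_0}+b$ and using $\int\nu=0$ reduces it to $\rho_0\int|\nabla f_\ell(x-z)|^2\nu(x-y)\big[\langle b_z^*b_y^*\rangle+\langle b_y^*b_z\rangle\big]$, plus a cubic $b$-term that is negligible by \eqref{eq:rmkwtN}, \eqref{eq:kbs}. Via $b_x=c(\gamma_x)+c^*(\sigma_x)$ and Lemma \ref{lm:c's}, these two expectations equal $(\gamma*\sigma)(y-z)$ and $(\sigma*\sigma)(y-z)$ up to errors. With $\gamma*\nu=\sigma$ this yields $\rho_0\int|\nabla f_\ell|^2[\sigma*\sigma+\nu*\sigma*\sigma]$. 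The remaining $\rho_0\int|\nabla f_\ell|^2\nu$ comes from the contraction $[a_z,a_y^*]$, i.e.\ from $\sqrt{\rho_0}\int\nu(x-y)|\nabla f_\ell(x-y)|^2\langle a_y\Psi,J(x)^2J(y)\Psi\rangle$ together with Lemma \ref{lm:1b}. The identity $\nu+\nu*\sigma*\sigma=\gamma*\sigma$ then closes the computation.

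Substituting \eqref{eq:id} again for $a_y\Psi$ is not needed, and as you describe it, it loses leading terms. In your route the $\sqrt{\rho_0}J(y)$ part gives $\rho_0\int|\nabla f_\ell|^2\nu$ through the commutator term; its diagonal three-particle part vanishes at leading order by $\int\nu=0$. The contraction in your (b) gives $\rho_0\int|\nabla f_\ell|^2\,\nu*\nu$. But the uncontracted piece $\sqrt{\rho_0}\int\nu(x-y)\nu(y-w)|\nabla f_\ell(x-z)|^2\langle a_z^*a_wa_z\rangle$ in (b), evaluated quasi-freely as above, contributes $\rho_0\int|\nabla f_\ell|^2\,\nu*\nu*(\sigma*\sigma+\gamma*\sigma)$. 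This is again of order $\rho^{5/2}$. Only with it does one get, in Fourier space, $\widehat\nu+\widehat\nu^2(1+\widehat\sigma^2+\widehat\gamma\widehat\sigma)=\widehat\gamma\widehat\sigma+\widehat\sigma^2$. Keeping only $\nu+\nu*\nu$ misses $\widehat\sigma^3/\widehat\gamma+\widehat\sigma^4/\widehat\gamma^2$, an $O(\rho^{5/2})$ error, so the claimed main term is not established by your plan.
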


\begin{proposition} \label{prop:2-3}
We have 
    \begin{equation}
    \begin{split}
        2L^{-3}  \int dx\, \langle \Phi_2 (x), \Phi_3 (x) \rangle \le\;& 2 \int dx \, \nabla \sigma (x)  \sigma (x) \nabla f_\ell (x) f_\ell (x) + C \rho^{5/2+\delta}     \end{split}
    \end{equation}
       if $\eps,  \delta > 0$ are small enough.      
\end{proposition}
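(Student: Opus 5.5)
We have to prove Proposition \ref{prop:2-3}, which bounds the cross term $2\int\langle\Phi_2(x),\Phi_3(x)\rangle dx$. The plan is to extract a main contribution by contracting one pair of operators, and to bound everything else with the a priori estimates of Section \ref{sec:apri}.

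\textbf{Step 1: writing out the inner product.} Write $u=\nabla f_\ell/f_\ell$, so that $d\Gamma(u_x)=\sum_j u(x-x_j)$ is a multiplication operator. Then
\[ \langle\Phi_2(x),\Phi_3(x)\rangle=\int dy\,dy'\,\nabla\nu(x-y)\,\nu(x-y')\,\big\langle a_y^*J(y)\Psi,\;J(x)\,d\Gamma(u_x)\,J(x)\,a_{y'}^*J(y')\Psi\big\rangle . \]
Using the pull-through formulas (\ref{eq:pull}), move $a_y$ to the right through $J(x)d\Gamma(u_x)J(x)$. Each passage produces factors $f_\ell(x-y)^2$ and the shifted operator $d\Gamma(u_x)+u(x-y)$. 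Then commute $a_y$ with $a_{y'}^*$. This gives two kinds of terms.

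\emph{Contracted terms.} These carry $\delta(y-y')$ and read
\[ \int dy\,\nabla\nu(x-y)\,\nu(x-y)\,f_\ell^2(x-y)\,\big\langle J(y)\Psi,\,J(x)^2\big(d\Gamma(u_x)+u(x-y)\big)J(y)\Psi\big\rangle . \]
In the piece $u(x-y)f_\ell^2(x-y)=f_\ell\nabla f_\ell(x-y)$, replace $J(x)^2$ and $J(y)^2$ by $1$. The resulting errors are controlled by (\ref{eq:1-J}) and (\ref{eq:rmkN}); they are of order $\rho^{1}\cdot\rho\cdot\ell^{2}\cdot\ell^{-1}$, which is small because $\nabla\nu\,\nu$ lives where $|\nabla f_\ell|$ is supported ($|x-y|\le4\ell$). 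Finally replace $\nu$ by $\sigma$, using $\nu-\sigma=(\gamma^{-1}-\mathbbm{1})*\sigma$ and Lemma \ref{lm:eta}. This yields the main term $2\int \nabla\sigma\,\sigma\,\nabla f_\ell f_\ell$. The remaining contracted piece, with $d\Gamma(u_x)$, contains an extra particle within distance $4\ell$ of $x$ and a factor $\|\nu\nabla\nu\|_1$. It is bounded by Cauchy--Schwarz with Lemma \ref{lm:aaaaC}.

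\emph{Non-contracted terms.} These have the form
\[ \int dy\,dy'\,\nabla\nu(x-y)\,\nu(x-y')\,\langle J(y')a_{y'}\cdots a_y\cdots\Psi\rangle , \]
with $d\Gamma(u_x)$ in the middle.

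\textbf{Step 2: the non-contracted terms, which I expect to be the main obstacle.} A naive Cauchy--Schwarz bound loses too much, since $\|\nabla\nu\|_1$ and $\|\nu\|_1$ are large and $d\Gamma(u_x)$ costs $\ell^{-1}$ in each ball of radius $\ell$. The idea is to exploit the structure $\int dy\,\nu(x-y)\,a_y\approx$ excitations. I would rewrite $a_y=\sqrt{\rho_0}+b_y$.

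The $\sqrt{\rho_0}$ parts produce $\int\nu=0$ and $\int\nabla\nu=0$, up to corrections from the $f_\ell$ factors. Those corrections are localized within distance $\ell$ and contribute $\rho\cdot\|\nu\|_{\infty}\ell^2$ times the density, which is of order $\rho^{3-C\delta}$.

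For the $b$ parts, express them through the $c$ fields, using $\int dy\,\nu(x-y)b_y=c(\cdots)$ combinations as in (\ref{eq:def-cc}). Then apply (\ref{eq:cc-gh}) and (\ref{eq:cccc-gh}) of Lemma \ref{lm:c's}, together with $d\Gamma(u_x)\lesssim \ell^{-1}\cdot(\text{number of particles in }B_{4\ell}(x))$. The quadratic and quartic localized bounds of Lemma \ref{lm:aaaaC}, with the $C\ell$ restriction, together with the $\rho^{2-42\eps}$ estimates of Lemma \ref{lm:c's}, should give errors of order
\[ \rho^{1/2}\cdot\rho^{2-C\eps}\cdot\ell^{-1}\lesssim\rho^{5/2+\delta} \]
for $\delta<\eps$ small. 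Here the decay (\ref{eq:decay}) is used to truncate to $|x-y|,|x-y'|\le R$, exactly as for the term A in Lemma \ref{lm:Nm}. Making these powers close is the delicate part. If they do not close, I would first try splitting $\nabla\nu$ into its part with $|x-y|\le\ell$ and its complement, using $\|\nabla\sigma\|_2\lesssim\rho$ on the latter.
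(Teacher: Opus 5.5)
Your decomposition is the paper's. The contracted terms are $R_1$ (the main term) and $R_3$. The non-contracted terms carrying $u(x-y)$, $u(x-y')$ and $d\Gamma(u_x)$ are $R_2$, $R_4$ and $R_5$.

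Step 1 is fine up to bookkeeping. The $J$-removal error is $\|\nabla\nu\|_\infty\|\nu\|_\infty\|\nabla f_\ell\|_1\|\omega_\ell\|_1\langle\Psi,\mathcal N\Psi\rangle\lesssim\rho^3L^3$, not $\rho^2\ell$. The contracted $d\Gamma(u_x)$ piece needs only $J(x)^2|d\Gamma(u_x)|\le d\Gamma(|\nabla f_\ell|_x)$ and $\|\nu\nabla\nu\|_1\|\nabla f_\ell\|_1\langle\Psi,\mathcal N\Psi\rangle$, not Lemma~\ref{lm:aaaaC}.

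The gap is Step 2. It carries the whole content of the proposition, and you do not carry it out. The estimate you do state fails: $\rho^{1/2}\cdot\rho^{2-C\eps}\cdot\ell^{-1}=\rho^{5/2+\delta-C\eps}$ is never $\lesssim\rho^{5/2+\delta}$. With the constants from Lemma~\ref{lm:c's} and $\delta<\eps$, it is not even $o(\rho^{5/2})$, so it would swamp the Lee--Huang--Yang term. Your tool for $d\Gamma(u_x)$ is also false. The bound $d\Gamma(u_x)\lesssim\ell^{-1}\mathcal N_{4\ell}(x)$ fails because $u=\nabla f_\ell/f_\ell$ has size $\frak a^{-1}$ at distance $2\frak a$ and diverges at $|x|=\frak a$. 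Only the combination $J(x)^2|d\Gamma(u_x)|\le d\Gamma(|\nabla f_\ell|_x)$ is controlled, and the quantity one exploits is $\|\nabla f_\ell\|_1\lesssim\frak a\ell$.

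Three ingredients are missing.

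\emph{Removing the $J$-factors.} Before $\int\nu=\int\nabla\nu=0$ (with translation invariance) can turn $a$'s into $b$'s, you must replace by $1$ not only the $f_\ell$-factors but also $J(x)^2J(y)J(y')$. These corrections are not localized near $x$. They are controlled by Cauchy--Schwarz with \eqref{eq:kas}, \eqref{eq:aaaaC} and \eqref{eq:aaaCR}. In the $d\Gamma(u_x)$ term, the $\omega_\ell(y-y')$ correction requires splitting $a_y,a_{y'}=\sqrt{\rho_0}+b$. The $J$-corrections there use $a_w=\sqrt{\rho_0}+b_w$ together with $\int\nabla f_\ell^2=0$.

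\emph{Correct counting in the two-particle pieces.} $\Phi_2$ and $\Phi_3$ carry no $\sqrt{\rho_0}$. In the $u(x-y)$ piece the prefactor is $\|\nabla\nu\|_\infty\|\nabla f_\ell\|_1\lesssim\rho/\ell$. After reducing to $\langle\Psi,b_z^*b_y\Psi\rangle$, your $c$-field route with \eqref{eq:cc-gh} and \eqref{eq:cccc-gh} then does give $O(\rho^{11/4-C\eps}L^3)$. The exception is the commutator $[c(\sigma_z),c^*(\sigma_y)]=(\sigma*\sigma)(z-y)$, which Lemma~\ref{lm:c's} does not cover. It must be computed directly: it gives $L^3\int\nabla\nu\,(\nu*\sigma*\sigma)f_\ell\nabla f_\ell=O(\rho^{5/2}\ell^{-1})L^3$. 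The $u(x-y')$ piece is bounded directly by $\|\nu\|_\infty\|\nabla\nu\|_1\|\nabla f_\ell\|_1\int\|b_x\Psi\|^2$.

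\emph{The three-particle piece ($d\Gamma(u_x)$ acting on a particle at $w$).} Routing this through \eqref{eq:cccc-gh} is too lossy. Cauchy--Schwarz pairs a factor $\rho^{2-C\eps}$ with one of order $\rho^{3-C\eps}$, giving only $\rho^{5/2-C\eps}$. The paper instead reduces this piece to $\int\nabla\nu(x-y)\nu(x-z)\nabla f_\ell^2(x-w)\langle a_wb_z\Psi,a_wb_y\Psi\rangle$. It then applies Cauchy--Schwarz with $L^2$ norms and the $b$-bounds of Lemma~\ref{lm:a-bds}:
$\|\nabla\nu\|_2\|\nu\|_2\|\nabla f_\ell\|_1\big[\rho R^3\int\|b_z\Psi\|^2+\int_{|z-w|\le R}\|b_zb_w\Psi\|^2\big]\lesssim\rho^{11/4-C\eps}\ell$.
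The extra factor $\rho^{1/4}$ absorbs the $\rho^{-C\eps}$ losses. Your fallback of using $\|\nabla\sigma\|_2$ points in this direction. What makes the bound close is pairing it with $\|\nu\|_2$ and with Lemma~\ref{lm:a-bds} rather than Lemma~\ref{lm:c's}.
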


\subsection{Proof of Proposition \ref{prop:1-1}}
\label{subsec:1-1} 

Writing 
\[ \Phi_1 (x) = \sqrt{\rho_0} \int dy \frac{\nabla f_\ell (x-y)}{f_\ell (x-y)} J(x)\,  a_y^* a_y \Psi \]
and noticing that, with (\ref{eq:pull}), 
\begin{equation}\label{eq:Jsquare} a_z J(x)^2 a^*_y = f^2_\ell (x-y) f_\ell^2 (x-z) a^*_y J(x)^2 a_z + \delta (z-y) f^2_\ell (x-y) J(x)^2 \end{equation}
we obtain 
\[ \begin{split} \int dx \| \Phi_1 (x) \|^2 \leq \; &\rho_0 \int dx dy \,  |\nabla f_\ell (x-y)|^2 \langle \Psi, a_y^* a_y \Psi \rangle \\ &+ \rho_0 \int dx dy dz \, |\nabla f_\ell (x-y)| | \nabla f_\ell (x-z)| \langle \Psi, a_y^* a_z^* a_z a_y \Psi \rangle \,.\end{split}  \]
Thus
\[  \int dx \| \Phi_1 (x) \|^2 \leq \rho_0 \| \nabla f_\ell \|_2^2  \, \langle \Psi , \cN \Psi \rangle + C \rho_0 \int_{|y-z| \leq C \ell} dy dz \langle \Psi ,  a_y^* a_z^* a_z a_y \Psi \rangle\,. \]
Using \eqref{eq:cN_ub} and  \eqref{eq:aaaaC}, we conclude that 
\[  \int dx \| \Phi_1 (x) \|^2 \leq \rho_0 (\rho_0 +\|\sigma\|_2^2) L^3\|\nabla f_\ell \big\|_2^2+ C \rho^{5/2+\delta} L^3 \]
if $\eps , \delta > 0$ are small enough. 
\qed

\subsection{Proof of Proposition \ref{prop:2-2}}

Using  \eqref{eq:pull} and \eqref{eq:Jsquare} we write 
\begin{equation*}
	\begin{split}
		\int dx\,\|\Phi_2(x)\|_2^2 =\;& \int dxdydz\,\nabla\nu(x-y)\nabla\nu(x-z)\big\langle \Psi, J(y) a_y J(x)^2 a^*_{z}J(z)\Psi\rangle \\
		=\;&\int dxdy \,|\nabla\nu(x-y)|^2 f_\ell(x-y)^2 \langle \Psi, J(y)^2 J(x)^2\Psi\rangle\\
		&+\int dxdydz\,\nabla\nu(x-y)\nabla\nu(x-z) f_\ell (x-y)^2 f_\ell(x-z)^2\\
		&\qquad\qquad\qquad\times f_\ell(y-z)^2 \langle \Psi, a^*_z J(x)^2 J(y) J(z) a_y\Psi\rangle\\
		\le\;&\big\| f_\ell \nabla\nu\big\|_2^2\, L^3+F_1+F_2
	\end{split}
\end{equation*}
with
\begin{equation*}
	\begin{split}
		F_1=\;&\int dydz\,\big((f^2_\ell\nabla\nu)*(f^2_\ell\nabla\nu)\big)(y-z)  f^2_\ell(y-z) \langle \Psi, a^*_z a_y\Psi\rangle\\
		F_2=\;&\int dxdydz\,(f_\ell^2\nabla\nu)(x-y)(f_\ell^2\nabla\nu)(x-z) f^2_\ell(y-z) \, \big\langle \Psi, a^*_z \big(J(x)^2 J(y) J(z)-1\big) a_y\Psi\big\rangle.
	\end{split}
\end{equation*}
We analyze the two terms separately. With \eqref{eq:def-cc} we have  
\begin{equation} \label{eq:a_to_c}
		\begin{split}
			a_x= c(\gamma_x)+c^*(\sigma_x)+\sqrt{\rho_0}\,,\qquad a_x^*= c^*(\gamma_x)+c(\sigma_x)+\sqrt{\rho_0}\,.
		\end{split}
	\end{equation}
Inserting into $F_1$ yields
\begin{equation*}
	\begin{split}
		F_1=F_{11}+F_{12}+F_{13}+F_{14}
	\end{split}
\end{equation*} 
with  
\begin{equation*}
	\begin{split}
		F_{11}=\;&\rho_0\, L^3\int dy\,\big((f^2_\ell\nabla\nu)*(f^2_\ell\nabla\nu)\big)(y) f_\ell(y)^2\\
		F_{12}=\;&2\sqrt{\rho_0} \int dydz\,\big((f^2_\ell\nabla\nu)*(f^2_\ell\nabla\nu)\big)(y-z) f_\ell(y-z)^2 \langle \Psi, (a_z-\sqrt{\rho_0})\Psi\rangle \\
		F_{13}=\;&L^3\int dy\,\big((f^2_\ell\nabla\nu)*(f^2_\ell\nabla\nu)\big)(y) f_\ell(y)^2 (\sigma*\sigma)(y)\\
		F_{14}=\;& \int dydz \big((f^2_\ell\nabla\nu)*(f^2_\ell\nabla\nu)\big)(y-z) f^2_\ell(y-z)\\
		&\qquad\quad\times \big\langle \Psi,  \big[ c^*(\gamma_z) c(\gamma_y)+c^*(\sigma_y)c(\sigma_z) + c^*(\gamma_z) c^*(\sigma_y)+c(\sigma_z)c(\gamma_y) \big] \Psi \big\rangle .
	\end{split}
\end{equation*}
Writing $f_\ell=1-\o_\ell$ and using $\int \nabla\nu (y) dy =0$, we find 
\begin{equation}\label{eq:F11} |F_{11}| \leq C \rho L^3 \| \o_\ell \|_1 \|  \nabla \nu \|_2^2 + C \rho L^3 \| \o_\ell \nabla \nu \|_1^2 \leq C  \rho^{5/2+ \delta} L^3 \end{equation} 
if $\eps , \delta > 0$ are small enough; here we estimated $\| \o_\ell \nabla \nu \|_1 \lesssim \| \o_\ell \|_1 \| \nabla \nu \|_\infty \lesssim \rho$ and $\| \nabla \nu \|_2^2 \lesssim \rho^2$, from Lemma \ref{lm:eta}. Similarly, using again Lemma \ref{lm:eta} to bound $\| \nabla \nu \|_1 \lesssim \rho^{1/2 - 5 \eps}$, we obtain   
\begin{equation*}
	\begin{split}
		\Big|F_{13}-L^3\big(\nabla\nu&*\nabla\nu*\sigma*\sigma\big)(0)\Big| \\ &\leq L^3 \int dy dz \, |\nabla \nu (y-z)| |\nabla \nu (z)| | (\sigma * \sigma )(y)| \big( |\o_\ell (y)| + |\o_\ell (z)| + |\o_\ell (y-z)| \big) \\ &\leq  3 \| \sigma * \sigma \|_\infty \| \nabla \nu \|_1 \| \nabla \nu \|_\infty \| \o_\ell \|_1 L^3  \leq C \rho^{5/2+\delta} L^3 \end{split} \]
if $\eps, \delta > 0$ are small enough. Combining (\ref{eq:F11}) with \eqref{eq:1bb}, we immediately obtain $|F_{12}| \leq C \rho^{5/2+\delta} L^3$, if $\eps, \delta > 0$ are small enough. 

In order to estimate $F_{14}$, in particular the off-diagonal contribution proportional to $c (\sigma_z) c (\gamma_y) + c^* (\gamma_y)c^* (\sigma_z)$, we need to restrict the integral to $|y-z| \leq R$, with $R = \rho^{-1/2-3\eps}$ as in Lemma \ref{lm:Nm}. To this end, we are going to control the contribution from $|y-z| > R$, using the fast decay of $\nabla \nu (x)$, for $|x| \geq CR$. From Lemma \ref{lm:eta} (in particular, with  (\ref{eq:decay})) and keeping in mind that $\sigma (x) = 0$ for $|x| > C \ell_0$, we have 
\begin{equation}\label{eq:nnu-decay} \begin{split}  |\nabla \nu (x)| &= |(\gamma^{-1} * \nabla \sigma) (x)| \\ &\leq |\nabla \sigma (x)| + |((\gamma^{-1} - \mathbbm{1}) * \nabla \sigma ) (x)| \\ &\lesssim \frac{\rho}{\ell} \| \nabla \sigma \|_1 \frac{1}{(\rho^{1/2+3\eps/2} |x|)^m} \lesssim  \frac{\rho^{3/2-\eps+\delta} |\log \rho |}{(\rho^{1/2+3\eps/2}|x|)^m} \end{split} \end{equation}
for all $|x| > R/2$, $m \in \bN$. Hence
\[ \big| (f_\ell^2 \nabla \nu) * (f_\ell^2 \nabla \nu) (y-z) \big| \lesssim \| \nabla \nu \|_1  \frac{\rho^{3/2-\eps+\delta} |\log \rho|}{(\rho^{1/2+3\eps/2} |y-z|)^m}\leq \frac{\rho^{2-5\eps}}{(\rho^{1/2+3\eps/2} |y-z|)^m}\]
for all $|y-z| > R$, $m \in \bN$. We conclude that 
\[\begin{split}  \int_{|y-z|>R} dy dz \, \big|  &(f_\ell^2 \nabla \nu) * (f_\ell^2 \nabla \nu) (y-z)\big| \big| \langle \Psi, c (\sigma_z) c (\gamma_y) \Psi \rangle \big| \\&\lesssim \rho^{2-5\eps} \int dy dz \frac{\chi ( |y-z| >R)}{(\rho^{1/2+3\eps/2} |y-z|)^m} \| c^* (\sigma_z) \Psi \| \| c (\gamma_y) \Psi \| \, .  \end{split} \]
With $\| c^* (\sigma_z) \Psi \| \leq \| \sigma \|_2 + \| c (\sigma_z) \Psi \|$ and with Cauchy-Schwarz, we can estimate
\[ \begin{split}  \int_{|y-z|>R} dy dz \, \big|  &(f_\ell^2 \nabla \nu) * (f_\ell^2 \nabla \nu) (y-z)\big| \big| \langle \Psi, c (\sigma_z) c (\gamma_y) \Psi \rangle \big| \\\lesssim \; &\rho^{2-5\eps} \int dy dz \frac{\chi ( |y-z| >R)}{(\rho^{1/2+3\eps/2} |y-z|)^m} \Big[ \rho^{3/2} + \| c (\gamma_y) \Psi \|^2 + \| c (\sigma_z) \Psi \|^2 \Big] \\ 
\lesssim \; &\rho^{1/2 -5\eps +3\eps/2 (m-6)} \int dy \big[ \rho^{3/2} + \| c (\gamma_y) \Psi \|^2 + \| c (\sigma_y) \Psi \|^2 \big]   \\ \lesssim \; &\rho^{5/2 -14\eps +3m\eps/2} L^3 \lesssim \rho^{5/2+\delta} L^3 
   \end{split} \]
if $\eps, \delta > 0$ are small and $m \in \bN$ is large enough. In the last step, we applied (\ref{eq:cc-gh}) in Lemma \ref{lm:c's}. Thus, applying Cauchy-Schwarz to handle the diagonal terms $c^* (\gamma_z) c (\gamma_y)$ and $c^* (\sigma_y) c (\sigma_z)$, we can bound 
\begin{equation*}
    \begin{split}
        |F_{14}|\le\;& \|\nabla\nu\|_1^2 \int dx \Big[\langle \Psi,c^*_x c_x \rangle\Psi+\langle \Psi,c^*(\sigma_x) c(\sigma_x) \rangle\Psi+\langle \Psi,c^*((\gamma-\mathbbm{1})_x) c(\gamma-\mathbbm{1})_x) \rangle\Psi\Big]\\
        &+L^{3/2} \| f_\ell^2 \nabla\nu* f_\ell^2\nabla\nu\|_2\bigg( \int_{|y-z|\le C R} \, dy dz \, \|c(\gamma_z)c(\sigma_y)\Psi\|^2\bigg)^{1/2} + C \rho^{5/2+\delta} L^3 \\
        \le\;& C \rho^{5/2+\delta} L^3 
    \end{split}
\end{equation*}
if $\eps, \delta > 0$ are small enough. Here we used Lemma \ref{lm:eta} to estimate $\| \nabla \nu \|_1^2 \leq C\rho^{1-10\eps}$ and $\|  f_\ell^2 \nabla\nu* f_\ell^2\nabla\nu\|_2 \leq C \| \nabla \nu \|_{4/3}^2 \leq C \rho^{7/4-8\eps}$ and we applied \eqref{eq:cccc-gh}. This shows that
\begin{equation}
    \Big|F_1- \big(\nabla\nu*\nabla\nu*\sigma*\sigma\big)(0) L^3\Big| \le C\rho^{5/2+\delta} L^3 \end{equation}
if $\eps, \delta > 0$ are small enough. To bound $F_2$, we apply Cauchy-Schwarz, we decompose $a_y = \sqrt{\rho_0} + b_y, a_y^* = \sqrt{\rho_0} + b^*_y$ as in (\ref{eq:a_to_b}) and we use  
\begin{equation*}
    0\le 1-J(x)^2 J(y)J(z) \le d\Gamma(2(\o_\ell)_x+(\o_\ell)_y+(\o_\ell)_z) \,.
\end{equation*}
We find
\begin{equation*}
\begin{split} |F_2| &\le \int dx dy dz \, |\nabla \nu (x-y)| |\nabla \nu (x-z)| \\ &\hspace{2cm} \times \Big[ \rho \, \langle \Psi, (1- J^2 (x) J(y) J(z) )\Psi \rangle + \langle \Psi, b_y^* (1- J^2 (x) J(y) J(z)) b_y \Psi \rangle \Big] \\ &\le \rho \int dx dy dz dw \, |\nabla \nu (x-y)| |\nabla \nu (x-z)| \\ &\hspace{4cm} \times \big( |\o_\ell (w-x)| + |\o_\ell (w-y)| + |\o_\ell (w-z)| \big) \langle \Psi, a_w^* a_w \Psi \rangle \\ &\hspace{.4cm} + \int dx dy dz dw \, |\nabla \nu (x-y)| |\nabla \nu (x-z)| \\ &\hspace{4cm} \times \big( |\o_\ell (w-x)| + |\o_\ell (w-y)| + |\o_\ell (w-z)| \big) \langle \Psi, b_y^* a_w^* a_w b_y \Psi \rangle \\ &=: F_{21} + F_{22} \,.
\end{split} 
\end{equation*}
With  Lemma \ref{lm:eta} and (\ref{eq:cN_ub}), we have 
\[ F_{21} \leq C \rho \| \nabla \nu \|_1^2 \| \o_\ell \|_1 \langle \Psi, \cN \Psi \rangle \leq C \rho^{5/2+\delta} L^3 \]
if $\eps, \delta > 0$ are small enough. Writing $a_w = \sqrt{\rho_0} + b_w, a_w^* = \sqrt{\rho_0} + b_w^*$, we obtain 
\[ \begin{split} F_{22} \leq \; &\rho \int dx dy dz dw \, |\nabla \nu (x-y)| |\nabla \nu (x-z)| \\ &\hspace{3cm} \times \big( |\o_\ell (w-x)| + |\o_\ell (w-y)| + |\o_\ell (w-z)| \big) \langle \Psi, b_y^*  b_y \Psi \rangle \\ &+ \int dx dy dz dw \,  |\nabla \nu (x-y)| |\nabla \nu (x-z)| \\ &\hspace{3cm} \times \big( |\o_\ell (w-x)| + |\o_\ell (w-y)| + |\o_\ell (w-z)| \big) \langle \Psi, b_y^*  b_w^* b_w b_y \Psi \rangle \,.
\end{split} \]
In the second term, the contribution proportional to $\omega_\ell (y-w)$ is automatically restricted to $|y-w| \lesssim \ell$. The contributions proportional to $\omega_\ell (w-x), \omega_\ell (w-z)$, on the other hand, can be restricted to $|w-y| \lesssim R$, up to a negligible error, using the fast decay (\ref{eq:nnu-decay}). We find 
\[\begin{split} 
F_{22} \leq \; &C \rho \| \nabla \nu \|_1^2 \| \o_\ell \|_1 \int dy \langle \Psi, b_y^* b_y \Psi \rangle + C \| \nabla \nu \|_1^2 \int_{|y-w| \leq 2\ell} dy dz \, \langle \Psi, b_y^*  b_w^* b_w b_y \Psi \rangle \\ &+ C \| \o_\ell \|_1 \| \nabla \nu \|_\infty \| \nabla \nu \|_1 \int_{|y-w| \leq R} dy dz \, \langle \Psi, b_y^*  b_w^* b_w b_y \Psi \rangle + C \rho^{5/2+\eps} L^3 \\ \leq \; &C \rho^{5/2+\eps}L^3 \end{split} \]
if $\eps, \delta > 0$ are small enough, from Lemma \ref{lm:eta}, Lemma \ref{lm:a-bds} and Lemma \ref{lm:aaaaC}.

It follows that
\begin{equation}\label{eq:phi2-2f}  \int dx\,\|\Phi_2(x)\|_2^2 \le \big\| f_\ell \nabla\nu\big\|_2^2L^3+ \big(\nabla\nu*\nabla\nu*\sigma*\sigma\big)(0) L^3 + C \rho^{5/2+\delta} L^3 \,.\end{equation} 
To conclude the proof of the proposition, we observe that (writing $\nu = \sigma + (\gamma^{-1} -\mathbbm{1})*\sigma$) 
\begin{equation*}
    \begin{split}
        \int dx&\,(f_\ell^2(x)-1) |\nabla\nu(x)|^2\\
        =\;& \int dx\,(f_\ell^2(x)-1) |\nabla\sigma(x)|^2 + \int dx\,(f_\ell^2(x)-1) ((\gamma^{-1}-\mathbbm{1} )* \nabla \sigma )(x)\nabla\nu(x) \\
        &+\int dx\,(f_\ell^2(x)-1) \nabla \sigma(x) ((\gamma^{-1}-\mathbbm{1}) * \nabla \sigma) (x) \\
        \le\;&\int dx\,(f_\ell^2(x)-1) |\nabla\sigma(x)|^2 + C \| \o_\ell \|_1 \| (\gamma^{-1} - \mathbbm{1})*\nabla \sigma \|_\infty \big[ \| \nabla \nu \|_\infty + \| \nabla \sigma \|_\infty \big] \\
        \le\;&\int dx\,(f_\ell^2(x)-1) |\nabla\sigma(x)|^2 + C \rho^{5/2+\delta} 
    \end{split}
\end{equation*}
if $\eps, \delta > 0$ are small enough (because, from Lemma \ref{lm:eta}, $\| (\gamma^{-1} - \mathbbm{1}) * \nabla \sigma \|_\infty \leq \| \gamma^{-1} - \mathbbm{1} \|_2 \| \nabla \sigma \|_2 \leq C \rho^{7/4}$, $\| \nabla \nu \|_\infty , \| \nabla \sigma \|_\infty \leq C \rho /\ell^2$). Noticing that 
\[ \begin{split}  \| \nabla \nu \|^2_2 &+ (\nabla\nu*\nabla\nu*\sigma*\sigma\big)(0) \\ &= \frac{1}{|\Lambda|} \sum_{k \in \Lambda_+^*} k^2 |\widehat{\nu}_k|^2 (1 + |\widehat{\sigma}_k|^2) = \frac{1}{|\Lambda|} \sum_{k \in \Lambda_+^*} k^2 |\widehat{\nu}_k|^2 |\widehat{\gamma}_k|^2 =  \frac{1}{|\Lambda|} \sum_{k \in \Lambda^*_+} k^2 |\widehat{\sigma}_k|^2 = \| \nabla \sigma \|_2^2 \end{split} \]
we arrive, from (\ref{eq:phi2-2f}), at
\[ \int dx \| \Phi_2 (x) \|^2 \leq \| f_\ell \nabla \sigma \|_2^2 L^3 + C \rho^{5/2+\delta} L^3\,. \]
\qed

\subsection{Proof of Proposition \ref{prop:3-3}}

Recall that 
\[ \Phi_3 (x) = \int dy dw \, \nabla f_\ell (x-w) \nu (x-y) a_w^* J(x) a_w a_y^* J(y) \Psi.  \]
This implies that 
\begin{equation}\label{eq:phi3-3i}  \begin{split} \int dx \, \| &\Phi_3 (x) \|^2 \\ = \; &\int dx dy dz dw dt \, \nabla f_\ell (x-w) \cdot \nabla f_\ell (x-t) \nu (x-y) \nu (x-z) \\ &\hspace{4cm} \times  \langle a_w a_y^* J(y) \Psi ,  J(x) a_w a_t^* J(x) a_t a_z^* J(z) \Psi \rangle \\ = \; &\int dx dy dz dw \, |\nabla f_\ell (x-w)|^2 \nu (x-y) \nu (x-z) \langle a_w a_y^* J(y) \Psi ,  J^2 (x)  a_w a_z^* J(z) \Psi \rangle \\ &+ \int dx dy dz dw dt \, \nabla f_\ell (x-w) \cdot \nabla f_\ell (x-t) \nu (x-y) \nu (x-z) \\ &\hspace{4cm} \times \langle a_w a_y^* J(y) \Psi ,  J(x) a_t^* a_w  J(x) a_t a_z^* J(z) \Psi \rangle  \\ \leq \; &G_1 + G_2 \end{split} \end{equation} 
with 
\begin{equation} \label{eq:G_1_G_2}
    \begin{split}
        G_1=\;&\int dxdydzdw \, |\nabla f_\ell (x-w)|^2 \nu(x-y) \nu(x-z)  \big\langle \Psi, J(y) a_y a_w^* a_w   a^*_z J(z) \Psi\big\rangle\\
        G_2=\;& \int dxdydzdtdw\,|\nabla f_\ell(x-t)| |\nabla f_\ell(x-w)| \nu(x-y) \nu(x-z) \\
        &\hspace{6cm} \times\big\langle \Psi, J(y) a_y a^*_ta^*_w a_wa_t a^*_z J(z)\Psi\big\rangle\,.
    \end{split}
\end{equation}
Rearranging in normal order 
\[ a_y a_w^* a_w a_z^* = a_w^* a_z^* a_y a_w + \delta (w-z) a_z^* a_y + \delta (w-y) a_z^* a_y + \delta (z-y) a_w^* a_w + \delta (w-y) \delta (y-z) \]
we can write 
\begin{equation} \label{eq:G_1_split_part}
    G_1 \leq  
     L^3 \int dx\, |\nu(x)|^2 |\nabla f_\ell(x)|^2 +G_{11}+G_{12}+G_{13}
\end{equation}
with
\begin{equation*}
    \begin{split}
        G_{11}=\;& \int dxdydw\, |\nu(x-y)|^2\, |\nabla f_\ell(x-w)|^2 \langle \Psi, J(y) a^*_w a_w J(y)\Psi\rangle\\
        G_{12}=\;& 2 \int dxdydz\,\nu(x-y) \nu(x-z) |\nabla f_\ell(x-y)|^2 \langle \Psi, J(y) a^*_z a_y J(z)\Psi\rangle\\
        G_{13}=\;& \int dxdydzdw\,\nu(x-y) \nu(x-z) |\nabla f_\ell(x-w)|^2 \langle \Psi, J(y) a^*_z a^*_w a_w a_y J(z)\Psi\rangle.
    \end{split}
\end{equation*}
Observing with Lemma \ref{lm:fell} and Lemma \ref{lm:eta} that 
\begin{equation*}
\begin{split}
    \int dx\,\big|&|\nu(x)|^2-|\sigma(x)|^2\big| |\nabla f_\ell(x)|^2\\
    \le\;& C \big(\|\sigma\|_\infty+\|\nu\|_\infty\big) \big\|(\gamma^{-1}-\delta)*\sigma\big\|_\infty \|\nabla f_\ell\|_2^2  \le C \rho \ell^{-1} \|\gamma^{-1}-\delta\|_2\, \|\sigma\|_2 \leq C \rho^{5/2+\delta} 
\end{split}
\end{equation*}
we find 
\begin{equation}\label{eq:G1-2} G_1 \leq  L^3 \int dx\, |\sigma (x)|^2 |\nabla f_\ell(x)|^2 +G_{11}+G_{12}+G_{13} + C \rho^{5/2+\delta}\,. \end{equation} 
Next, we estimate the terms $G_{11}, G_{12}, G_{13}$. With (\ref{eq:cN_ub}), we easily find 
\begin{equation}\label{eq:G11f}  G_{11} \leq \| \nu\|_2^2  \|\nabla f_\ell\|_2^2 \, \langle \Psi,\mathcal{N}\Psi\rangle \leq  \| \nu \|_2^2 \| \nabla f_\ell \|_2^2 \, \rho_0 L^3 + C \rho^3 L^3 \,.\end{equation} 
Switching to 
\[ G_{12} = 2 \int dx dy dz \, \nu (x-y) \nu (x-z) |\nabla f_\ell (x-y) |^2 f^2_\ell (y-z) \langle \Psi, a_z^* J(z) J(y) a_y \Psi \rangle \]
we write  $1-J(y) J(z) = 1- J(y) + J(y) (1- J(z))$. Using the fast decay (\ref{eq:decay}) of $\nu (x-z)$ to restrict the integral to the region $|x-z| \leq R$ producing only negligible errors, we find 
\[ \begin{split} \Big| \int &dx dy dz \, \nu (x-y) \nu (x-z) |\nabla f_\ell (x-y)|^2 f^2_\ell (y-z) \langle \Psi, a_z^* (1 - J(y)) a_y \Psi \rangle \Big| \\ \leq \; & \Big[ \int_{|x-z| \leq R} dx dy dz dw \, |\nu (x-y)| |\nabla f_\ell (x-y)|^2 \o_\ell (y-w) \langle \Psi, a_z^* a_w^* a_w a_z \Psi \rangle \Big]^{1/2} \\ &\hspace{.5cm} \times \Big[ \int dx dy dz dw \, |\nu (x-y)| |\nu (x-z)|^2 |\nabla f_\ell (x-y)|^2 \o_\ell (y-w) \langle \Psi, a_y^* a_w^* a_w a_y \Psi \rangle \Big]^{1/2}\\ &+C \rho^{5/2+\delta} L^3. \end{split} \]
With Lemma \ref{lm:eta}, \eqref{eq:kas} and (\ref{eq:aaaaC}), we conclude that   
\[ \begin{split} \Big| \int &dx dy dz \, \nu (x-y) \nu (x-z) |\nabla f_\ell (x-y)|^2 f^2_\ell (y-z) \langle \Psi, a_z^* (1 - J(y)) a_y \Psi \rangle \Big| \\ \leq \; &C \rho^{5/2+\delta} L^3 + C \| \nu \|_\infty \| \nu \|_2 \| \nabla f_\ell \|_2^2 \| \o_\ell \|_1^{1/2} \\ &\hspace{.5cm} \times \Big[ \int_{|z-w| \leq 2R} dz dw |\langle \Psi, a_z^* a_w^* a_w a_z \Psi \rangle \Big]^{1/2} \Big[ \int_{|w-y| \leq C \ell} dy dw \, \langle \Psi, a_y^* a_w^* a_w a_y \Psi \rangle \Big]^{1/2} \\ \leq \; & C \rho^{5/2+\delta} L^3 \end{split} \]
for $\eps, \delta > 0$ sufficiently small. The contribution proportional to $1-J(z)$ can be controlled similarly. Hence  
\[ G_{12} \leq 2 \int dx dy dz \, \nu (x-y) \nu (x-z) |\nabla f_\ell (x-y) |^2 f^2_\ell (y-z) \langle \Psi, 
a_z^* a_y \Psi \rangle + C \rho^{5/2+\delta} L^3 \,. \]
With $|f^2_\ell (y-z) - 1| \leq 2 \o_\ell (y-z)$ and Cauchy-Schwarz, we can estimate  
\[ \begin{split} 
\Big| \int dx dy dz \, \nu (x-y) \nu (x-z) |\nabla f_\ell (x-y) &|^2 (f^2_\ell (y-z)-1) \langle \Psi, a_z^* a_y \Psi \rangle \Big| \\ &\leq C \| \nu \|^2_\infty \| \nabla f_\ell \|_2^2 \| \o_\ell \|_1 \langle \Psi, \cN \Psi \rangle \leq  C \rho^{5/2+\delta} L^3  \end{split}  \]
for $\eps , \delta > 0$ small enough. Therefore, using (\ref{eq:bb}) to decompose $a_y = \sqrt{\rho_0} + b_y, a_z^* = \sqrt{\rho_0} + b_z^*$ and recalling that $\int \nu (x) dx = 0$ we arrive at
\[ G_{12} \leq 2 \int dx dy dz \, \nu (x-y) \nu (x-z) |\nabla f_\ell (x-y) |^2 \langle \Psi, b_z^*  b_y \Psi \rangle + C \rho^{5/2+\delta} L^3\,. \]
Inverting (\ref{eq:def-cc}), we obtain 
\begin{equation}\label{eq:G12f} \begin{split} G_{12} \leq \; &2 L^3 \int  dx\, \nu (x) (\nu * \sigma * \sigma) (x) |\nabla f_\ell (x)|^2  \\ 
   &+  2 \int dxdydz\,\nu(x-y) \nu(x-z) |\nabla f_\ell(x-y)|^2\\
        &\qquad\qquad\times\Big\langle \Psi, \Big[ c^*(\gamma_z) c(\gamma_y) + c^*(\sigma_y) c(\sigma_z)+ c^*(\gamma_z) c^*(\sigma_y)+c(\sigma_z) c(\gamma_y)  \Big]\Psi\Big\rangle \\
        &+C \rho^{5/2+\delta} L^3\,.
    \end{split}\end{equation} 
To bound the first term, we observe that 
\[ \nu * \sigma * \sigma = (\gamma-\mathbbm{1})*\sigma + (\mathbbm{1} - \gamma^{-1}) *\sigma \]
 and therefore $\| \nu * \sigma * \sigma \|_\infty \leq (\| \gamma -\mathbbm{1} \|_2 + \| \gamma^{-1} - \mathbbm{1} \|_2 )\| \sigma \|_2 \leq C \rho^{3/2}$. Hence
 \[ L^3 \int  \nu (x) (\nu * \sigma * \sigma) (x) |\nabla f_\ell (x)|^2 \leq C \rho^{3/2} \| \nu \|_\infty \| \nabla f_\ell \|_2^2 \, L^3 \leq C \rho^{5/2+\delta} L^3\,. \] 
As for the second term on the r.h.s. of (\ref{eq:G12f}), we apply Cauchy-Schwarz together with \eqref{eq:cc-claim},  \eqref{eq:cc-gh}, \eqref{eq:cccc-gh}. We obtain $G_{12} \leq C \rho^{5/2+\delta}$. 

Next, we bound $G_{13}$. With $1- J(y) J(z) \lesssim d\Gamma (\omega_{\ell,y} + \o_{\ell,z})$ and Cauchy-Schwarz, we find 
\begin{equation}\label{eq:G1312} \begin{split} \Big| \int &dx dy dz dw\, \nu(x-y) \nu(x-z) |\nabla f_\ell(x-w)|^2 f_\ell(y-z)^2 \\
    &\qquad\qquad\qquad\times f_\ell(y-w) f_\ell(z-w) \langle\Psi, a^*_z a^*_w (J(y)J(z)-1) a_wa_y\Psi\rangle \Big| \\ 
    &\leq \Big[ \int dx dy dz dw \, |\nu (x-y)| |\nu (x-z)| |\nabla f_\ell (x-w)|^2 \langle \Psi, a_z^* a_w^* d\Gamma (\o_{\ell,y}) a_w a_z \Psi \rangle \Big]^{1/2} \\ &\hspace{.4cm} \times \Big[ \int dx dy dz dw \, |\nu (x-y)| |\nu (x-z)| |\nabla f_\ell (x-w)|^2 \langle \Psi, a_y^* a_w^* d\Gamma (\o_{\ell,y}) a_w a_y \Psi \rangle \Big]^{1/2}\,.\end{split} \end{equation} 
 On the one hand, writing $a_z^\sharp = b_z^\sharp + \sqrt{\rho_0}, a_w^\sharp = b_w^\sharp + \sqrt{\rho_0}$, we have
 \begin{equation*}
    \begin{split}
     \int dx dy &dz dw \, |\nu (x-y)| |\nu (x-z)| |\nabla f_\ell (x-w)|^2 \langle \Psi, a_z^* a_w^* d\Gamma (\o_{\ell,y}) a_w a_z \Psi \rangle \\ 
    \leq \; &C \rho^2 \int dxdydzdw\, |\nu(x-y)| |\nu(x-z)| |\nabla f_\ell(x-w)|^2  \langle\Psi, d\Gamma(\o_{\ell,y})\Psi\rangle\\
        &+C \rho  \int dxdydzdw\, |\nu(x-y)| |\nu(x-z)| |\nabla f_\ell(x-w)|^2 \langle\Psi, b^*_z d\Gamma(\o_{\ell,y}) b_z\Psi\rangle\\
        &+C \rho  \int dxdydzdw\, |\nu(x-y)| |\nu(x-z)| |\nabla f_\ell(x-w)|^2 \langle\Psi, b^*_w d\Gamma(\o_{\ell,y}) b_w\Psi\rangle\\
        &+C   \int dxdydzdw\, |\nu(x-y)| |\nu(x-z)| |\nabla f_\ell(x-w)|^2 \langle\Psi, b^*_z b^*_w d\Gamma(\o_{\ell,y}) b_w b_z\Psi\rangle.
    \end{split}
\end{equation*}
Making use of the fast decay (\ref{eq:decay}) of the kernel $\nu$ to restrict the integrals to $|x-y|, |x-z| < R$ producing only negligible errors, we obtain, with \eqref{eq:kbs}, 
\begin{equation*}
    \begin{split}
  \int dx dy dz dw \, &|\nu (x-y)| |\nu (x-z)| |\nabla f_\ell (x-w)|^2 \langle \Psi, a_z^* a_w^* d\Gamma (\o_{\ell,y}) a_w a_z \Psi \rangle \\  \le\;& C \rho^3 L^3 + C \rho^2 \|\nu\|_1^2 \|\nabla f_\ell\|_2^2 \|\o_\ell\|_1 \langle \Psi,\mathcal{N}\Psi\rangle\\
        &+C \rho \|\nu\|_1 \|\nu\|_\infty \|\nabla f_\ell\|_2^2\|\o_\ell\|_1 \int_{|w-t|\le R} dwdt\,\langle \Psi, b^*_w a^*_t a_tb_w\Psi\rangle \\
        &+C \|\nu\|_\infty ^2 \|\nabla f_\ell\|_2^2 \|\o_\ell\|_1 \int_{\substack{|z-w|\le R \\|z-t|\le R}} dzdwdt\,\langle\Psi, b^*_z b^*_w a^*_t a_t b_zb_w \Psi\rangle\\
        \le\;& C \rho^{3-c(\eps+\delta)} L^3 
    \end{split}
\end{equation*}
for some $c> 0$.  Similarly, with (\ref{eq:aaaCR}) we find 
\[ \begin{split} 
 \int &dx dy dz dw \, |\nu (x-y)| |\nu (x-z)| |\nabla f_\ell (x-w)|^2 \langle \Psi, a_y^* a_w^* d\Gamma (\o_{\ell,y}) a_w a_y \Psi \rangle \\  = \; & \int dx dy dz dw dt \, |\nu (x-y)| |\nu (x-z)| |\nabla f_\ell (x-w)|^2 \o_\ell  (y-t)  \langle \Psi, a_y^* a_w^* a_t^* a_t a_w a_y \Psi \rangle \\ \leq\; &C \rho^{5/2} L^3 + C \| \nu \|_1 \| \nu \|_\infty \| \nabla f_\ell \|_2^2 \int_{\substack{|y-t| \leq \ell \\ |y-w| \leq R}} dy dw dt \,  \langle \Psi, a_y^* a_w^* a_t^* a_t a_w a_y \Psi \rangle \\ \leq \; & C \rho^{5/2-c (\eps + \delta)} L^3 \end{split} \]
for some $c >0$. Inserting the last two bounds in (\ref{eq:G1312}) we have  
\[ \begin{split} 
\Big| \int &dx dy dz dw\, \nu(x-y) \nu(x-z) |\nabla f_\ell(x-w)|^2 f_\ell(y-z)^2 \\
    &\qquad\qquad\qquad\times f_\ell(y-w) f_\ell(z-w) \langle\Psi, a^*_z a^*_w (J(y)J(z)-1) a_wa_y\Psi\rangle \Big| \leq C \rho^{5/2+\delta} L^3 \end{split} \]
if $\eps , \delta >0$ are small enough. Hence   
\begin{equation} \label{eq:G_13_split2}
    \begin{split}
        G_{13}\le\;&\int dxdydzdw\, \nu(x-y) \nu(x-z) |\nabla f_\ell(x-w)|^2 f_\ell(y-z)^2 \\
    &\hspace{4cm} \times f_\ell(y-w) f_\ell(z-w) \langle\Psi, a^*_z a^*_w a_wa_y\Psi\rangle\\
    &+C \rho^{5/2+\delta} L^3 \\
    =\;&\int dxdydzdw\, \nu(x-y) \nu(x-z) |\nabla f_\ell(x-w)|^2 f_\ell(y-z)^2 \\
    &\hspace{4cm} \times f_\ell(y-w) f_\ell(z-w) \langle\Psi, b^*_z a^*_w a_wb_y\Psi\rangle\\
    &+G_{131}+G_{132}+C \rho^{5/2+\delta} L^3 
    \end{split}
\end{equation} 
where now
\begin{equation*}
    \begin{split}
        G_{131}=\;&\rho_0\int dxdydzdw\, \nu(x-y) \nu(x-z) |\nabla f_\ell(x-w)|^2 f_\ell(y-z)^2 \\
        &\qquad\qquad\qquad\times f_\ell(y-w) f_\ell(z-w) \langle\Psi,  a^*_w a_w\Psi\rangle\\
        G_{132}=\;&2\rho_0^{1/2} \int dxdydzdw\, \nu(x-y) \nu(x-z) |\nabla f_\ell(x-w)|^2 f_\ell(y-z)^2 \\
        &\qquad\qquad\qquad\times f_\ell(y-w) f_\ell(z-w) \langle\Psi, a^*_w a_wb_y\Psi\rangle.
    \end{split}
\end{equation*}
Using $\int\nu (x) dx =0$ we find
\begin{equation*}
    \begin{split}
        G_{131}=\;&\rho_0\int dxdydzdw\, \nu(x-y) \nu(x-z) |\nabla f_\ell(x-w)|^2  \o_\ell(y-w) \o_\ell(z-w) \langle\Psi,  a^*_w a_w\Psi\rangle\\
        &+\rho_0\int dxdydzdw\, \nu(x-y) \nu(x-z) |\nabla f_\ell(x-w)|^2 (f_\ell(y-z)^2-1) \\
        &\hspace{7cm} \times f_\ell(y-w) f_\ell(z-w) \langle\Psi,  a^*_w a_w\Psi\rangle
    \end{split}
\end{equation*}
and therefore
\begin{equation*}
    \begin{split}
        |G_{131}|\le\;&C \rho \| \nabla f_\ell \|_2^2 \big[  \|\nu\|_\infty^2 \|\o_\ell\|_1^2 + \|\nu\|_\infty \|\nu\|_1 \|\o_\ell \|_1 \big] \langle \Psi,\mathcal{N}\Psi\rangle \leq C \rho^{5/2+\delta} L^3\,.
    \end{split}
\end{equation*}
Moreover,
\begin{equation*}
    \begin{split}
        G_{132}=\;&-2\rho_0^{1/2}\int dxdydzdw\, \nu(x-y) \nu(x-z) |\nabla f_\ell(x-w)|^2  \\ &\hspace{6cm} \times f_\ell(y-w) \o_\ell(z-w) \langle\Psi, a^*_w a_wb_y\Psi\rangle \\
        &+2\rho_0^{1/2} \int dxdydzdw\, \nu(x-y) \nu(x-z) |\nabla f_\ell(x-w)|^2 (f_\ell(y-z)^2-1) \\
        &\hspace{6cm} \times f_\ell(y-w) f_\ell(z-w) \langle\Psi, a^*_w a_wb_y\Psi\rangle\\
        \le\;& C \rho^{5/2+\delta} \\ &+ C \rho^{1/2} \|\nu\|_\infty \|\nu\|_2 \|\nabla f_\ell\|_2^2 \|\o_\ell\|_1 \langle \Psi,\mathcal{N}\Psi\rangle^{1/2} \Big( \int_{|y-w|\le CR} dydw\,\langle \Psi,b^*_y a^*_w a_w b_y\Psi\rangle\Big)^{1/2}\\
        \le\;& C \rho^{5/2+\delta} L^3
    \end{split}
\end{equation*}
where we used the decay (\ref{eq:decay}) of $\nu$ to cut the integral to $|y-w| \leq C R$ (producing only a negligible error), we estimated $a_w^* a_w \lesssim \rho + b_w^* b_w$ and we applied Lemma \ref{lm:a-bds}. 
Plugging the last two bounds into \eqref{eq:G_13_split2} yields
\begin{equation} \label{eq:G_13_split3}
\begin{split}
    G_{13} \le\;&\rho_0  \int dxdydzdw\, \nu(x-y) \nu(x-z) |\nabla f_\ell(x-w)|^2 \\ &\hspace{4cm} \times f_\ell(y-z)^2 f_\ell (y-w) f_\ell (z-w) \langle\Psi, b^*_z b_y\Psi\rangle \\
    &+G_{133}+G_{134} + C \rho^{5/2+\delta} L^3 
    \end{split}
\end{equation}
with 
\begin{equation*}
    \begin{split}
      %  G_{135}=\;&\rho_0\int dxdydzdw\, \nu(x-y) \nu(x-z) |\nabla f_\ell(x-w)|^2 f_\ell(y-z)^2 \\
      %  &\qquad\qquad\qquad\times f_\ell(y-w) f_\ell(z-w) \langle\Psi, b^*_z b_y\Psi\rangle\\
        G_{133}=\;&\int dxdydzdw\, \nu(x-y) \nu(x-z) |\nabla f_\ell(x-w)|^2 f_\ell(y-z)^2 \\
        &\qquad\qquad\qquad\times f_\ell(y-w) f_\ell(z-w) \langle\Psi, b^*_zb^*_w b_w b_y\Psi\rangle\\
        G_{134}=\;&2\rho_0^{1/2} \int dxdydzdw\, \nu(x-y) \nu(x-z) |\nabla f_\ell(x-w)|^2 f_\ell(y-z)^2 \\
        &\qquad\qquad\qquad\times f_\ell(y-w) f_\ell(z-w) \langle\Psi, b^*_z b_wb_y\Psi\rangle.
    \end{split}
\end{equation*}
Applying Lemma \ref{lm:a-bds} to estimate 
\begin{equation*}
    \begin{split}
        |G_{133}| &\le C \rho^{5/2+\delta} L^3 + \int_{|w-z| \leq R} dxdydzdw\,|\nu(x-y)|^2|\nabla f_\ell(x-w)|^2 \langle \Psi, b^*_z b^*_w b_w b_z\Psi\rangle \\ &\le C \rho^{5/2+\delta} L^3
    \end{split}
\end{equation*}
and 
\begin{equation*}
    \begin{split}
        |G_{134}| \le\; &C \rho^{5/2+\delta} L^3 \\ &+ C\rho^{1/2} \bigg(\int_{|y-w| \leq R}  dxdydzdw\, |\nu(x-z)|  |\nabla f_\ell(x-w)|^2 \langle \Psi, b^*_y b^*_w b_w b_y\Psi\rangle\bigg)^{1/2}\\
        &\times\bigg( \int dxdydzdw\,|\nu(x-y)|^2|\nu(x-z)| |\nabla f_\ell(x-w)|^2 \langle \Psi, b^*_z b_z\Psi\rangle\bigg)^{1/2}\\
        \le\; &C \rho^{5/2+\delta} L^3 + C \rho^{1/2} \|\nu\|_1 \|\nu\|_2 \|\nabla f_\ell\|_2^2 \\ &\hspace{1cm} \times \Big( \int_{|z-w| \leq R}  dz dw \langle \, \Psi, b_w^* b_z^* b_z b_w \Psi \rangle \Big)^{1/2} \Big( \int dz \langle \Psi, b_z^* b_z \Psi \rangle \Big)^{1/2} \\ \le\; & C \rho^{5/2+\delta} L^3 
            \end{split}
\end{equation*}
we arrive at
\[ \begin{split} G_{13} \le\;&\rho_0  \int dxdydzdw\, \nu(x-y) \nu(x-z) |\nabla f_\ell(x-w)|^2 \\ &\hspace{4cm} \times f_\ell(y-z)^2 f_\ell (y-w) f_\ell (z-w) \langle\Psi, b^*_z b_y\Psi\rangle \\
 &+ C \rho^{5/2+\delta} L^3\,.  \end{split} \]
Finally, we invert (\ref{eq:def-cc}). We obtain 
\[ 
    \begin{split}
        G_{13} \leq \;&\rho_0\int dxdydzdw\, \nu(x-y) \nu(x-z) |\nabla f_\ell(x-w)|^2 f_\ell(y-z)^2  \\
        &\qquad\qquad\qquad\times  f_\ell(y-w) f_\ell(z-w) (\sigma*\sigma)(y-z)\\
        &+\rho_0\int dxdydzdw\, \nu(x-y) \nu(x-z) |\nabla f_\ell(x-w)|^2 f_\ell(y-z)^2 f_\ell(y-w)  f_\ell(z-w) \\
        &\hspace{.3cm}  \times \big\langle \Psi, \big[ c^*(\gamma_z) c(\gamma_y) + c^*(\sigma_y) c(\sigma_z) + c^*(\gamma_z) c^*(\sigma_y) + c(\sigma_y) c(\gamma_z)\big]\Psi\big\rangle + C \rho^{5/2+\delta} L^3\,. \end{split} \] 
We can control the second term by Cauchy-Schwarz, together with \eqref{eq:cc-claim},  \eqref{eq:cc-gh}, \eqref{eq:cccc-gh}. As for the first term, we use 
\[ 0\leq 1- f^2_\ell (y-z) f_\ell (y-w) f_\ell (z-w)\leq C \big( \o_\ell (y-z) + \o_\ell (y-w) + \o_\ell (z-w) \big) \]
to estimate 
\[  \begin{split} 
\big| \rho_0 \int dx dy dz dw \, \nu(x-y) &\nu(x-z) |\nabla f_\ell(x-w)|^2  (\sigma*\sigma)(y-z) \\ & \hspace{3cm} \times ( f_\ell(y-z)^2 f_\ell(y-w) f_\ell(z-w) -1) \big| \\ &\leq C \rho \| \nu \|_\infty  \| \sigma * \sigma \|_\infty \| \nu \|_1 \| \nabla f_\ell \|_2^2 \| \o_\ell \|_1 L^3 \leq C \rho^{5/2+\delta} L^3 \,.\end{split} \]
We conclude that 
\[ G_{13} \leq \rho_0L^3  \| \nabla f_\ell \|_2^2 \int dx \, \nu (x) (\nu * \sigma * \sigma) (x) + C \rho^{5/2+\delta} L^3. \]
Combining this with (\ref{eq:G1-2}), (\ref{eq:G11f}) and the estimate $G_{12} \leq C \rho^{5/2+\delta} L^3$, we obtain 
\begin{equation}\label{eq:G1ff} \begin{split}  G_1 \leq \; &L^3 \int dx \, |\sigma (x)|^2 |\nabla f_\ell (x)|^2 + \rho_0L^3 \| \nu \|_2^2 \| \nabla f_\ell \|_2^2 \\ &+ \rho_0 L^3  \| \nabla f_\ell \|_2^2 \int dx \, \nu (x) (\nu * \sigma * \sigma) (x) + C \rho^{5/2+\delta} L^3  \\ =\; &L^3 \int dx \, |\sigma (x)|^2 |\nabla f_\ell (x)|^2 + \rho_0L^3 \| \sigma \|_2^2 \| \nabla f_\ell \|_2^2 + C \rho^{5/2+\delta} L^3 \end{split} \end{equation} 
where we recalled $\nu = \gamma^{-1} * \sigma$ and $\sigma * \sigma = \gamma * \gamma -1$ to  compute $(\nu * \nu * \sigma * \sigma) (0) = (\sigma * \sigma)(0) - (\nu * \nu) (0) = \| \sigma \|_2^2 - \| \nu \|_2^2$.  

We are left with estimating $G_2$ from \eqref{eq:G_1_G_2}. Rearranging 
\[ \begin{split} a_y &a_t^* a_w^* a_w a_t a_z^* \\ = \; &a_z^* a_t^* a_w^* a_w a_t a_y + (\delta (t-z) + \delta (t-y)) a_z^* a_w^* a_w a_y + (\delta (w-z)+ \delta (w-y)) a_z^* a_t^* a_t a_y \\ &+ \delta (y-z) a_t^* a_w^* a_w a_t  + (\delta (y-t) \delta (z-w) + \delta (y-w) \delta (z-t)) a_z^* a_y \\ &+ \delta (y-t) \delta (z-t) a_w^* a_w + \delta (y-w) \delta (z-w) a_t^* a_t  \end{split} \]
we write 
\begin{equation*}
    G_2=G_{21}+G_{22}+G_{23}+G_{24}+G_{25}
\end{equation*}
with
\begin{equation*}
    \begin{split}
        G_{21}=\;& \int dxdydzdtdw\,\nu(x-y) \nu(x-z) |\nabla f_\ell(x-t)| |\nabla f_\ell(x-w)|\\
        &\qquad\qquad\qquad\qquad\times \big\langle \Psi, J(y) a^*_t a^*_w a^*_z a_y a_w a_t J(z)\Psi\big\rangle\\
        G_{22}=\;&4\int dxdydzdw\,\nu(x-y) \nu(x-z) |\nabla f_\ell(x-y)| |\nabla f_\ell(x-w)|\\
        &\qquad\qquad\qquad\qquad\times \big\langle \Psi, J(y)  a^*_w a^*_z a_y a_w J(z)\Psi\big\rangle\\
        G_{23}=\;&\int dxdydtdw\,|\nu(x-y)|^2 |\nabla f_\ell(x-t)| |\nabla f_\ell(x-w)|\\
        &\qquad\qquad\qquad\qquad\times \big\langle \Psi, J(y) a^*_t a^*_w  a_w a_t J(y)\Psi\big\rangle\\
        G_{24}=\;&2\int dxdydw\,|\nu(x-y)|^2 |\nabla f_\ell(x-w)| |\nabla f_\ell(x-y)| \langle \Psi, J(y) a^*_w a_w J(y)\Psi\rangle\\
        G_{25}=\;&2 \int dxdydz \,\nu(x-y)\nu(x-z) |\nabla f_\ell(x-y)| |\nabla f_\ell(x-z)| \\
        &\qquad\qquad\qquad\qquad\times \langle\Psi,J(y) a^*_z a_y J(z)\Psi\rangle.
    \end{split}
\end{equation*} 
Using (\ref{eq:decay}) to cut the integral to $|z-w| \leq R$ and applying \eqref{eq:pull}, Cauchy-Schwarz and  \eqref{eq:aaaCR}, we obtain 
\begin{equation*}
    \begin{split}
        |G_{21}|\le\;&C \rho^{5/2+\delta} L^3 \\ &+ C \int_{\substack{|t-w|\le C\ell\\|z-w|\le R}} dxdydzdtdw\,|\nu(x-y)|^2 |\nabla f_\ell(x-t)|^2 \langle \Psi, a^*_t a^*_w a^*_z a_z a_w a_t\Psi\rangle\\
        \le\;& C \rho^{5/2+\delta} L^3 \,.
    \end{split}
\end{equation*}
% \begin{equation*}
%     \begin{split}
%         |G_{21}|\le\;& C \rho^3 \|\gamma^{-1}\nu\|_1^2 \|\nabla f_\ell\|_1^2 L^3+ C \int_{\substack{|w-z|\le\ell_0\\ |w-t|\le\ell_0}} dxdydzdtdw\,|\nu(x-y)|^2|\nabla f_\ell(x-t)|^2\\
%         &\qquad\qquad\qquad\qquad\qquad\qquad\qquad\qquad\times   \big\langle \Psi, J(y) b^*_t b^*_w b^*_z b_z b_w b_t J(z)\Psi\big\rangle\\
%         \le\;& C \rho^3 \mathfrak{a}^2 \ell^2 L^3 \textcolor{red}{(\rho\mathfrak{a}^3)^{-4\delta}}.
%     \end{split}
% \end{equation*}
Similarly, from  \eqref{eq:kas} (with $k=2$) and  \eqref{eq:aaaaC}, we find 
\begin{equation*}
    \begin{split}
        |G_{22}|\le\;& C \rho^{5/2+\delta} L^3 \\ &+ C \|\nu\|_\infty \bigg( \int_{|z-w|\le R} dxdydzdw\,|\nabla f_\ell (x-y)| |\nabla f_\ell(x-w)|^2 \langle \Psi, a^*_w a^*_z a_z a_w\Psi\rangle\bigg)^{1/2}\\
        &\qquad\times \bigg( \int_{|y-w|\le C\ell} dxdydzdw \,|\nu(x-z)|^2|\nabla f_\ell(x-y)| \langle \Psi, a^*_w a^*_y a_y a_w\Psi\rangle\bigg)^{1/2}\\
        \le\;& C \rho^{5/2+\delta} L^3\,.
    \end{split}
\end{equation*}
We continue with
\begin{equation*}
    \begin{split}
        |G_{23}|\le\;& C \|\nu\|_2^2 \|\nabla f_\ell\|_2^2 \int_{|t-w|\le C\ell} \langle\Psi, a^*_t a^*_w a_w a_t\Psi\rangle\le C \rho^{5/2+\delta} L^3
    \end{split}
\end{equation*}
and
\begin{equation*}
    |G_{24}|, |G_{25}|\le C \|\nu\|_\infty^2  \|\nabla f_\ell\|_1^2 \langle \Psi, \mathcal{N}\Psi\rangle \le C \rho^{5/2+\delta}  L^3.
\end{equation*}
Hence $|G_2| \leq C \rho^{5/2+\delta} L^3$, if $\eps, \delta > 0$ are small enough. Together with (\ref{eq:phi3-3i}) and \eqref{eq:G1ff}, this completes the proof of the proposition.
\qed

\subsection{Proof of Proposition \ref{prop:1-2}}

From \eqref{eq:Phi_123}, we find 
\begin{equation}\label{eq:P1P2} \begin{split} 2 \int &dx \, \langle \Phi_1 (x) , \Phi_2 (x) \rangle \\ &= 2 \sqrt{\rho_0} \int dx dy dz \, \nabla \nu (x-y) \cdot \frac{\nabla f_\ell (x-z)}{f_\ell (x-z)} \langle J(x)^2 a_z^* a_z \Psi, a_y^* J(y) \Psi \rangle \\ &= 2\sqrt{\rho_0} \int dx dy dz \, \nabla \nu (x-y) \cdot \nabla f_\ell (x-z) f_\ell (x-z) \langle J(x)^2 a_z \Psi, a_z a_y^* J(y) \Psi \rangle \\ &= P_1 + P_2  
 \end{split}  \end{equation} 
with 
\[ \begin{split} P_1 &= \sqrt{\rho_0} \int dx dy \, \nabla \nu (x-y) \nabla f_\ell^2 (x-y) \langle J(x)^2 a_y \Psi , J(y) \Psi \rangle  \\ 
P_2 &= \sqrt{\rho_0} \int dx dy dz \, \nabla \nu (x-y) \nabla f_\ell^2 (x-z) \langle a_y J(x)^2 a_z \Psi , a_z J(y) \Psi \rangle \,. \end{split} \]
We write 
\[ \begin{split} P_1 = \; &\sqrt{\rho_0}\int dx dy \, \nabla \nu (x-y) \nabla f_\ell^2 (x-y) \langle a_y \Psi , \Psi \rangle \\ &+ \sqrt{\rho_0} \int dx dy \, \nabla \nu (x-y) \nabla f_\ell^2 (x-y) \langle a_y \Psi , (J(x)^2 J(y) - 1) \Psi \rangle  \\ = \; & P_{11} + P_{12} \,. \end{split} \]
By Cauchy-Schwarz, we can estimate 
\[  \begin{split} |P_{12}| \leq C \sqrt{\rho} &\Big[ \int dx dy |\nabla \nu (x-y)| |\nabla f_\ell^2 (x-y)| \langle \Psi, a_y^* d\Gamma (\o_{\ell,x} + \o_{\ell,y}) a_y \Psi \rangle \Big]^{1/2} \\ &\hspace{2cm} \times \Big[ \int dx dy |\nabla \nu (x-y)| |\nabla f_\ell^2 (x-y)| \langle \Psi,  d\Gamma (\o_{\ell,x} + \o_{\ell,y}) \Psi \rangle \Big]^{1/2}\,. \end{split} \]
We have, from Prop. \ref{prop:N_on_psi}
\[ \begin{split} \int dx dy \, & |\nabla \nu (x-y)| |\nabla f_\ell^2 (x-y)| \langle \Psi,  d\Gamma (\o_{\ell,x} + \o_{\ell,y})  \Psi \rangle \\ &= 2 \int dx dy dz \,  |\nabla \nu (x-y)| |\nabla f_\ell^2 (x-y)| \o_\ell (x-z) \langle \Psi, a_z^* a_z  \Psi \rangle  \\ &\leq C \| \nabla \nu \|_\infty \| \nabla f_\ell \|_1 \| \o_\ell \|_1 \langle \Psi, \cN \Psi \rangle \leq C \rho^{2-\delta} L^3 \end{split} \]
and, with Lemma \ref{lm:aaaaC}, 
\[ \begin{split}  \int dx dy &|\nabla \nu (x-y)| |\nabla f_\ell^2 (x-y)| \langle \Psi, a_y^* d\Gamma (\o_{\ell,x} + \o_{\ell,y}) a_y \Psi \rangle \\ &\leq \int dx dy dz  |\nabla \nu (x-y)| |\nabla f_\ell^2 (x-y)| \big( \o_\ell (x-z) + \o_\ell (y-z) \big)  \langle \Psi, a_y^* a_z^* a_z a_y \Psi \rangle \\ &\leq C \| \nabla \nu \|_\infty \| \nabla f_\ell \|_1  \int_{|y-z| \leq C \ell} dy dz \, \langle \Psi, a_y^* a_z^* a_z a_y \Psi \rangle \leq C \rho^{3-c(\eps + \delta)}  L^3  \end{split} \] for some $c>0$.
This implies that $|P_{12}| \leq C \rho^{5/2+\delta}L^3$, if $\eps , \delta > 0$ are small enough. Moreover, with Lemma~\ref{lm:1b} we find 
\[ \Big| P_{11} - \rho_0 L^3  \int dx \nabla \nu (x) \nabla f_\ell^2 (x) \Big| \leq C \rho^{5/2+\delta} L^3 \]
 if $\eps, \delta > 0$ are small enough. Hence,
 \begin{equation}\label{eq:P1fin}  \Big| P_{1} - \rho_0 \int dx \nabla \nu (x) \nabla f_\ell^2 (x) \Big| \leq C \rho^{5/2+\delta}L^3\,. \end{equation} 
On the other hand, we can write 
\[ \begin{split} 
P_2 = \; &\sqrt{\rho_0} \int dx dy dz \, \nabla \nu (x-y) \nabla f_\ell^2 (x-z) f_\ell^2 (x-y) f_\ell (z-y) \langle a_y a_z \Psi , J(x)^2 J(y) a_z \Psi \rangle \\ =\; & \sqrt{\rho_0} \int dx dy dz \, \nabla \nu (x-y) \nabla f_\ell^2 (x-z) \big( f_\ell^2 (x-y) f_\ell (z-y) -1 \big)  \\ &\hspace{7cm} \times \langle a_y a_z \Psi , J(x)^2 J(y) a_z \Psi \rangle \\ &+\sqrt{\rho_0} \int dx dy dz \, \nabla \nu (x-y) \nabla f_\ell^2 (x-z)  \langle a_y a_z \Psi , (J(x)^2 -1) J(y) a_z \Psi \rangle \\
&+\sqrt{\rho_0} \int dx dy dz \, \nabla \nu (x-y) \nabla f_\ell^2 (x-z) \langle a_y a_z \Psi , (J(y)-1) a_z \Psi \rangle \\
&+ \sqrt{\rho_0} \int dx dy dz \, \nabla \nu (x-y) \nabla f_\ell^2 (x-z)  \langle a_y a_z \Psi , a_z \Psi \rangle \\ =\; &P_{21}+ P_{22} + P_{23} + P_{24}\,.  \end{split} \]
With Lemma \ref{lm:eta} and Lemma \ref{lm:aaaaC}, we bound 
\[ \begin{split} |P_{21}| &\leq C \sqrt{\rho} \int dx dy dz \, |\nabla \nu (x-y)| |\nabla f_\ell (x-z)| \big(\o_\ell (x-y) + \o_\ell (y-z)\big) \| a_y a_z \Psi \| \| a_z \Psi \| \\ 
&\leq C \sqrt{\rho} \, \| \nabla \nu \|_\infty \|\nabla f_\ell\|_1 \|\o_\ell\|_2 
\Big[  \int_{|y-z| \leq C \ell}  dy dz \, \| a_y a_z \Psi \|^2 \Big]^{1/2} \Big[ \int dz \, \| a_z \Psi \|^2 \Big]^{1/2}\\
& \leq C \rho^{5/2+\delta} L^3  \end{split} \] 
if $\eps, \delta > 0$ are small enough. Again by Lemma \ref{lm:eta} and Lemma \ref{lm:aaaaC}, we get (as above, we use (\ref{eq:decay}) to restrict the integral to $|x-y|< R$, with a negligible error)
\[ \begin{split} |P_{22}| \leq \; & C \rho^{5/2+\delta} L^3 + C \sqrt{\rho} \Big[ \int_{|x-y| \leq R}  dx dy dz dw \, |\nabla f_\ell (x-z)| \o_\ell (x-w) \| a_y a_z a_w \Psi \|^2 \Big]^{1/2} \\ &\hspace{1.5cm} \times \Big[ \int dx dy dz dw \, |\nabla \nu (x-y)|^2 |\nabla f_\ell (x-z)| \o_\ell (x-w)| \| a_z a_w \Psi \|^2 \Big]^{1/2} \\ \leq \; &C \rho^{5/2+\delta} L^3 + C \|\nabla f_\ell\|_1 \| \nabla \nu \|_2 \Big[ \int_{\substack{|z-w| \leq C\ell \\ |y-w| \leq R}} dy dzdw \, \| a_y a_z a_w \Psi \|^2 \Big]^{1/2} \\ &\hspace{6cm} \times  \Big[ \int_{|z-w| \leq  C \ell} dz dw \, \| a_z a_w \Psi \|^2 \Big]^{1/2}\\ \leq\; &C \rho^{5/2+\delta} L^3 \end{split} \]
if $\eps, \delta > 0$ are sufficiently small. As for $P_{23}$, we decompose
    \begin{equation*}
        \begin{split}
        P_{23} = P_{231}+ P_{232} 
        \end{split}
    \end{equation*}
    with
    \begin{equation*}
        \begin{split}
   P_{231}=\;&\rho_0 \int dxdydz\,\nabla\nu(x-y)  \nabla f_\ell^2(x-z) \langle \Psi, a^*_y b^*_z(J(y)-1)\Psi\rangle  \\
            P_{232}=\;&\sqrt{\rho_0}\int dxdydz\,\nabla\nu(x-y)  \nabla f_\ell^2(x-z) \langle \Psi, a^*_y a^*_z(J(y)-1)b_z\Psi\rangle \,. 
        \end{split}
    \end{equation*}
With Cauchy-Schwarz, $(1- J(y))^2 \leq 1- J(y) \lesssim d\Gamma (\o_{\ell,y})$ and Lemma \ref{lm:a-bds}, we find
       \begin{equation*}
        \begin{split}
            |P_{231}| \le\;& C \rho^{5/2+\delta} L^3+ C \rho \|\nabla f_\ell\|_1 \|\nabla\nu\|_2 \|\o_\ell \|_1^{1/2} \langle\Psi,\mathcal{N}\Psi\rangle^{\frac{1}{2}}\Big( \int_{|y-z|\le R} dydz\langle \Psi,b^*_z a^*_y a_y b_z\Psi\rangle\Big)^{\frac{1}{2}}\\
            \le\;& C \rho^{5/2+\delta} L^3
        \end{split}
    \end{equation*}
    and, also with Lemma \ref{lm:aaaaC}, 
    \begin{equation*}
        \begin{split}
            |P_{232}| \le\;&C \rho^{5/2+\delta} L^3 + C \rho^{1/2} \|\nabla f_\ell\|_1 \|\nabla\nu\|_\infty \|\o_\ell\|_1^{1/2} \bigg( \int_{|z-w|\le R} dzdw\,\langle\Psi, b^*_z a^*_w a_w b_z\Psi\rangle \bigg)^{1/2}\\
            &\hspace{3cm} \times \bigg(\int_{\substack{|y-w|\le C \ell \\ |y-z|\le R}} dydzdw\,\langle \Psi, a^*_y a^*_z a^*_w a_w a_z a_y\Psi\rangle\bigg)^{1/2}\\
            \le\;& C \rho^{5/2+\delta} L^3        \end{split}
    \end{equation*}
    if $\eps , \delta > 0$ are small enough. 
    Thus, $|P_{23}| \leq C \rho^{5/2+\delta }L^3$. Finally, since $\int \nabla \nu (x) dx = \int \nabla f_\ell^2 (x) dx =0$, we can write $P_{24} =  P_{241} + P_{242}$, with
      \begin{equation*}
    \begin{split}
        P_{241}=\;&\rho_0\int dxdydz\, \nabla\nu(x-y) \nabla f_\ell^2(x-z) \langle b_y b_z \Psi,  \Psi\rangle \\
        P_{242} =\;&\sqrt{\rho_0}\int dxdydz\, \nabla\nu(x-y) \nabla f_\ell^2(x-z)  \langle b_y a_z \Psi, b_z \Psi\rangle \,.
    \end{split}
    \end{equation*}
    With Lemma \ref{lm:a-bds}, we estimate
        \begin{equation} \label{eq:bound_P_11}
        \begin{split}
            | P_{241}| \le \;& C \rho^{5/2+\delta} L^3 + C \rho \, \|\nabla f_\ell\|_1 \|\nabla\nu\|_2 L^{3/2} \bigg( \int_{|y-z|\le R} dydz\,\langle \Psi, b^*_x b^*_y b_y b_x \Psi\rangle\bigg)^{1/2} \\ \leq \; &C \rho^{5/2+\delta} L^3 
        \end{split}
    \end{equation}
       and 
    \begin{equation*}
        \begin{split}
            |P_{242}|\le\;& C \rho^{5/2+\delta} L^3 \\ &+C \rho^{1/2} \|\nabla f_\ell\|_1 \| \nabla \nu \|_2  \bigg( \int_{|y-z|\le R} dydz \,\langle \Psi, b^*_y a^*_z a_z b_y\Psi\rangle \bigg)^{1/2} \bigg( \int dz \langle \Psi, b^*_z b_z\Psi\rangle\bigg)^{1/2}\\
            \le\;& C \rho^{5/2+\delta} L^3 + C \rho^{1/2} \|\nabla f_\ell\|_1 \|\nabla\nu\|_2\bigg( \int dz\,\langle \Psi, b^*_z b_z \Psi\rangle\bigg)^{1/2}\\
            &\qquad\qquad\times\bigg( \rho R^3 \int dy\,\langle \Psi, b^*_y b_y\Psi\rangle + \int_{|y-z| \leq R} dydz\,\langle \Psi, b^*_y b^*_z b_z b_y\Psi\rangle \bigg)^{1/2}\\
            \le\;& C \rho^{5/2+\delta} L^3 
                   \end{split}
    \end{equation*}
if $\eps , \delta > 0$ are small enough. This shows that $|P_2| \leq C \rho^{5/2+\delta} L^3$, if $\eps, \delta> 0$ are sufficiently small and, together with (\ref{eq:P1P2} and (\ref{eq:P1fin}), it shows that 
\[ 2 \int dx \langle \Phi_1 (x), \Phi_2 (x) \rangle \leq 2 \rho_0 L^3 \int dx \, f_\ell (x) \nabla f_\ell (x) \nabla \nu (x) + C \rho^{5/2+\delta} L^3\,. \]
Since, from Lemma \ref{lm:eta}, 
\[ \begin{split}  \rho_0 \Big| \int dx f_\ell (x) &\nabla f_\ell (x) \big( (\gamma^{-1} -1) * \nabla \sigma \big) (x) \Big|  \\ &\leq C \rho \|\nabla f_\ell\|_1 \| (\gamma^{-1} - 1) * \nabla \sigma \|_\infty \leq C \rho \ell \| \gamma^{-1}-1\|_2 \| \nabla \sigma \|_2 \leq C \rho^{5/2+\delta} \end{split} \]
if $\eps, \delta > 0$ are small enough, this concludes the proof of the proposition. 
\qed

\subsection{Proof of Proposition \ref{prop:1-3}}

From (\ref{eq:Phi_123}), we have 
\[ \begin{split} 
 \int &dx  \langle \Phi_1 (x) , \Phi_3 (x) \rangle \\ =\; & \sqrt{\rho_0} \int dx dy dz dw \, \nu (x-y) \frac{\nabla f_\ell (x-z)}{f_\ell (x-z)} \frac{\nabla f_\ell (x-w)}{f_\ell (x-w)} \langle J(x)^2 a_z^* a_z \Psi , a_w^* a_w a_y^* J(y) \Psi \rangle  \\ =\; & \sqrt{\rho_0} \int dx dy dz \, \nu (x-y) |\nabla f_\ell (x-z)|^2 \langle J(x)^2 a_z \Psi , a_z a_y^* J(y) \Psi \rangle \\ &+ \frac{\sqrt{\rho_0}}{4} \int dx dy dz dw \, \nu (x-y) \nabla f^2_\ell (x-z) \nabla f^2_\ell (x-w) \langle J(x)^2 a_w a_z \Psi , a_w a_z a_y^* J(y) \Psi \rangle \end{split} \]
where we used the identity (\ref{eq:pull}) to pass $a_z^* , a_w$ through $J(x)^2$. Commuting also $a_y^*$ to the left, we arrive at 
\begin{equation}\label{eq:Q1-Q4} \begin{split} 
\int &dx  \langle \Phi_1 (x) , \Phi_3 (x) \rangle \\ =\; & \sqrt{\rho_0} \int dx dy \, \nu (x-y) |\nabla f_\ell (x-y)|^2 \langle a_y \Psi , J^2 (x) J(y) \Psi \rangle \\ &+ \sqrt{\rho_0} \int dx dy dz \, \nu (x-y) |\nabla f_\ell  (x-z)|^2 f^2_\ell (x-y) f_\ell (y-z) \langle a_y a_z \Psi, J^2 (x) J(y)  a_z \Psi \rangle \\ &+\frac{\sqrt{\rho_0}}{2} \int dx dy dz \, \nu (x-y) \nabla f_\ell^2 (x-z) \nabla f_\ell^2 (x-y) f_\ell (y-z) \langle a_y a_z \Psi, J^2 (x) J(y)  a_z \Psi \rangle \\ &+ \frac{\sqrt{\rho_0}}{4} \int dx dy dz dw \, \nu (x-y) \nabla f_\ell^2 (x-z) \nabla f_\ell^2 (x-w) f_\ell^2 (x-y) f_\ell (y-w) f_\ell (y-z) \ \\ &\hspace{7cm} \times \langle a_w a_y a_z \Psi, J^2 (x) J(y)  a_w a_z \Psi \rangle \\ =: \; &Q_1 + Q_2 + Q_3 + Q_4 \,.\end{split} \end{equation} 
The term $Q_4$ is negligible. In fact, with Lemma \ref{lm:aaaaC} and using (\ref{eq:decay}) to handle the region $|w-y| > R$, we find 
\begin{equation}\label{eq:Q4fin} \begin{split} 
|Q_4| \leq \; &C \rho^{5/2+\delta} L^3 +C \sqrt{\rho} \Big[ \int_{\substack{|w-z| \leq C \ell,\\ |w-y| \leq R}} dx dy dz dw |\nabla f_\ell (x-z)|^2 \| a_y a_z a_w \Psi \|^2 \Big]^{1/2} \\ &\hspace{2.5cm} \times \Big[ \int_{|w-z| \leq C \ell}  dx dy dz dw |\nu (x-y)|^2 |\nabla f_\ell (x-w)|^2 \| a_w a_z \Psi \|^2 \Big]^{1/2} \\ \leq \; &C \rho^{5/2+\delta} L^3 +C \sqrt{\rho} \,\| \nu \|_2 \| \nabla f_\ell \|_2^2 \Big[ \int_{\substack{|w-z| \leq C \ell, \\ |w-y| \leq R}} dy dz dw \, \| a_y a_z a_w \Psi \|^2 \Big]^{\frac{1}{2}} \\ &\hspace{4cm} \times \Big[ \int_{|w-z| \leq C \ell} dwdz \, \| a_w a_z \Psi \|^2 \Big]^{\frac{1}{2}} \\ \leq \; & C \rho^{5/2+\delta} L^3 \end{split} \end{equation}
if $\eps, \delta > 0$ are small enough. Also $Q_3$ is negligible, since
\begin{equation}\label{eq:Q3fin} \begin{split} |Q_3| \leq \; &C \sqrt{\rho} \int dx dy dz \, |\nu (x-y)| |\nabla f_\ell (x-z)| |\nabla f_\ell (x-y)| \| a_y a_z \Psi \| \| a_z \Psi \| \\ \leq \; &C \sqrt{\rho} \, \| \nu \|_\infty \| \nabla f_\ell \|_1 \| \nabla f_\ell \|_2 \Big[ \int_{|y-z| \leq C \ell} dy dz \, \| a_y a_z \Psi \|^2 \Big]^{1/2} \Big[ \int dz \|a_z \Psi \|^2 \Big]^{1/2} \\ \leq \; & C \rho^{5/2+\delta} L^3\,. \end{split} \end{equation}
As for the term $Q_1$, we decompose it as
\[ \begin{split} Q_1 =  \; & \rho_0 L^3  \int dx \, \nu (x) |\nabla f_\ell (x)|^2 + \sqrt{\rho_0} \int dx \, \nu (x) |\nabla f_\ell (x)|^2 \int dy  \langle b_y \Psi,\Psi \rangle \\ &+\sqrt{\rho_0} \int dx dy \, \nu (x-y) |\nabla f_\ell (x-y)|^2 \langle a_y \Psi, (J^2 (x) J(y) -1) \Psi \rangle  \\ = \; & 
\rho_0 L^3  \int dx \, \nu (x) |\nabla f_\ell (x)|^2 +Q_{12} + Q_{13}\,. \end{split} \] 
From Lemma \ref{lm:1b}, we have 
\[ |Q_{12}| \leq C \sqrt{\rho_0} \, \| \nu \|_\infty \| \nabla f_\ell \|_2^2 \Big| \int dy \, \langle b_y \Psi, \Psi \rangle \Big| \leq C \rho^{5/2+\delta} L^3\,. \]
With $0 \leq 1-J^2 (x)J(y) \lesssim d\Gamma (\o_{\ell,x} + \o_{\ell,y})$, we find 
\begin{equation}\label{eq:Q13} \begin{split} |Q_{13}| \leq \; &C \sqrt{\rho} \, \| \nu \|_\infty \Big[ \int dx dy dw \, |\nabla f_\ell (x-y)|^2 \big( \o_\ell (x-w) + \o_\ell (y-w) \big) \| a_y a_w \Psi \|^2 \Big]^{1/2} \\ &\times \Big[ \int dx dy dw  \, |\nabla f_\ell (x-y)|^2 \big( \o_\ell (x-w) + \o_\ell (y-w) \big) \| a_w \Psi \|^2 \Big]^{1/2} \\ \leq \:&C \sqrt{\rho} \, \| \nu \|_\infty \| \nabla f_\ell \|^2_2 \| \o_\ell \|_1^{1/2} \Big[ \int_{|y-w| \leq C \ell}  dy dw \, \| a_y a_w \Psi \|^2 \Big]^{1/2} \Big[ \int dw \, \| a_w \Psi \|^2 \Big]^{1/2} \\ \leq \; &C \rho^{5/2+\delta} L^3\,. \end{split} \end{equation} 
We conclude that 
\begin{equation}\label{eq:Q1fin} \Big| Q_1 - \rho_0 L^3 \int dx \, \nu (x) |\nabla f_\ell (x)|^2 \Big| \leq C \rho^{5/2+\delta} L^3 \end{equation} 
if $\eps, \delta > 0$ are small enough. Finally, we consider $Q_2$. Proceeding as in (\ref{eq:Q13}), we find  
\[ \begin{split} \sqrt{\rho_0}\, \Big| &\int dx dy dz \, \nu (x-y) |\nabla f_\ell  (x-z)|^2 \big( f^2_\ell (x-y) f_\ell (y-z) -1 \big)  \langle a_y a_z \Psi, J^2 (x) J(y)  a_z \Psi \rangle \Big| \\ &\leq C \sqrt{\rho} \, \| \nu \|_\infty  \int dx dy dz \, |\nabla f_\ell (x-z)|^2 \big( \o_\ell (x-y) + \o_\ell (y-z) \big) \| a_y a_z \Psi \| \| a_z \Psi \| \\ &\leq C \rho^{5/2+\delta} L^3 \,.\end{split} \]
Moreover, we have  
\[ \begin{split} \sqrt{\rho_0}\, &\Big| \int dx dy dz \,  \nu (x-y) |\nabla f_\ell  (x-z)|^2  \langle a_y a_z \Psi, (J^2 (x) -1) J(y)  a_z \Psi \rangle \Big| \\ \leq \; &C \rho^{5/2+\delta} L^3 + C \sqrt{\rho} \Big[ \int_{|x-y| \leq R} dx dy dz dw \,  |\nabla f_\ell  (x-z)|^2 \o_\ell (x-w) \| a_y a_z a_w \Psi \|^2 \Big]^{1/2} \\ &\hspace{3cm} \times \Big[ \int dx dy dz dw \, |\nu (x-y)|^2  |\nabla f_\ell  (x-z)|^2 \o_\ell (x-w) \| a_z a_w \Psi \|^2 \Big]^{1/2} \\ 
\leq \; &C \rho^{5/2+\delta} L^3 \\ &+C \sqrt{\rho} \| \nu \|_2 \| \nabla f_\ell \|_2^2 \Big[ \int_{\substack{|y-z| \leq CR ,\\ |w-z| \leq C \ell}} dy dz dw \, \| a_y a_z a_w \Psi \|^2 \Big]^{\frac{1}{2}} \Big[ \int_{|z-w| \leq C \ell} dz dw \| a_z a_w \Psi \|^2 \Big]^{\frac{1}{2}} \\ \leq \; &C \rho^{5/2+\delta} L^3 \,.\end{split} \]
Thus,
\[ \Big| Q_2 -  \sqrt{\rho_0}  \int dx dy dz \,  \nu (x-y) |\nabla f_\ell  (x-z)|^2  \langle a_y a_z \Psi, J(y) a_z \Psi \rangle \Big| \leq C \rho^{5/2+\delta} L^3\,. \]
Using \eqref{eq:a_to_b} and $\int \nu (x) dx =0$, we can further argue that
    \begin{equation} \label{eq:decomp_Q_121}
     \sqrt{\rho_0}  \int dx dy dz \,  \nu (x-y) |\nabla f_\ell  (x-z)|^2  \langle a_y a_z \Psi, J(y) a_z \Psi \rangle   =Q_{21}+Q_{22}+Q_{23}+Q_{24}
    \end{equation}
    with
    \begin{equation}\label{eq:Q21}
        Q_{21}=\sqrt{\rho_0} \int dxdydz\,|\nabla f_\ell(x-z)|^2 \nu(x-y)  \langle \Psi,  a^*_za^*_ya_z \Psi\rangle ,
    \end{equation}
    and
    \begin{equation*}
        \begin{split}
            Q_{22}=\;&{\rho_0} \int dxdydz\,|\nabla f_\ell(x-z)|^2 \nu(x-y)  \langle \Psi,  a^*_y (J(y)-1)b_z \Psi\rangle \\
            Q_{23}=\;&{\rho_0} \int dxdydz\,|\nabla f_\ell(x-z)|^2 \nu(x-y)  \langle \Psi,  b^*_z a^*_y (J(y)-1) \Psi\rangle \\
            Q_{24}=\;&\sqrt{\rho_0} \int dxdydz\,|\nabla f_\ell(x-z)|^2 \nu(x-y)  \langle \Psi,  b^*_z a^*_y (J(y)-1)b_z \Psi\rangle \, .
        \end{split}
    \end{equation*}
We claim that $Q_{22}, Q_{23}, Q_{24}$ are negligible. In fact, 
    \begin{equation*}
        \begin{split}
            |Q_{22}|\le\;&C \rho^{5/2+\delta} L^3 +  C \rho \bigg( \int dxdydz\,|\nabla f_\ell(x-z)|^2 |\nu(x-y)|^2\langle \Psi, a^*_y a_y\Psi\rangle\bigg)^{1/2}\\
            &\qquad\times \bigg( \int_{|x-y|\le R} dxdydzdw\,|\nabla f_\ell(x-z)|^2 \o_\ell(y-w) \langle \Psi, b^*_z a^*_w a_w b_z\Psi\rangle\bigg)^{1/2}\\
            \le\;& C \rho^{5/2+\delta} L^3 +C \rho \|\nabla f_\ell\|_2^2 \|\nu\|_2 \|\o_\ell\|_1^{1/2} \langle \Psi,\mathcal{N}\Psi\rangle^{1/2}\\
            &\qquad\times\bigg( \rho R^3 \int dz\,\langle\Psi, b^*_z b_z\Psi\rangle + \int_{|z-w|\le CR} dzdw\, \langle \Psi, b^*_z b^*_w b_w b_z\Psi\rangle \bigg)^{1/2}\\
            \le\;& C \rho^{5/2+\delta} L^3 
        \end{split}
    \end{equation*}
    and, analogously, $|Q_{23}| \leq C \rho^{5/2+\delta} L^3$, if $\eps, \delta> 0$ are small enough. 
        Moreover, by \eqref{eq:a_to_b} and  \eqref{eq:kbs} 
    \begin{equation*}
        \begin{split}
            |Q_{24}|\le\;&C \rho^{5/2+\delta} L^3\\ &+ \rho^{1/2} \| \nu \|_\infty \bigg( \int_{|y-x|\le R} dxdydzdw\,|\nabla f_\ell(x-z)|^2 \o_\ell(y-w) \langle \Psi, b^*_z a^*_y a^*_w a_w a_y b_z\Psi\rangle\bigg)^{\frac{1}{2}}\\
            &\times \bigg( \int_{|y-x| \leq R} dxdydzdw\,|\nabla f_\ell(x-z)|^2 \o_\ell(y-w) \langle \Psi, b^*_z a^*_w a_w b_z\Psi\rangle\bigg)^{1/2}\\
            \le\;&C \rho^{5/2+\delta} L^3 + C \rho^{1/2} \,\|\nu\|_\infty \|\nabla f_\ell\|_2^2 \,  \|\o_\ell\|_1^{1/2}\\
            &\quad\times \bigg( \rho R^3  \int_{|y-w| \leq \ell} dy dw \, \|a_y a_w \Psi \|^2 + \int_{\substack{|y-w| \leq \ell \\ |z-y| \leq C R}} dy dz dw \, \| a_z a_y a_w \Psi \|^2 \bigg)^{1/2} \\
 &\quad\times \bigg( \rho R^3 \int dz\,\langle\Psi,b^*_z b_z\Psi\rangle + \int_{|z-w|\le CR } dz dw\,\langle \Psi, b^*_z b^*_w b_w b_z\Psi\rangle\bigg)^{1/2}\\ 
            \le\;& C \rho^{5/2+\delta} L^3
        \end{split}
    \end{equation*}
    if $\eps, \delta > 0$ are small enough. We conclude that $|Q_2 - Q_{21}| \leq C \rho^{5/2+\delta} L^3$, with $Q_{21}$ defined as in (\ref{eq:Q21}). Using again $\int \nu (x) dx = 0$, we write 
        \begin{equation*}
        Q_{21}=Q_{211}+Q_{212}+Q_{213}
    \end{equation*}
    with
    \begin{equation}\label{eq:Q21123}
        \begin{split}
            Q_{211}=\;&{\rho_0} \int dxdydz\,|\nabla f_\ell(x-z)|^2 \nu(x-y)  \langle \Psi,  b^*_z b^*_y \Psi\rangle \\
            Q_{212}=\;&{\rho_0} \int dxdydz\,|\nabla f_\ell(x-z)|^2 \nu(x-y)  \langle \Psi,   b^*_y b_z \Psi\rangle \\
            Q_{213}=\;&\sqrt{\rho_0} \int dxdydz\,|\nabla f_\ell(x-z)|^2 \nu(x-y)  \langle \Psi,  b^*_z b^*_y b_z\Psi\rangle .
        \end{split}
    \end{equation}
    The last term satisfies 
    \begin{equation*}
        \begin{split}
            |Q_{213}|\le\;& C \rho^{5/2+\delta} L^3 \\ &+ C \rho^{1/2} \|\nabla f_\ell\|_2^2 \|\nu\|_2 \bigg(\int dz\,\langle\Psi, b^*_z b_z\Psi\rangle\bigg)^{1/2} \bigg( \int_{|y-z|\le R} dydz\, \langle \Psi, b^*_y b^*_z b_z b_y\Psi\rangle \bigg)^{1/2}\\
            \le\;& C \rho^{5/2+\delta} L^3
        \end{split}
    \end{equation*}
    which follows from \eqref{eq:rmkwtN} and \eqref{eq:kbs}. For $Q_{211}$ we recall from (\ref{eq:def-cc})  that $b_x^* = c^* (\gamma_x) + c (\sigma_x)$ for $x = z,w$. Using $\gamma * \nu = \sigma$, (\ref{eq:decay}) to restrict the integral to $|z-y| \leq R$ and Lemma \ref{lm:c's}, we find \begin{equation*}
        \begin{split}
            \bigg| Q_{211}- &\rho_0 L^3 \int dx\,|\nabla f_\ell(x)|^2  (\sigma*\sigma)(x)\bigg|\\
            \le\;& C \rho L^{3/2} \|\nabla f_\ell\|_2^2\|\nu\|_2 \bigg( \int_{|y-z|\leq R} dydz\,\Big( \| c(\gamma_z) c(\gamma_y)\Psi\|^2+ \|c(\sigma_z) c(\sigma_y)\Psi\|^2 \Big) \bigg)^{1/2}\\
            &+C \rho \|\nabla f_\ell\|_2^2 \|\nu\|_1 \bigg(\int dz\,\| c(\gamma_z)\Psi\|^2\bigg)^{1/2}\bigg(\int dz\,\| c(\sigma_y)\Psi\|^2\bigg)^{1/2} + C \rho^{5/2+\delta} L^3 \\
            \le\;& C \rho^{5/2+\delta} L^3 
        \end{split}
    \end{equation*}
    if $\eps, \delta> 0$ are small enough. Similarly,
    \begin{equation*}
        \begin{split}
            \bigg| Q_{212}- &\rho_0 L^3 \int dx\, |\nabla f_\ell(x)|^2 \big(\nu*\sigma*\sigma\big)(x) \bigg|\\
            \le\;&C \rho^{5/2+\delta} L^3 +  C \rho \|\nabla f_\ell\|_2^2 \|\nu\|_1 \int dy\, \Big( \| c(\gamma_y) \Psi \|^2 + \| c (\sigma_y) \Psi \|^2 \Big) \\
            &+ C \rho L^{3/2} \|\nabla f_\ell\|_2^2 \|\nu\|_2 \bigg( \int_{|y-z|\le R} dydz\, \big\|c(\gamma_y) c(\sigma_z)\Psi\big\|^2\bigg)^{1/2}\\
            \le\;& C \rho^{5/2+\delta} L^3        \end{split}
    \end{equation*}
    if $\eps, \delta > 0$ are small enough. Thus, 
    \begin{equation*}
    \begin{split}
        \Big|Q_{2}-&\rho_0 L^3 \int dx\,|\nabla f_\ell(x)|^2 (\sigma*\sigma)(x)- \rho_0 L^3 \int dx\, |\nabla f_\ell(x)|^2 \big(\nu*\sigma*\sigma\big)(x)\Big| \le C\rho^{5/2+\delta} L^3 
    \end{split}
    \end{equation*}
    if $\eps, \delta> 0$ are small enough. Together with \eqref{eq:Q1-Q4}, \eqref{eq:Q4fin}, \eqref{eq:Q3fin}, \eqref{eq:Q1fin} and with the observation that 
    \begin{equation*}
    \begin{split}
        \rho_0 L^3 \int& dx\,|\nabla f_\ell(x)|^2\nu(x)+\rho_0 L^3 \int dx\, |\nabla f_\ell(x)|^2 \big(\nu*\sigma*\sigma\big)(x)\\ &\hspace{4cm} 
        = \rho_0 L^3 \int dx\,|\nabla f_\ell(x)|^2(\gamma*\sigma)(x)
    \end{split}
    \end{equation*}
    this completes the proof of the proposition.
    \qed

\subsection{Proof of Proposition \ref{prop:2-3}}
\label{subsec:2-3} 

From \eqref{eq:Phi_123}, we have
    \begin{equation*}
    \begin{split}
    \int dx &\langle \Phi_2 (x) , \Phi_3 (x) \rangle \\ = \; &\int dx dy dz dw \, \nabla \nu (x-y) \nu (x-z) \frac{\nabla f_\ell (x-w)}{f_\ell (x-w)} \langle J(y) \Psi , a_y a_w^* a_w J(x)^2 a_z^* J(z) \Psi \rangle  \\ = \; & \int dx dy dz \, \nabla \nu (x-y) \nu (x-z) \frac{\nabla f_\ell (x-y)}{f_\ell (x-y)} \langle J(y) \Psi , a_y J(x)^2 a_z^* J(z) \Psi \rangle \\ &+ \int dx dy dz dw \, \nabla \nu (x-y) \nu (x-z) \frac{\nabla f_\ell (x-w)}{f_\ell (x-w)} \langle J(y) \Psi, a_w^* a_w a_y J (x)^2 a_z^* J(z) \Psi \rangle\,. \end{split} \end{equation*} 
    Using again the identity (\ref{eq:pull}) and the canonical commutation relations to move the operator $a_z^*$ to the left, we obtain 
        \[ \begin{split} 
    &\int dx \langle \Phi_2 (x) , \Phi_3 (x) \rangle \\ 
     &= \int dx dy \nabla \nu ( x-y) \nu (x-y) \nabla f_\ell (x-y) f_\ell (x-y) \langle \Psi, J (x)^2 J (y)^2 \Psi \rangle \\  &+\frac{1}{2} \int dx dy dz \, \nabla \nu (x-y) \nu (x-z) \nabla f^2_\ell (x-y) f^2_\ell (x-z) f^2_\ell (y-z) \langle a_z \Psi , J (x)^2 J(y) J(z) a_y \Psi \rangle \\
    &+ \frac{1}{2} \int dx dy dz \, \nabla \nu (x-y) \nu (x-y) \nabla f^2_\ell (x-z)  f^2_\ell (x-y) f^2_\ell (y-z) \langle a_z \Psi , J (x)^2 J (y)^2  a_z \Psi \rangle \\
    &+ \frac{1}{2} \int dx dy dz \, \nabla \nu (x-y) \nu (x-z) \nabla f^2_\ell (x-z) f^2_\ell (x-y) f^2_\ell (y-z) \langle a_z \Psi , J (x)^2 J(y) J(z) a_y \Psi \rangle \\
    &+ \frac{1}{2} \int dx dy dz dw  \, \nabla \nu (x-y) \nu (x-z) \nabla f^2_\ell (x-w) f^2_\ell (x-y) f^2_\ell (x-z) f^2_\ell (y-z) \\ &\hspace{5cm} \times f_\ell (w-y) f_\ell (w-z) \langle a_w a_z \Psi , J (x)^2 J(y) J(z) a_w a_y \Psi \rangle \\
    &=: R_1 + R_2 + R_3 + R_4 + R_5\,. \end{split} \]
With $1- J(x)^2 J(y)^2 \leq C d\Gamma (\o_{\ell,x} + \o_{\ell,y})$ and with the usual estimates from Lemma \ref{lm:eta} and Prop. \ref{prop:N_on_psi}, we can bound
\begin{equation}\label{eq:R1fin} \begin{split} \Big| &R_1 - L^3 \int dx \nabla \nu (x) \nu (x) \nabla f_\ell (x) f_\ell (x) \Big| \\ &\leq \int dx dy dw \, |\nabla \nu (x-y)| |\nu (x-y)| |\nabla f_\ell (x-y)| \big( \o_\ell (x-w) + \o_\ell (y-w) \big) \langle \Psi, a_w^* a_w \Psi \rangle \\ &\leq C \| \nabla \nu \|_\infty \| \nu \|_\infty \| \nabla f_\ell \|_1 \| \o_\ell \|_1 \langle \Psi, \cN \Psi \rangle \leq C \rho^{5/2+\delta} L^3\,. \end{split} \end{equation} 
To estimate $R_2$, we observe that 
\[ \begin{split} 
\Big|  \int &dx dy dz \, \nabla \nu (x-y) \nu (x-z) \nabla f^2_\ell (x-y) \big( f^2_\ell (x-z) f^2_\ell (y-z)-1 \big)  \\ &\hspace{7cm} \times \langle a_z \Psi , J (x)^2 J(y) J(z) a_y \Psi \rangle \Big| \\
\leq\; &C \int dx dy dz |\nabla \nu (x-y)| |\nu (x-z)| |\nabla f_\ell (x-y)| \big( \o_\ell (x-z) + \o_\ell (y-z) \big) \| a_z \Psi \| \| a_y \Psi \| \\
\leq\; &C \| \nabla \nu \|_\infty \| \nu \|_\infty \| \nabla f_\ell \|_1 \| \o_\ell \|_1 \langle \Psi, \cN \Psi \rangle \leq C \rho^{5/2+\delta} L^3 \,. 
\end{split} \]
Decomposing now \[ J(x)^2 J(y) J(z) - 1 = (J(x)^2 -1) J(y) J(z) + (J(y) - 1) J(z) + (J(z) -1) \] and focusing for example on the contribution proportional to $1- J(x)^2 \leq C d\Gamma (\o_{\ell,x})$, we can bound, by Cauchy-Schwarz and applying (\ref{eq:decay}) to estimate the contribution from $|z-w| > R$, 
\[ \begin{split} 
\Big|  \int &dx dy dz \, \nabla \nu (x-y) \nu (x-z) \nabla f^2_\ell (x-y)  \langle a_z \Psi , (J (x)^2 -1) J(y) J(z) a_y \Psi \rangle \Big| \\
\leq\; &C \rho^{5/2+\delta} L^3 + C \Big[ \int_{|z-w| \leq R} dx dy dz |\nabla \nu (x-y)| |\nabla f_\ell (x-y)|^2 \o_\ell (x-w) \| a_z a_w \Psi \|^2 \Big]^{1/2} \\ &\hspace{2cm} \times \Big[  \int_{|y-w| \leq C \ell} dx dy dz |\nabla \nu (x-y)| |\nu (x-z)|^2 \o_\ell (x-w) \| a_y a_w \Psi \|^2 \Big]^{1/2}   \\
\leq\; &C \rho^{5/2+\delta} L^3 + C \| \nabla \nu \|_\infty \| \nu \|_2 \| \nabla f_\ell \|_2 \| \o_\ell \|_1 \Big[ \int_{|z-w| \leq R} dzdw \, \| a_z a_w \Psi \|^2 \Big]^{\frac{1}{2}}\\ &\hspace{7cm} \times \Big[ \int_{|y-w| \leq C \ell} dydw \, \| a_y a_w \Psi \|^2 \Big]^{\frac{1}{2}} \\  \leq \; &C \rho^{5/2+\delta} L^3 
\end{split} \]
if $\eps, \delta > 0$ are small enough, as it follows from \eqref{eq:kas} and \eqref{eq:aaaaC}. The contributions proportional to $J(y)-1$ and $J(z) -1$ can be bounded similarly. Recalling that $\int \nu (x) dx = 0$, we obtain 
\[ \Big| R_2 - \frac{1}{2} \int dx dy dz \, \nabla \nu (x-y) \nu (x-z) \nabla f^2_\ell (x-y)  \langle \Psi , b_z^* b_y \Psi \rangle \Big| \leq C \rho^{5/2+\delta} L^3\,. \]
Recalling (\ref{eq:def-cc}), we write now $b^*_z = c^* (\gamma_z) + c (\sigma_z)$ and similarly for $b_y$. With Lemma~\ref{lm:c's}, we can control the resulting terms, except the commutator $[c (\sigma_z), c^* (\sigma_y)] = (\sigma * \sigma) (z-y)$, similarly as we did for the term $Q_{211}$ in (\ref{eq:Q21123}). We conclude that 
\begin{equation}\label{eq:R2fin} \Big| R_2 - L^3 \int dx \, \nabla \nu (x)  (\nu*\sigma * \sigma) (x) \nabla f_\ell (x) f_\ell (x)   \Big| \leq C \rho^{5/2+\delta} L^3\,. \end{equation}

The term $R_3$ is negligible, since, by Prop. \ref{prop:N_on_psi}, 
\[ \begin{split}  |R_3| &\leq C \int dx dy dz \, |\nabla \nu (x-y)| |\nu (x-y)| |\nabla f_\ell (x-z)| \| a_z \Psi \|^2 \\ &\leq \| \nabla \nu \|_2 \| \nu \|_2 \| \nabla f_\ell \|_1 \langle \Psi, \cN \Psi \rangle \leq C \rho^{5/2+\delta} L^3\end{split} \]
if $\eps, \delta > 0$ are small enough. 

Proceeding as we did above for $R_2$, we can show that 
\[ \Big| R_4 -  \frac{1}{2} \int dx dy dz \, \nabla \nu (x-y) \nu (x-z) \nabla f^2_\ell (x-z)  \langle \Psi , b_z^* b_y \Psi \rangle \Big| \leq C \rho^{5/2+\delta} L^3\,. \]
Since 
\[ \begin{split} \Big| \int dx dy dz \, &\nabla \nu (x-y) \nu (x-z) \nabla f^2_\ell (x-z)  \langle \Psi , b_z^* b_y \Psi \rangle \Big| \\ &\leq \; \| \nu \|_\infty  \int dx dy dz |\nabla \nu (x-y)| |\nabla f_\ell (x-z)| \| b_z \Psi \|^2 \\ &\leq \| \nu \|_\infty \| \nabla \nu \|_1 \| \nabla f_\ell \|_1 \int dx \| b_x \Psi \|^2 \leq C \rho^{5/2+\delta} L^3\end{split} \]
if $\eps , \delta > 0$ are small enough, we conclude that $|R_4| \leq  C \rho^{5/2+\delta} L^3$. 

Finally, we estimate $R_5$. Similarly as in the analysis of $R_2, R_4$, we first control the contribution of the difference 
\[ \begin{split} 1- f^2_\ell (x-y)  f^2_\ell (x-z)  &f^2_\ell (z-y)  f_\ell (w-y)  f_\ell (w-z) \\ &\leq C \big( \o_\ell (x-y) + \o_\ell (x-z)  + \o_\ell (z-y) + \o_\ell (w-y) +\o_\ell (w-z) \big)\,. \end{split} \]
Focussing for example on the term proportional to $\o_\ell (x-y)$, its contribution to $R_5$ can be estimated by 
\[ \begin{split}  \int &dx dy dz dw \, |\nabla \nu (x-y)| |\nu (x-z)| |\nabla f_\ell (x-w)| \o_\ell (x-y) \| a_w a_z \Psi \| \| a_w a_y\Psi \| \\ \leq \; & C \rho^{5/2+\delta} L^3 + \| \nabla \nu \|_\infty \Big[ \int_{|w-z| \leq R} dx dy dz dw \, |\nabla f_\ell (x-w)|^2  \o_\ell (x-y) \| a_w a_z \Psi \|^2 \Big]^{1/2} \\ &\hspace{1cm} \times \Big[  \int_{|w-y| \leq C \ell}  dx dy dz dw \,  |\nu (x-z)|^2 \o_\ell (x-y)  \| a_w a_y\Psi \| \Big]^{1/2} \\ \leq \; &C \rho^{5/2+\delta} L^3 + \| \nabla \nu \|_\infty \| \nu \|_2 \| \nabla f_\ell \|_2 \| \o_\ell \|_1 \Big[ \int_{|w-z| \leq R} dz dw \, \| a_w a_z \Psi \|^2 \Big]^{1/2}\\ &\hspace{5cm} \times \Big[ \int_{|w-y| \leq C\ell} dy dw \, \| a_w a_y \Psi \|^2 \Big]^{1/2} \\
\leq \; &C \rho^{5/2+\delta} L^3\,.
\end{split} \] 
Most of the other contributions can be bounded similarly. Only the term proportional to $\o_\ell (y-z)$ requires a different procedure. In this case, we decompose 
\[ \begin{split}  \int &dx dy dz dw \, |\nabla \nu (x-y)| |\nu (x-z)| |\nabla f_\ell (x-w)| \o_\ell (y-z) | \langle a_w a_z \Psi , a_w a_y \Psi \rangle |  \\ \leq \; &\rho  \int dx dy dz dw \, |\nabla \nu (x-y)| |\nu (x-z)| |\nabla f_\ell (x-w)| \o_\ell (y-z) \|  a_w \Psi \|^2 \\ &+ \sqrt{\rho}  \int dx dy dz dw \, |\nabla \nu (x-y)| |\nu (x-z)| |\nabla f_\ell (x-w)| \o_\ell (y-z) | \langle a_w \Psi , a_w b_y \Psi \rangle \\  &+ \sqrt{\rho}  \int dx dy dz dw \, |\nabla \nu (x-y)| |\nu (x-z)| |\nabla f_\ell (x-w)| \o_\ell (y-z) | \langle a_w b_z \Psi , a_w \Psi \rangle \\ &+ \int dx dy dz dw \, |\nabla \nu (x-y)| |\nu (x-z)| |\nabla f_\ell (x-w)| \o_\ell (y-z) | \langle a_w b_z \Psi , a_w b_y \Psi \rangle | \\ = \; &R_{51}+ R_{52} +R_{53} +R_{54}\,. \end{split} \] 
By Cauchy-Schwarz, we have 
\[ R_{51} \leq \rho \| \nabla \nu \|_2 \| \nu \|_2 \| \nabla f_\ell \|_1 \| \o_\ell \|_1 \langle \Psi, \cN \Psi \rangle \leq C \rho^{5/2+\delta}L^3\,. \]
Moreover, 
\[ \begin{split} R_{52} \leq\; &C \rho^{5/2+\delta} +  \sqrt{\rho} \, \| \nu \|_\infty 
\Big[ \int_{|w-y| \leq R}  dx dy dz dw \, |\nabla f_\ell (x-w)| \o_\ell (y-z) | \| a_w b_y \Psi \|^2 \Big]^{1/2} \\ &\hspace{1.5cm} \times 
\Big[ \int dx dy dz dw \, |\nabla \nu (x-y)|^2 |\nabla f_\ell (x-w)| \o_\ell (y-z) | \| a_w \Psi \|^2  \Big]^{1/2}  \\ 
\leq \; &C \rho^{5/2+\delta} L^3 + C \sqrt{\rho} \| \nu \|_\infty  \| \nabla \nu \|_2 \| \nabla f_\ell \|_1 \| \o_\ell \|_1 \langle \Psi, \cN \Psi \rangle^{\frac{1}{2}} \\ &\hspace{3cm} \times  \Big[ \rho R^3 \int dw \| b_w \Psi \|^2 + \int_{|w-y| \leq R} dwdy \, \| b_w b_y \Psi \|^2 \Big]^{\frac{1}{2}} \\ \leq \; &C \rho^{5/2+\delta} L^3 
 \end{split} \]
and similarly for $R_{53}$. As for $R_{54}$, we use Lemma \ref{lm:a-bds} to estimate
\[ \begin{split} R_{54} \leq  \; &C \rho^{5/2+\delta}L^3 + \| \nabla \nu \|_\infty \| \nu \|_\infty \| \nabla f_\ell \|_1 \| \o_\ell \|_1 \int_{|w-z| \leq R} dw dz \, \| b_z a_w \Psi \|^2 \\ \leq \; &C \rho^{5/2+\delta}L^3  + \| \nabla \nu \|_\infty \| \nu \|_\infty \| \nabla f_\ell \|_1 \| \o_\ell \|_1 \\ &\hspace{4cm} \times \Big[ \rho R^3 \int dz \| b_z \Psi \|^2 + \int_{|w-z| \leq R} dw dz \, \| b_z b_w \Psi \|^2 \Big] \\ \leq \; &C\rho^{5/3+\delta} L^3 \,. \end{split} \]
Thus, 
\[ \begin{split} \Big| R_5 - \frac{1}{2} \int dx dy dz dw  \, \nabla \nu (x-y) \nu (x-z) \nabla f^2_\ell (x-w)  \langle a_w a_z \Psi , J (x)^2 &J(y) J(z) a_w a_y \Psi \rangle \Big| \\ &\leq C \rho^{5/2+\delta}L^3\,. \end{split} \]
Next, we write $1- J(x)^2 J(y) J(z) = (1 - J(x)^2) J(y) J(z) + (J(y)-1) J(z) + J(z) -1$. We focus on the contribution proportional to $1- J(z) \leq C d\Gamma (\o_{\ell,z})$. Observing that $\int \nabla f_\ell^2 (x-w) dw = 0$, we find 
\[ \begin{split}  \int dx dy &dz dw  \, \nabla \nu (x-y) \nu (x-z) \nabla f^2_\ell (x-w)  \langle a_w a_z \Psi ,  (J(z) -1) a_w a_y \Psi \rangle \\ = \; &\sqrt{\rho}  \int dx dy dz dw  \, \nabla \nu (x-y) \nu (x-z) \nabla f^2_\ell (x-w)  \langle  a_z \Psi ,  (J(z) -1) b_w a_y \Psi \rangle \\ &+  \sqrt{\rho}  \int dx dy dz dw  \, \nabla \nu (x-y) \nu (x-z) \nabla f^2_\ell (x-w)  \langle  a_z b_w \Psi ,  (J(z) -1) a_y \Psi \rangle \\ &+  \int dx dy dz dw  \, \nabla \nu (x-y) \nu (x-z) \nabla f^2_\ell (x-w)  \langle  a_z b_w  \Psi ,  (J(z) -1) b_w a_y \Psi \rangle\,. \end{split} \] 
We estimate the last term; the other two can be bounded similarly. With Cauchy-Schwarz, we have  
\[ \begin{split}  &\Big| \int dx dy dz dw  \, \nabla \nu (x-y) \nu (x-z) \nabla f^2_\ell (x-w)  \langle  a_z b_w  \Psi ,  (J(z) -1) b_w a_y \Psi \rangle \Big| \\ \leq \; &C \rho^{5/2+\delta}L^3 + \| \nu \|_\infty \Big[  \int_{\substack{|z-t| \leq C \ell ,\\ |z-w| \leq R}} dx dy dz dw dt  \,  |\nabla \nu (x-y)|^2 |\nabla f_\ell (x-w)|   \|  a_z b_w a_t \Psi \|^2 \Big]^{1/2} \\ &\hspace{.5cm} \times \Big[  \int_{\substack{|y-w| \leq R , \\ |y-t| \leq R}} dx dy dz dw dt  \, |\nabla f_\ell (x-w)| \o_\ell (z-t)  \| a_y b_w a_t \Psi \|^2 \Big]^{1/2} \\
\leq \; &C \rho^{5/2+\delta}L^3 + C \| \nu \|_\infty \| \nabla \nu \|_2 \| \nabla f_\ell \|_1 \| \o_\ell \|_1^{1/2} \\ &\hspace{3cm} \times \Big[ \rho R^3 \int_{|z-t| \leq C \ell} dz dt \| a_z a_t \Psi \|^2 + \int_{\substack{|z-t| \leq C \ell ,\\ |z-w| \leq R}}  dz dt dw \| a_z a_t a_w \Psi \|^2 \Big]^{1/2} \\ 
& \times \Big[ \rho^2 R^6 \int dw \| b_w \Psi \|^2 + \rho R^3 \int_{|y-w| \leq R} \hskip -.5cm dw dy \| b_w b_y \Psi \|^2 + \int_{\substack{|y-w| \leq R , \\ |y-t| \leq R}} dw dy dt \| b_w b_y b_t \Psi \|^2 \Big]^{1/2}\\
\leq \; &C \rho^{5/2+\delta} L^3   \,.
 \end{split} \] 
 The contributions proportional to $1-J(y)$ and $1-J(x)^2$ can be treated similarly. Recalling that $\int \nu (x) dx = \int \nabla \nu (x) dx = 0$, we obtain 
 \[ \Big| R_5 - \frac{1}{2} \int dx dy dz dw  \, \nabla \nu (x-y) \nu (x-z) \nabla f^2_\ell (x-w)  \langle a_w b_z \Psi , a_w b_y \Psi \rangle \Big| \leq C \rho^{5/2+\delta} L^3 \,. \]
Estimating 
\[ \begin{split} \Big| \int dx &dy dz dw  \, \nabla \nu (x-y) \nu (x-z) \nabla f^2_\ell (x-w)  \langle a_w b_z \Psi , a_w b_y \Psi \rangle \Big| \\ \leq \; &C \rho^{5/2+\delta}L^3  + \Big[\int_{|z-w| \leq R} dx dy dz dw \, |\nabla \nu (x-y)|^2 |\nabla f_\ell (x-w)|  \| b_z a_w \Psi \|^2 \Big]^{1/2} \\ &\hspace{1cm} \times \Big[ \int_{|y-w| \leq R} dx dy dz dw \, |\nu (x-z)|^2 |\nabla f_\ell (x-w)| \| b_y a_w \Psi \|^2 \Big]^{1/2} \\ \leq \; &C \rho^{5/2+\delta}L^3  + \| \nabla \nu \|_2 \| \nu \|_2 \| \nabla f_\ell \|_1 \Big[ \rho R^3 \int dz \, \| b_z \Psi \|^2 + \int_{|z-w| \leq R} dz dw \, \| b_z b_w \Psi \|^2 \Big] \\ \leq \; &C \rho^{5/2+\delta} L^3 \end{split} \]
we conclude that $|R_5| \leq C \rho^{5/2+\delta} L^3$. 

Summarizing, we proved that 
\[ \Big| \int dx \langle \Phi_2 (x) , \Phi_3 (x) \rangle - L^3 \int dx \nabla \nu (x) \big[ \nu (x) + (\nu * \sigma * \sigma) (x) \big] \nabla f_\ell (x) f_\ell (x) \Big| \leq C \rho^{5/2+\delta} L^3 \,.\]
Recalling that $\nu = \gamma^{-1} * \sigma$ and that $\sigma * \sigma = \gamma * \gamma -1$, we find $\nu + \nu * \sigma * \sigma = \gamma * \sigma$. Hence, 
\begin{equation}\label{eq:phi23fin} 2 \int dx \langle \Phi_2 (x) , \Phi_3 (x) \rangle \leq 2 L^3  \int dx \, \nabla \nu (x) (\gamma * \sigma) (x) \nabla f_\ell (x) f_\ell (x)  + C \rho^{5/2+\delta} L^3\,. \end{equation} 
Next, we observe that  
\[ \begin{split} 
\Big| \int dx \, ((\gamma^{-1} - 1) * \nabla \sigma) &(x) (\gamma * \sigma) (x) \nabla f_\ell (x) f_\ell (x) \Big| \\ &\leq \| \nabla f_\ell \|_1 \| (\gamma^{-1} - 1) * \nabla \sigma \|_\infty \| \gamma * \sigma \|_\infty \leq C \rho^{5/2+\delta} \end{split} \]
if $\eps , \delta > 0$ are small enough. Here, we used Lemma \ref{lm:eta} to estimate $\| (\gamma^{-1} - 1) * \nabla \sigma \|_\infty \leq \| \gamma^{-1} - 1\|_2 \| \nabla \sigma \|_2 \leq C \rho^{7/4}$ and $\| \gamma * \sigma \|_\infty \leq \| \gamma - 1 \|_2 \| \sigma \|_2 + \| \sigma \|_\infty \leq C \rho$. Moreover, again by Lemma \ref{lm:eta}, we have 
\[ \begin{split} \Big| \int dx \,  \nabla \sigma (x) ((\gamma -1) * \sigma) (x) &\nabla f_\ell (x) f_\ell (x) \Big| \\ &\leq \| \nabla f_\ell \|_1 \| \nabla \sigma \|_\infty \| \gamma-\mathbbm{1} \|_2 \| \sigma \|_2 \leq C \rho^{5/2} \ell^{-1} \leq C \rho^{5/2+\delta} \,.\end{split}  \]
Hence, we conclude from (\ref{eq:phi23fin}) that 
\[ 2 \int dx \langle \Phi_2 (x) , \Phi_3 (x) \rangle \leq 2 L^3 \int dx \, \nabla \sigma (x) \sigma (x) \nabla f_\ell (x) f_\ell (x) + C \rho^{5/2+\delta} L^3. \]
\qed

\appendix

\section{Proof of Lemma \ref{lm:eta}} 
\label{sec:eta}

\begin{proof}
Given a sequence $\widehat{f} = \{ \widehat{f}_k \}_{k \in \Lambda^*}$, on the dual lattice $\Lambda^* = 2\pi \bZ^3 / L$, we define its discrete derivative $\delta_j \widehat{f}$ in the $e_j$ direction (for $j=1,2,3$), setting 
\[ (\delta_j \widehat{f})_k := \frac{L}{2\pi} \big( \widehat{f}_{k+ \frac{2\pi}{L} e_j} - \widehat{f}_k \big)\,. \]
Here $\{ e_1, e_2, e_3 \}$ is the canonical basis of $\bR^3$. 

From the elementary inequality $|e^{-iy} - 1| / |y| \geq 2/\pi$, for all $y \in [-\pi, \pi]$, we obtain the bound
   \begin{equation} \label{Eq:Fourier_position_multiplication}
        |x_j^m f(x)|\le \Big(\frac{\pi}{2}\Big)^m\Big| \Big(\frac{L}{2\pi}\Big)^m\big(e^{-i\frac{2\pi}{L}x_j}-1\big)^m f(x)\Big|=\Big(\frac{\pi}{2}\Big)^m  \big| (\widecheck{\delta_j \widehat{f}} ) (x)\big|
    \end{equation}
which will be repeatedly used in this proof. Here 
\[ (\widecheck{\delta_j \widehat{f}}) (x) = \frac{1}{L^3} \sum_{k \in \Lambda^*} (\delta_j \widehat{f})_k e^{ik \cdot x} \]
denotes the periodic function on $\Lambda$, with Fourier coefficients $\delta_j \widehat{f}$. 

We are going to apply (\ref{Eq:Fourier_position_multiplication}) to estimate the kernels $\tilde{\sigma}, \sigma, \gamma -\mathbbm{1}, \gamma^{-1} * \sigma$ in position space. We consider first the kernel $s$, defined by the Fourier coefficients (\ref{eq:hatsk}). To bound the discrete derivatives of $\widehat{s}_k$, it is convenient to define for $k\in \mathbb R^3\setminus \{0\}$
\[ \widehat{s} (k) =  - \frac{k^2 + \rho_0 \widehat{V}_\text{eff} (k) -  \sqrt{|k|^4 + 2k^2 \rho_0 \widehat{V}_\text{eff} (k)}}{\sqrt{\rho_0^2 \widehat{V}_\text{eff} (k)^2 - \Big( k^2 + \rho_0 \widehat{V}_\text{eff} (k) -  \sqrt{|k|^4 + 2k^2 \rho_0 \widehat{V}_\text{eff} (k)}\Big)^2}} \]
with $\widehat{V}_\text{eff}(k) = 8\pi \sin (|k| \frak{a}) /|k|$ and $\widehat{s}(0):=0$, as a function on $\bR^3$, and to control its regular derivatives. For $m \in \bN$, we find 
\begin{align}
\label{Eq:Decay_F_1}
\left| \nabla^m \widehat{s} (k)\right|  & \lesssim  \rho^{\frac{1}{4}} \, |k|^{-m-\frac{1}{2}}  \qquad \hspace{1cm} \text{for $|k|\leq \rho^{\frac{1}{2}}$},\\
    \label{Eq:Decay_F_2}
\left|\nabla^m \widehat{s} (k)\right| & \lesssim \rho \,  |k|^{-m-2} \qquad\hspace{1.3cm} \text{for $\rho^{1/2} \leq |k|\leq 1$}\,, \\
  \label{Eq:Decay_F_3}
\left|\nabla^m \widehat{s} (k)\right| & \lesssim \rho  \, |k|^{-3} \qquad \hspace{1.8cm}\text{for $|k|\geq 1$} \, . 
     \end{align}
which in particular imply the bound (\ref{eq:sfourier}), choosing $m=0$. To prove \eqref{Eq:Decay_F_1}, we write $\widehat{s} (k) = (k^2 / \rho_0 \widehat{V}_\text{eff} (k))^{-1/4} F_1 \big( \sqrt{k^2/ \rho_0 V_\text{eff} (k)} \big)$, where 
\[ F_1 (x) = - \frac{x^2 + 1 - x \sqrt{x^2 + 2}}{\sqrt{2 (x^2 + 1) \sqrt{x^2 +2} - 2 x^3 -4 x}}\]
is a smooth function on $\bR$. Together with the bound 
\[ \Big|  \nabla^n  \Big( \frac{k^2}{\rho_0 \widehat{V}_\text{eff} (k)} \Big)^{\alpha}  \Big| \lesssim \rho_0^{-\alpha} |k|^{2\alpha - n}  \]
valid for $\alpha \in \bR$, and with the chain rule, we arrive at \eqref{Eq:Decay_F_1}. As for \eqref{Eq:Decay_F_2} and \eqref{Eq:Decay_F_3}, we write $\widehat{s} (k) = (\rho_0 \widehat{V}_\text{eff}(k) / k^2) F_2 (\rho_0 
\widehat{V}_\text{eff} (k) / k^2)$, with 
\[ F_2 (x) = - \frac{\frac{1+x-\sqrt{1+2x}}{x^2}}{\sqrt{1 - x^2 \Big( \frac{1+x-\sqrt{1+2x}}{x^2} \Big)^2}}\,. \]
Since $x \to (1+x-\sqrt{1+2x}) / x^2$, is analytic in a neighborhood of $x = 0$, \eqref{Eq:Decay_F_2}, \eqref{Eq:Decay_F_3} follow from the bounds 
\begin{equation}\label{eq:nablanx} \Big| \nabla^n  \Big( \frac{\rho_0 \widehat{V}_\text{eff} (k)}{k^2} \Big) \Big| \lesssim \left\{ \begin{array}{ll} \rho |k|^{-2-n} \quad &\text{if $\rho^{1/2} \leq |k| \leq 1$} \\ \rho |k|^{-3} \quad &\text{if $|k| \geq 1$} \end{array} \right. \end{equation}  
and from the chain rule. 

For $|k|\geq 1$,  the estimate (\ref{Eq:Decay_F_3}) can be improved significantly by subtracting the leading order term. We find 
    \begin{align} 
         \label{Eq:Decay_F_5}
  \Big| \nabla^m \Big[ \widehat{s}(k)+\frac{\rho_0\widehat{V}_\mathrm{eff}(k)}{2|k|^2}\Big] \Big| & \lesssim \frac{\rho^2}{|k|^6},  \ \ \  \ \ \ \  \ \ \textrm{  for $|k|\geq 1$}.
  \end{align}
Since the slow decay in $k$ on the r.h.s. of (\ref{eq:nablanx}) (in the regime $|k| \geq 1$) is due to the lack of decay of the derivatives of $\widehat{V}_\text{eff}$, we can also improve (\ref{Eq:Decay_F_3}) dividing $\widehat{s} (k)$ by $\widehat{V}_\text{eff} (k)$. We obtain 
  \begin{align}
  \label{Eq:Decay_F_4}
    \Big| \nabla^m \Big[ \frac{\widehat{s}(k)}{\widehat{V}_\mathrm{eff}(k)} \Big] \Big| \leq \Big| \nabla^m \Big[ \frac{\widehat{s}(k)}{\widehat{V}_\mathrm{eff}(k)} + \frac{\rho_0}{2|k|^2}\Big] \Big|+\Big|\nabla^m\Big[ \frac{\rho_0}{2|k|^2} \Big] \Big| & \lesssim \frac{\rho^2}{|k|^5} + \frac{\rho}{|k|^{2+m}} 
\end{align}
for all $|k| \geq 1$. Writing 
\[ (\delta_j \widehat{s})_k  = \int_0^{1} d\tau \,  (\nabla_j \widehat{s}) \, \big( k+ 2\pi \tau e_j / L \big)  \]
and similarly for the higher derivatives, we conclude that the bounds \eqref{Eq:Decay_F_1}, \eqref{Eq:Decay_F_3}, \eqref{Eq:Decay_F_5}, \eqref{Eq:Decay_F_4} also translate into similar estimates for the discrete derivatives of the Fourier coefficients $\widehat{s}_k$. We obtain, for all $j =1,2,3$, $m \in \bN$,  
\begin{equation}\label{eq:bdsk1} \begin{split} 
\big| (\delta_j^m \widehat{s})_k \big| &\lesssim \rho^{\frac{1}{4}} \, |k|^{-m-\frac{1}{2}}  \hspace{1cm} \text{if $|k| \leq \rho^{1/2}$} \\ 
\big| (\delta_j^m \widehat{s})_k \big| &\lesssim \rho \, |k|^{-m-2} \hspace{1.3cm} \text{if $\rho^{1/2} \leq |k| \leq 1$} \\ 
\big| (\delta_j^m \widehat{s})_k \big| &\lesssim  \rho \, |k|^{-3} \hspace{1.8cm}  \text{if $|k| \geq 1$} \end{split} \end{equation}
and also 
\begin{equation}\label{eq:bdsk2} \begin{split}  \Big| \delta_j^m \Big[ \widehat{s}_k + \frac{\rho_0 \widehat{V}_\text{eff} (k)}{2k^2} \Big] \Big|  \lesssim \frac{\rho^2}{|k|^6} , \qquad 
 \Big| \delta_j^m \Big[ \frac{\widehat{s}_k}{\widehat{V}_\text{eff} (k)} \Big] \Big| \leq \frac{\rho^2}{|k|^5} + \frac{\rho}{|k|^{2+m} } \end{split} \end{equation} 
if $|k| \geq 1$.  

Using the estimates (\ref{eq:bdsk1}), (\ref{eq:bdsk2}), we now show the first pointwise bound in (\ref{eq:s-point}). To this end, we first claim that 
  \begin{align}  
    \label{Eq:five_over_two_decay}
      \left|{s}(x)\right| & \lesssim\frac{\rho}{|x|}\frac{1}{(\rho^{1/2}|x|)^{3/2}} 
    \end{align}
    for all $|x|\ge \rho^{-1/2}$. To prove (\ref{Eq:five_over_two_decay}), we choose $\chi \in C_c^\infty (\bR^3)$ with $\chi (z) = 1$, for $|z| \leq 2$ and $\chi (z) = 0$ for $|z| \geq 4$ and, for $\mu > 0$, we define $\chi_\mu (z) = \chi (z/\mu)$. Then, we decompose 
\begin{equation}\label{eq:decos} \begin{split} s (x) = \; &\frac{1}{L^3} \sum_{k \in \Lambda^*} \chi_{|x|^{-1}} (k) \widehat{s}_k e^{i k \cdot x} + \frac{1}{L^3} \sum_{k \in \Lambda^*_+} \big( \chi (k) - \chi_{|x|^{-1}} (k) \big) \widehat{s}_k e^{ik\cdot x} \\ &+ \frac{1}{L^3} \sum_{k \in \Lambda^*_+} \big(1- \chi (k) \big) \widehat{s}_k e^{ik\cdot x} \\ = \; &A(x) + B(x) + C(x)\,. \end{split} \end{equation} 
The first estimate in (\ref{eq:bdsk1}), with $m=0$, implies that  
\[ | A(x)| \leq \frac{1}{L^3} \sum_{|k| \leq 2|x|^{-1}} |\widehat{s}_k|  \lesssim  \frac{\rho^{1/4}}{L^3} \sum_{|k| \leq 2|x|^{-1}} \frac{1}{\sqrt{|k|}} \lesssim \frac{\rho^{1/4}}{|x|^{5/2}}\,.  \]
 Moreover, combining (\ref{eq:bdsk1}) with the observation $(\delta \chi)_k = 0$, if $|k| < 1$ or $|k| > 5$, we find (keeping in mind the assumption $|x| \geq \rho^{-1/2}$) 
 \[ \big| |x|^3 B(x) \big| \lesssim \frac{1}{L^3} \sum_{j=1}^3 \sum_{k \in \Lambda^*} \Big| \delta_j^3 \big( (\chi - \chi_{|x|^{-1}}) \widehat{s} \big)_k  \Big| \lesssim \rho^{1/4} |x|^{1/2}  \,. \]
Hence $|B(x)| \lesssim \rho^{1/4} / |x|^{5/2}$. To bound $C(x)$, we define $\widehat{Q}_k :=\big(1-\chi_{1}(k)\big) \widehat{s}(k)/\widehat{V}_\mathrm{eff}(k)$, for $k \in \Lambda^*$, and we denote by $Q$ the periodic function with Fourier coefficients $\widehat{Q}_k$. We have, recalling the normalization (\ref{eq:p-fourier}) for norms in Fourier space, 
\begin{align}
\nonumber
    \big| C (x) \big| & =\frac{1}{L^3} \Big| \sum_{k\in \L^*}e^{ik\cdot x} \widehat{Q} (k)\widehat{V}_\mathrm{eff}(k)\Big| = \big| (V_\mathrm{eff}* Q)(x)\big| \lesssim \Big| \int_{\partial B_\mathfrak{a}(x)} Q(y) dy \Big| \\
    \label{eq:C_argument}
    &\lesssim |x|^{-3}\sum_{j=1}^3\sup_y \big|y_j^3 Q (y)\big| \lesssim |x|^{-3}\sum_{j=1}^3\big\| \delta_j^3 \widehat{Q} \big\|_1\lesssim \rho |x|^{-3}
\end{align}
where we first used that $|x|\leq 2|y|$ for $y\in \partial B_\mathfrak{a}(x)$ and $|x|\geq 2\mathfrak{a}$, and subsequently \eqref{eq:bdsk2} to estimate the discrete derivatives. We obtain that $|C (x)| \lesssim \rho |x|^{-3} \leq \rho^{1/4} |x|^{-5/2}$ for all $|x| \geq \rho^{-1/2}$, which concludes the proof of (\ref{Eq:five_over_two_decay}). For $2\frak{a} \leq |x| \leq \rho^{-1/2}$, we show that 
\begin{equation}\label{eq:srhox} |s(x)| \lesssim \frac{\rho}{|x|}\,. \end{equation} 
To this end, we proceed analogously to (\ref{eq:decos}), decomposing  
\[ \begin{split} s(x) =   \; &\frac{1}{L^3} \sum_{k \in \Lambda^*} \chi_{\rho^{1/2}} (k) \widehat{s}_k e^{i k \cdot x} + \frac{1}{L^3} \sum_{k \in \Lambda^*} \big( \chi_{|x|^{-1}} (k) - \chi_{\rho^{1/2}} (k)\big)  \widehat{s}_k e^{i k \cdot x} \\ &+ \frac{1}{L^3} \sum_{k \in \Lambda^*_+} \big( \chi (k) - \chi_{|x|^{-1}} (k) \big) \widehat{s}_k e^{ik\cdot x} + \frac{1}{L^3} \sum_{k \in \Lambda^*_+} \big(1- \chi (k) \big) \widehat{s}_k e^{ik\cdot x} \\ = \; &\tilde{A}_1 (x) + \tilde{A}_2 (x) +  B (x) + C(x)\,. \end{split} \] 
With (\ref{eq:bdsk1}), we estimate
\[ ||x| \tilde{A}_1 (x)| \lesssim \sum_{j=1}^3 \| \delta_j \big( \chi_{\rho^{1/2}} \widehat{s} \big) \|_1  \lesssim \rho \,.\]
 Similarly, 
\[ | \tilde{A}_2 (x) | \lesssim \frac{1}{L^3} \sum_{\rho^{1/2} \leq |k| \leq |x|^{-1}} |\widehat{s}_k| \leq \frac{\rho}{L^3} \sum_{\rho^{1/2} \leq |k| \leq |x|^{-1}} \frac{1}{k^2}  \lesssim \frac{\rho}{|x|} \]
  and 
 \[  |x^2 B(x)| \lesssim \sum_{j=1}^3 \big\| \delta_j^2 \big( (\chi - \chi_{|x|^{-1}}) \widehat{s} \big) \big\|_1 \leq \frac{1}{L^3} \sum_{|x|^{-1} \leq |k| \leq 2} \frac{\rho}{|k|^4} \lesssim \rho |x|  \, . \]
 As for $C (x)$, we can apply again (\ref{eq:C_argument}), which holds true for all $|x| \geq 2 \frak{a}$. 
 This shows (\ref{eq:srhox}) and concludes the proof of the first bound in (\ref{eq:s-point}), which, in turn, immediately implies the first two bounds in (\ref{eq:est-combi}). 
 
Next, we prove that, for $r >\frak{a}$,  
\begin{align}
        \label{Eq:q_nabla}
            \int (1-\chi_r (x))  |\nabla s(x)|^2dx & \lesssim \min\left\{r^{-4}\rho^{\frac{1}{2}},r^{-1}\rho^2\right\}\,.
\end{align} 
To show (\ref{Eq:q_nabla}), first in the case $r \leq \rho^{-1/2}$, we define the coefficients $\widehat{s}_r (k) = \widehat{s} (k) \chi_{r^{-1}} (k)$ (recall that $\chi_{r^{-1}} (k) = \chi (k r)$ and $\chi \in C^\infty_c (\bR^3)$, with $\chi (x) = 1$ for $|x| < 2$, $\chi (x) = 0$, for $|x| > 4$). We denote by $s_r$ the periodic function, with coefficients $\widehat{s}_r (k)$ and we decompose 
\begin{equation} \label{eq:long-nabla}  \begin{split} \int (1 &- \chi_r (x)) |\nabla s (x)|^2 dx \\ \lesssim \;&\int |\nabla s_r (x)|^2 dx + \frac{1}{r^2} \int x^2 |\nabla (s - s_r) (x)|^2 dx \\ \lesssim \; &\frac{1}{L^3} \sum_{k \in \Lambda^*} k^2 |\widehat{s}_r (k)|^2 + \frac{1}{L^3 r^2} \sum_{j=1}^3 \sum_{k \in \Lambda^*} \big| \delta_j \big(k \widehat{s} (k) (1- \chi_{r^{-1}}  (k)) \big) \big|^2 \\ 
\lesssim \; &\frac{1}{L^3} \sum_{|k| \leq 2 r^{-1}} k^2 |\widehat{s} (k)|^2 \\ &+ \frac{1}{L^3 r^2} \Big[ \sum_{j=1}^3 \sum_{|k| \geq r^{-1}} k^2 | (\delta_j \widehat{s})_k|^2 + \sum_{|k| \geq  r^{-1}} |\widehat{s}_k|^2 + r^2 \sum_{r^{-1} \leq |k| \leq 5 r^{-1}} k^2 |\widehat{s}_k|^2 \Big] \,.
 \end{split} \end{equation} 
With the bounds (\ref{eq:bdsk1}) and recalling that $r \leq \rho^{-1/2}$, we obtain 
\[ \int (1 - \chi_r (x)) |\nabla s (x)|^2 dx  \lesssim \rho^2 / r\,. \] 
If $r \geq \rho^{-1/2}$, we proceed similarly, but we estimate $1-\chi_r (x) \lesssim |x|^6 / r^6$ (rather than $1-\chi_r (x) \lesssim x^2 / r^2$, as we did in (\ref{eq:long-nabla})); this leads to 
\[ \int (1-\chi_r (x) ) |\nabla s (x)|^2 dx \lesssim \rho^{1/2} / r^4 \]
as claimed. In particular, choosing $r = \ell_0$, (\ref{Eq:q_nabla}) implies the third bound in (\ref{eq:est-combi}). 
 
 Analogously to (\ref{Eq:q_nabla}), we also find the useful bounds 
 \begin{align}
                  \label{Eq:q_Delta}
    \int (1-\chi_r (x))  \left|\Delta s (x) \right|^2dx& \lesssim   r^{-3}\rho^2,\\
    \label{Eq:q_nablaDelta}
     \int (1-\chi_r (x)) \left|\nabla \Delta s (x) \right|^2dx& \lesssim   r^{-5}\rho^2,
    \end{align}
for all $r > \frak{a}$. To show \eqref{Eq:q_Delta}, we let $\widehat{A}_r (k):=  \big(k^2\widehat{s}(k)+ \rho_0 \widehat{V}_\mathrm{eff}(k) /2\big) \chi_{r^{-1}}(k)$ and we compute (recalling that $V_\text{eff} (x) = 0$  for $|x| > \frak{a}$) 
\begin{align*}
     \int (1-\chi_r (x)) & \left|\Delta s (x) \right|^2dx \\ & = \int (1-\chi_r (x))  \big| \Delta s (x) + \frac{\rho_0}{2}V_\mathrm{eff} (x)\big|^2dx\\
     & \lesssim \| \widehat{A}_r \|_2^2+r^{-4}\sum_{j=1}^3 \big\|\delta_j^2 \big(|k|^2\widehat{s}(k)+\frac{\rho_0}{2}\widehat{V}_\mathrm{eff}(k)- \widehat{A}_r (k) \big)\big\|_2^2\lesssim r^{-3}\rho^2\,.
\end{align*}
As for \eqref{Eq:q_nablaDelta}, we proceed similarly, using $1- \chi_r (x) \leq x^6 / r^6$ to bound the contribution from high momenta $|k|\gtrsim r^{-1}$. 

Let us now consider (\ref{eq:zeta-combi1}). First of all, we notice that, from \eqref{Eq:Decay_F_1}-\eqref{Eq:Decay_F_3}, 
    \begin{equation*}
        \|s\|_2^2=L^{-3} \sum_{k\in\L^*} |\widehat{s}(k)|^2 \lesssim \rho^{3/2}.
    \end{equation*}
With the definition (\ref{def:tl-sigma}) of $\tilde{\sigma}$, the fact that $1-C \frak{a}/\ell \leq f_\ell (x) \leq 1$ on the support of $\tilde{\sigma}$ and the bound
\[ \| \omega_\ell \|^2_2 \lesssim \int_{|x| \leq C \ell}  |x|^{-2} dx \lesssim \ell \]
we obtain 
  \begin{equation*}
        \|\tilde\sigma\|_2^2 \lesssim \rho^2 \|\omega_\ell\|_2^2+ \|s\|_2^2\lesssim \rho^{3/2}
    \end{equation*}
which shows the first bound in (\ref{eq:zeta-combi1}), for $\zeta = \tilde{\sigma}$. To prove it for $\zeta = \sigma$, it is convenient to first show (\ref{Th:est_3}). To this end, we compute 
 \begin{align} \label{Eq:Decay_Small_x}
        \int_{|x|\leq \rho^{-\frac{1}{2}}} |\tilde{\sigma}(x)|dx&\lesssim  \rho^{-\frac{3}{4}}\|\tilde\sigma\|_2\lesssim 1
    \end{align}
    and, by \eqref{Eq:five_over_two_decay},
    \begin{align*}
        \int_{|x|\geq \rho^{-\frac{1}{2}}} |\tilde\sigma(x)|dx= \int_{\rho^{-\frac{1}{2}}<|x|\leq 2 \ell_0} |\tilde\sigma(x)|dx\lesssim \rho^{1/4}\int_{\rho^{-\frac{1}{2}}<|x|\leq 2 \ell_0} |x|^{-5/2}dx\lesssim \big(\rho^{\frac{1}{2}}\ell_0\big)^\frac{1}{2}.
    \end{align*}
    This establishes \eqref{Th:est_3}, from which
    \begin{equation}\label{eq:sigma-L2}
        \|\sigma\|_2^2\lesssim \|\tilde\sigma\|_2^2 + \ell_0^{-3} \|\tilde\sigma\|_1^2 \lesssim \rho^{3/2}
    \end{equation}
proving (\ref{eq:zeta-combi1}), for $\zeta=\sigma$. Since, in Fourier space, $1 \leq \widehat{\gamma}_k = \cosh (\widehat{\eta}_k) \leq 1+ | \sinh (\widehat{\eta}_k)| = 1 + |\widehat{\sigma}_k|$, (\ref{eq:sigma-L2}) also implies $\| \zeta \|_2^2 \lesssim \rho^{3/2}$ for $\zeta= \{ \gamma - \mathbbm{1}, \mathbbm{1} - \gamma^{-1} , \gamma^{-1} * \sigma \}$. 

We proceed similarly to show the first bound in (\ref{eq:zeta-combi2}). Again, we start with $\zeta = \tilde{\sigma}$. In this case, the first estimate in (\ref{eq:zeta-combi2}) follows from (\ref{Eq:q_nabla}), together with $\| \nabla f_\ell \|_2 = \| \nabla \omega_\ell \|_2 \lesssim 1$ and with the first two bounds in (\ref{eq:est-combi}) (to handle terms involving the derivative of the cutoffs). 
With (\ref{Th:est_3}), we conclude that the first bound in (\ref{eq:zeta-combi2}) also holds for $\zeta = \sigma$ and therefore, since $0\leq \widehat{\gamma}_k -1 \leq |\widehat{\sigma}_k|$, also for $\zeta = \{ \gamma - \mathbbm{1}, \mathbbm{1} - \gamma^{-1},  \gamma^{-1} * \sigma \}$. 

The second bound in (\ref{eq:zeta-combi1}) follows from (\ref{eq:decay}), taking $m=0$. Also the third bound in (\ref{eq:zeta-combi1}) follows from (\ref{eq:decay}), since 
 \begin{align*}
        \int |\zeta(x)|dx & =\int_{|x|\leq \rho^{\frac{1}{2}}\ell_0^2} |\zeta(x)|dx+\int_{|x|> \rho^{\frac{1}{2}}\ell_0^2} |\zeta(x)|dx\\
         & \lesssim \rho^{\frac{3}{4}}\ell_0^3 \|\zeta\|_2 + \int_{|x|> \rho^{\frac{1}{2}}\ell_0^2} \frac{\rho}{\ell}\Big( \frac{\rho^{\frac{1}{4}}\ell_0^{\frac{3}{2}}}{|x|}\Big)^m dx\lesssim \rho^{\frac{3}{2}}\ell_0^3
    \end{align*}
where we have used the $L^2$-estimate in (\ref{eq:zeta-combi1}) and we have chosen $m$ sufficiently large.

  Let us now show \eqref{eq:decay}. First, we claim that
    \begin{equation}\label{Eq:Intermediate_delta_est}
        \left\|\delta^{m}_j\widehat{\sigma}\right\|_1\lesssim \frac{\rho}{\ell}\ell_0^m
    \end{equation} 
    for $j=1,2,3$ and $m\in\mathbb{N}$. To show (\ref{Eq:Intermediate_delta_est}), we first observe, using (\ref{Th:est_3}) and simple scaling, that 
        \begin{equation*}
        \left\|\delta^m_j \widehat{\sigma}\right\|_1\lesssim \big\|\delta^m_j \widehat{\tilde{\sigma}}\big\|_1+\ell_0^m \rho^{3/2}.
    \end{equation*}
    To estimate $\|\delta^m_j \widehat{\tilde{\sigma}}\|_1$, we use Cauchy-Schwarz with the bound $\| (\ell |k| + \ell^2 k^2)^{-1} \|_2 \lesssim \ell^{-3/2}$ and then we switch to position space. Since $\widetilde{\sigma}(x)$ is supported on $|x| \leq 2\ell_0$, we find 
    \begin{equation} \label{eq:delta^m_sigma}
        \big\|\delta^m_j \widehat{\tilde{\sigma}}\big\|_1\lesssim \ell^{-\frac{3}{2}}\sum_{\alpha = 1}^2 \ell^\alpha  \big\| |k|^\alpha \delta^m_j \widehat{\tilde\sigma}\big\|_2 \lesssim \ell_0^m  \sum_{\alpha=1}^2 \ell^{-\frac{3}{2}+\alpha} \big\|\nabla^\alpha  {\tilde\sigma}\big\|_2+\ell_0^{m-1}\ell^{-1/2} \|\tilde\sigma\|_2\,.
    \end{equation}
      From the definition (\ref{def:tl-sigma}) of $\tilde{\sigma}$, we have 
    \begin{equation}  \label{eq:bound_gradsigma_part}
    \begin{split}
        |\nabla^\alpha \tilde{\sigma}(x)|\lesssim\;&\chi_{\ell_0}(2x)(1-\chi_\ell(2x))\Big( |\nabla^\alpha s(x)|+ \rho  |\nabla^\alpha \omega_\ell (x)|\Big)\\
        &+\bigg|\nabla^\alpha \frac{\chi_{\ell_0}(2x)(1-\chi_\ell(2x))}{f_\ell(x)}\bigg|\big( \rho  |\omega_\ell(x)| + |s(x)|\big)\,.
    \end{split}
    \end{equation}
We can bound the norm of the contributions in the first line observing that, for $\alpha = 1,2$,  
    \begin{equation*}
        \big\| (1-\chi_\ell(2\cdot)) \nabla^\alpha \omega_\ell \big\|_2\lesssim\ell^{1/2-\alpha}
    \end{equation*} 
 %   and, using the notation introduced in \eqref{eq:splitting_notation}
 %   \begin{equation} \label{eq:nablaas}
  %  \begin{split}
  %      \big\|\chi_{\ell_0}(2\cdot)(1-\chi_\ell(2\cdot)) \nabla^a s\big\|^2_2 \le\;& \big\|\nabla^a s_{\downarrow \ell}\big\|_2^2+\ell^{-4} \sum_{j=1}^3\big\|\delta_j^2\nabla^a s_{\uparrow \ell}\big\|_2^2\\
  %      \lesssim\;& L^{-3}\sum_{|k|\le 2\ell^{-1}}|k|^{2a}|\widehat{s}(k)|^2+L^{-3} \ell^{-4}\sum_{j=1}^3  \sum_{|k|\ge \ell^{-1}}|k|^{2a}\big|\delta_j^2\widehat{s}(k)\big|^2\\
  %      \lesssim\;& \ell^{1-2a}\rho^2,
  %  \end{split}
  %  \end{equation}
      and, by \eqref{Eq:q_nabla} and \eqref{Eq:q_Delta},
    \begin{equation} \label{eq:nablaas}
    \begin{split}
        \big\|\chi_{\ell_0}(2\cdot)(1-\chi_\ell(2\cdot)) \nabla^\alpha s\big\|^2_2  \lesssim\;& \ell^{1-2\alpha}\rho^2.
    \end{split}
    \end{equation}
As for the last term in \eqref{eq:bound_gradsigma_part}, we notice that 
    \begin{equation}\label{eq:dcutoffs}
        \bigg|\nabla^\alpha \frac{\chi_{\ell_0}(2x)(1-\chi_\ell(2x))}{f_\ell(x)}\bigg|\lesssim\ell_0^{-\alpha} \mathbbm{1}_{\ell_0/2\le|x|\le \ell_0} + \ell^{-\alpha}\mathbbm{1}_{\ell/2\le|x|\le 2\ell}\,.
    \end{equation}
    On the one hand, we have, for $\alpha = 1,2$,  
    \begin{equation*}
        \ell_0^{-\alpha}\big\|\mathbbm{1}_{\ell_0/2\le|\cdot|\le \ell_0}\big(\rho | \omega_\ell |+ | s | \big)\big\|_2\lesssim \rho^{5/4} \, .
    \end{equation*}
 On the other hand,  applying H\"older's and Sobolev's inequalities as well as \eqref{eq:nablaas} and the first bound in (\ref{eq:zeta-combi1}), we find 
    \begin{equation*}
    \begin{split}
        \ell^{-\alpha}\big\| \mathbbm{1}_{\ell/2\le|\cdot|\le 2\ell} &\big(\rho|\omega_\ell|+|s| \big)\big\|_2 \\ \lesssim\; &\ell^{1/2-\alpha} \rho + \ell^{-\alpha}\big\|\mathbbm{1}_{\ell/2\le|\cdot|\le 2\ell} \big\|_3 \big\|\mathbbm{1}_{\ell/2\le|\cdot|\le 2\ell} \,s\big\|_6\\
        \lesssim\;&\ell^{1/2-\alpha}\rho+\ell^{-\alpha+1} \bigg(\int_{\ell/2\le|x|\le2\ell}\big( |\nabla s(x)|^2 + \ell^{-2} |s (x)|^2 \big) dx\bigg)^{1/2} \lesssim \ell^{1/2-\alpha}\rho\,.
    \end{split}
    \end{equation*}
 Thus, we obtain 
     \begin{equation*}
        \big\|\nabla^\alpha \tilde\sigma\big\|_2 \lesssim\ell^{1/2-\alpha}\rho \, . 
    \end{equation*}
Inserting in \eqref{eq:delta^m_sigma} and using the $L^2$-bound in \eqref{eq:zeta-combi1}, we find  
\begin{equation}\label{eq:deltamts} \| \delta_j^m \widehat{\tilde{\sigma}} \|_1 \lesssim \frac{\rho}{\ell} \ell_0^m\end{equation} and thus \eqref{Eq:Intermediate_delta_est}, for $j=1,2,3$ and all $m\in \bN$.

In particular, (\ref{eq:deltamts}) and (\ref{Eq:Intermediate_delta_est}) imply that $\| \widehat{\zeta} \|_1 \lesssim \rho / \ell$, for every $\zeta \in \{ \tilde{\sigma}, \sigma, \gamma-\mathbbm{1} ,\mathbbm{1}-\gamma^{-1},  \gamma^{-1} * \sigma \}$, since $1\leq \widehat{\gamma}_k \leq 1 + |\widehat{\sigma}_k|$, for all $k \in \Lambda^*$. Hence
    \begin{equation*}
        |\zeta(x)|\lesssim \big\|\widehat{\zeta}\big\|_1 \lesssim\ell^{-1}\rho 
    \end{equation*}
   which shows the validity of (\ref{eq:decay}), for $m=0$. To verify \eqref{eq:decay} in the case $m>0$, we introduce the functions $h_1 (t):=\sqrt{1+t^2}-1, h_2 (t) = 1- \frac{1}{\sqrt{1+t^2}}$ and $h_3 (t) :=t / (1+t^2)^{1/2}$ so that $\widehat{\gamma}_k - 1 = h_1 (\widehat{\sigma}_k)$, $1-1/\widehat{\gamma}_k = h_2 (\widehat{\sigma}_k)$ and $\widehat{\gamma}^{-1}_k \widehat{\sigma}_k  = h_3 (\widehat{\sigma}_k)$. Since $h_1, h_2, h_3$ are smooth, with bounded derivatives, we conclude, by the chain rule, that 
   \[ \big| (\delta_j^m h_u (\widehat{\sigma}))_k \big| \lesssim \sum_{i =1}^m \sum_{\substack{\beta_1, \dots , \beta_i \geq 1 :\\ \beta_1 + \dots + \beta_i = m}}  \prod_{r=1}^i | (\delta_j^{\beta_r} \widehat{\sigma})_k | \]
for $u=1,2,3$. Therefore, we obtain 
   \[ \| \delta_j^m \widehat{\zeta} \|_1 \lesssim \sum_{i =1}^m \sum_{\substack{\beta_1, \dots , \beta_i \geq 1 :\\ \beta_1 + \dots + \beta_i = m}}  \| \delta^{\beta_1} \widehat{\sigma} \|_1 \prod_{r=2}^i \| \delta_j^{\beta_r} \widehat{\sigma} \|_\infty \]
   for $\zeta \in \{ \tilde{\sigma}, \sigma, \gamma- \mathbbm{1}, \mathbbm{1}-\gamma^{-1}, \gamma^{-1} * \sigma \}$. 
   With (\ref{Th:est_3}), we have    
   \[ \| \delta_j^{n} \widehat{\sigma} \|_\infty \lesssim \| |x_j|^n \sigma \|_1 \lesssim \ell_0^n \| \sigma \|_1 \lesssim \ell_0^n  (\rho^{1/2} \ell_0)^{1/2}\,.\]
Together with (\ref{Eq:Intermediate_delta_est}), we conclude that 
 \[ \| \delta_j^m \widehat{\zeta} \|_1 \leq  \frac{\rho}{\ell}  (\rho^{1/2} \ell^3_0)^{m/2}\,. \]
 From (\ref{Eq:Fourier_position_multiplication}), this leads to 
 \[ |x_j|^m |\zeta (x)| \lesssim \frac{\rho}{\ell}  (\rho^{1/4} \ell^{3/2}_0)^m \]
for $j=1,2,3$ and $m \in \bN$, which implies (\ref{eq:decay}). 

The proof of the second bound in \eqref{eq:zeta-combi2} is obtained replicating the proof of \eqref{eq:decay} for $m=0$, starting with the analogous of \eqref{Eq:Intermediate_delta_est}, with $k\widehat{\sigma}(k)$ replacing $\widehat{\sigma}(k)$ and using in addition \eqref{Eq:q_nablaDelta}. 
 
As for (\ref{eq:s2-restricted}), it follows easily from the first two bounds in (\ref{eq:est-combi}), recalling  the definitions (\ref{def:tl-sigma}), (\ref{def:sigma}) of $\tilde{\sigma}, \sigma$ (and the bound (\ref{Th:est_3})). 
 
Next, let us show the second bound in (\ref{eq:s-point}), establishing the decay of $|\nabla s (x)|$. 
%
%\eqref{eq:snabla-point}. First, we claim that 
%\begin{equation}\label{eq:point-ds} |\nabla s (x)| \lesssim \frac{\rho  |\log \rho |}{x^2} \end{equation} 
%for all $|x| > 2\frak{a}$. 
To this end, we introduce
\begin{equation}\label{eq:Dx} D (x) = - \frac{\rho_0}{8\pi} \int \frac{\chi_\mu (x-y)}{|x-y|} V_\text{eff} (y) dy = \frac{\rho_0}{4\pi \frak{a}} \int_{\partial B_\frak{a} (0)} \frac{\chi_\mu (x-y)}{|x-y|}  dy \end{equation} 
for a fixed $\mu > 2 \frak{a}$. We also  define $E (x) = s (x) - D (x)$ for all $x \in \Lambda$. 
In view of (\ref{Eq:Decay_F_5}) and keeping in mind that
\begin{equation}\label{eq:k2}  \frac{4\pi}{k^2} = \int_{\bR^3} \frac{dx}{|x|} \, e^{i k \cdot x}  \end{equation} 
we expect $D$ to be a good approximation of $s$, at high momenta (and therefore $\widehat{E}_k$ to exhibit fast decay in $k$). With 
\[ \widehat{D}_k =  - \frac{\rho_0}{8\pi} \widehat{V}_\text{eff} (k) \int dx \, \frac{\chi_\mu (x)}{|x|} e^{-ik \cdot x}  \]
we can easily bound 
\begin{equation}\label{eq:Dk-low} |(\delta_j^\alpha \widehat{D})_k| \lesssim \rho |k|^{-\alpha} \end{equation} 
for $j=1,2,3$ and $\alpha \in \bN$. At high momenta $|k| \geq 1$, we can use (\ref{eq:k2}) to write 
\begin{equation}\label{eq:Dk-high} \widehat{D}_k = - \frac{\rho_0 \widehat{V}_\text{eff} (k)}{2k^2} + \frac{\rho_0}{8\pi} \widehat{V}_\text{eff} (k) \int_{\bR^3} dx \frac{(1-\chi_\mu (x))}{|x|} e^{-ik \cdot x} \end{equation} 
and derive the improved estimate 
\begin{equation}\label{eq:Dk-high2} \Big| \delta_j^\alpha \big[ \widehat{D}_k +   \frac{\rho_0 \widehat{V}_\text{eff} (k)}{2k^2} \big] \Big| \lesssim \rho |k|^{-m} \end{equation} 
for $j=1,2,3$ and any $m \in \bN$ (in the integral on the r.h.s. of (\ref{eq:Dk-high}), we can use the smoothness of $(1-\chi_\mu (x)/|x|$ to integrate by parts and obtain arbitrarily power-law decay in $k$). Combining (\ref{eq:bdsk1}) with (\ref{eq:Dk-low}), we find 
\[ | (\delta_j^\alpha \widehat{E})_k | \leq | (\delta_j^\alpha \widehat{s})_k| + | (\delta_j^\alpha \widehat{D})_k| \lesssim \left\{ \begin{array}{ll} \rho^{1/4} |k|^{-\alpha-1/2} \quad &\text{if $|k| \leq \rho^{1/2}$} \\
\rho |k|^{-\alpha-2} \quad &\text{if $\rho^{1/2} \leq |k| \leq 1$}  \,. \end{array} \right. \] 
Combining instead (\ref{eq:bdsk2}) with (\ref{eq:Dk-high2}), we obtain 
\[ | (\delta_j^\alpha \widehat{E})_k | \leq \Big| \delta_j^\alpha \big[ \widehat{s}_k + \frac{\rho_0 \widehat{V}_\text{eff} (k)}{2k^2} \big] \Big| + \Big| \delta_j^\alpha \big[ \widehat{D}_k + \frac{\rho_0 \widehat{V}_\text{eff} (k)}{2k^2} \big] \Big| \lesssim \frac{\rho^2}{|k|^6} + \frac{\rho}{|k|^m} \]
for $|k| \geq 1$ and for any $m \in \bN$. Hence, we conclude that 
\[ \begin{split} \Big| x^2 \nabla E (x)| &\lesssim \sum_{j=1}^3 \| \delta_j^2 k \widehat{E}_k \|_1 \\ &\leq \sum_{j=1}^3 \| k \delta_j^2 \widehat{E}_k \|_1 + \| \delta_j \widehat{E} \|_1 \\ &\lesssim \frac{1}{L^3} \Big[ \sum_{|k| \leq \rho^{1/2}}  \frac{\rho^{1/4}}{|k|^{3/2}} + \sum_{\rho^{1/2} \leq |k| \leq 1} \frac{\rho}{|k|^3} + \sum_{|k| \geq 1} \frac{\rho}{|k|^5}\Big]  \lesssim \rho | \log \rho | \,.\end{split} \] 
On the other hand, for $|x| > 2\frak{a}$, (\ref{eq:Dx}) implies, by explicit computation, that $|\nabla D (x)| \lesssim \rho |x|^{-2}$. This completes the proof of the second bound in (\ref{eq:s-point}). 

Recalling the definitions (\ref{def:tl-sigma}), (\ref{def:sigma}) of $\tilde{\sigma}, \sigma$ and the bounds  (\ref{eq:srhox}), (\ref{eq:dcutoffs}), we also obtain (\ref{eq:snabla-point}). In turn, this implies (\ref{eq:snabla-p}). As for (\ref{eq:nu-nabla-p}), it follows from 
\[ \| \nabla (\gamma^{-1} * \sigma ) \|_p = \| \gamma^{-1} * \nabla \sigma \|_p \leq \| (\mathbbm{1} - \gamma^{-1}) * \nabla \sigma \|_p + \| \nabla \sigma \|_p \leq \| \nabla \sigma \|_p \big( 1+ \| \mathbbm{1} - \gamma^{-1} \|_1\big) \]
applying (\ref{eq:snabla-p}) and the $L^1$-bound in (\ref{eq:zeta-combi1}). 
\end{proof}

\end{document}